\definecolor{rob}{rgb}{0,0,0.75}
\begin{document}

\bibliographystyle{apsrev}


\frontmatter

\title{Continuous Measurement and Stochastic Methods in Quantum Optical Systems}

\author{Robert Lawrence Cook}

\degreesubject{Ph.D., Physics}

\degree{Doctor of Philosophy \\ Physics}

\documenttype{Dissertation}

\previousdegrees{B.S., University of California  Santa Cruz, 2003}

\date{May 2013}

\maketitle

\makecopyright

\begin{dedication}
   Kylie, I promise we'll take a walk when this is all over.
\end{dedication}

\begin{acknowledgments}
   \vspace{0.4in}
   First of all I'd like to thank my most recent and final advisor Ivan Deutch.  When I started at UNM 10 years ago I had no idea what I wanted to study, only that a masters program seemed better than a job at the latest flying-Starbucks. It was your undergraduate quantum mechanics lectures that showed me how strange and rich the quantum world can be and they ultimately set me on the path to where I am today.  I will be forever grateful for your help and guidance though the bumpier parts of my graduate career.   I also have to thank Brad Chase.  Without you this dissertation would have taken a very different form.  Prior to reading the epic works of van Handel \emph{et al.} I would never have guessed that I'd become an advocate for mathematical formalism.  To Ben Baragiola I thank you for your friendship, enthusiasm and willingness to talk though a problem.  And to Heather Partner I will always be grateful for your support and camaraderie on the roller coaster ride that started at Los Alamos, ran through UNM and ended in Sandia.

   In my latest academic home of Room 30, I need to thank Carlos Riofr\'{i}o for your friendship, warmth and immediate inclusion into Deutsch group, Josh Combes for your shared enthusiasm for QSDEs, Leigh Norris for your kind hearted adoption of the luckiest goldfish on the planet, and Vaibhav Madhok for just being Vaibhav.  To the rest of Deutsch group - Bob Keating, Charlie Baldwin, and Krittika Goya - thanks for listening to me prattle on in group meeting about stochastic calculus and statistical estimation.  I hope I didn't bore you too much.  In the greater quantum information group I need to thank Professors Carl Caves and Andrew Landahl, current and former CQuIC students Jonas Anderson, Chris Cesare, Seth Merkel, Iris Reichenbach, Alexandre Tacla, Zhang Jiang, Matthias Lang, and Shashank Pandey.  I must also thank Vicky Bird for feeding us so well during arxiv review.  From my short tenure at Sandia national labs I need to thank Cort Johnson, Dan Stick, Todd Barrick, Dave Moehring, Francisco Benito, Peter Schwindt, Yuan-Yu Jau, Mike Mangan, Tom Hamilton, and Grant Biedermann for the help and support as I learned that cryogenic experiments are not for me.  I will never forget the time spent working with Roy Keyes, Tom Jones, Thomas Loyd, and Paul Martin.  While we may not have gotten a lot done we had a whole lot of fun doing it.  To Laura Zschaechner thanks for being a good friend and a shoulder to cry on.  And finally I'd like thank my parents and family for their love and support.
\end{acknowledgments}

\maketitleabstract

\begin{abstract}
  This dissertation studies the statistics and modeling of a quantum system probed by a coherent laser field.  We focus on an ensemble of qubits dispersively coupled to a traveling wave light field.  The first research topic explores the quantum measurement statistics of a quasi-monochromatic laser probe.  We identify the shortest timescale that successive measurements approximately commute.  Our model predicts that for a probe in the near infrared, noncommuting measurement effects are apparent for subpicosecond times.

The second dissertation topic attempts to find an approximation to a conditional master equation, which maps identical product states to identical product states.  Through a technique known as projection filtering, we find such a equation for an ensemble of qubits experiencing a diffusive measurement of a collective angular momentum projection, in addition to global rotations.  We then test the quality of the approximation through numerical simulations.  This measurement model is known to be entangling and without the rotations we find poor agreement between the exact and approximate predictions.  However, in the presence of strong randomized rotations, the approximation reproduces the exact expectation values to within 95\% accuracy.

The final topic applies the projection filter to the problem of state reconstruction.  We find an initial state estimate based on a single continuous measurement of an identically prepared atomic ensemble.  Given the ability to make a continuous collective measurement and simultaneously applying time varying controls, it is possible to find an accurate estimate given based upon a single measurement realization.  Previous experiments implementing this method found high fidelity estimates, but were ultimately limited by decoherence.  Here we explore the fundamental limits of this protocol by studying an idealized model for pure qubits, which is limited only by measurement backaction.  This ultimately makes the measurement statistics a nonlinear function of the initial state.  Via the projection filter, we find an efficiently computed approximation to the log-likelihood function.  Using the exact dynamics to produce simulated measurements, we then numerically search for a maximum likelihood estimate based on the approximate expression.  We ultimately find that our estimation technique nearly achieves an average fidelity bound set by an optimum POVM. 
\end{abstract}

\tableofcontents


\mainmatter

\chapter{ Introduction \label{chap:Intro} }

Within the past three decades, the ability to engineer individual quantum systems into highly nonclassical states has become a reality.   The fundamental technology that facilitated these revolutionary experiments is the coherent laser with its ability to address specific electronic transitions in matter.  A quasi-monochromatic laser can also introduce optical forces on position degrees of freedom.  While initially used to laser cool and trap atoms, coherent electronic superpositions  can also be transferred to external superpositions allowing for atom interferometers \citep{rasel_atom_1995} or highly nonclassical states in trapped ions \citep{haffner_scalable_2005,leibfried_creation_2005}.

In addition to providing control of an atomic system at a quantum level, the same laser systems can be used to measure the atomic state of the system.  The simplest of all detection methods is resonant fluorescence, where laser light resonant with a single transition is applied to an atom, which will scatter photons if that level is occupied.  However, if the internal state is in a superposition between the resonant level and an additional off-resonant `dark' state, the presence or absence of scattered light provides information about the internal state of the system to the experimenter \cite{itano_quantum_1993}.  

The quantum nature of the atom-light interaction carries over to off resonant applications.   In a low intensity regime, a free space laser with a carrier frequency significantly detuned from an atomic transition predominantly induces a state dependent energy shift without significantly exciting that transition.  The specific form of the interaction also depends upon the polarization of the exciting laser.  In a given parameter regime the resulting Hamiltonian dominates over the decoherence from absorption and subsequent emission, resulting in a controllable coupling between the atoms and the polarization of the probe laser \cite{deutsch_quantum_2010}.  This coupling affects both the atomic and polarization quantum states.  The fact that the laser is a traveling wave means that this is a fundamentally open quantum system and the output state of light carries with it some information about the atomic state.

Quantum mechanics is at its core a probabilistic theory where the wave function is a tool for computing the probability of observing experimental events.  Upon the receipt of a measurement outcome, an accurate description of the quantum system must reflect this new information.  This is true in an idealized projective measurement or in indirect measurements like those described above.  However there are significant differences between the method of state detection via resonance fluorescence and off-resonance polarization spectroscopy.

In fluorescence detection a vast majority of the scattered light is ultimately lost, either because only a small fraction of possible emission directions are observed or due to losses in the detection apparatus.  It takes a considerable experimental effort to measure as little as $5\%$ of the total scattered light from a single trapped ion \cite{vandevender_efficient_2010, streed_imaging_2011}.  A single ion will have to scatter a lot of light in order for an experimenter to be able to discriminate a bright state from a dark state with any reasonable confidence.  This means that after a relatively short time period it is likely that an ion prepared in a superposition of bright and dark states has scattered several photons that were lost to the experimenter.  Honest scientists would be forced to admit that while they were still uncertain as to the outcome of the measurement, they are quite certain that any coherence between the bright and dark states has been destroyed.

In the off-resonant scheme, almost all of the probe light can be collected, meaning that a clever experimentalist has access to nearly all of the information available.  After a short interaction time the measured state will in general change, but only in proportion to the amount of information gained.  The point is that armed with a complete measurement record it is possible to track the evolution of the state from an initial superposition to a final projected outcome.  This kind of measurement is known as a weak quantum nondemolition measurement (QND), and has been demonstrated in several different experiments involving ensembles of monatomic gasses in various parameter regimes.  One important consequence of this kind of measurement is that the projective outcome is a highly nonclassical state involving strong quantum coherence between all of the atoms in the ensemble.  While the experimental realities of photon scattering ultimately limit the system from reaching this eigenstate, an intermediate squeezed state has been observed on several occasions \cite{kuzmich_generation_2000, hald_spin_1999, schleier-smith_states_2010}, where the uncertainty of the measured observable is reduced below the standard quantum limit.

This dissertation is focused on the modeling of a quantum atomic system interacting with a quantized traveling-wave optical probe when that field is also measured continuously in time.  The most interesting quantum effects occur in the idealized case with no loss of light and a noise free measurement, which is the only case considered here.  The progression from an initial superposition state to a final measurement eigenstate is neither a time stationary process nor a linear transformation and so a model capable of tracking the full transition must be both time-adaptive and nonlinear.  Finding such a mathematical description is not a trivial exercise but one that has been extensively studied previously.

The ultimate objective is to apply this continuous measurement model to the problem of quantum state tomography.  Constructing an estimate for an arbitrary quantum state based upon experimental data is very resource intensive.  Specifying an arbitrary quantum state for a $d$-dimensional system requires at most $d^2 - 1$ parameters and for each parameter $N$ uncorrelated projective measurements generally gives an accuracy of $\sqrt{N}$.  Through an alternative protocol proposed by \citeauthor{silberfarb_quantum_2005}, these inefficiencies can be largely side-stepped, by applying a weak continuous measurement plus a well chosen dynamical control \emph{collectively} to an ensemble of identically prepared systems \citep{silberfarb_quantum_2005}.  If the control drives the system in such a way as to make the measurement informationally complete, then a single measurement record has encoded information about all $d^2 - 1$ parameters, albeit with a varying level of certainty and noise corruption.

In particular, we consider an atomic ensemble prepared in an identical tensor product state $\rho_{\text{tot}} = \rho_0^{\otimes n}$ that experiences a known Hamiltonian while simultaneously coupled to a traveling wave probe via a collective degree of freedom.  A continuous measurement of this probe then generates a measurement record that is strongly correlated with the evolution of the system.  With sufficient signal to noise, a statistical estimate of an unknown initial system state will in general have high fidelity with the true initial condition.  Using the weak measurement generated by an off-resonance probe, this state reconstruction procedure has been implemented in the laboratory, allowing reconstruction of the full hyperfine, $d = 16$, ground state manifold of a laser cooled neutral Cs atom ensemble \citep{smith_efficient_2006, riofrio_quantum_2011}.  However, these experiments were performed in a parameter regime where the amount of information lost to the environment dominated over any measurement induced backaction.  Chap. \ref{chap:QubitState} explores how this procedure performs in an opposite regime where decoherence is negligible and we retain a complete measurement record.   Arriving at this result requires several intermediate steps, particularly a detailed knowledge of how a classical statistical estimate is made and how that is applied to a quantum system.

The estimation of a possibly random signal from an observation corrupted by unwanted noise is known as \emph{filtering} and takes its origin from the work of Wiener \cite{Wiener_extrapolation_1949}, where the signal was assumed to be generated with time stationary statistics.   In a linear system with additive Gaussian noise, an optimum estimate to a continuous nonstationary signal was computed by \citeauthor{kalman_new_1961} \cite{kalman_new_1961} and is an indispensable tool in engineering and classical signal processing and control.  Not surprisingly an estimate to a nonlinear signal is significantly more challenging than in a linear system.  The nonlinear classical filter began with \citeauthor{stratonovich_conditional_1960} \citep{stratonovich_conditional_1960} and was later expressed in the useful language of It\={o} calculus by \citeauthor{kushner_differential_1964} \citep{kushner_differential_1964}.  Important contributions were made by \citeauthor{kallianpur_estimation_1968} \citep{kallianpur_estimation_1968} and \citeauthor{zakai_optimal_1969} \citep{zakai_optimal_1969}, steps that are particularly useful in formulating a quantum analog. Nonlinear filtering theory has an extremely wide range of applications including GPS based navigation, optimal stochastic control, financial portfolio optimization, audio and imaging noise removal and enhancement, speech recognition, weather prediction, and so on \cite{van_handel_filtering_2006}.  For each application there is a rich body of literature with a wealth of numerical and approximation methods allowing for practical and, in some cases, real-time implementations.

One of the fundamental tools that makes continuous-time classical estimation possible is stochastic calculus.  In the same way that a global function can be built up from an integral over local infinitesimals, a random signal can also be constructed from random increments.  In order for the filtering problem to be remotely tractable, it is necessary that the random increments originate from an uncorrelated noise source.  This means that the fundamental noise injected into an otherwise deterministic system is assumed to be uncorrelated \emph{white noise}.  Under this assumption the filtering equation is Markovian, meaning that the estimate updates only according to the latest measurement and its most recent value.  Classically, a white noise approximation is often well justified when the input noise is actually an aggregate effect from a large number of uncorrelated sources.  The canonical example is a particle experiencing Brownian motion.  Each impulse is a collision with a background molecule imparts a small amount of momentum to the larger particle. For times that are longer than the time between collisions, the net displacement is uncorrelated with previous intermediate times.  In this case not only are the collisions uncorrelated, the particle's displacement is Gaussian distributed with a variance that grows proportional to time.  Brownian motion is a classic example for a system influenced by Gaussian white noise, but due to the central limit theorem, these models are ubiquitous in systems with continuous trajectories.

Working with white noise directly adds another layer of sophistication to an already mathematically challenging topic \citep{holden_stochastic_2010}.  This is due to, among other problems, the fact that as white noise is defined to have completely uncorrelated fluctuations at any point in time, it is therefore discontinuous; continuity from one point to the next would imply correlations.  To build a mathematical framework that is both useful and provably consistent, the hard learned lesson is to frame the problem not in terms of the white noise itself but to instead use its integral, which is at least continuous\footnote{The calculus of randomized distributions characterizing white noise is referenced back to the integrated expressions anyway \citep{holden_stochastic_2010}.} \cite{oksendal_stochastic_2002}.  The most widely used representation for an integral over Gaussian white noise (in other words a mathematical model for Brownian motion) is the Wiener process.  Chap. \ref{chap:Math} reviews many of its defining properties, which we are able to leverage into a maximum likelihood estimate of an initial quantum state based upon a polarimetry measurement.

Beyond a successful model for integrated white noise, a classical filter needs to be able to manipulate these integrals in a full fledged calculus.  The subtlety of dealing with a randomized integral is that different limiting approximations lead to fundamentally different stochastic calculi.  The two most common forms of stochastic integration are the \emph{Stratonovich} integral and the \emph{It\=o} integral, with differing calculus rules and statistical properties \cite{oksendal_stochastic_2002}.  Both forms of integration are used here and a brief review is presented in Appendix \ref{app:SDEs}.  One drawback from initializing a model with a stochastic integral is that any predictions will dramatically depend on what kind of integral is used.  At one level the choice of integral is no different than a number of other approximations one makes in formulating a statistical model of a given physical system.

The lessons from developing a stochastic calculus for classical systems has also been applied to quantum models as well.  In the mid 80's \citeauthor{hudson_quantum_1984} developed an operator-valued quantum version of the classical It\={o} calculus \citep{hudson_quantum_1984}.  This quantum It\=o calculus is indispensable in modeling open quantum optical systems and has been applied to not only continuous quantum measurement \cite{belavkin_quantum_1992-1, bouten_introduction_2007}, but also quantum control (see \cite{gough_principles_2012} for an overview and introduction).  

The It\=o integral is based upon the assumption that the noise is completely uncorrelated so that the integral between times $[t_0, t_1)$ will be independent from the integral over times $[t_1, t_2)$ for any times, $0 \le t_0 < t_1 < t_2$, no matter how small the difference.  For an actual Brownian particle, this is only an approximation as the particle's velocity is correlated for times between atomic collisions.  The quantum It\=o integral also makes such an assumption but it does so by assuming that the operators representing equivalent integrals commute for nonoverlapping times, no matter how small.  Before immediately applying Hudson and Parthasarathy's formalism to the laser probed system, Chap. \ref{chap:QuantumLight} investigates for what times such an approximation applies, given that the resulting operators must also be consistent with a quasi-monochromatic description of a traveling wave light field.

The similarities between classical filtering and a quantum system subject to an indirect measurement should not be ignored.  In a classical setting one seeks an estimate of an unobserved system state consistent with a noisy measurement.  The fundamental goal in a quantum system is to predict the results of future measurements consistently and accurately with past measurements.  The unobserved atomic system is then the estimated quantity and the measured probe gives the noisy data.  To fully exploit this similarity and apply the techniques developed for classical systems, several fundamental questions about the nature of quantum statistics and classical probability must be addressed.

The development of a continuous-time filter for a quantum system was pioneered by the work of \citeauthor{belavkin_quantum_1980} starting in the early 80's \cite{belavkin_quantum_1980, belavkin_nondemolition_1988, belavkin_quantum_1992, belavkin_quantum_1992-1}.  These mathematically rigorous results developed and applied a deep relation between the algebraic and commutative properties of operators on a Hilbert space and the expression of classical stochastic processes.  Experimental observations in elementary quantum theory are represented by Hermitian operators and that upon making a measurement the random outcome corresponds to an eigenvalue of that operator.  The connection is then made though the following two insights.  The first insight is then to associate operators with random variables.  In classical probability all random variables are consistent, in the sense that all random variables will agree that the same underlying outcome of the system occurred, no matter what order they are queried.  For quantum systems it is a hard learned fact that only commuting operators will return consistent results.  Thus, the second insight is that in order to use a sequence of measurements for statistical inference, all of the measured operators must commute.  Additionally, any operator whose statistics we wish to infer must also commute with all measurements to date.  The utility of considering sets of commuting operators for the purposes of statistical inference is more well known in the physics community as the defining property of QND \cite{grangier_quantum_1998}.  Working within these limitations, the problems of noncommutativity is no longer an issue leading to a real and useful mapping between quantum measurements and classical probabilities.

A mapping between the quantum and the classical descriptions of probability can be more than just a guiding principle.  Through a formal isomorphism between commuting quantum operators and the language of classical filtering theory all of the above classical results can be easily applied.  The quantum filter developed by Belavkin is nothing more than a noncommuting analog of the classical Kushner–-Stratonovich equation of nonlinear filtering \cite{bouten_introduction_2007}.  Chap. \ref{chap:Math} reviews how a mapping between quantum operators and a formal classical probability model is made.  The purpose of this review is two fold.  The first is to provide the necessary background for a quantum filter.  The second is to shed light on how the algebraic language of classical probability theory can be applied to quantum systems thereby gaining new insights and intuitions into the quantum/classical divide.  When the quantum and classical coincide, nearly 50 years of engineering experience can either be immediately applied or adapted with some modifications.    Finally, this chapter shows how the quantum filter is equivalent to a generalized measurement by making a unitary extension to a larger dimensional Hilbert space.

Chap. \ref{chap:projection} applies one such method to the quantum system of an identical spin ensemble undergoing a polarimetry measurement in idealized conditions.  \citeauthor{brigo_differential_1998} applied the methods of differential geometry to simplify a classical filter \cite{brigo_differential_1998, brigo_approximate_1999}.  This method of making differential projections was adapted to a quantum system by \citeauthor{van_handel_quantum_2005} where they simplified a continuous quantum measurement of a strongly driven atom-cavity system into manifold states where the cavity has a Gaussian $Q$-function \cite{van_handel_quantum_2005}.  The method has been subsequently applied to other cavity QED systems \citep{mabuchi_derivation_2008, hopkins_reduced_2009, nielsen_quantum_2009}, collective spin systems in a linearized-Gaussian regime \citep{chase_parameter_2009}, and to find a low rank approximation to a general master equations in Lindblad form \cite{bris_low_2012}.  Chap. \ref{chap:projection} computes the orthogonal projection of an ensemble of $n$ qubits into the manifold of identical separable states of the form $\rho^{\otimes n}$ and numerically compares the accuracy of such an approximation to a complete evolution.

Using the projected filter as a computational tool, Chap. \ref{chap:QubitState} turns the problem of quantum state tomography essentially into a classical parameter estimation problem, where the classical parameters are the pointing angles of a spin coherent state constructed from the initial $n$ qubits.  The parameters are estimated by numerically computing a maximum likelihood estimate based upon a polarimetry measurement when the measurement statistics are strongly affected by quantum backaction.  By only including the conditional effects present in the projection filter we achieve an average reconstruction fidelity that nearly saturates an optimum bound given by any generalized measurement scheme \citep{massar_optimal_1995}.

\subsection{A note on quantum foundations}

Any work that addresses the quantum world and in particular quantum measurement eventually encounters some issue rooted in the foundations of quantum mechanics and the various interpretations one could assume.  This dissertation does not address quantum mechanical foundations in any meaningful way and attempts to remain agnostic about the reality of a quantum state or even the existence of a more fundamental theory.  Wherever possible we take a statistical perspective and implicitly assume that the simple models we construct may not be error free, in the sense that they do not include the whole the reality of a given experiment.

When considering quantum state tomography, we compare the conditional evolution of an ensemble of initial conditions and then select the state that maximizes a likelihood function.  In our numerical simulations, no member of that ensemble corresponds with arbitrary precision to the initial condition used to simulate the measurement record.  So in one sense the conditional state we calculate will always be incorrect.  However, in a field where the ontology of a quantum state is still debated, we take a conservative position and will not to presume to know that any conditional state is the true conditional state.  Instead we will only take the stance that what we calculate is a quantum state that best predicts any future measurement in a manner that is consistent with past results and the assumptions of the model.  We identify this state through the framework of quantum probability theory, a formalism that is less well known to physicists working in quantum information theory.  The final object that we calculate is ultimately no different from what is given by the usual stochastic Schr\"{o}dinger/master equations that are used in quantum optics.

The purpose of working with quantum probability theory is that it illustrates an immediate connection to classical probability and estimation theory.  In the classical setting, an estimator is a tool that is used to predict or estimate some quantity given a series of measurements.  The stochastic Schr\"{o}dinger equation is in a very real sense a quantum estimator.  We would rather not comment as to whether or not it is estimating the state of the system because it lacks knowledge of a theory extending beyond standard quantum mechanics or if it is the fundamental limit and there is no more information in existences.  In effect, we assume that the quantum state is simply a tool for making predictions about a quantum system.

\section{An executive summary}

This is a terse summary of the fundamental results of this dissertation, presented in the same order as the subsequent chapters.  This is not intended to be a gentle introduction to the material and assumes a strong familiarity with the background material.  We encourage an interested but nonexpert reader not to struggle too hard trying to comprehend this section and instead consult the main text and the associated appendices.

If any single global thesis can be applied to the entirety of this work it is that classical probabilistic methods are useful and that with some care they can be adapted to quantum systems.    The previous introduction discussed how to connect stochastic calculus and nonlinear filtering theory to a quantum system continuously probed by an optical field.   Chaps. \ref{chap:QuantumLight} and \ref{chap:Math} provide a physical and mathematical foundation for this connection while Chaps. \ref{chap:projection} and \ref{chap:QubitState} apply it to the specific problem of efficiently estimating an initial qubit state.  Chap. \ref{chap:Outtro} discusses possible directions this work could take.  In addition to this main matter, we include several appendices providing background material such as a review of paraxial optics (Appendix \ref{app:paraxialOptics}), stochastic differential equations (Appendix \ref{app:SDEs}), quantum stochastic differential equations (Appendix \ref{app:QSDEs}) and the quantum Wong-Zakai theorem (Appendix \ref{app:QuWongZakai}).

\subsection{Quantum optics and quantum stochastic differential equations }

Chap. \ref{chap:QuantumLight} shows how a second quantized picture of classical traveling wave packets reproduces the mathematical structure necessary for defining a formal quantum It\={o} stochastic calculus.  It also identifies the timescales for which a quasi-monochromatic field can be approximated as generating quantum white noise.  This is a regime that is independent from any system coupling or measurement apparatus and applies for a large family of states - include highly nonclassical states, such as multi-mode Fock states.

The specific model we consider is the second quantization of quasi-monochromatic wave packets \citep{deutsch_basis-independent_1991, garrison_quantum_2008, smith_photon_2007} where the single particle Hilbert space is the space of coherent state amplitudes for an associated classical field.  We assume a paraxial model where there is a factorization between a carrier plane wave $\exp(-i \omega_0 (t - z/c) )$, spatial mode function $\V{u}^\pf_T(x, y, z)$, and longitudinal envelope function $f(t - z/c)$.  The quasi-monochromatic approximation means that the longitudinal function must satisfy the inequality, $\abs{f(t)} \gg \frac{1}{\omega_0}\abs{\frac{\partial}{\partial t} f(t)} \gg \frac{1}{\omega_0^2} \abs{\frac{\partial^2}{\partial t^2} f(t) }$.

We ultimately seek creation and annihilation operators that are simultaneously quasi-monochromatic as well as consistent with a quantum white noise approximation.  To do so, we define $\ahat^\dag[\Vf(0)]$ to be the operator that creates a single quantum in a given spatial mode $\V{u}^\pf_T(x, y, z)$, with an envelope function $f$, referenced to some point along the optical axis.  The operator $\ahat[\Vf(t)]$ annihilates a quantum in a similar mode but one that has experienced free propagation for a time $t$.  We derive the unequal time commutation relation,
\begin{equation}
 \hspace{-10 pt} \Big[\,\ahat[\Vf(t_1)],\, \ahat^\dag[\Vf(t_2)]\, \Big]  \propto  e^{-i \omega_0(t_2 - t_1)} \left( f \star f\ (t_2 - t_1) - i\frac{1}{\omega_0} \frac{df}{dt}\star f\ (t_2 - t_1) \right)
\end{equation}
where $g \star f$ is the cross-correlation function of $g$ and $f$ and the proportionality constant is simply a scaling factor that can be absorbed into the definition of $f$.  Physically it is entirely reasonable that if a classical envelope is no longer temporally correlated then the associated field operators should commute.  To the best of our knowledge this is a new result in the characterization of quantized fields.

The canonical definition for quantum white noise is that there exist the creation and annihilation operators $[\ahat(t), \ahat^\dag(t')] \propto \delta(t - t')$.  Therefore in order for quasi-monochromatic light to be consistent with a white noise approximation, not only does $f \star f\ (t_2 - t_1) \rightarrow \delta( t_2 - t_1 )$ in a suitable limit but $\frac{1}{\omega_0} \frac{df}{dt}\star f\ (t_2 - t_1) \rightarrow 0$.  

From an approximation to white noise, (in a rotating frame) we then use the limiting white noise operators and a recent theorem by \citeauthor{gough_quantum_2006} \cite{gough_quantum_2006}, reviewed in Appendix \ref{app:QuWongZakai}, to consider the dispersive Faraday interaction, in an idealized regime where the possibility for multiple scattering events is nonnegligible.  The limiting object is a quantum stochastic It\={o} equation for the propagator that describes the unitary evolution between the field and the atomic system.  Using well know results in quantum stochastics we write down the equivalent master equation in Lindblad form.

\subsection{Classical and quantum probability theory}

Chap. \ref{chap:Math} is a mathematical review of well known results from classical and quantum probability theory, which serves as a foundation for the novel work in later chapters.  This review is conducted with an emphasis for physicists and attempts to explain and justify the concepts while omitting the proofs.  The end goal is have an understanding of how the conditional master equation results from a mapping between sets of commuting operators and a classical probability space.  This is a critical point as the driving noise in the conditional master equation is \emph{not} a Wiener process, but is instead the random outcomes of a continuous quantum limited measurement.  The resulting classical stochastic process $\set{y_t}_{t \ge 0}$ is only a Brownian motion when the measurements are (\emph{i}) of a field quadrature in the vacuum state and (\emph{ii}) there is no system coupling to that quadrature, \emph{i.e.} the measurement has no system information.  

The second objective of this chapter is to emphasize the general power of this technique and to discuss how the language of classical probability theory can be used to identify semiclassical subspaces embedded in a quantum system.  In order to do so in a relatively self-contained manner we review the basic elements in the triple $(\Omega, \mathcal{F}, \mathbbm{P})$ forming a classical probability space.  The infinite dimensional example we focus on is the sample paths for a Brownian motion and explicitly describe the relevant $\sigma$-algebra. 
In order to introduce the quantum conditional expectation, we first review the classical conditional expectation and more generally how expectation values are computed in the measure theoretic framework.  We then introduce the concept of time-adapted processes and martingales as both are crucial in the quantum case.  The review of classical probability theory concludes by discussing the Wiener process and the Wiener measure over the space of continuous functions.

From a firm description of classical probability theory we then discuss the quantum analog.  We explicitly show how one identifies a classical probability space from the set of mutually commuting observables by taking $\Omega$ as the set of possible eigenvalues, $\mathcal{F}$ as the $\sigma$-algebra generated by those eigenvalues, and the probability measure $\mathbbm{P}$ as the quantum expectation under the state $\rho$ of the associated projectors.  From this semiclassical description we then introduce the noncommutative analog were one omits a sample space of compatible outcomes, identifies the $\sigma$-algebra with a $*$-algebra of operators (or a von Neumann algebra in the infinite dimensional case), and the probability measure with a valid quantum state $\rho$. We then explain the power of generating sub-$*$-algebras from sets of operators focusing on the important object of the commutant.  We specifically explain how the commutant is the largest space of operators that we can condition on a sequence of commuting observations and how it contains noncommuting elements.  Armed with that description we identify the properties of the quantum conditional expectation, and provide an explicit construction for how it is in correspondence with the generalized measurements found in quantum information theory.   From the discussion of the quantum conditional expectation we then state the resulting the quantum filter as it is generated from the observation process $\set{ Y_t = U_t^\dag (A_t + A_t^\dag) U_t }_{t \ge 0}$ under vacuum expectation.

While the quantum filter is an elegant expression for a conditional operator, it rarely closes to a finite set of quantum stochastic differential equations.  Rather it is more useful to work with an effectively semiclassical equation, the conditional master equation.  Here we use the term semiclassical in a sense that does not imply a suboptimal approximation but rather to indicate that the quantum measurement process $\set{ Y_t }_{t \ge 0}$ (a family of operators) is treated as a classical stochastic process $\set{ y_t }_{ t \ge 0 }$ (a family of classical random variables defined on the probability space $( \Omega, \mathcal{F}, \mathbbm{P} )$ ).  The probability measure $\mathbbm{P}$ matches the statistics of $\set{ y_t }_{ t \ge 0 }$ to the quantum measurement statistics $\set{ Y_t }_{t \ge 0}$.  While this mapping ``demotes'' the measurement operators to a classical process, it still treats the system quantum mechanically, by propagating a density operator $\set{ \rho_t }_{t \ge 0}$.  Generally the statistics of $\set{ y_t }_{ t \ge 0 }$ will depend upon a quantum system expectation value, and so this is semiclassical and not a fully classical probability model.  This system density operator matches the quantum conditional expectation by enforcing the equality
\begin{equation}
  \pi_t(X)|_{\set{Y_t = y_t}} = \Tr(\rho_t X)
\end{equation}
for every system operator $X$ and time $t$.

The quantum filter is derived in terms of a quantum It\={o} equation and so the resulting semiclassical conditional master equation is a matrix-valued classical It\={o} equation.  In Chap. \ref{chap:projection} we are required to express it in terms of a Stratonovich integral and so we derive the associated correction factor here.  The chapter closes by finding a conditional Schr\"{o}dinger equation that corresponds to the more general master equation in the case of pure states.  This equation is useful for numerical simulation as propagating a complex vector is more efficient than a complex matrix.

\subsection{Projection filtering for qubit ensembles}

Chap. \ref{chap:projection} derives an approximate form of the conditional master equation for an ensemble of $n$ qubits under the assumption that the state will remain nearly an identical separable state.  The approximation is made though a technique known as \emph{projection filtering}, developed to reduce the dimension of a classical filtering equation by formulating the space of solutions as a Riemannian manifold and then making an orthogonal projection onto a lower dimensional manifold.  The lower dimensional manifold that we wish to project onto is the space of density matrices that can be written as $\varrho = \rho^{\otimes n}$ for some valid single qubit state $\rho$.  The appeal of the projection filtering technique is that it is algorithmic in nature, in that after identifying the desired manifold and making a choice of metric, finding the optimal projection is reduced to a problem of matrix algebra. Due to the simplicity of the qubit we are able to solve for this projection analytically.

A third of this chapter reviews the fundamentals of differential geometry, focusing on the mapping between qubit states and the Bloch ball.  We refer to the set of valid quantum states for a $d < \infty$ dimensional quantum system as $\man{S}(d)$ and the three-dimensional unit ball as $\man{B}$.  The metric we use is the trace inner product $\inprod{A}{B}_\varrho = \Tr(A^\dag B)$ for $A, B \in T_\varrho\,\man{S}(d)$.  From the standard mapping between points in the Bloch ball and qubit states $\rho\ :\ \man{B} \subset \R^3 \rightarrow \man{S}(2)$
\begin{equation}
  \rho(\Vx) = \half \left(\ident + x^i \sigma_i \right)
\end{equation}
we identify a basis $\set{D_i \define \half \sigma_i}$ for the tangent space $T_{\rho(\Vx)}\man{S}(2)$.  The resulting trace inner product induces an Euclidean metric on $\man{B}$,
\begin{equation}
  \inprod{D_i}{D_j}_\rho = \tfrac{1}{4} \Tr(\sigma_i \sigma_j) = \half \delta_{ij}.
\end{equation}
The manifold we ultimately want to consider is the set of density operators
\begin{equation}
  \man{P} \define \set{\rho(\Vx)^{\otimes n}\ :\ \Vx \in \man{B} } \subset \man{S}(2^n).
\end{equation}
Any derivative we define for $\varrho \in \man{P}$ must distribute over the tensor product structure, and so we identify the tangent space
\begin{equation}
  T_{\varrho(\Vx)}\man{P} = \lspan{ D_i(\Vx) = \sum_{\ell =1}^n \rho(\Vx)^{\otimes \ell -1} \otimes \half \sigma_i \otimes \rho(\Vx)^{\otimes n-\ell } }.
\end{equation}
For $n \ne 1$, the metric on the Bloch ball induced from the trace inner product is no longer Eucildean.  Instead it is given by the matrix
\begin{equation}
  \begin{split}
    g_{ij}(\Vx) &= \Tr(D_{i}(\Vx)\, D_{j}(\Vx)\,)\\
        &= \frac{n}{2^n} \left(1 + \abs{\Vx}^2 \right)^{n-1}\, \delta_{ij}  +
            \frac{n(n-1)}{2^n} \left(1 + \abs{\Vx}^2 \right)^{n-2} \, x^k x^\ell\, \delta_{k i} \delta_{\ell j}.
  \end{split}
\end{equation}
This metric is however isotropic, which can be seen by converting to spherical coordinates.  The resulting line element is
\begin{equation}
  ds^2 = \frac{n}{2^n} \left(1 + r^2 \right)^{n-1} \left( \frac{1 + n r^2}{1 + r^2}\ dr^2 + r^2\ d\theta^2 + r^2 \sin^2 \theta\ d \phi^2 \right).
\end{equation}
With this non-Euclidean metric, we then wish to apply the projection map $\Pi_{\man{P}}\ :\ T_{\varrho} \man{S}(2^n) \rightarrow T_{\varrho} \man{P}$, defined as
\begin{equation}
  \Pi_\man{P}(X) = g^{ij}(\Vx)\inprod{D_j(\Vx)}{ X }_{\varrho} \, D_i(\Vx)
\end{equation}
to each terms in the conditional master equation.

A general unconditioned master equation written in Lindblad form is,
\begin{equation}\label{chIntro:eq:generalMaster}
    \tfrac{d}{dt} \varrho = -i [ H,\, \varrho] + \mathcal{D}[L](\varrho)
\end{equation}
for some Hamiltonian $H$ and jump operator $L$.  As the master equation describes a valid quantum evolution, the righthand side of this equation must describe a vector in the tangent space $T_\varrho \man{S}(2^n)$.  Applying the projector $\Pi_{\man{P}}$ to the general master equation results in a new master equation, describing the evolution of a modified state $\varrho|_\man{P}$,
\begin{equation}
    \frac{d}{dt} \varrho|_\man{P} = g^{ij}(\Vx) \Big( \inprod{D_j(\Vx)}{ -i[H,\, \varrho(\Vx) ] }_{\varrho}  + \inprod{D_j(\Vx)}{ \mathcal{D}[L](\varrho) }_{\varrho}  \Big ) \, D_i(\Vx).
\end{equation}
This new equation is guaranteed to both produce a valid quantum evolution as well as constrain the state to remain in $\man{P}$.
Performing this projection in the case of a conditional master equation is essentially no different, with one caveat due to the subtle nature of stochastic integrations.

Converting the derivative $\frac{d}{dt} \varrho|_\man{P}$ into a differential form $d\rho|_\man{P}$ has no ambiguity in interpretation as the differential
\begin{equation}
    d \varrho|_\man{P} = a^i(\Vx)\, D_i(\Vx)\, dt
\end{equation}
describes a valid mapping between the tangent space $T_t \R^+$ and the tangent space $T_{\varrho(\Vx)} \man{P}$.  However, interpreting a general stochastic differential in terms of a differential form is problematic because the explicit path-wise derivative generally does not exist.  Even if one were to solve the stochastic differential equation
\begin{equation}
    d \varrho|_\man{P} = B(\Vx_t)\, dw_t
\end{equation}
for $B(\Vx_t) \in T_{\varrho(\Vx_t)}\man{P}$ there is no \emph{a priori} reason to assume that the resulting solution will remain in $\man{P}$.  In developing the projection filtering technique, \citeauthor{brigo_differential_1998} found that a solution to a general It\={o} equation decidedly does not satisfy this property \cite{brigo_differential_1998}.  The problem is that the drift induced by the second order nature of the It\={o} rule causes the solution to leave $\man{P}$ even when the integrand is in the proper tangent space.  The saving grace is that the orthogonal projection method does constrain the solution when the original equation is written in Stratonovich form.  The bottom line is that in order to project the conditional master equation into the tangent space $T_\varrho \man{P}$ it must first be written as a  Stratonovich integral.  Chap. \ref{chap:Math} calculates the proper conversion factor, generating the Ito correction map $\mathcal{I}_c[L](\varrho)$ for a general measurement operator $L$.

The conditional master equation is given by the Stratonovich equation
\begin{equation}
        d \varrho = -i [ H_\tot,\, \varrho] dt + \mathcal{D}[L_\tot](\varrho) dt + \mathcal{I}_c[L_\tot] (\varrho) dt +  \mathcal{H}[L_\tot](\varrho)\circ dv_t
\end{equation}
where the maps $\mathcal{D}[L_\tot](\cdot)$, $\mathcal{H}[L](\cdot)$, and $\mathcal{I}_c[L](\cdot)$ are given in eqs. (\ref{chProj:eq:dissipator}, \ref{chProj:eq:conditioning}, and \ref{chProj:eq:ItoCorrectionMapIntro}) respectively.
The subscript $_\tot$ used here is used to specify that these operators act on the joint Hilbert space over all $n$ qubits.

The projections of each term are computed relatively generally, but under the assumption that the operators $H_\tot$ and $L_\tot$ act identically and independently on each qubit in the ensemble.  This means that for the single qubit operators $H$ and $L$ the joint operators are equal to
\begin{equation}
  H_\tot = \sum_{\ell =1}^n \ident^{\otimes \ell -1} \otimes H \otimes \ident^{\otimes n-\ell }
\end{equation}
and
\begin{equation}
  L_\tot = \sum_{\ell =1}^n \ident^{\otimes \ell -1} \otimes L \otimes \ident^{\otimes n-\ell }.
\end{equation}

From the general expressions we also specialize to the examples of $L = \sqrt{\kappa}\, \half \sigma_z$ and $H = \half (f^1(t) \sigma_x + f^2(t) \sigma_y + f^3(t) \sigma_z)$ for a constant rate $\kappa$ and deterministic real valued control fields $f^i(t)$.  This specialized example corresponds to an idealized model of a dispersive measurement of a collective angular momenta and a time varying but uniform magnetic field.  The final expression we calculate for this example and call \emph{the projection filter} is a system of coupled It\={o} stochastic differential equations that correspond to the single particle Bloch vector components, $\Vx_t$,
\begin{equation}
\begin{split}
    dx_t =&\, a^1(\Vx, t)\, dt - \sqrt{\kappa}\, x\, z\, dv_t,\\
    dy_t =&\, a^2(\Vx, t)\, dt - \sqrt{\kappa}\, y\, z\, dv_t,\\
    dz_t =&\, a^3(\Vx, t)\, dt + \sqrt{\kappa}(1 - z^2)\, dv_t.
\end{split}
\end{equation}
The deterministic integrands $a^i(\Vx, t)$ are
\begin{equation}
  \begin{split}
    a^1(\Vx,t) =&\, f^2(t)\,z - f^3(t)\, y - \half \kappa\, x + \kappa\, \gamma(r)\, x\, z^2,\\
    a^2(\Vx,t) =&\, f^3(t)\,x - f^1(t)\, z - \half \kappa\, y + \kappa\, \gamma(r) \, y \,z^2,\\
    a^3(\Vx,t) =&\, f^1(t)\,y  -f^2(t)\, x - \kappa\, (n-1)\big(\tfrac{1 - r^2 }{1 + r^2}\big)\, z - \kappa\, \gamma(r)\, z^3
  \end{split}
\end{equation}
with the function
\begin{equation}
    \gamma(r) \define (1 - r^2) \left( \frac{n\,(n+1)}{2\,(1 + n\, r^2)} -  \frac{1}{1 + r^2} \right).
\end{equation}
Finally the stochastic increment $dv_t$ is the innovation process, calculated from the measurement process $y_t$ (no relation to the Bloch vector component) with a differential
\begin{equation}
    dv_t =  dy_t - n \sqrt{\kappa}\, z_t\, dt.
\end{equation}

The nonlinear function $\gamma(r)$ has two important zeros that simplify the projection filter dramatically.  The first is that when $n=1$, $\gamma(r) = 0$ for every value of $r$.  Furthermore it is easy to compute that when evaluating the projection filter for $n=1$, the equations are identical to a set of conditional Bloch vector equations one obtains directly from the conditional master equation. In other words, the projected space is the whole manifold of solutions, $\man{P} = \man{S}(2)$.  The second zero occurs for $\gamma(r = 1) = 0$ for any $n$.  The $n$ and $r$ dependent terms in $a^3(\Vx,t)$ also evaluate to zero for $r = 1$, meaning that for any $n$ the projected pure state evolution is essentially identical to the evolution of a single qubit state.  The only remaining $n$ dependence is that the innovation process requires the expected measurement outcome to be scaled by a factor of $n$.

There are three elements that makes this projection filter tractable for obtaining an analytic expression.  The first is the isotropic nature of trace inner product metric, dramatically simplifying the calculation.  The second is that the Pauli matrices form a simple basis for $2 \times 2$ complex matrices and have equal eigenvalues.  The third is the identical and independent assumption for the joint operators.  This allows for terms that would in general result in ensemble averages to be given by identical single particle values.

The final element of Chap. \ref{chap:projection} is a series of numerical experiments testing the accuracy and performance of the projection filter against exact simulations for initial pure spin coherent states.  The rate $\kappa$ sets the measurement timescale and so all times in the simulations are compared to this rate, effectively setting it to 1.  Each simulation ran for a fixed time $t = 0.2\, \kappa^{-1}$.  As the quality of the projection filter should explicitly depend upon $n$, these simulations test the qubit numbers $n = \set{1, 25, 50, 75, 100}$.  The average performance data included a sample of $\nu = 100$ isotropically sampled initial qubit states with a single noise realization for each initial state.

In addition to testing the performance as a function of $n$, it also tests two different control functions $f^i(t)$.  The first is for $f^i(t) = 0$ for all $t$ and $i$.  This corresponds to a QND measurement of the $z$ projection of the total angular momentum formed by the qubit ensemble, $J_z$, and is known to produce spin squeezing.   We find that an initial state involving $50$ qubits prepared in $+J_x$ eigenstate produced $\approx 10$ dB of squeezing in one measurement duration.  A squeezed state is inherently not a product state, and so serves as a worst case scenario for the projection filter and acts as a lower bound on its performance.

The zero field measurement is compared to the case of a strong randomized control sequence.  Chap. \ref{chap:QubitState} uses the projection filter in an algorithm to reconstruct the initial condition of a SCS from a continuous measurement of $J_z$, characterized by the rate $\kappa$.  In order to obtain information about observables other than $J_z$, an external control Hamiltonian must be applied.  For reasons discussed in Sec. \ref{chQubit:sec:ControlLaw}, this takes form of a sequence of global $\pi/2$ rotations, where each rotation is about an axis $\V{n}$ independently sampled from a uniform distribution.  Fully characterizing the control amplitude $\V{f}(t)$ requires specifying the amplitude and duration of each pulse, as a larger Larmor frequency is needed to enact the same rotation in a shorter time.  For simplicity, we will fix $\V{f}(t)$ to have a constant magnitude and so for a pulse duration $\tau$ the control field is then given by,
\begin{equation}\label{chIntro:eq:Controls}
\V{f}(t) = \frac{\pi}{2\, \tau} \sum_{m =1} \indicate{[{m-1}, m)}\hspace{-4pt}\left(\,t/\tau\right)\, \V{n}_m
\end{equation}
where $\indicate{[a, b)}(t)$ is the indicator function for the interval $[a, b)$ and  $\set{ \V{n}_m} $ are \emph{i.i.d.} unit vectors drawn from a isotropic distribution.

The accuracy of the projection filter was tested by comparing how well it is capable of reproducing the conditional expectation values of the collective angular momentum components $J_i$ as well as the squared overlap between the exact state and the equivalent spin coherent state that is made from an ensemble of $n$ identical pure qubits.  The time-dependent results are presented in Figures \ref{chProj:fig:ProjAverageError} and \ref{chProj:fig:ProjAvgerageFidelity}.   The RMS error in the conditional expectation values were generally independent of the number of qubits, likely due to the fact that for pure states the projection filter dynamics are essentially independent of $n$.   With the randomized controls the RMS error was $\lesssim 5\%$ of the total spin length in all 3 expectation values.  In the absence of a control field, there was a general linear increase in the $J_x$ and $J_y$ errors also reaching the $5\%$ level, while there was a noticeable increase in the $J_z$ error with a final value in the $ \sim 5 - 10 \%$ range.  The poorer performance is attributable to the effect the spin squeezing has on the mean values.

The squared overlap between the exact state and the equivalent spin coherent state exhibits a strong dependence upon both the number of qubits and the control fields.  In the uncontrolled case this metric monotonic decreased for all $n > 1$ dropping to 0.75 for $n = 25$ and 0.48 for $n = 100$.  This is in stark contrast to the simulations including the randomized controls.  While the resulting average fidelity was noticeably poorer for large $n$, the minimum value was $> 0.8$ for all $n$.  While the trend was to have poorer fidelities at longer times, the decrease was not monotonic implying that the control wave form could be optimized to maximize the average overlap with the spin coherent state and thereby minimizing the information lost by performing the the projection.

We hypothesize that the state remains closer to a product state because the randomized controls tends to mix both the squeezed and anti-squeezed components leading to a near zero average.  Not only does the mean spin rotate, but the orientation of the squeezing ellipse also rotates.  As the rotation axes are chosen from a uniform distribution, the squeezed component is just as likely as the anti-squeezed component to be oriented along the measurement axis.  At any given time, the uncertainty in the $J_z$ component is equally likely to be above or below the uncertainty of an equivalent spin coherent state.  Therefore it is difficult for any significant squeezing to develop, and thereby keep the exact state closer to the product state description.

\subsection{Qubit state reconstruction}
Chap. \ref{chap:QubitState} describes how to use the quantum filtering formalism in order to construct a tomographic estimate for an unknown initial quantum state from an ensemble of identical copies experiencing a joint continuous measurement.  We make a maximum likelihood estimate (MLE) of the initial state, based upon the statistics of a \emph{single} continuous measurement realization.  The purpose of this work is to extend previous results \cite{silberfarb_quantum_2005, riofrio_quantum_2011, smith_efficient_2006}, which used a continuous measurement for quantum state tomography, into a regime where the quantum backaction significantly effect the measurement statistics.  In an idealized numerical study, we find that such an estimate can nearly saturate an optimum reconstruction bound.  Much is known about the fundamental quantum limits of reconstructing pure qubit states from a finite number of measurements. \citeauthor{massar_optimal_1995} showed that given $n$ copies of a pure qubit state, it is possible to find a generalized measurement that returns the highest average fidelity between the estimated state and the correct initial state \cite{massar_optimal_1995}.  The average is made not just over measurement outcomes but also over an unbiased set of possible input states.  The optimum average fidelity bound is simply given $\langle\mathcal{F}\rangle_\text{opt}  = (n+1)/(n+2).$

We consider here an idealized model of an ensemble of $n$ qubits identically coupled to a single traveling wave quantum light field via a linearized Faraday interaction.  Under certain approximations discussed in Sec. \ref{chQuLight:sec:faraday}, a measurement of the orthogonal quadrature contains information about the collective angular momentum variable $J_z$, with a coupling rate $\kappa$.  The ensemble is assumed to be prepared in a pure spin coherent state characterized by the unknown polar angles $(\theta,\, \phi)$.  However the initial qubit state is not a QND variable, meaning that $\rho(\theta,\phi)^{\otimes n}$ does not commute with the fundamental interaction Hamiltonian.  The implication of this is that it is impossible to find a consistent method for inverting the forward time dynamic to arrive at a conditional expression for the initial state.

To circumvent this problem we instead map the quantum state estimation problem to a parameter estimation problem to find a MLE of $(\theta, \phi)$.  A single continuous measurement realization of a noncommuting output quadrature results in a stochastic process $\set{y_t}_{t \ge 0}$ that contains information about the atomic ensemble.  Because of this information, its statistics parametrically depend upon the unknown angles.  While a MLE based upon a single data point would perform quite poorly, we find a conditional estimate based upon the entire \emph{trajectory} performs quite well, when the measurement is informationally complete.  To ensure informational completeness, a known time-varying control Hamiltonian is applied to the system, thereby mixing all spin projections with the measurement axis.  \citeauthor{riofrio_quantum_2011} found that an efficient and unbiased control policy is to choose a set of operations capable of generating any single particle state and then randomly varying the magnitude of each control in a piecewise constant way \cite{riofrio_quantum_2011}.  Here we use a similar control policy by including in the modal a uniform magnetic field with a constant field strength that rotates the collective angular momentum vector by $\pi/2$ in a period $\tau$ about randomly chosen rotation axes.  In a fixed final time we find $40$ rotations provides enough information to obtain high fidelity estimates.

In the semiclassical probability space induced by a measurement realization, the appropriate probability measure $\mathbbm{P}$ has a parametric dependence on the initial angles $(\theta, \phi)$.  Identifying this dependence is best seen by considering not the conditional statistics of $\set{y_t}_{t \ge 0}$ but instead the calculated innovation process $\set{v_t}_{t \ge 0}$.  For the measurement model considered here this process is given by
\begin{equation}
  v_t = y_t - 2 \sqrt{\kappa} \int_0^t ds\, \Tr(J_z\, \rho_s(\theta, \phi)\, ),
\end{equation}
where $\rho_s(\theta, \phi)$ is the system density operator calculated via the conditional master equation assuming that initial condition is given by the angles $(\theta, \phi)$.  The innovation process $v_t$ is shown in Sec. \ref{chMath:sec:Innovation} to have the statistics of a Wiener process if $\Tr(J_z\, \rho_s(\theta, \phi)\, )$ corresponds to the exact quantum conditional expectation of the Heisenberg picture operator $U^\dag_t J_z U_t$.   If this correspondence cannot be made because $\rho_s(\theta, \phi)$ used an incorrect initial condition, then $v_t$ will not have statistics of a Wiener process for every measurement realization.  Here we use this fact to find the MLE for $(\theta, \phi)$.  We seek the initial condition that makes the innovation process most likely to be a Wiener process.   As the conditional master equation is a nonlinear equation, we resort to a mixture of numerical and analytical methods for finding an approximation to the true likelihood function.

The Wiener measure gives the probability for a Wiener process sampled at times $\set{t_i : i = 1, \dots, n}$ will be within associated intervals $I_i = (a_i, b_i)$ and is given by the integral
\begin{equation}
  P( \set{v_{t_i} \in I_i} ) = \int_{a_1}^{b_1} dv_1 \dots \int_{a_n}^{b_n} dv_n \prod_{i = 1}^n \left( \frac{1}{\sqrt{2 \pi \Delta t_i}} \exp \left( -\frac{(v_i - v_{i -1})^2}{2 \Delta t_i} \right) \right).
\end{equation}
Because this is a Gaussian probability density, the MLE coincides with the least squares estimate.  For an equally spaced mesh of finite time intervals both the normalization factor and the denominator of the exponent are irrelevant for the purposes of computing a MLE.  Therefore maximizing the likelihood function is equivalent to minimizing the quadratic variation,
\begin{equation}\label{chIntro:eq:QVcost}
     \mathrm{QV}( v_t ) \define \sum_{i=1}^n\, \left( \Delta y_i - 2 \sqrt{\kappa} \Delta t \Tr(J_z\, \rho_{t_{i-1}}(\theta, \phi)\, ) \, \right)^2
\end{equation}
The minimization of this function with respect $(\theta, \phi)$ is computed numerically as we are unable to find an analytic solution to the conditional master equation and the dependence upon the initial condition is nonlinear.

Every evaluation of this function requires a full numerical integration of the conditional master equation.  Numerically integrating an exact conditional pure state is a computationally intensive task.  Here we test spin ensemble involving $25 \le n \le 100$ qubits, which require order $n$ complex number to fully describe the relevant conditional dynamics.   While it is computationally feasible to integrate the exact equation to generate a simulated measurement record, for every measurement record we used a total of $500$ evaluations of the quadratic variation cost function.  We found it infeasible to use an exact expression for computing $\mathrm{QV}( v_t )$ and instead sought a reasonably accurate approximation.  The approximation we use is the projection filter developed in Chap. \ref{chap:projection}.  Under identical conditions to the dynamics used here, the projection filter is able to match the expectation value $\Tr(J_z\, \rho_{t})$ to within 95\% accuracy of the exact value while only propagating 3 real numbers for any number of qubits.  By minimizing with respect to the approximate filter, the limiting computational element became generating the simulated measurement record.  While in a higher dimensional space one could approach the problem with a gradient descent algorithm, we find it more efficient to simply make a dense Monte Carlo sampling of the entire Bloch sphere\footnote{Due an issue involving numerical stability we start with a uniform sampling of mixed states and then make a subsequently smaller sample of pure states.  See Sec. \ref{chQubit:sec:Stability}.} and then select the most likely sample.

In order to understand what role backaction plays in limiting the reconstruction fidelity, we compare the performance of the projection filter estimate to one that ignores completely the conditional dynamics.  Instead of propagating a conditional state, this estimate only considers the Hamiltonian dynamics generated by the magnetic field rotations.  In other words by solving the Heisenberg equation of motion,
\begin{equation}
  \frac{d}{dt} \sigma_z(t) = +i [\half f^j(t) \sigma_j(t), \sigma_z(t)]
\end{equation}
for the controls $f^i(t)$ given in Eq. (\ref{chIntro:eq:Controls}), the expectation value $ \sqrt{\kappa}\, n \Tr(\sigma_z(t) \rho(\theta, \phi) )$ reproduces the expected signal, ignoring the backaction.  The purpose for computing this in the Heisenberg picture is so that we are able to solve for the dynamical observables once, and then apply that solution for any initial condition.

The results of the numerical experiments are given in Fig. \ref{chQubit:fig:FidelPlot}.  The estimate based upon the projection filter nearly achieves the optimum $(n+1)/(n+2)$ fidelity bound, averaged over $\nu = 1000$ trials for $n = (25, 40,  55, 70, 85, 100)$ qubits.  The difference between the optimum bound and the numerical averages never exceeded 0.21\%, a deviation that is likely statistically significant but not attributable to any fundamental Monte Carlo sampling errors.  In comparison the backaction-free estimator performed significantly poorer especially for higher qubit numbers.  This suggest that including the conditional dynamics is indeed important in this idealized scenario.

The cause of the discrepancy between the projection filter estimate and the backaction-free estimate is likely due to a bias that develops when all measurement effects are ignored.  This can be see in Figures \ref{chQubit:fig:FilterBias} and \ref{chQubit:fig:SchBias}.  These figures plot the average reconstruction fidelity as a function of the measurement duration.  The filtering based estimate shows a monotonic rise in the average fidelity which then saturates at a level only slightly below the optimum bound.  As $n$ increases this saturation occurs at earlier times.  In contrast, the estimate based only on Hamiltonian evolution does not have monotonic increase in reconstruction fidelity.  For $n = 55, 70, 85, 100$ the average fidelity reaches a maximum and then has significant \emph{decrease} as more data are collected.  For $n = 25, 40$ it is possible that a decrease might also have occurred if the simulation continued for longer times.

When backaction is ignored, the assumption of pure unitary evolution implies that no coherence is lost during the course of the measurement.  If at time $t$ the randomized controls managed to return to the original orientation then the backaction-free estimator would ``weight'' the data received at that time just as much as the data obtained at time $t = 0$.  In comparison, the filtering based estimate knows that while the rotations may have canceled, the expected signal at time $t$ is not what it was at time $t = 0$, precisely because of the conditional effects.  By not including this information the unitary estimate is biased away from the optimum estimate.

\chapter{ Quantum Optics and Quantum Stochastic Differential Equations\label{chap:QuantumLight} }

The objective of this chapter is to identify how the formalism of quantum stochastic differential equations is implemented in the context of quantum optics.  This is done by first showing how the second quantization of classical quasi-monochromatic traveling wave packets gives the natural structure necessary for defining the quantum It\={o} integral.  We then show under what conditions a wave packet operator can be treated as generating a localized field operator, which is necessary for a Markov approximation.  From that localized structure, we then review how this defines a quantum Wiener process and relates to the quantum white noise formalism usually presented in quantum optics.  With a wave packet description of quantum white noise, we then review how a system coupling to these operators generates a quantum stochastic differential equation for the propagator.  Generating this equation is intimately related to the operator ordering of the field operators, which is also related to defining different kinds of stochastic equations.  Here we review this fact and how the propagator is derived.  Finally, we apply this result to the Faraday Hamiltonian and discuss the interaction in the limits of both strong and weak number coupling.

\section{Quantum Stochastic Process in Optical Fields }

The classical stochastic process is most generally defined as a family of random variables $\set{x_t}_{t\ge 0}$ indexed by time $t$.  A \emph{quantum stochastic process} is then a family of operators $\set{X_t}_{t\ge 0}$ also indexed by time.  This definition allows for a slightly more general structure than simply an operator is dependent upon time.  A common example of a time-dependent quantum operator is a Heisenberg picture operator, $X(t)$, acting on some Hilbert space $\Hilbert$ with its dynamics given by a unitary transformation.  Conversely, a quantum stochastic process implies something more general, where the spectrum of $X_t$ could be time-dependent and even the Hilbert space upon which it acts nontrivially could be continuously changing in time.

A concrete and pertinent example is a continuous wave laser that is switched on at time $t_0 = 0$ and then switched off at some later time $t > 0$.  Consider a stationary observer located a distance $d = c\, \tau$ away from the laser who begins counting photons with a perfectly efficient detector at time $t_0$.  By a time $0 < s < \tau$ it is clear that this observer will have not observed any of the laser light and so, in absence of any corrupting background, the probability for observing anything must be zero.  The point of this example is that in the time interval $[0, s]$ any observations must be modeled as projectors acting on a part of Fock space that is independent of the part displaced by the laser.

It is perfectly reasonable to use a Schr\"{o}dinger picture description where the observer makes projective measurements on a volume of the electromagnetic field and the free-field Hamiltonian acts in such a way as to propagate the state of the field fixed from the laser position to the detector.  In Sec. \ref{chMath:sec:QSDEs}, a mapping between a set of commuting observables and a classical probability space is developed, where there is a one-to-one correspondence between classical random variables and commuting operators.  To utilize these tools of classical probability theory, it is most natural to work in a Heisenberg picture where the states remain fixed and the unitary evolution is applied to the observables of interest.  To develop a Heisenberg picture formulation for a continuous measurement, we require a mathematical structure that can cope with the fact that as time progresses a stationary observer will measure an ever increasing set of operators, and these operators act upon different parts of the field's Hilbert space.  This family of operators is the quintessential definition of a quantum stochastic process.

In classical stochastic calculus, the Wiener process is the fundamental random process from which the It\={o} integral is constructed and from there other processes are defined.  In the quantum setting, we require equally fundamental operator-valued noise processes from which we will construct other processes.  But as we are seeking a description of a continuous optical measurement, those processes should arise from the quantized electromagnetic field.  The next section reviews the canonical quantization of the free electromagnetic field to identify the operator nature of the quantized field.

Throughout this chapter we will be discussing both classical and quantized elements of the electromagnetic field.  In order to make this distinction, the classical vector fields for the vector potential, electric field, \emph{etc.} will be denoted as $\V{\mathcal{A}}(\Vx, t)$, $\V{\mathcal{E}}(\Vx, t)$ and their quantized operator expressions as $\V{A}(\Vx, t)$, $\V{E}(\Vx, t)$.  We will quantize the free space electromagnetic field following the classic text by \citeauthor{cohen-tannoudji_photons_1989}, and use SI units \cite{cohen-tannoudji_photons_1989}.  For reference, the spatial Fourier transform of a function $f(\Vx, t)$ is defined as
\begin{equation}
  \F{f}(\Vk, t) \define \int_{\R^3} \frac{d^3x}{\sqrt{(2 \pi)^3}}\ e^{-i \Vk \cdot \Vx}\, f(\Vx,t)
\end{equation}
and the inverse transform is
\begin{equation}
f(\Vx, t) \define \int_{\R^3} \frac{d^3k}{\sqrt{(2 \pi)^3}}\ e^{+ i \Vk \cdot \Vx}\, \F{f}(\Vk,t).
\end{equation}

\subsection{Free space quantization \label{chQuLight:sec:freeSpaceQuantization} }

When rigorously quantizing the free space electromagnetic field, one begins by defining a scalar Lagrangian functional, with respect to variations in the vector potential, $\boldsymbol{\mathcal{A}}(\Vx, t)$, whose minimization reproduces Maxwell's equations \cite{cohen-tannoudji_photons_1989}.  The field Hamiltonian is
\begin{equation}\label{chQuLight:eq:classicHharmonicOscillator}
    \mathcal{H}_f =  \int_{\R^3} d^3x\left( \frac{\abs{\boldsymbol{\Pi}}^2}{2 \varepsilon_0} + \frac{1}{2}\varepsilon_0 c^2 \abs{\grad \times \boldsymbol{\mathcal{A}}}^2\right)
\end{equation}
with the conjugate variable to the vector potential $\boldsymbol{\Pi}$ being
\begin{equation}
    \boldsymbol{\Pi} =  \varepsilon_0 \frac{\partial}{\partial t} \boldsymbol{\mathcal{A}} = - \varepsilon_0 \boldsymbol{\mathcal{E}}.
\end{equation}
(The final equality with the electric field is made by assuming that there are no free charge.)
From this classical Hamiltonian, the connection with quantum mechanics is made by noting this is the Hamiltonian of a continuous set of harmonic oscillators with canonical variables $\set{\boldsymbol{\mathcal{A}}, \boldsymbol{\Pi}}$.  Quantization then promotes these variables to canonically commuting operators.

By choosing to work in the Coulomb gauge, it is easily shown that if  $\widetilde{ \boldsymbol{\mathcal{A}} }(\Vk,t)$ is the Fourier transform of $\boldsymbol{\mathcal{A}}$ then
\begin{equation}
    \Vk \cdot \widetilde{\boldsymbol{\mathcal{A}}} = 0.
\end{equation}
This constraint results in two free polarization components (labeled by $s \in \set{1,2}$) for each Fourier component, defining the vectors $\Ve_q(\Vk)$ which satisfy the properties
\begin{subequations}
  \begin{align}
    \Vk \cdot \Ve_q(\Vk) &= 0,\\
    \Ve^*_q(\Vk) \cdot \Ve_{q'}(\Vk) &= \delta_{q q'}, \quad \text{and}\\
    \sum_q e_{qi}^*(\Vk)\, e_{qj}(\Vk)  &= \delta_{ij} - k_i k_j / \abs{\Vk}^2\label{chQuLight:eq:transVectorComplete}
  \end{align}
\end{subequations}
where $i,j$ refer to the Cartesian components.
In terms of the real space operators the components of the quantized fields $\VA(\Vx)$ and $\V{E}(\Vx)$ satisfy the commutation relations,
\begin{align}\label{chQuLight:eq:fieldCommutation}
    [A_i(\Vx), A_j(\Vx')] = [E_i(\Vx), E_j(\Vx')] = 0\\
    [A_i(\Vx), - \varepsilon_0 E_j (\Vx')] = i \hbar \,\delta^{T}_{ij}(\Vx - \Vx')
\end{align}
where $\delta^{T}_{ij}(\Vx - \Vx')$ is the transverse delta function, defined as
\begin{equation}
    \delta^{T}_{ij}(\Vx - \Vx') \define\int\, \frac{d^3k}{(2 \pi)^3}\, e^{+ i \Vk\cdot(\Vx - \Vx')} \left(\delta_{ij} - \frac{k_i k_j}{\abs{\Vk}^2} \right).
\end{equation}

In reciprocal space, the vector potential is most suitably expressed in terms of the annihilation ($\ahat_q(\Vk)$) and creation ($\ahat^\dag_q(\Vk)$) operators associated with the polarization vectors $\Ve_q(\Vk)$.  They obey the commutation relations,
\begin{align}\label{chQuLight:eq:canonicalCommutationAnnihilation}
    [\ahat_q(\Vk), \ahat_{q'}(\Vk')] &= [\ahat^\dag_q(\Vk), \ahat^\dag_{q'}(\Vk')] = 0 \quad \text{and}\\
    [\ahat_q(\Vk), \ahat^\dag_{q'}(\Vk')] &= \delta_{q q'}\,\delta (\Vk - \Vk').
\end{align}
The Schr\"{o}dinger pictures operators for the vector potential and electric field are then given by
\begin{align}
    \VA(\Vx) &= \sum_q  \int  \frac{d^3k}{(2 \pi)^{3/2}}\sqrt{\frac{ \hbar }{2 \varepsilon_0 c \abs{\Vk}}} e^{i \Vk\cdot\Vx} \, \Ve^*_q(\Vk)\, \ahat_q(\Vk)  + h.c., \label{chQuLight:eq:quantumA}\\
    \V{E}(\Vx) &= i \sum_q  \int  \frac{d^3k}{(2 \pi)^{3/2}} \sqrt{\frac{\hbar c \abs{\Vk} }{2 \varepsilon_0 }} e^{i \Vk\cdot\Vx} \Ve^*_q(\Vk)\, \ahat_q(\Vk)  + h.c.,\quad\text{and} \label{chQuLight:eq:quantumE}\\
    \V{B}(\Vx) &= i \sum_q  \int  \frac{d^3k}{(2 \pi)^{3/2}}\sqrt{\frac{ \hbar }{2 \varepsilon_0 c \abs{\Vk}}} e^{i \Vk\cdot\Vx} \, \Vk \times\, \Ve^*_q(\Vk)\, \ahat_q(\Vk)  + h.c.\label{chQuLight:eq:quantumB}
\end{align}
Substituting these expressions into the Hamiltonian results in the simplified form
\begin{equation}\label{chQuLight:eq:quantumHsymmetric}
    H_f = \half \hbar\, c \sum_q \int d^3k\ \abs{\Vk} \left(\,\ahat_q(\Vk) \ahat^\dag_q(\Vk) + \ahat^\dag_q(\Vk) \ahat_q(\Vk)\,\right).
\end{equation}
We will follow the standard practice of discarding any vacuum energy contributions and write
\begin{equation}\label{chQuLight:eq:quantumH}\,
    H_f = \hbar \, c \sum_q \int d^3k\ \abs{\Vk}\, \ahat^\dag_q(\Vk) \ahat_q(\Vk).
\end{equation}

\section{Wave Packets, Fock Space and Stochastic\\ Processes }
For a single simple harmonic oscillator, the state space is spanned by a complete basis of states labeling the number of quanta in the oscillator.  In the free-space EM field, we have instead a continuous distribution of oscillators, each representing a plane wave Fourier component with one of two orthogonal polarization states.  This continuous nature means that the field operators are unbounded in two ways: for a given $\Vk$ we can have a countably infinite number of quanta and a single quantum can have an unbounded amount of energy if we allow pure plane wave states with arbitral large wave numbers $\abs{\Vk}$.  The solution to these problems is to consider only states of light for which these operators return finite quantities.

Notice that in Eqs. (\ref{chQuLight:eq:quantumA} - \ref{chQuLight:eq:quantumH}) the plane wave operators $\ahat_q(\Vk)$ and $\ahat^\dag_q(\Vk)$ act as operator-valued integral kernels  where they are combined with various weighting functions to form the physically relevant operators.  The fact that they have the singular commutation relation $[ \ahat_q(\Vk),\,\ahat^\dag_{q'}(\Vk')] = \delta_{q q'}\,\delta(\Vk - \Vk')$ implies that they are only well defined in the context of an integral, where the Dirac delta function is well behaved.  The point is that the domain of the operators $\VE$, $\VB$ and $H_f$ that return finite eigenvalues should not be considered as a set of distinct plane wave oscillators, but instead in terms of continuous functions defined over ranges of Fourier components.  We will refer to these distributions as \emph{wave packets}, in that by constructing a properly weighted distribution over plane waves one arrives with a localized pulse or packet of waves that propagates in some direction.  Rather than initially discussing wave packets in terms of single quanta, it is easier to first define wave packet states in terms of semiclassical states of light that generalize the coherent state of a single mode harmonic oscillator.

\subsection{Wave packets \label{chQuLight:sec:wavePackets} }

A semiclassical wave packet identifies those states of light that reproduce coherent classical radiation when one takes expectation values of the quantized operators $\VA(\Vx)$, $\VE(\Vx)$, \emph{etc.}  This relationship has been identified by many authors, \emph{e.g.} \citet{deutsch_basis-independent_1991, garrison_quantum_2008, smith_photon_2007}.  We review these results here, focusing on the physical interpretation for the wave packet distributions.

The single-mode coherent state, $\psi = \ket{\alpha}$, is characterized by the complex amplitude $\alpha$, where the mean photon number is given by $\abs{\alpha}^2$ and is an eigenstate of the annihilation operator $\ahat\, \ket{\alpha} = \alpha\, \ket{\alpha}$.  We have seen that in the canonical quantization of the free field, each plane wave and transverse polarization vector has its own annihilation operator $\ahat_q(\Vk)$ and so for a corresponding coherent state of light requires a complex vector valued function $\Vg(\Vk)$. The coherent state $\psi[\Vg]$ satisfies the equation
\begin{equation}
    \ahat_q(\Vk)\, \psi[\Vg] = g_q(\Vk)\, \psi[\Vg].
\end{equation}
By hypothesizing the existence of the states $\psi[\Vg]$ we would like to see how the coherent amplitude function $\Vg(\Vk)$ relates to physical quantities in expectation.  By taking the expectation value of the (vacuum energy removed) Hamiltonian Eq. (\ref{chQuLight:eq:quantumH}) we can easily see that
\begin{equation}\label{chQuLight:eq:wavePacketExpectH}
    \expect{H_f}_{\psi[\Vg]} = \hbar c\, \sum_q \int d^3 k\, \abs{\Vk}\, \abs{g_q(\Vk)}^2.
\end{equation}
An equally trivial calculation results in
\begin{equation}
    \expect{\VA(\Vx)}_{\psi[\Vg]} = \sum_q  \int \frac{d^3k}{(2 \pi)^{3/2}}\sqrt{\frac{\hbar }{2 \varepsilon_0 c \abs{\Vk}}} e^{i \Vk\cdot\Vx} \,  g_q(\Vk)\, \Ve^*_q(\Vk) + c.c. \label{chQuLight:eq:classicalA}
\end{equation}
and
\begin{equation}
    \expect{\VE(\Vx)}_{\psi[\Vg]} = i \sum_q  \int  \frac{d^3k}{(2 \pi)^{3/2}} \sqrt{\frac{\hbar c \abs{\Vk} }{2 \varepsilon_0 }} e^{i \Vk\cdot\Vx}\, g_q(\Vk)\, \Ve^*_q(\Vk)  + c.c. \label{chQuLight:eq:classicalE}
\end{equation}
It is possible to invert these two equations and so express $\Vg(\Vk)$ in terms of the spatial Fourier transform of a classical vector potential, $\VF{\mathcal{A}}(\Vk, t)$.
Performing this inversion we find that,
\begin{equation}\label{chQuLight:eq:normalVariables}
    \Vg(\Vk) = \sqrt{\frac{\varepsilon_0 }{2 \hbar c \abs{\Vk} }} \left(c \abs{\Vk}\,\VF{\mathcal{A}}(\Vk, 0) + i \left.\frac{\partial}{\partial t} \VF{\mathcal{A}}(\Vk, t)\right|_{t = 0} \right).
\end{equation}
When quantizing the field,  Eq. (\ref{chQuLight:eq:normalVariables}) and its adjoint are nothing more than the ``normal variables'' that are in classical correspondence to $\ahat_q(\Vk)$ and $\ahat_q^\dag(\Vk)$ \cite{cohen-tannoudji_photons_1989}.

It is common in optics to relate the physical classical fields $\boldsymbol{\mathcal{A}}$, $\boldsymbol{\mathcal{E}}$, and $\boldsymbol{\mathcal{B}}$ to a unitless mode function.  In terms of the vector potential this results in the \emph{ansatz},
\begin{equation}\label{chQuLight:eq:classicalAMode}
    \boldsymbol{\mathcal{A}}(\Vx, t) = \mathcal{A}_0 \left(\, \V{u}^\pf(\Vx, t) +  \V{u}^{\nf}(\Vx, t)\, \right)
\end{equation}
where $\mathcal{A}_0$ is a real constant and $\V{u}^\pf(\Vx, t)$ is a complex unit-less mode function.  The fact that the vector potential is required to be real, we have the relation that
\begin{equation}\label{chquLight:eq:frequencyComponents}
    \V{u}^{\nf}(\Vx, t) =  \V{u}^{\pf\, *}(\Vx, t).
\end{equation}
As $\V{u}^\pf(\Vx, t)$ is unitless, its integral
\begin{equation}
    \mathrm{v} = \int d^3x\ |\V{u}^\pf(\Vx, t)|^2
\end{equation}
has units of volume and is referred to as the mode volume of the field.
By taking the spatial Fourier transform of Eq. (\ref{chQuLight:eq:classicalAMode}) we have
\begin{equation} \label{chQuLight:eq:classicalAFourier}
    \widetilde{\boldsymbol{\mathcal{A}}}(\Vk, t) =  \mathcal{A}_0 \left(\VF{u}^\pf(\Vk, t) + \VF{u}^{\nf}(\Vk, t) \right)
\end{equation}
The purpose of separating between $\V{u}^\pf(\Vx, t)$ and $\V{u}^\nf(\Vx, t)$ is to allow for the separation between positive and negative frequency components, respectively.  For a free field then,
\begin{equation} \label{chQuLight:eq:modePfFourier}
    \V{u}^\pf(\Vk, t) = \V{u}^\pf(\Vk, 0)\, e^{- i c \abs{\Vk}\, t}.
\end{equation}
The Fourier space version of Eq. \ref{chQuLight:sec:freeSpaceQuantization} is
\begin{equation}
   \V{u}^\nf(\Vk, t) = \VF{u}^{\pf\, *}(- \Vk, t).
\end{equation}
To simplify the expression for $\Vg(\Vk)$ as given in Eq. (\ref{chQuLight:eq:normalVariables}), we need to compute the time derivative of $\VF{\mathcal{A}}(\Vk, t)$. Substituting Eq. (\ref{chQuLight:eq:modePfFourier}) into Eq. (\ref{chQuLight:eq:classicalAFourier}) and computing the derivative we have
\begin{equation} \label{chQuLight:eq:classicalAFourierDerivative}
    \frac{\partial}{\partial t} \VF{\mathcal{A}} (\Vk, t) = - i c \abs{\Vk} \mathcal{A}_0 \left( \VF{u}^\pf(\Vk, t) - \VF{u}^{\pf\, *}(-\Vk, t) \right).
\end{equation}
Substituting this expression into Eq. (\ref{chQuLight:eq:normalVariables}) leads to
\begin{equation}\label{chQuLight:eq:wavePacketSimple1}
    \Vg(\Vk, t) =  \mathcal{A}_0\, \sqrt{\frac{2 \varepsilon_0 c \abs{\Vk} }{ \hbar }}\, \VF{u}^\pf(\Vk, t).
\end{equation}
Rather than including the vector potential constant $\mathcal{A}_0$, which usually contains information about the overall intensity of the field, it is useful to relate it back to the magnitude of $\Vg$.  We first define the characteristic wave number $k_1$ as
\begin{equation}
    k_1 \define  \int d^3 k\ \abs{\Vk}\, \frac{\abs{\VF{u}^\pf(\Vk, 0)}^2}{\mathrm{v}}.
\end{equation}
By considering $\mathrm{v}^{-1}\,\abs{\VF{u}^\pf(\Vk, 0)}^2$ to be a normalized distribution in reciprocal space, then $k_1$ is the average magnitude.  With this definition
\begin{equation}
    \norm{\Vg}^2 = \mathcal{A}_0^2\, \frac{2\, \varepsilon_0\, c\, k_1\, \mathrm{v}}{\hbar}.
\end{equation}
Inverting this relationship results in
\begin{equation}\label{chQuLight:eq:wavePacketSimple}
    \Vg(\Vk, t) = \norm{\Vg}\, \sqrt{\frac{ \abs{\Vk} }{k_1} }\, \frac{\VF{u}^\pf(\Vk, 0)}{\sqrt{\mathrm{v}}} \, e^{-i c \abs{\Vk} t}.
\end{equation}

It is worth noting that the units of Eq. (\ref{chQuLight:eq:wavePacketSimple}) is of root volume and that $\norm{\Vg}$ now acts as a unitless scaling factor.  This final formula shows the fundamental relationship between a distribution over coherent state amplitudes $\Vg(\Vk)$ and the positive frequency Fourier components of the mode function $\VF{u}^\pf(\Vk, 0)$.  While in one sense this has simply been an algebraic exercise (expressing one distribution over spatial frequencies in terms of another) the real utility of this expression is that the mode function $\V{u}^\pf(\Vx, t)$ has  practical implications as it describes the spatial and temporal properties of a propagating laser beam.

Finally, we express the expected energy in a wave packet state in terms of the envelope function. Simply substituting Eq. (\ref{chQuLight:eq:wavePacketSimple}) into Eq. (\ref{chQuLight:eq:wavePacketExpectH}) results in,
\begin{equation}
    \expect{H_f}_{\psi[\Vg]} = \hbar c\, \norm{\Vg}^2  \int d^3 k\ \frac{\abs{\Vk}^2 }{k_1}\, \frac{\abs{ \VF{u}^\pf(\Vk, 0)}^2}{\mathrm{v}}.
\end{equation}
Similarly to defining the mean wave vector $k_1$ we can define a two-norm wave vector $k_2$,
\begin{equation}
  k_{2} = \left(\int d^3 k\ \abs{\Vk}^2 \, \frac{\abs{ \VF{u}^\pf(\Vk, 0)}^2}{\mathrm{v}} \right)^{\half}
\end{equation}
so that
\begin{equation}
    \expect{H_f}_{\psi[\Vg]} = \hbar c\, \norm{\Vg}^2 \frac{(k_2)^2}{k_1}.
\end{equation}
If, however, $\abs{ \VF{u}^\pf(\Vk, 0)}^2$ is a sharply peaked function centered at some large vector $\Vk_0$, then we have that $\abs{\Vk_0} \approx k_1 \approx k_2$.  In this case the average energy is then
\begin{equation}
\expect{H_f}_{\psi[\Vg]} \approx \hbar \omega_0\, \norm{\Vg}^2
\end{equation}
where $\omega_0 = c \abs{\Vk_0}$.

\subsection{Weyl operators \label{chQuLight:sec:WeylOperators} }

Assuming the existence of the semiclassical states is only a first step, but real utility comes from finding the family of operator that generate these states.  In the context of the simple harmonic oscillators, the coherent state with amplitude $\alpha$ is generated by the unitary displacement operator
\begin{equation}\label{chQuLight:eq:displacmentSHO}
  D_{\text{sho}}(\alpha) = \exp\left(\alpha\, \ahat^\dag - \alpha^\ast\, \ahat\right) \quad\text{with }\quad
  \ket{\alpha} = D_{\text{sho}}(\alpha)\,\ket{0}.
\end{equation}
Writing (\ref{chQuLight:eq:displacmentSHO}) in terms of its generator $\Upsilon(\alpha)$
\begin{equation}
\begin{split}
    D_{\text{sho}}(\alpha) =&\, \exp\left(- i \Upsilon(\alpha)\right)
\end{split}
\end{equation}
we find that
\begin{equation}
    \Upsilon(\alpha) = i \left(\alpha\, \ahat^\dag - \alpha^\ast\, \ahat \right).
\end{equation}
Note that as $g_q(\Vk)$ is a distribution of coherent amplitudes over all plane wave modes, we make the correspondence
\begin{equation}
    \alpha^\ast \ahat \ \rightarrow \  g^\ast_q(\Vk)\, \ahat_q(\Vk).
\end{equation}
But as this is a pointwise weighting over each plane wave, we define the total field operators $\ahat[\Vg]$ and $\ahat^\dag[\Vg]$ to be
\begin{equation}
    \ahat[\Vg] \define \sum_q \int d^3 k\ g^\ast_q(\Vk)\, \ahat_q(\Vk)
\end{equation}
and
\begin{equation}
    \ahat^\dag[\Vg] \define \sum_q \int d^3 k\ g_q(\Vk)\, \ahat^\dag_q(\Vk).
\end{equation}
This are sometimes called smeared creation and annihilation operators as they have been spread over a range of $\Vk$ values.  By applying the commutation relations (\ref{chQuLight:eq:canonicalCommutationAnnihilation}), it is easy to see that
\begin{equation}\label{chQuLight:eq:HeisenbergCommutation}
    \left[ \ahat[\Vf],\, \ahat^\dag[\Vg]\right] =  \int d^3 k\ \Vf^*(\Vk)\cdot \Vg(\Vk).
\end{equation}
An important property that we will use is that by the linearity of the integral over $d^3k$ we have that, for complex coefficients $c_1$ and $c_2$
\begin{equation}
  c_1\, \ahat^\dag[\Vf] + c_2\, \ahat^\dag[\Vg]  = \ahat^\dag[c_1 \Vf + c_2 \Vg]
\end{equation}
and
\begin{equation}
  c_1\, \ahat[\Vf] + c_2\, \ahat[\Vg]  = \ahat[c^\ast_1 \Vf + c^\ast_2 \Vg].
\end{equation}
In other words $\ahat^\dag[\cdot]$ is linear in its argument and $\ahat[\cdot]$ is \emph{anti-linear}.  The continuous analog of the displacement operator, called a Weyl operator, is
\begin{equation}
  \weyl[\Vg] \define \exp\left(\ahat^\dag[\Vg] - \ahat[\Vg]\right)
\end{equation}
and the coherent state $\psi[\Vg]$ is given by
\begin{equation}
  \psi[\Vg] = \weyl[\Vg]\, \ket{\vac}.
\end{equation}
Applying the Zassenhaus formula to the Weyl operator shows that
\begin{equation}
  \weyl[\Vg] = \exp(\ahat^\dag[\Vg])\, \exp(\ahat[\Vg])\, \exp\left(-\half \int d^3k\ \abs{\Vg(\Vk) }^2 \right).
\end{equation}
Note that because $\ahat[\Vg]\, \ket{\vac} = 0$ for any $\Vg$, this implies that
\begin{equation}
  \psi[\Vg] = \exp\left(-\half \int d^3k\ \abs{\Vg(\Vk)}^2 \right)\, \exp(\ahat^\dag[\Vg]) \ket{\vac}.
\end{equation}
When proving limits involving sequences of coherent states, it is often more convenient to work with unnormalized state vectors.  Therefore, it is common to define an exponential vector
\begin{equation}
  \e[\Vg] \define \exp(\ahat^\dag[\Vg])\, \ket{\vac}.
\end{equation}
One particularly useful relationship that we will end up applying repeatedly is that
\begin{equation}
  \ahat[\Vf]\,\psi[\Vg] = \int d^3k\ \Vf^*(\Vk)\cdot\Vg(\Vk)\ \psi[\Vg],
\end{equation}
\emph{i.e.}, $\psi[\Vg]$ is an eigenstate of \emph{any} smeared annihilation operator $\ahat[\Vf]$, regardless of the smearing function $\Vf$.  If, however, the functions $\Vf$ and $\Vg$ are orthogonal, then that eigenvalue could very well be zero.

\subsection{Fock space\label{chQuLight:sec:FockSpace}}
A number of useful relations can be derived involving the Weyl displacement operators and the exponential vectors.  Before doing so it is necessary to introduce some of the formal and algebraic properties of second quantization and Fock spaces.  Note that if we take $\Vf$ and $\Vg$ to be any square integrable complex functions, then the right hand side of (\ref{chQuLight:eq:HeisenbergCommutation}) forms an inner product on a Hilbert space of wave packets.  We will denote the inner product as,
\begin{equation}
    \inprod{\Vg}{\Vf} \define \sum_q \int d^3 k\  g^\ast_q(\Vk) f_q(\Vk)
\end{equation}
and the Hilbert space of wave packets as
\begin{equation}
    \hilbert \define \set{ \Vg(\Vk)\ :\ \inprod{\Vg}{\Vg} < \infty}.
\end{equation}

A Fock space $\Fock$ is a total Hilbert space describing a unknown and possibly unbounded number of particles that are each represented by states in a single particle Hilbert space $\hilbert$.  If we are given a single particle from $\hilbert$, then the full Hilbert space of two such particles is the tensor product of two such Hilbert spaces.  Likewise for three particles, there would be three fold product.  We will notate the joint space of n particle as $\hilbert^{\otimes n} = \hilbert \otimes \hilbert\otimes \dots \otimes \hilbert$ where there are $n$ such products.  In terms of a total space with an indeterminant number of particles, each subspace that contains $n$ particle will be mutually orthogonal.  Thus the total space is the direct sum over each subspace.  If we take the space for zero particles to be the complex numbers, ($\hilbert^{\otimes 0} = \mathbbm{C}$), then the full Fock space is given by
\begin{equation}
    \Fock_{\text{full}}(\hilbert) = \bigoplus_{n = 0}^{\infty}\, \hilbert^{\otimes n}.
\end{equation}
The reason for the notation $\Fock_{\text{full}}(\hilbert)$ is that if $\hilbert$ is a Hilbert space of bosonic particles than only states that are symmetric under particle exchange will apply.  We denote the symmetric subspace of $\hilbert^{\otimes n}$ to be $\hilbert^{\otimes_s\, n}$.  So the symmetric Fock space is given by
\begin{equation}
    \Fock_{\text{sym}}(\hilbert) = \bigoplus_{n = 0}^{\infty}\, \hilbert^{\otimes_s n}.
\end{equation}
We are strictly interested in bosonic particles, so throughout this document we when refer to $\Fock(\cdot)$ we are referring to the symmetric Fock space.

For a single simple harmonic oscillator, the coherent state $\ket{\alpha}$ is expressed in terms of the number states $\ket{n}$ as
\begin{equation}
  \ket{\alpha} = \sum_{n = 0}^\infty \frac{\alpha^n}{\sqrt{n!}}\, e^{-\abs{\alpha}^2/2}\, \ket{n}.
\end{equation}
The equivalent expression for the wave packet state $\psi[\Vf]$ is
\begin{equation}
  \psi[\Vf] = e^{-\norm{\Vf}^2/2} \bigoplus_{n = 0}^\infty \frac{\Vf^{\otimes n}}{\sqrt{n!}},
\end{equation}
From the relation that $\inprod{\Vf^{\otimes n}}{\Vg^{\otimes n}} = \inprod{\Vf}{\Vg}^n$, we have
\begin{equation}
  \braket{\psi[\Vf]}{\psi[\Vg]} = \exp\left(  - \half (\,\norm{\Vf}^2 + \norm{\Vg}^2\,)  + \inprod{\Vf}{\Vg} \right)
\end{equation}
or equivalently
\begin{equation}
  \braket{\e[\Vf]}{\e[\Vg]} = e^{ \inprod{\Vf}{\Vg} }.
\end{equation}

A number of useful properties involving the Weyl displacement operators and the exponential vectors are the following:
\begin{itemize}
  \item The Weyl operators obey the composition law
    \begin{equation}\label{chQuLight:eq:weylComposition}
        \weyl[\Vg]\weyl[\Vf] = \exp\left(-\half ( \inprod{\Vg}{\Vf} - \inprod{\Vf}{\Vg} \right) \weyl[\Vg + \Vf].
    \end{equation}
  \item The action of the Weyl operator on an exponential vector is
    \begin{equation}\label{chQuLight:eq:weylExponential}
        \weyl[\Vg]\, \e[\Vf] = e^{- \inprod{\Vg}{\Vf} - \norm{\Vg}^2/2}\,\e [\Vf+ \Vg].
    \end{equation}
  \item The linear span of all the exponential vectors (and equivalently the coherent states) is dense in the symmetric Fock space $\Fock(\hilbert)$, meaning that any state in $\Fock(\hilbert)$ can be represented by a limiting sequence of a linear combination of exponential vectors \cite{barchielli_continual_2006}.
  \item Written in terms of the single particle inner product, the exponential vector $\e[\Vg]$ is an eigenvector of the annihilation operator $\ahat[\Vf]$ with,
    \begin{equation}
      \ahat[\Vf]\,\e[\Vg] = \inprod{\Vf}{\Vg}\ \e[\Vg].
    \end{equation}
\end{itemize}

\subsection{A basis independent expression for the wave packet inner product\label{chQuLight:sec:basisIndependent}}
An alternative to expressing $\norm{\Vg(\Vk, t)}^2$ in terms of the characteristic parameters, $\mathrm{v}$, $k_1$, \emph{etc.} is to relate it to a basis independent expression involving the physical (classical) fields $\V{\mathcal{E}}$ and $\V{\mathcal{A}}$.  From Eq. (\ref{chQuLight:eq:wavePacketSimple1}) we can see that
\begin{equation}
    \Vg(\Vk, t) =  \, \sqrt{\frac{2 \varepsilon_0 c \abs{\Vk} }{ \hbar }}\, \VF{\mathcal{A}}^\pf(\Vk, t)
\end{equation}
and that
\begin{equation}\label{chQuLight:eq:wavePacketNormA}
    \norm{\Vg(\Vk, t)}^2 =  \frac{2 \varepsilon_0}{\hbar}\int d^3 k\ c \abs{\Vk} \, \VF{\mathcal{A}}^{\pf\, *}(\Vk, t) \cdot \VF{\mathcal{A}}^{\pf}(\Vk, t).
\end{equation}
The presence of the factor $c \abs{\Vk}$ makes this expression inherently tied to the $\Vk$ basis and not immediately expressible in terms real space quantities.  However by recognizing that in the Coulomb gauge $\V{\mathcal{E}} = - \frac{\partial}{\partial t}\V{\mathcal{A}}$ and $\VF{\mathcal{A}}^{\pf}(\Vk, t) = \VF{\mathcal{A}}^{\pf}(\Vk, 0) e^{-i c \abs{\Vk} t}$ we have the equality
\begin{equation}
    c \abs{\Vk} \VF{\mathcal{A}}^{\pf}(\Vk, t) = - i\, \VF{\mathcal{E}}^{\pf}(\Vk, t).
\end{equation}
Substituting this relation into Eq. (\ref{chQuLight:eq:wavePacketNormA}),
\begin{equation}
    \norm{\Vg(\Vk, t)}^2 =  \frac{i 2 \varepsilon_0}{\hbar} \int d^3 k\ \VF{\mathcal{E}}^{\pf\, *}(\Vk, t) \cdot \VF{\mathcal{A}}^{\pf}(\Vk, t).
\end{equation}
This expression is basis independent, in the sense that we can take the inverse transforms to arrive at
\begin{equation} \label{chQuLight:eq:wavePacketFieldNorm}
    \norm{\Vg}^2 =  \frac{i 2 \varepsilon_0}{\hbar} \int d^3 x\ \V{\mathcal{E}}^{\pf\, \ast}(\Vx, t) \cdot \V{\mathcal{A}}^{\pf}(\Vx, t).
\end{equation}

In \citep{smith_photon_2007}, \citeauthor{smith_photon_2007} derive a Dirac quantization scheme for a photon wave function, equivalent to the more standard expressions reviewed in Sec. \ref{chQuLight:sec:freeSpaceQuantization}.  In that work they assume that for each polarization vector $q$ there exists a countable set of complete scalar orthonormal wave packets $\set{ g_{j\, q}(\Vk) }$, which therefore satisfy the properties
\begin{equation}
    \begin{split}
     \sum_j\, g_{j\, q}(\Vk)^* g_{j\, q}(\Vk') &= \delta(\Vk - \Vk')\quad \text{and}\\
      \int d^3k\ g_{j\, q}(\Vk)^* g_{j'\, q}(\Vk) &= \delta_{j\, j'}.
    \end{split}
\end{equation}
They then observe that the classical electric fields $\set{\mathcal{E}^{\pf}_{j\, q}(\Vx, t)}$  in correspondence to these wave packets, via Eq. (\ref{chQuLight:eq:classicalE}), are no-longer orthogonal in a real space overlap integral precisely because of the weighting factor of $\sqrt{\abs{\Vk}}$,
\begin{equation}
      \int d^3x\ \mathcal{E}^{\pf\,\ast}_{j\, q}(\Vx, t)\ \mathcal{E}^{\pf}_{j'\, q'}(\Vx, t) \ne \delta_{j\, j'}\,\delta_{q, q'}.
\end{equation}
They also observe that if instead one considers the overlap with the vector potential then the orthogonality is preserved, due to the cancelation of the factors of $\sqrt{\abs{\Vk}}$.  This is precisely the statement that if $\inprod{\Vg_j(\Vk, t)}{\Vg_{j'}(\Vk, t)} = \delta_{j\, j'}$, then
\begin{equation} \label{chQuLight:eq:wavePacketFieldInprod}
    \begin{split}
      \inprod{\Vg_j(\Vk, t)}{\Vg_{j'}(\Vk, t)} &=  \frac{i 2 \varepsilon_0}{\hbar} \int d^3 x\ \V{\mathcal{E}}^{\pf\, \ast}_{j}(\Vx, t) \cdot \V{\mathcal{A}}^{\pf}_{j'}(\Vx, t)\\
      &= \frac{- i 2 \varepsilon_0}{\hbar} \int d^3 x\ \V{\mathcal{A}}^{\pf\, \ast}_{j}(\Vx, t) \cdot \V{\mathcal{E}}^{\pf}_{j'}(\Vx, t) = \delta_{j, j'}.
    \end{split}
\end{equation}
In quantizing a photon's wave function \citeauthor{smith_photon_2007} consider $\V{\mathcal{E}}^{\pf}(\Vx, t)$ to be the fundamental single particle wave functions.  Secondly they observe that in order to preserve orthogonality in the real space inner product then the dual vectors are not $\V{\mathcal{E}}^{\pf\, \ast}(\Vx, t)$ but are instead proportional to $\V{\mathcal{A}}^{\pf\, \ast}(\Vx, t)$.

In this work we will continue to view the single particle vectors to be the wave packet functions $\Vg$ and not the associated classical electric field.  This is for two reasons. Firstly, it is mathematically convenient that the vector dual to the wave packet $\Vg$ is simply its complex conjugate.  The second reason is that we continue to treat $\Vg$ as an analogy with the simple harmonic oscillator's coherent state and the vector potential $\mathcal{A}$ and the electric field $\mathcal{E}$ are in correspondence with $X$ and $P$ quadratures.

\subsection{Fock space and stochastic srocesses\label{chQuLight:sec:FockSpaceStochastProcesses}}
 By structuring the Hilbert space of the free EM field as a Fock space defined over a single particle Hilbert space $\hilbert$, we can now define a quantum stochastic processes and a quantum stochastic calculus.  Consider again the example of a coherent laser pulse propagating towards a photon counter.  A quantum description of a traveling wave laser pulse is a coherent wave packet state $\psi[\Vg]$ where $\Vg(\Vk,t)$ is related to the classical field by Eq.(\ref{chQuLight:eq:wavePacketSimple1}).  Imagine a perfect space fixed detector that is capable of returning a voltage directly proportional to the total energy in a given classical wave packet $\Vg(\Vk,t)$.  Furthermore, imagine that this detector is activated between the times $[t_0, t_1]$, and after this interval the voltage is read.  If the ``entirety'' of an incident wave packet $\Vg$, could be absorbed in that time, then the detector should be modeled as making a projective measurement on the part of Fock space containing $\psi[\Vg]$.  Depending upon the details of the detector, it could likely have recorded pulses that were similar enough to $\Vg$, either in magnitude, temporal profile or carrier frequency.  For instance, a 100\% efficient detector with a linear response should be able to measure pulses with $2\ \mu$W of average laser power just as well as a pulse with $200$ mW of power.  By modeling a physical measurement as a Hermitian operator acting upon some Fock space, we need to define the set of possible wave packets the detector could have completely measured.   Fig. \ref{chQuLight:fig:ParaxialMeasurementModel} shows a schematic where a paraxial laser pulse is focused upon a gated photo-detector.

\begin{figure}[bht!]
	\begin{center}
		\includegraphics[width=1\hsize]{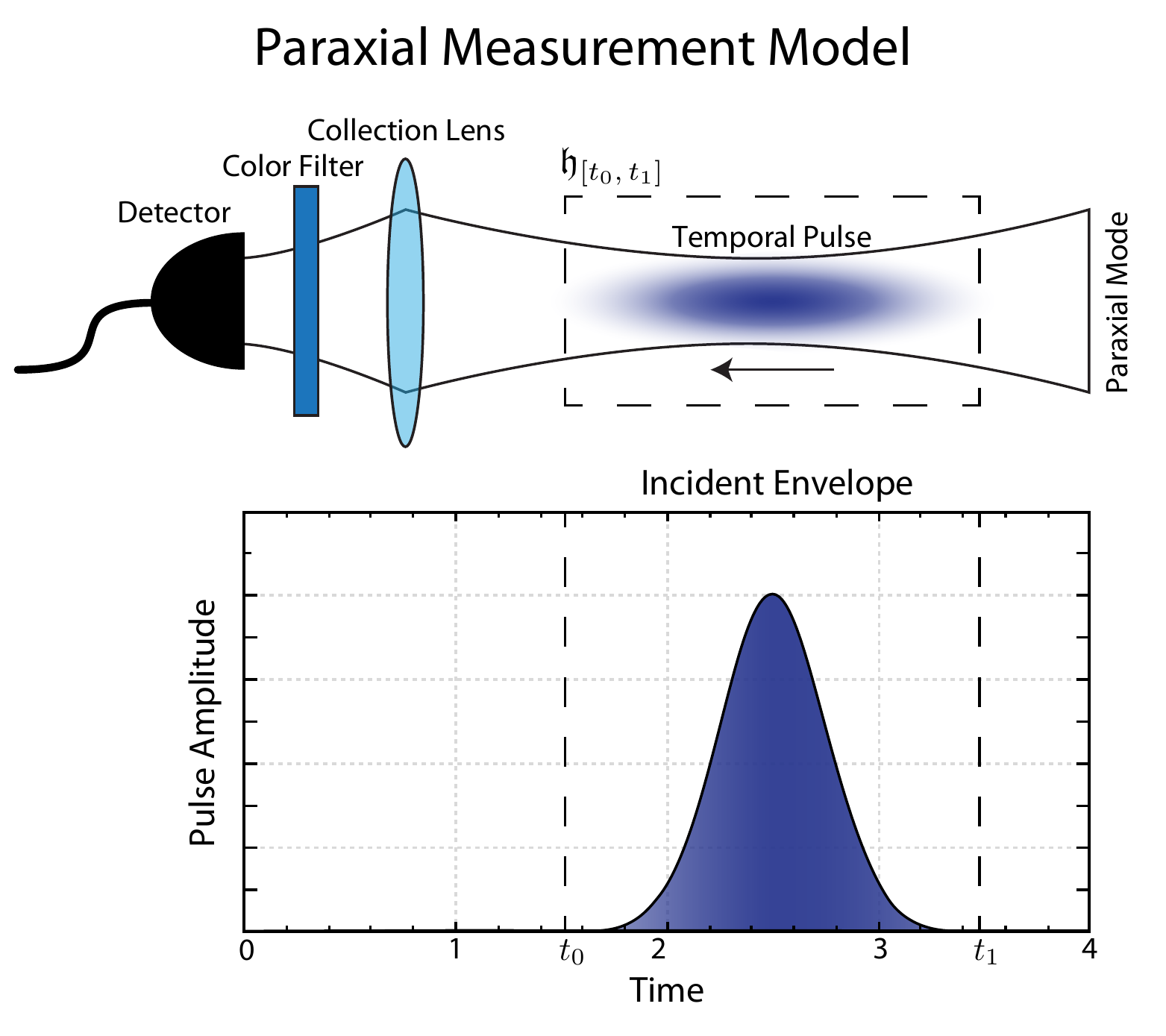}
	\end{center}
   		 \caption{A Model of a Paraxial Measurement.  A photo-detector is positioned relative to an optical system which defines a paraxial beam with a characteristic wavelength and mode profile.  Here the optical system is defined simply by a focusing lens and color filter with the beam schematically indicated by lines of constant intensity.   The detector is activated between times $t_0$ and $t_1$ which corresponds, at time $t = 0$ to a pulse localized in space in the region $\Delta z = c (t_1 - t_0)$.  For a perfect detector with a linear response, the integrated output current will be proportional to the total pulse energy and can be modeled as making a projective measurement on the sub-Fock space of pulses localized between these times. \label{chQuLight:fig:ParaxialMeasurementModel} }
\end{figure}

In the second quantization formalism we can give a mathematical chain from a classical wave packets to field operators.  In the time interval $[t_0, t_1]$ a fixed detector could projectively measure some set $\set{\Vg}$ of incident wave packets and the linear span of these wave packets forms a subspace $\hilbert_{[t_0, t_1]} \subset \hilbert$.  In turn $\hilbert_{[t_0, t_1]}$ defines a subspace $\Fock(\hilbert_{[t_0, t_1]}) \subset \Fock(\hilbert)$. Furthermore there exist operators $O_{[t_0, t_1]}$ that act nontrivially on coherent states $\psi[\Vg] \in \Fock(\hilbert_{[t_0, t_1]})$ but as the identity for any $\psi[\Vg^\perp]$ for $\Vg^\perp \notin \hilbert_{[t_0, t]}$.  An operator $X_{[t_0, t_1]}$ identified by this procedure then defines a quantum stochastic process, by considering the family of operators $\set{X_{[t_0, t]}\, :\, t_0 \le t < \infty}$.  The requirement that $X_{[t_0, t]}$ acts trivially on coherent states $\psi[\Vg^\perp]$ defines a process that is \emph{time-adapted}, in direct analogy with a time-adapted classical stochastic process (see Sec. \ref{chMath:sec:processes} for the definition of a time-adapted classical stochastic process).  Turning this qualitative procedure into a mathematically sound object requires explicitly constructing $\hilbert_{[t_0, t_1]}$, which means defining what it means to measure the entirety of a wave packet in a finite time interval.  While na\"{i}vely this may seem trivial, in practice it intersects with the problem of defining a localizable photon in quantum field theory.  We will illustrate why this is an issue next.

\subsection{Localized wave packets and stochastic processes\label{chQuLight:sec:localization}}
While the canonical quantization of the free field is most easily performed in the Fourier domain, the mathematical structure of the second quantized Fock space $\Fock(\hilbert)$ is generally basis independent.  The operators $\ahat[\Vg]$ and $\ahat^\dag[\Vf]$ can be related to the coherent states $\psi[\Vh]$ without any reference to the fact that the wave packets are originally defined with respect to $\Vk$.  Any unitary transformation of $\Vg$ is an equally valid expression of the wave packet state in that the Hilbert space of wave packets $\hilbert = \set{\Vg :\ \inprod{\Vg}{\Vg} < \infty}$ is basis independent.  The only element that depends upon $\Vg$ being defined in the Fourier domain is its relationship to the spatial profile of the mode function $u^\pf(\Vx, t)$.  But as we have defined the wave packets in the Fourier domain, it is not immediately apparent what effect the constraint $\norm{\Vg}^2 < \infty$ has on $u^\pf(\Vx, t)$. One drastic result of this constraint is that it prohibits one from defining fields that are strictly localized in space \cite{garrison_quantum_2008,bialynicki-birula_exponential_1998}.

To see why this is true, consider a one-dimensional case where we wish to define square wave pulse of duration $L$, with a carrier frequency $\omega_0 = c\, k_0$.  The mode function for such a pulse is
\begin{equation}
  u_{\text{loc}}^{\pf}(z, t) = \indicate{[0, L]}\!(z - c t)\, \exp( +i k_0 (z - c t) ).
\end{equation}
Taking the spatial Fourier transform shows that
\begin{equation}
  \widetilde{u}_{\text{loc}}^{\pf}(k, t) = \frac{L}{\sqrt{2 \pi}}\, \operatorname{sinc}\big( \half L (k - k_0) \big)\, \exp\big( -i \half(k - k_0)\, L - i c k\, t \big)
\end{equation}
and from Eq. (\ref{chQuLight:eq:wavePacketSimple1}), we then have $ g_{\,\text{loc}}(k, t) \propto \sqrt{k}\, \widetilde{u}_{\text{loc}}^{\pf}(k, t)$.  The $\sqrt{k}$ factor makes all the difference, as if we try and calculate the norm we find
\begin{equation}
  \norm{ g_{\,\text{loc}}(k, t) }^2 \propto \int_{-\infty}^\infty dk\ \abs{k}\,  \text{sinc}^2( \half L (k - k_0) )  = \infty.
\end{equation}
The failure of this calculation stems from the fact that the indicator function $\indicate{[a, b]}$ is a discontinuous function and this discontinuity presents itself in the Fourier domain by this divergence.  This example implies that there is a nonlocalizable property of photon wave packets.  This nonlocal property of a photon wave function has been studied by many authors, with various definitions for a photon's wave function, see \cite{smith_photon_2007} for a pertinent discussion. While this example does not show that any localized wave packet suffers from this or a similar problem, this is indeed the case.  In \cite{bialynicki-birula_exponential_1998}, \citeauthor{bialynicki-birula_exponential_1998} proves that the energy density of a photon can be localized no better than an exponential function, $\exp( - f(r) )$, where $f(r)$ grows slightly slower than a linear function in $r$.

At this point, we must make some approximation thereby admitting localized states of light.  Ultimately this means that we will relax the measurement window to include functions localized with in exponentially damped tails.  However, we gain physical insight by considering a temporal rescaling so that on a timescale that is long compared to an optical period, a smoothly varying function can appear to be a localized discontinuous function.  We will also show that though this rescaling, one obtains all the familiar approximations in quantum optics, namely the Markov, quasi-monochromatic, and rotating wave approximations.  It also provides a gateway for defining quantum white noise and equivalently the necessary conditions for applying a quantum Wong-Zakai theorem to arrive at a physical realization of quantum stochastic calculus.

\section{Paraxial Envelopes and Measurable Pulses\label{chQuLight:sec:measureableSpace}}
Before constructing temporally localized wave packet we first must address the spatial/temporal decomposition indicated in Fig. \ref{chQuLight:fig:ParaxialMeasurementModel}.  An excellent mathematical description for the focused collection of light by a series of thin lenses is to model the system paraxially where a coherent plane wave is propagating along the optical axis but is spatially and temporally modulated by a slowly varying envelope function.  This envelope function describes how the plane wave is localized to the optical axis as well as how the phase fronts are distorted by the optical system \cite{siegman_lasers_1986,hecht_optics_2002}.  Appendix \ref{app:paraxialOptics} reviews the derivation of the paraxial wave equation, which describes the propagation of a slowly-varying-envelope, as well as computes its spatial Fourier transform.

A paraxial and quasi-monochromatic wave is characterized by the complex function
\begin{equation} \label{chQuLight:eq:paraxialFactoredSolution}
    \V{\mathcal{U}}^{\pf}(\Vx,t) = f(t_r )\, \V{u}^\pf_T(\Vx_T, z)\, e^{- i \omega_0\, t_r }.
\end{equation}
Here $\Ve_z$ is the axis of propagation, $\Vx_T$ is the remaining transverse coordinates and $t_r = t - z/c$ is the retarded time.  The paraxial mode function $\V{u}^\pf_T(\Vx_T, z)$ describes how the carrier wave (with angular frequency $\omega_0$, and wave number $k_0$) is modulated as it propagates along the optical axis.  Note that this is a time independent quantity.  Any nontrivial time dependence is given by the temporal envelope function $f(t_r)$, which decouples from the paraxial mode $\V{u}^\pf_T$.  The only requirement is that $f(t_r)$ be slowly varying,  $\abs{\frac{d}{d t} f} \ll \omega_0\,\abs{f}$, ensuring the full solution is quasi-monochromatic.  The problem of finding the space of measurable wave packets now translates into finding the temporal envelopes $\set{f(t_r)}$ that ``fit'' in the measurement window $[t_0, t_1]$.

Identifying the appropriate spatial mode function $\V{u}_T^\pf(\Vx_T, z)$ for a given optical system, is simply a problem of classical optics and is well modeled by a Hermite-Gaussian mode function \cite{hecht_optics_2002}.  Here we are only concerned with the fact that such a function exists and is well defined and has a given ``transverse area''.   In free space conservation of energy requires that the total power passing though a plane transverse to the optical axis be conserved, which manifests thought the property that for any paraxial mode we have
\begin{equation}
\int_{\R^2} d^2x_T\ \abs{\VF{u}_T^\pf(\Vk_T, z)}^2 \define \sigma_T
\end{equation}
and that $\sigma_T$ is \emph{independent} of $z$ for any finite $z$.  While the distribution of energy in the transverse plane can vary due to diffraction, the total power passing though an infinite transverse plane will be conserved.  This transverse area can be combined with the square integrated temporal duration
\begin{equation}
  \tau\define \int dt\ \abs{f(t)}^2,
\end{equation}
to construct the total mode volume
\begin{equation}
  \mathrm{v} = c\, \tau\, \sigma_T.
\end{equation}

We have already shown how a wave packet $\Vg$ is related to the spatial Fourier transform of a classical vector potential, $\VF{\mathcal{A}}^\pf(\Vk, t)$, and how that can be expressed in terms of a unitless mode function $\V{u}^\pf(\Vx, t)$.  The spatial Fourier transform of the paraxial mode function is given by
\begin{equation}
    \VF{\mathcal{U}}^{\pf}(\Vk, t) = c\, \F{f}\left(\omega(\Vk) - \omega_0\right)\, \VF{u}^{\pf}_T(\Vk_T, 0)\,e^{-i \omega(\Vk) t}.
\end{equation}
where $\F{f}$ is the temporal Fourier transform of the pulse envelope, $\VF{u}^{\pf}_T(\Vk_T, 0)$ is the spatial transform of the mode function with respect to the transverse coordinates $\Vx_T$ (evaluated at $z = 0$) and $\omega(\Vk)$ is the approximate frequency
\begin{equation}
  \omega(\Vk) \define c \abs{\Vk} \approx c\, \left(\frac{\abs{\Vk_T}^2}{2 k_0} + k_z\right).
\end{equation}
Eq. (\ref{chQuLight:eq:wavePacketSimple1}) relates $\Vg(\Vk, t)$ to $\VF{\mathcal{A}}^\pf(\Vk, t)$ and so
\begin{equation}
  \Vg(\Vk, t) =  \mathcal{A}_0\, \sqrt{\frac{2 \varepsilon_0 \omega(\Vk)}{ \hbar }} c \, \F{f}\left(\omega(\Vk) - \omega_0\right)\, \VF{u}_T^\pf(\Vk_T)\,e^{-i \omega(\Vk) t}.
\end{equation}

In Sec. \ref{chQuLight:sec:wavePackets}, we eliminated the constant $\mathcal{A}_0$ in favor of an expression in terms of characteristic parameters, namely the mode volume $\mathrm{v}$ and the characteristic wave number $k_1$.  Here we abandon $k_1$ in favor of a frequency $ \omega_1 = c k_1$, which is equal to
\begin{equation}
    \omega_1 = \frac{1}{\mathrm{v}}\int d^3k\ \omega(\Vk)\, \abs{c\, \F{f}\left(\omega(\Vk) - \omega_0\right)\, \VF{u}_T^\pf(\Vk_T)}^2.
\end{equation}
Calculating this integral is easiest with the change of variables\\ $\set{\Vk_T, k_z} \rightarrow  \set{\Vk_T, \omega(\Vk)}$, we find that
\begin{equation}
    \omega_1 = \int \frac{d^2 k_T}{\sigma_T} \abs{\VF{u}_T^\pf(\Vk_T)}^2\, \int \frac{d\omega(\Vk)}{\tau}\, \omega(\Vk)\, \abs{\F{f}\left(\omega(\Vk) - \omega_0\right)}^2.
\end{equation}
Implicit in the paraxial approximation, is the requirement that $\abs{\VF{u}_T^\pf(\Vk_T)}^2 \rightarrow 0$ as $\abs{\Vk_T}^2 \rightarrow \infty$.  This fall off implies that we can treat the factor $c\, \abs{\Vk_T}^2/\, 2 k_0$ in $\omega(\Vk)$ as a finite and independent offset to the $d \omega(\Vk)$ integral and not consider how $\omega(\Vk)$ converges as $k_z \rightarrow -\infty$ with $\abs{\Vk_T}^2 \rightarrow \infty$.  We can then make another change of variables $\omega(\Vk) \rightarrow \nu + \omega_0$ so that
\begin{equation}
    \omega_1 = \int \frac{d^2 k_T}{\sigma_T} \abs{\VF{u}_T^\pf(\Vk_T)}^2\, \int \frac{d \nu}{\tau}\, (\omega_0 + \nu)\, \abs{\F{f}(\nu)}^2.
\end{equation}
when $f(t)$ is real-valued, it is simple to show that $\abs{\F{f}(\nu)}^2$ is an even function and therefore mean zero.  In that case we have
\begin{equation}
        \omega_1 = \omega_0 \int \frac{d \nu}{\tau}\ \abs{\F{f}(\nu)}^2 + \int \frac{d \nu}{\tau}\ \nu\, \abs{\F{f}(\nu)}^2 = \omega_0
\end{equation}
as one would intuitively expect.  In the case of a complex valued $f(t)$, $\abs{\F{f}(\nu)}^2$ will not in general be mean zero.  In this more general case,
\begin{equation}
    \omega_1 = \omega_0\left(1  +  \int \frac{d \nu}{\tau}\ \frac{\nu}{\omega_0}\, \abs{\F{f}(\nu)}^2 \right).
\end{equation}
For the one-dimensional localized pulse, Sec. \ref{chQuLight:sec:localization} demonstrated that the second integral is infinite.  By integral expressions for the Fourier transforms and then integrating by parts, we can show that
\begin{equation}\label{chQuLight:eq:characteristicFrequencyCorrection}
    \int d \nu\ \frac{\nu}{\omega_0}\, \abs{\F{f}(\nu)}^2 =  - i \int d t\ \frac{1}{\omega_0}\left( \frac{d}{d t} f^\ast(t) \right)\, f(t).
\end{equation}
By the slowly varying envelope approximation, we require that $\abs{\frac{d}{d t} f(t)} \ll \omega_0 \abs{f(t)}$, and in order for the quasi-monochromatic regime to hold this integral must be a small correction factor.
Therefore, we find that
\begin{equation}\label{chQuLight:eq:wavePacketParaxial}
    \Vg(\Vk, t) \approx \norm{\Vg} \sqrt{\frac{\omega(\Vk)\,c }{\omega_0\, \tau\, \sigma_T}} \, \F{f}\left(\omega(\Vk) - \omega_0\right)\, \VF{u}_T^\pf(\Vk_T,0)\,e^{-i \omega(\Vk) t}.
\end{equation}

\subsection{Paraxial wave packets in the time domain \label{chQuLight:sec:wavePacketsTimeDomain}}
Even in a quasi-monochromatic regime, the wave packet $\Vg$ is still tied the Fourier basis due to the factor of $\sqrt{\omega(\Vk)}$.  We just showed that when $f(t)$ is real-valued or very slowly varying then this factor plays no role in calculating $\norm{\Vg}^2$.  Here we would like express the inner product between wave packets that share the same spatial mode in terms of real space coordinates in order observe what effect this ``nonlocal'' factor has on their temporal distinguishably.  If we are able to make the approximation that $\omega(\Vk) \approx \omega_0$ for a family of wave packets then there will be a simple unitary relationship between wave packets the real and Fourier domains.  The goal of this section is to identify this family.

Consider the two wave packets $\Vg_1(\Vk, t)$ and $\Vg_2(\Vk, t)$ that share the same paraxial mode function $\V{u}_T^\pf(\Vx_T, z)$ and carrier frequencies, but have differing temporal profiles $f_1(t_r)$ and $f_2(t_r)$.  For simplicity we will assume that the corrective factor of Eq. (\ref{chQuLight:eq:characteristicFrequencyCorrection}) is small and so Eq. (\ref{chQuLight:eq:wavePacketParaxial}) is valid for each wave packet.  If we then calculate the \emph{unequal} time inner product, we have that
\begin{multline}
    \inprod{\Vg_1(\Vk, t_1)}{\Vg_2(\Vk, t_2)} = \frac{\norm{\Vg_1}\norm{\Vg_2}}{\sqrt{\tau_1\, \tau_2}}\\ \int d^3 k\ \frac{\omega(\Vk)\, c}{\omega_0\, \sigma_T} \, \abs{\VF{u}_T^\pf(\Vk_T,0)}^2\,  \F{f}_1^\ast\left(\omega(\Vk) - \omega_0\right)\,\F{f}_2\left(\omega(\Vk) - \omega_0\right) e^{-i \omega(\Vk)(t_2 - t_1)}.
\end{multline}
By again making the change of variables $\set{\Vk_T, k_z} \rightarrow \set{\Vk_T, \nu}$ with $\nu = \omega(\Vk) - \omega_0$ we are able to integrate out the transverse degrees of freedom to arrive at
\begin{equation}
    \inprod{\Vg_1(\Vk, t_1)}{\Vg_2(\Vk, t_2)} = \frac{\norm{\Vg_1}\norm{\Vg_2}}{\sqrt{\tau_1\, \tau_2}} \int d \nu \, \frac{ (\omega_0 + \nu)}{ \omega_0}\,\F{f}_1^\ast(\nu)\,\F{f}_2(\nu) e^{-i (\omega_0 + \nu) (t_2 - t_1)}.
\end{equation}
In analogy with the convolution theorem, it is easy to show that
\begin{equation}
    \int d \nu  \F{f}_1^\ast(\nu)\,\F{f}_2(\nu) e^{-i \nu t} = \int ds\,  f^\ast_1(s)\, f_2(s + t) = f_1 \star f_2\,(t)
\end{equation}
where $f_1 \star f_2\,(t)$ is the cross-correlation function between $f_1$ and $f_2$ evaluated at time $t$.  Furthermore by repeating the integration by parts transformation from Eq. (\ref{chQuLight:eq:characteristicFrequencyCorrection}) we have that
\begin{equation}
    \int d \nu\ \nu\,  \F{f}_1^\ast(\nu)\,\F{f}_2(\nu) e^{-i \nu t} = - i \frac{df_1}{dt} \star f_2\ (t).
\end{equation}
Combining these two facts,
\begin{multline}\label{chQuLight:eq:paraxialInnerProduct}
    \inprod{\Vg_1(\Vk, t_1)}{\Vg_2(\Vk, t_2)} = \frac{\norm{\Vg_1}\norm{\Vg_2}}{\sqrt{\tau_1\, \tau_2}}e^{-i \omega_0(t_2 - t_1)}\\ \left(f_1 \star f_2\ (t_2 - t_1) - i\frac{1}{\omega_0} \frac{df_1}{dt}\star f_2\ (t_2 - t_1) \right)
\end{multline}
While previously we were able to show that for zero delay and a real-valued envelope the second term would be identically zero, this is clearly not the case for different wave packets.  However, due to the slowly varying envelope approximation we know that this must be a small correction.  Ignoring this correction results in
\begin{equation}\label{chQuLight:eq:zeroOrderCrossCorrelation}
    \inprod{\Vg_1(\Vk, t_1)}{\Vg_2(\Vk, t_2)} \approx \frac{\norm{\Vg_1}\norm{\Vg_2}}{\sqrt{\tau_1\, \tau_2}}e^{-i \omega_0(t_2 - t_1)}\ f_1 \star f_2\, (t_2 - t_1).
\end{equation}
Eq. (\ref{chQuLight:eq:zeroOrderCrossCorrelation}) shows that the overlap between the two wave packets is proportional to the cross correlation function of the temporal envelopes.  Physically this is a extremely satisfying result, as if we have the two (paraxial) field operators $\ahat[\Vg_1(\Vk, t_1)]$ and $\ahat^\dag[\Vg_2(\Vk, t_2)]$ then
\begin{equation}
  \left[\,\ahat[\Vg_1(\Vk, t_1)],\,\ahat^\dag[\Vg_2(\Vk, t_2)]\,\right] \propto f_1 \star f_2\, (t_2 - t_1)
\end{equation}
meaning that field operators for uncorrelated temporal envelopes commute!
Furthermore if we can construct a wave packet $\V{\varphi}(\Vk, t)$ whose temporal envelope $\varphi(t_r)$ is (approximately) delta correlated in time then,
\begin{equation}
  \left[\,\ahat[\V{\varphi}(\Vk, t)],\,\ahat^\dag[\V{\varphi}(\Vk, t')]\,\right] \propto \delta(t'-t).
\end{equation}
This is significant because this is the defining feature of quantum white noise, which is discussed in Sec. \ref{chQuLight:sec:QuantumWhiteNoise}.  Before doing so, we will apply the results of this section the defining $\hilbert_{[t_1, t_2]}$.

\subsection{The measurable subspace\label{chQuLight:sec:measureableConstruction}}
In the ideal situation, the measurable wave packet are the wave packets defined on the paraxial mode $\V{u}_T^{\pf}(\Vx_T, z)$ with a envelope functions $f(t)$ such that $f(s) = 0$ for all $ s \notin [t_0, t_1]$.  Unfortunately because of the problem of localization no such physical wave packets exist.   If we allow for discontinuous functions, then for any function $g(t)$,
\begin{equation}
  g_{[t_0, t_1]}(t) \define \indicate{[t_0, t_1]}\!(t)\, g(t)
\end{equation}
is clearly zero for any $ t \ne [t_0, t_1]$ and therefore would be an element of $\hilbert_{[t_0, t_1]}$.  In order to define approximately localized temporal envelopes we need an approximate form of the indicator function $\indicate{[t_0, t_1]}\!(t)$, \emph{i.e.} a smooth cut off function.  A common choice for such a function is to convolve $\indicate{[t_0, t_1]}\!(t)$ with a smooth positive normalized distribution function $\varphi^{(\sigma)}(t)$, where $\sigma$ represents the degree of localization.  For a concrete example, if $\varphi^{(\sigma)}(t)$ is a mean-zero normalized Gaussian with variance $\sigma^2$ then,
\begin{equation}\label{chQuLight:eq:cutoffFnc}
\begin{split}
    \indicate{[t_0, t_1]}^{(\sigma)}\!(t) &\define \varphi^{(\sigma)} \ast \indicate{[t_0, t_1]}\!(t)\\
    &= \int ds\ \tfrac{1}{\sqrt{2 \pi\, \sigma^2}}\exp(- \tfrac{(t-s)^2}{2 \sigma^2}) \indicate{[t_0, t_1]}\!(s)\\
    &= \half\left(\operatorname{erf}(\tfrac{t - t_0}{\sqrt{2} \sigma}) - \operatorname{erf}(\tfrac{t - t_1}{\sqrt{2} \sigma}) \right).
\end{split}
\end{equation}
Note that because $\lim_{\sigma \rightarrow 0} \varphi^{(\sigma)}(t) = \delta(t)$, we also have $\lim_{\sigma \rightarrow 0} \indicate{[t_0, t_1]}^{(\sigma)}\!(t) = \indicate{[t_0, t_1]}\!(t)$.  Again, to maintain a quasi-monochromatic field f(t) must be slowly varying.  This statement is quantified by the relation $\frac{1}{\omega_0} \abs{\frac{\partial f}{\partial t} }\ll \abs{f}$, and in terms of $\sigma$ this means
\begin{equation}
\frac{1}{\sigma\, \omega_0} \ll 1.
\end{equation}

In quantum optical systems a carrier frequency of $\omega_0 = 2\pi\, \times\, 370\text{ THz}$ is not uncommon, and has a corresponding wavelength of $\lambda_0 = 810 \text{ nm}$.  A common use for a laser at this wavelength is to generate optical pules as short as 5 fs in duration \cite{diels_ultrashort_2006}.  If we take this to be a minimum but physically realizable timescale than we would have $(\sigma\, \omega_0)^{-1} \sim 0.08$.  While a wave packet of this duration is still relatively slowly-varying, its likely that the correction to $\omega_1$ in Eq. (\ref{chQuLight:eq:characteristicFrequencyCorrection}) could be a nonnegligible contribution, as well as other higher order effects.  A convenient limit would be to set $\sigma$ such that $(\sigma\, \omega_0)^{-1} \sim 10^{-3}$, meaning that for the near infrared wavelengths $\sigma \sim 0.1\text{ ps}$.
\begin{figure}[hbt!]
	\begin{center}
		\includegraphics[width=1\hsize]{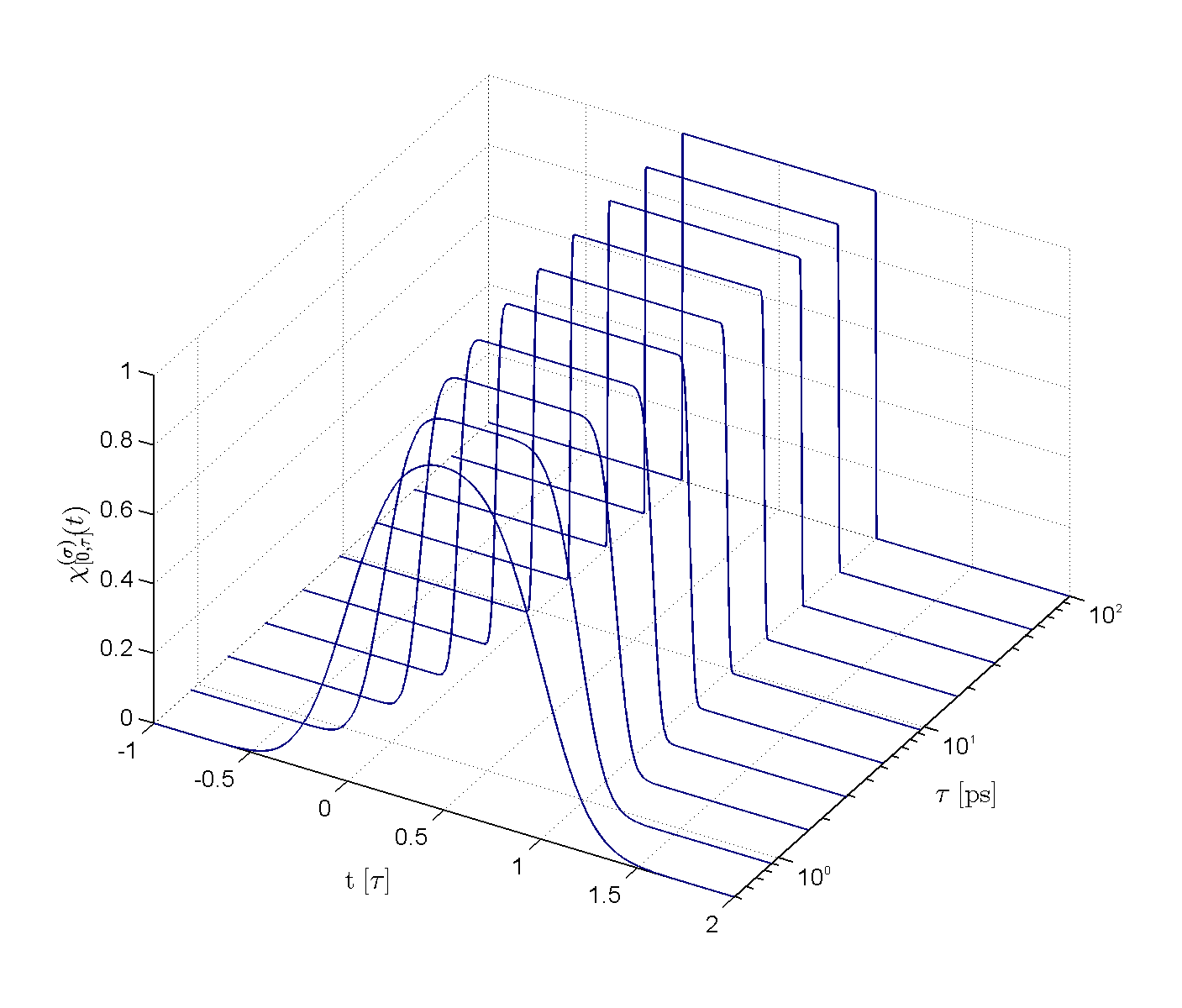}
   		 \caption{Approximations for a localized pulse.   Shown here are a series of slowly varying temporal envelopes.  Each envelope is a unit pulse centered at zero, with a variable duration $\tau$, convolved with a Gaussian smoother with $\sigma = 0.1$ ps.  The pulse duration $\tau$ ranges between $0.5$ ps to $100$ ps.  Each pulse is plotted verses $t$ in units of $\tau$. \label{chQuLight:fig:PulseScaling} }
		\end{center}
\end{figure}
While this sets a physically realistic smoothing variance, it does not say when $\indicate{[t_0, t_1]}^{(\sigma)}\!(t)$ is a good approximation to an actual indicator function, as this requires a comparison between the smoothing and its overall duration.  Fig. \ref{chQuLight:fig:PulseScaling} illustrates this distinction by plotting $\indicate{[0, \tau]}^{(\sigma)}\!(t)$ with $\sigma = 0.1\text{ ps}$ and a series of durations, $0.5\text{ ps}\le\,\tau\,\le 100 \text{ ps}$.  For visual comparison each indicator is plotted in scaled units of $\tau$.  Simple inspection shows that for intervals on the scale of $\tau \gtrsim 10 \text{ ps}$ a smoothing variance of $0.1 \text{ ps}$ makes an excellent approximation to the truly discontinuous function.  Note that as $t$ extends beyond the interval $[0, \tau]$, $\indicate{[0, \tau]}^{(\sigma)}\!(t)$ decays like an error function scaled by $\sigma$ and this extent is independent of $\tau$ for $\tau \gg \sigma$. So that for $\tau  = 10$ ps and $\tau = 0.1$ ps, $\indicate{[0, \tau]}^{(\sigma)}\!(\tau + 3 \sigma) \approx 10^{-3}$.

With this function in hand we can now identify a space of wave packets that are able to be projectively measured approximately in the time window of $[t_0, t_1]$.  For any valid envelope function $f(t)$, we can define a localized version
\begin{equation}
    f_{[t_0, t_1]}^{(\sigma)}(t) \define \indicate{[t_0, t_1]}^{(\sigma)}\!(t)\, f(t).
\end{equation}
Clearly these functions are good temporal envelopes and approximately fit in $\hilbert_{[t_0, t_1]}$.  Note that we can actually increase the space of valid wave packets by observing that with an appropriate $\sigma$, $\indicate{[t_0, t_1]}^{(\sigma)}\!(t)$ is itself a valid temporal envelope.   Therefore for any function $f(t) < \infty$ whose support is contained in the interval $[t_0, t_1]$, we can define a smooth version via convolution
\begin{equation}
  f^{(\sigma)}(t) \define \varphi^{(\sigma)}\ast f\, (t).
\end{equation}
And so any wave packet whose temporal envelope is defined in this way is approximately measurable in the time interval $[t_0, t_1]$.  This can be formalized by defining the set of functions,
\begin{equation}
  \mathscr{S}^{(\sigma)}_{[t_0, t_1]} \define \set{ \varphi^{(\sigma)}\ast f\, (t)\ :\ \int dt\, f(t) < \infty ,\ \operatorname{supp}(f) \subseteq [t_0, t_1] }
\end{equation}
and the set of wave packets
\begin{equation}
  \mathfrak{s}^{(\sigma)}_{[t_0, t_1]} = \lspan{\Vg = \sqrt{\frac{\omega(\Vk)\,c }{\omega_0} }\, \frac{\F{f}\left(\omega(\Vk) - \omega_0\right)}{\sqrt{\abs{t_1 - t_0}}}\ \frac{\VF{u}_T^\pf(\Vk_T,0)}{\sqrt{\sigma_T}}\,e^{-i \omega(\Vk) t}\ : \ f  \in \mathscr{S}^{(\sigma)}_{[t_0, t_1]} }.
\end{equation}
We then have the limit
\begin{equation}
  \mathfrak{h}_{[t_0, t_1]} = \lim_{\substack{\sigma \rightarrow 0\\ \omega_0\sigma \rightarrow \infty}} \mathfrak{s}^{(\sigma)}_{[t_0, t_1]}.
\end{equation}

\section{The one-dimensional limit\label{chQuLight:sec:oneDimensionalLimit}}

In the previous sections we showed that if there are two quasi-monochromatic wave packets $\Vg_1$ and $\Vg_2$ defined on the same paraxial mode but with differing temporal envelops $f_1$ and $f_2$, then the commutator between $\ahat[\Vg_1]$ and $\ahat^\dag[\Vg_2]$ is proportional to the cross-correlation function $f_1\star f_2$.  Furthermore the proportionality is independent of the details of the paraxial mode.  This suggests that moving to a simplified, one-dimensional model is both appropriate and fruitful.  This section makes this connection and relates it to the standard representations of quantum white noise.

The end of Sec. \ref{chQuLight:sec:wavePacketsTimeDomain} suggested defining a wave packet $\V{\varphi}(\Vk, t)$ whose temporal envelope, $\varphi(t_r)$ is delta correlated in time.  In defining the smoothed set of functions $\mathfrak{s}^{(\sigma)}_{[t_0, t_1]}$, Sec. \ref{chQuLight:sec:measureableConstruction} took any integrable function defined on the interval $[t_0, t_1]$, $f_{[t_0, t_1]}(t)$, and convolved it with a Gaussian distribution $\varphi^{(\sigma)}(t)$ to obtain an envelope consistent with the quasi-monochromatic approximation.  To move to a one-dimensional model, we will factor out $f_{[t_0, t_1]}(t)$ from field operator and define an operator-valued density $\ahat[\V{\varphi}^{(\sigma)}(t)]$. We will shortly show that this is approximately delta commuting in time.

Deriving this factorization is not difficult, and begins by first noting that as the Fourier transform of a convolution is proportional to the product of the Fourier transforms, we have that
\begin{equation}
  \F{f}^{(\sigma)}(\nu) = \sqrt{2 \pi}\, \F{f}(\nu)\, \F{\varphi}^{(\sigma)}(\nu) = \left( \int ds\, f(s) e^{+i \nu\, s}\right)\, \F{\varphi}^{(\sigma)}(\nu).
\end{equation}
Substituting this expression into the definition of $\Vg^{(\sigma)}(\Vk, t)$, as written in Eq. (\ref{chQuLight:eq:wavePacketParaxial}), we have,
\begin{equation}
    \Vg^{(\sigma)}(\Vk, t) = \frac{\norm{\Vg}}{\sqrt{\tau}}\, \int ds\ f(s) e^{-i \omega_0\, s}\ \V{\varphi}^{(\sigma)}(\Vk, t - s)
\end{equation}
where
\begin{equation}\label{chQuLight:eq:SmoothingWavePacket}
  \V{\varphi}^{(\sigma)}(\Vk, t) \define \sqrt{\frac{\omega(\Vk)\,c }{\omega_0\, \sigma_T}} \, \F{\varphi}^{(\sigma)}\left(\omega(\Vk) - \omega_0\right)\, \VF{u}_T^\pf(\Vk_T,0)\,e^{-i \omega(\Vk) t}.
\end{equation}
By the anti-linear nature of the creation operator $\ahat[\Vg]$ we are able to bring the integral over $s$ out of the operator to write,
\begin{equation}\label{chQuLight:eq:SmoothWavePacketOperator}
  \ahat[\Vg^{(\sigma)}(\Vk, t)] = \frac{\norm{\Vg}}{\sqrt{\tau}}\, \int ds\ f^*(s)\, e^{+i \omega_0 s}\, \ahat[\V{\varphi}^{(\sigma)}(t - s)].
\end{equation}
Note that $\ahat[\Vg^{(\sigma)}(\Vk, t)]$ and $\ahat[\V{\varphi}^{(\sigma)}(t - s)]$ have different units, as the former is unitless while the latter has units of $1/\sqrt{\mathrm{time}}$, ultimately arising from the fact that $\varphi^{(\sigma)}(t)$ is a density over time and therefore has units.  Eq. (\ref{chQuLight:eq:SmoothingWavePacket}) has the following physical implications.  First is that the integral is the point-wise weighting of an annihilation operator by a complex amplitude, completely akin to original analogy of one-dimensional simple harmonic oscillator $\alpha^*\, \ahat \leftrightarrow f^*(t)\, \ahat(t)$.  The second implication is that the complex weighting function in general matches the phase of the carrier wave, resulting the explicit appearance of the $e^{+i \omega_0 s}$ factor.  The third implication is that both the function $f(s)/\sqrt{\tau}$ and the operator $\ahat[\V{\varphi}^{(\sigma)}(t - s)]$ have the same units, which are the same as white noise.  The final implication comes from the fact that because we have the commutator
\begin{equation}\label{chQuLight:eq:smoothingWavePacketCommutator}
  \left[\ahat[\V{\varphi}^{(\sigma)}(t)],\, \ahat^\dag[\V{\varphi}^{(\sigma)}(t')]\right] = e^{-i \omega_0(t' - t)} \,\varphi^{(\sigma)} \star \varphi^{(\sigma)}(t'-t) + O\left((\sigma\, \omega_0)^{-1} \right)
\end{equation}
and the limit $\lim_{\sigma \rightarrow 0}  \varphi^{(\sigma)}\star\varphi^{(\sigma)}(t'-t) = \delta(t'-t)$, then
\begin{equation}  \label{chQuLight:eq:smoothWavePacketCommutatorLimit}
\lim_{\substack{ \sigma\rightarrow 0 \\ \omega_0\sigma\rightarrow \infty } }  \left[\ahat[\Vg^{(\sigma)}_1(\Vk, t)],\,\ahat^\dag[\Vg^{(\sigma)}_2(\Vk, t)]\right] = \frac{\norm{\Vg_1}\norm{\Vg_2}}{\sqrt{\tau_1\,\tau_2}} \int ds\ f_1^*(s)\, f_2(s).
\end{equation}
This implies that if we have two square integrable functions, $h_1(t)$ and $h_2(t)$ then these functions can count as members of a single particle Hilbert space, $\hilbert' = \mathcal{L}^2(\R)$.  Furthermore we can define a Fock space $\Fock(\hilbert')$ and ultimately the field operators $\ahat[h_1]$ and $\ahat[h_2]$.  If
\begin{equation}
  h_1(t) \cong \frac{\norm{\Vg}}{\sqrt{\tau}} f_1(t),
\end{equation}
then we can draw the formal equivalence
\begin{equation}
  \ahat[h_1] \cong \lim_{\substack{ \sigma\rightarrow 0 \\ \omega_0\sigma\rightarrow \infty } }\, \ahat[\Vg_1^{(\sigma)}].
\end{equation}

While discussing the statistical aspects of quantum light can be interesting in its own right, the real fun is when that light is coupled to another quantum system.  Sec. \ref{chQuLight:sec:QuantumWhiteNoise} shows how when a system couples though an interaction Hamiltonian to operators similar to $\ahat[\V{\varphi}^{(\sigma)}(t)]$ and $\ahat^\dag[\V{\varphi}^{(\sigma)}(t)]$, the limiting object can be written in terms of a quantum stochastic process on the joint Hilbert space $\Hilbert_{sys}\otimes\Fock(\hilbert')$.  Furthermore it discusses how this relates to the standard expressions in quantum optics involving simpler models of quantum white noise.  Before including the system however, we will show what is gained by taking this discontinuous limit and how it is useful for defining quantum stochastic processes.

\section{Quantum Wiener processes and the \\ continuous-time decomposition \label{chQuLight:sec:QuWienerProcesses} }

Quantum stochastic integrals were first defined mathematically by Hudson and Par-thasarathy in \citeyear{hudson_quantum_1984}.  There they formulated a quantum version of a It\={o}-type stochastic integral where the fundamental differentials, in correspondence to the classical Wiener process and other jump processes, are operators acting on a bosonic Fock space \citep{hudson_quantum_1984}.  Independently \citeauthor{gardiner_input_1985} formulated a physical description of quantum white noise operators where creation and annihilation operators are associated with excitations in a bosonic heat bath, which are then used as driving noise sources in a quantum Langevin equation \citep{gardiner_input_1985}.  This second formulation is the most well known in the quantum optics community (see, \emph{e.g.}, the well written reference \citep{gardiner_quantum_2004}) but is less amenable for directly applying the filtering techniques of classical probability theory.  The picture of a heat bath does not immediately induce a picture of a traveling flow of information from a probe system to a detector.  Rather it instills a picture of a system immersed in stationary and chaotic environment and it is unclear what it means quantum mechanically to ``measure the bath''.  While one certainly could, and often does, construct a large scale flow in the bath running from the system to an independent observer such a construction ultimately resembles a wave packet description.

If one instead explicitly includes time into a description of the environment, as \citeauthor{hudson_quantum_1984} do, then statistical properties necessary for defining a quantum Wiener process and a quantum It\={o} integral, namely the ability to construct time-adapted processes, is a direct consequence.  We will shortly review how this is done, but first note that in contrast to relying on the system to dictate how the bath is modeled, this represents a more axiomatic approach in that the statistical properties of the bath are postulated independently from the system.  Now clearly the physics of the entire system-probe-measurement combination will dictate whether or not this is an appropriate model.  The purpose of this section is to show what is gained in this formulation.

Like the \citeauthor{gardiner_input_1985} formulation, this formulation begins by assuming a bosonic Fock space, however here it assumes that it is a second quantization of a single particle Hilbert space,
\begin{equation}
  \hilbert \define \mathcal{L}^2(\R^+) \otimes \hilbert'
\end{equation}
where $\mathcal{L}^2 (\R^+)$ represents the Hilbert space of square integrable functions defined on the positive real line (representing time) and $\hilbert'$ is an auxiliary Hilbert space.  Almost all formulations immediately assume that $\hilbert'$ is a finite $d$-dimensional system and so every $\Vg(t)$ is effectively a complex-vector-valued function, \emph{i.e.} $\Vg\, :\ \R^{+} \rightarrow \Cn{d}$.  In this case the inner product between two single particle vectors is
\begin{equation}
  \inprod{\Vf}{\Vg} = \int_0^\infty dt\ \Vf^*(t)\cdot \Vg(t) < \infty.
\end{equation}
From this single particle Hilbert space the symmetric Fock space $\Fock(\hilbert)$, exponential vectors $\e[\Vg]$, and Weyl displacement operators $\weyl[\Vg]$ are identical to their definitions in Sec. \ref{chQuLight:sec:FockSpace}.  And most importantly, we can define the annihilation and creation operators
\begin{equation}
  A^{i}_t \define \ahat[\indicate{[0, t]} \Ve_i]
\end{equation}
and
\begin{equation}
  A^{i\, \dag}_t \define \ahat^\dag[\indicate{[0, t]} \Ve_i]
\end{equation}
that have the commutation relation
\begin{equation}
  \left[A^{i}_s,\, A^{j\, \dag}_{t}\,\right] = \int_0^\infty ds'\ \indicate{[0, s]}\!(s') \indicate{[0, t]}\!(s')\ \Ve_i \cdot \Ve_{j} = \delta_{i, j} \min(s, t).
\end{equation}
These processes serve as two of the building blocks of quantum stochastic calculus and are in analogy with an n-dimensional Wiener process.  To show this last analogy, we must first discuss how we specifically include time in $\hilbert$.

\subsection{The continuous-time tensor decomposition\label{chQuLight:sec:continuousTimeTensor}}
In basis quantum mechanics there is an intimate connection between statistical independence and a tensor product structure.  When a complete system is described by the tensor product of two Hilbert spaces \emph{and} the total state is a product state then both systems can be considered statistically independent.  Specifically in this case two operators from the individual Hilbert spaces $X_1$ and $X_2$ are statistically independent in the sense that
\begin{equation}
    \langle X_1 \otimes X_2\rangle_\rho = \langle X_1\rangle_{\rho_1} \langle X_2\rangle_{\rho_2}.
\end{equation}
In Sec. \ref{chQuLight:sec:FockSpaceStochastProcesses} we introduced the notion of a time-adapted quantum stochastic process, where a quantum operator $\operatorname{O}_{[0, t]}$ was time-adapted if it acted as the identity on any coherent $\psi[\Vg^\perp]$ where $\Vg^\perp$ is \emph{excluded} from the time interval $[0,t]$;  $\Vg^\perp(s) = 0$ for $0 \le s\le t$.  We defined two mutually orthogonal spaces of wave packets $\hilbert_{[0, t]}$ and $\hilbert^\perp_{[0,t]}$ which in turn have their associated Fock spaces $\Fock(\hilbert_{[0, t]}])$ and $\Fock(\hilbert^\perp_{[0, t]})$.  The classical definition of a time-adapted stochastic process is that the process is statistically independent of all events in the future. In light of the connection between statistical independence on the one hand and a tensor product structure on the other, it seems reasonable to have
\begin{equation}
  \Fock(\hilbert) \cong \Fock(\hilbert_{[0, t]})\otimes\Fock(\hilbert^\perp_{[0, t]})
\end{equation}
where $\cong$ indicates a unitary equivalence.  But if $\hilbert_{[0, t]}$ represents all wave packets localized to $[0, t]$ then it also seems reasonable to conclude that $\hilbert^\perp_{[0, t]} = \hilbert_{(t, \infty)}$.

In this section we will show that this tensor product decomposition is not only possible for any single time $t$ but it is also possible for any sequence of $n$ ordered times $\set{t_n\, :\,0 < t_1 < \dots < t_n < \infty} $.  This is called the \emph{continuous-time tensor decomposition} and is the relation that
\begin{equation}\label{chQuLight:eq:continousTimeTensorProduct}
  \Fock(\hilbert)\cong \Fock(\hilbert_{[0, t_1)})\otimes\Fock(\hilbert_{[t_1, t_2)})\otimes \dots \otimes \Fock(\hilbert_{[t_n, \infty)}).
\end{equation}
The proof of this statement is outlined in lemma \ref{chQuLight:lemma:tensorDecomposition} (which is essentially proposition 19.6 of \citep{parthasarathy_introduction_1992}).

Now if Eq. (\ref{chQuLight:eq:continousTimeTensorProduct}) is true, then for any partitioning of time, no matter how small, this Fock space decomposes into a tensor product between the various partitions.  Furthermore if we have operators $\set{\operatorname{O}_{[t_i, t_{i +1})}}$ which are each adapted to the interval $[t_i, t_{i +1})$ then we have that the expectation values factorize,
\begin{equation}
    \Big\langle\,\prod_i\,\operatorname{O}_{[t_i, t_{i +1})}\, \Big\rangle_{\psi[\Vg]} = \prod_i\,\expect{\operatorname{O}_{[t_i, t_{i +1})}}_{\psi[\Vg_{[t_{i}, t_{i+1})}]}.
\end{equation}
In other words, if both the operators and the state respects the continuous-time decomposition then those operators will be statistically independent for independent times.

Why is this important? Well, Sec. \ref{chMath:sec:Wiener} reviews the basic properties of a Wiener process and shows how its defining feature is that its restrictions to independent time increments are statistically independent and that each are mean zero Gaussian random variables of variance $t_{i +1} - t_i$.   Therefore, any quantum analog of a Wiener process must also respect the continuous-time decomposition.  The fact that the classical Wiener process satisfies the Markov and martingale properties is a direct consequence of this independence \cite{van_handel_stochastic_2007}. Sec. \ref{chMath:sec:processes} reviews the definition of these two properties and how they relate to taking conditional expectation values.  Additionally, the It\={o} definition of a stochastic integral, (see Appendix \ref{app:SDEs} ) is defined in such a way so that the integral $\int x_t\, dw_t$ is also a martingale and that if $x_t$ is Markovian than so is the integral.  For a quantum stochastic integral to also have these desirable properties, a necessary criteria is that a process $\set{X_t}_{t \ge 0}$ must be statistically independent of all future events.  In the next section we return to the operators $ A^{i}_t$ and $A^{i\, \dag}_t$ and show how they can be used to construct a quantum Wiener process, but first we include a proof of the continuous-time decomposition.

\begin{lemma} \label{chQuLight:lemma:tensorDecomposition}
Given the single particle Hilbert space $\hilbert = \mathcal{L}^2(\R^+)\otimes \Cn{d}$ and an ordered sequence of times  $\set{t_n} = \set{t_i \in \R^+\ :\  0 < t_1 < \dots < t_n < \infty}$, the symmetric Fock space satisfies unitarily equivalence
$ \Fock(\hilbert)\cong \Fock(\hilbert_{[0, t_1]})\otimes\Fock(\hilbert_{[t_1, t_2)})\otimes \dots \otimes \Fock(\hilbert_{[t_n, \infty)})$ where $\hilbert_{[t_i, t_{i+1}]} = \mathcal{L}^2([t_i, t_{i+1}])\otimes \Cn{d}$.
\end{lemma}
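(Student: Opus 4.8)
The plan is to reduce the statement to the \emph{exponential} (multiplicativity) property of the symmetric Fock space, namely that $\Fock(\hilbert_1 \oplus \hilbert_2) \cong \Fock(\hilbert_1) \otimes \Fock(\hilbert_2)$ for any pair of Hilbert spaces, and then to iterate it over the intervals of the partition. The elementary input is that the single-particle space decomposes as an orthogonal direct sum
\[
\mathcal{L}^2(\R^+)\otimes\Cn{d} \cong \big(\mathcal{L}^2([0,t_1))\otimes\Cn{d}\big)\oplus\big(\mathcal{L}^2([t_1,t_2))\otimes\Cn{d}\big)\oplus\dots\oplus\big(\mathcal{L}^2([t_n,\infty))\otimes\Cn{d}\big),
\]
since $\R^+ = [0,t_1)\sqcup[t_1,t_2)\sqcup\dots\sqcup[t_n,\infty)$ is a disjoint partition: a function in $\mathcal{L}^2(\R^+)$ is uniquely reconstructed from its restrictions to these subintervals, and restrictions to different subintervals are mutually orthogonal. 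Writing $\hilbert = \hilbert_{[0,t_1)}\oplus\dots\oplus\hilbert_{[t_n,\infty)}$ with $\hilbert_{[t_i,t_{i+1})} \define \mathcal{L}^2([t_i,t_{i+1}))\otimes\Cn{d}$, it then suffices to prove the two-factor statement and invoke associativity of $\oplus$ and $\otimes$.

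For the two-factor claim I would build the unitary directly on exponential vectors. Given $\Vf = \Vf_1\oplus\Vf_2$ with $\Vf_i$ in the respective summand, set $U\,\e[\Vf_1\oplus\Vf_2] \define \e[\Vf_1]\otimes\e[\Vf_2]$ and extend linearly to the span of the exponential vectors. Well-definedness and isometry come together: using $\braket{\e[\Vf]}{\e[\Vg]} = e^{\inprod{\Vf}{\Vg}}$ from Sec.~\ref{chQuLight:sec:FockSpace} together with $\inprod{\Vf_1\oplus\Vf_2}{\Vg_1\oplus\Vg_2} = \inprod{\Vf_1}{\Vg_1} + \inprod{\Vf_2}{\Vg_2}$, one obtains
\[
\braket{\e[\Vf_1\oplus\Vf_2]}{\e[\Vg_1\oplus\Vg_2]} = e^{\inprod{\Vf_1}{\Vg_1}}\,e^{\inprod{\Vf_2}{\Vg_2}} = \braket{\e[\Vf_1]\otimes\e[\Vf_2]}{\e[\Vg_1]\otimes\e[\Vg_2]},
\]
so $U$ preserves all inner products among exponential vectors; in particular it sends a null linear combination to a null linear combination, hence is a genuine isometry on the span. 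Since the span of exponential vectors is dense in each Fock space (the fact quoted in Sec.~\ref{chQuLight:sec:FockSpace}) and, symmetrically, the algebraic tensor product of these spans — which is contained in the span of the product exponential vectors $\e[\Vf_1]\otimes\e[\Vf_2]$ — is dense in $\Fock(\hilbert_1)\otimes\Fock(\hilbert_2)$, $U$ extends uniquely to a surjective isometry, i.e. a unitary. Iterating over the $n+1$ summands (the bracketing order being immaterial by associativity) yields Eq.~(\ref{chQuLight:eq:continousTimeTensorProduct}).

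The main obstacle is not any single computation but keeping two density/identification points honest. First, one needs that the \emph{algebraic} span of exponential vectors is genuinely dense, not merely total in some weaker sense, so that the bounded extension exists; this is exactly the property asserted earlier, and it is where proposition~19.6 of \citep{parthasarathy_introduction_1992} does the substantive work. Second, one must be clear that the decomposition is performed with the idealized subspaces $\hilbert_{[t_i,t_{i+1})} = \mathcal{L}^2([t_i,t_{i+1}))\otimes\Cn{d}$ exactly as the lemma stipulates, rather than with the physically motivated spaces of smoothed measurable wave packets of Sec.~\ref{chQuLight:sec:measureableConstruction}; the passage between the two is precisely the localization approximation discussed there, and the clean on-the-nose tensor factorization holds only for the $\mathcal{L}^2$ intervals. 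Everything else — linearity of $\ahat^\dag[\cdot]$, the inner-product formula for exponential vectors, and associativity of $\oplus$ and $\otimes$ — is routine.
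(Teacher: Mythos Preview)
Your proposal is correct and follows essentially the same route as the paper's own sketch: decompose $\hilbert$ as an orthogonal direct sum over the time intervals, define the map on exponential vectors by $\e[\Vg]\mapsto\e[\Vg_{[0,t_1)}]\otimes\cdots\otimes\e[\Vg_{[t_n,\infty)}]$, verify isometry via $\braket{\e[\Vf]}{\e[\Vg]}=e^{\inprod{\Vf}{\Vg}}$ and the additivity of the inner product over the direct sum, and extend by density of the exponential vectors. The only cosmetic difference is that you reduce to the two-factor case and iterate, whereas the paper handles all $n{+}1$ factors simultaneously; your explicit remarks on well-definedness and surjectivity are welcome refinements the paper's sketch leaves implicit.
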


\begin{proof}[Sketch of Proof.]
For any vector $\Vg \in \hilbert$ we can define the projection of $\Vg$ onto a time interval via
\begin{equation}
    \Vg_{[t_0, t_1]}(t) \define \,\indicate{[t_0, t_1]}\!(t)\ \Vg(t)
\end{equation}
and that for all $n$ times we have
\begin{equation}
  \Vg(t) = \Vg_{[0, t_1)}(t) + \Vg_{[t_1, t_2)}(t) + \dots + \Vg_{[t_n, \infty)}(t).
\end{equation}
As this is true for any element $\hilbert$, we have the natural decomposition,
\begin{equation}\label{chQuLight:eq:continousTimeSingleHilbert}
  \hilbert \cong \Hilbert_{[0, t_1)} \oplus \Hilbert_{[t_1, t_2)} \oplus \dots \oplus \Hilbert_{[t_n, \infty)}.
\end{equation}
where $\Hilbert_{[t_i, t_{i+1})}$ is the space of square integrable vector valued functions of dimension $d$ defined on the interval $[t_i, t_{i+1})$.
Because of this decomposition, proving that $\Fock(\hilbert)$ satisfies the tensor decomposition now means proving unitary equivalence
\begin{equation}
  \Fock\Big(\,\bigoplus_{i = 1}^n \hilbert_{[t_i, t_{i+1})}\,\Big) \cong \bigotimes_{i= 1}^n \Fock(\hilbert_{[t_i, t_{i+1})}).
\end{equation}
This is easily shown by first noting that
\begin{equation}
    \inprod{\Vg_{[t_i, t_{i+1})}}{\Vf_{[t_j, t_{j+1})}} = \delta_{i,j}\, \inprod{\Vg_{[t_i, t_{i+1})}}{\Vf_{[t_i, t_{i+1})}}.
\end{equation}
This however implies that for the exponential vectors $\e[\Vg]$ and $\e[\Vf]$
\begin{equation}
    \braket{\e[\Vg]}{\e[\Vf]} = \exp(\inprod{\Vg}{\Vf}) = \prod_{j = 1}^n \exp\left(\inprod{\Vg_{[t_i, t_{i+1})}}{\Vf_{[t_i, t_{i+1})}}\right).
\end{equation}

If we define the transformation $V:\, \Fock(\hilbert) \rightarrow \Fock(\hilbert_{[0, t_1]})\otimes \dots \otimes \Fock(\hilbert_{[t_n, \infty)})$  such that
\begin{equation}
    V\, \e[\Vg] = \e[\Vg_{[0, t_{1})}]\otimes\dots\otimes\e[\Vg_{[t_n, \infty)}],
\end{equation}
The inner product between two transformed vectors is then
\begin{equation}
    \braket{V\,\e[\Vg]}{V\,\e[\Vf]} = \prod_{j = 1}^n e^{\inprod{\Vg_{[t_i, t_{i+1})}}{\Vf_{[t_i, t_{i+1})}}} = e^{\inprod{\Vg}{\Vf}}.
\end{equation}
This shows that $V$ is a unitary transformation between the exponential vectors.  However because the exponential vectors are dense in the symmetric Fock space, $V$ linearly extends to any vector in $\Fock(\hilbert)$.  \qedhere
\end{proof}

\subsection{The quantum Wiener process \label{chQuLight:sec:QuWienerCharacteristic} }
In \citep{bouten_introduction_2007}, \citeauthor{bouten_introduction_2007} given an elegant derivation of how the quadratures
\begin{equation} \label{chQuLight:eq:QtPtQuardature}
  Q^{i}_t \define A^{i}_t + A^{i\,\dag}_t  \quad \text{and} \quad P^{i}_t \define i\left(\, A^{i\,\dag}_t -  A^{i}_t\, \right),
\end{equation}
have the statistics of a Wiener processes, when the field is in the vacuum state.   For the sake of completeness we reproduce this derivation here.

In Sec. \ref{chQuLight:sec:WeylOperators}, we introduced $\ahat[\Vg]$ and $\ahat^\dag[\Vg]$ though the generators of the coherent state $\psi[\Vg]$ through the relation, $\psi[\Vg] = \weyl[\Vg]\ket{\vac} = \exp(\ahat^\dag[\Vg] - \ahat[\Vg])\, \ket{\vac}$.  The argument of this displacement operator defines a Hermitian generator
\begin{equation}
  \Upsilon[\Vg] \define i \left(\ahat^\dag[\Vg] - \ahat[\Vg]\right)
\end{equation}
so that $\exp(\ahat^\dag[\Vg] - \ahat[\Vg]) = \exp(-i \Upsilon[\Vg])$.

For a classical random variable $\operatorname{x}$ then
\begin{equation}
    \varphi_{\operatorname{x}}(\kappa)\define \mathbbm{E}\Big(\exp(i \kappa \,\operatorname{x})\, \Big )
\end{equation}
is the characteristic function for that random variable and therefore characterizes its statistics.  The Weyl operator $\exp(-i \Upsilon[\Vg])$ is nearly equivalent to the characteristic function, up to the constant $\kappa$ and a minus sign.  However though the anti-linear property of $\ahat[\Vg]$ we have $\ahat[\lambda\,\Vg] = \lambda^\ast \ahat[\Vg]$, but if $\lambda = - \kappa$, (real $\kappa$) then $\exp(-i \Upsilon[-\kappa \Vg]) = \exp(+i \kappa \Upsilon[ \Vg ] )$.   Converting this operator into a true characteristic function simply means taking an expectation value with respect to the field state.  If the state is in a coherent state, $\psi[\Vf]$, then
\begin{equation}
    \varphi_{\Upsilon[\Vg]}(\kappa) \define \expect{\exp(+i \kappa \Upsilon[ \Vg] )}_{\psi[\Vf]},
\end{equation}
which characterizes the statistics of the operator $\Upsilon[\Vg]$.  In terms of the Weyl displacement operators this means that
\begin{equation}
    \varphi_{\Upsilon[\Vg]}(\kappa) = \braOket{\psi[\Vf]}{\weyl[-\kappa\, \Vg]}{\psi[\Vf]} = e^{-\norm{\Vf}^2} \braOket{\e[\Vf]}{\weyl[-\kappa\, \Vg]}{\e[\Vf]}.
\end{equation}
Eq. (\ref{chQuLight:eq:weylExponential}) relates the action of the Weyl operator to the exponential vector showing that this simplifies to
\begin{equation}
\begin{split}
    \varphi_{\Upsilon[\Vg]}(\kappa)& = \exp(-\norm{\Vf}^2 + \kappa \inprod{\Vg}{\Vf} - \kappa^2\,\norm{\Vg}^2/2 ) \braket{\e[\Vf]}{\e[-\kappa\, \Vg + \Vf]}\\
    & = \exp(-\norm{\Vf}^2 + \kappa \inprod{\Vg}{\Vf} - \kappa^2\,\norm{\Vg}^2/2 -\kappa \inprod{\Vf}{\Vg} + \norm{\Vf}^2)\\
    & = \exp( i\, \kappa\, 2\, \operatorname{Im}\, \inprod{\Vg}{\Vf} - \kappa^2\,\norm{\Vg}^2/2 ).
\end{split}
\end{equation}
The final line is recognizable as the characteristic function of a Gaussian random variable of mean  $2\, \operatorname{Im}\, \inprod{\Vg}{\Vf}$ and variance $\norm{\Vg}^2$.
Note that when $\Vg = \indicate{[0, t)}\hspace{-4 pt}(t)\, \Ve_j$,
\begin{equation}
  \Upsilon[ \indicate{[0, t)}\hspace{-4 pt}(t)\, \Ve_j ] = i \left(\, A^{i\,\dag}_t -  A^{i}_t\, \right) = P^{i}_t
\end{equation}
and
\begin{equation}
  \Upsilon[ - i \indicate{[0, t)}\hspace{-4 pt}(t)\, \Ve_j ] =   A^{i\,\dag}_t  + A^{i}_t  = Q^{i}_t.
\end{equation}
In either case, $\norm{\Vg}^2 = t$ and so that both operators have variance $t$, regardless of the coherent amplitude $\Vf$ of the underlying state.

When the state of the field is in vacuum ($\Vf = 0$) then both quadratures are mean zero, Gaussian random variables whose variance is given by $t$.  Lemma \ref{chQuLight:lemma:tensorDecomposition} also shows that the operator $\Upsilon[ \indicate{[s, t)}\hspace{-4 pt}(t)\, \Ve_j ] = P^{i}_t - P^{i}_s$ is a generator of displacements in the Fock space $\Fock(\hilbert_{[s, t)})$ and therefore commutes with any generator for states in $\Fock(\hilbert_{[0, s)})$.  Clearly the vacuum respects the continuous tensor product decomposition and therefore the quantum stochastic processes $\set{Q^{i}_t}_{t \ge 0}$ and $\set{P^{i}_t}_{t \ge 0}$ have, in vacuum expectation, the statistics of Wiener processes.

\subsection{The units of quantum noise}
An extremely observant reader might be concerned about the quadrature definitions of $Q_t$ and $P_t$ given in Eq. (\ref{chQuLight:eq:QtPtQuardature}).  The issue lies in the units and physical interpretation of the single particle wave functions.  In the strictest sense of second quantization, the normalized vector $\Vf \in  \mathcal{L}^2(\R^+)\otimes \Cn{d}$ are single particle wave functions whose square represents the probability density for observing the particle at some point in its domain.  In order for $\Vf$ to be a square normalized density, $\int dt \abs{\Vf}^2 = 1$, means that $\Vf$ must have units of $\frac{1}{\sqrt{\text{time}}}$.  Consequently the operators $\ahat[\Vf]$ and $\ahat^\dag[\Vf]$ must be unitless as their commutator is subsequently unitless.  However as written, the quadratures $Q^j_t$ and $P^j_t$ have the commutation relation
\begin{equation}
  \left[ Q^j_t,\, P^j_t \right] = 2 i\, t,
\end{equation}
which clearly has units of time.  The solution to this distinction is realize that when defining $A_t$ we should really be considering the field operators relative to some characteristic rate $\gamma$.  Through the linearity of $\ahat[\Vf]$, we clearly have
\begin{equation}
  \ahat[\sqrt{\gamma} \indicate{[0, t]} ] = \sqrt{\gamma}\, A_t.
\end{equation}
The whole point of defining $A_t$ in this ways is that regardless to the magnitude of $\gamma\, t$, the scaled quadrature $\sqrt{\gamma}\, Q_t$ will still have the statistics of a Brownian motion, simply with the diffusion rate $\gamma$.

The objective of Sec. \ref{chQuLight:sec:measureableSpace} was to identify on what scales we could treat a quasi-monochromatic field to be statistically independent for independent increments of time.  Fig. \ref{chQuLight:fig:PulseScaling} showed the scaling of a smoothed characteristic function $\indicate{[0, \tau]}^{(\sigma)}\hspace{-4 pt}(t)$ for a fixed smoothing variance and a variable duration $\tau$.  The act of smoothing limited the derivative to be at most on the order of $1/\sigma$ and for $\tau \sim 10^3\, \sigma$ this had little effect on the visual appearance of the smoothed function.  Note that the actual correction term to inner product in Eq. (\ref{chQuLight:eq:paraxialInnerProduct}) compared the rate of change of the temporal envelope to the carrier frequency $\omega_0$ and the introduction of the smoothing distribution is to simply limit this derivative.  Assuming that $\gamma$ represents the rate of diffusion then for times $\tau \sim 1/\gamma$ any pulse should appear to have a discontinuous derivative on this scale.  In other words, in order to treat an optical field as generating a quantum Wiener process we must have
\begin{equation}
  \gamma \ll \sigma^{-1} \ll \omega_0.
\end{equation}
In any realistic application, the physics of the system typically sets values for $\gamma$ and $\omega_0$.  In an atomic physics context the carrier frequency is usually a dipole allowed optical transition, leading to $\omega_0 \sim 2 \pi \times 100\text{ THz}$.  For such a transition the measurement timescale is on the order of the lifetime of the excited state, $t_{\text{decay}} \sim 10\, \text{ ns}$, meaning that typically, $\gamma \sim 2 \pi \times 10\text{ MHz}$.  This leaves 7 orders of magnitude between these two scales.  In many atomic systems this is actually an upper bound on the measurement rate.  Typically one will consider off resonant light leading to a significantly slower diffusion rate.  In term of this dissertation, Chap. \ref{chap:QubitState} applies a quantum stochastic treatment to an idealized model of the Faraday interaction, where $\gamma$ is reduced by a factor of one over this frequency difference.  This will be discussed in detail in Sec. \ref{chQuLight:sec:faraday}.  Before considering this specific model we will review how to transition from a smooth deterministic Schr\"odinger equation, to one involving quantum stochastic integral with respect to $A_t$ and $A_t^\dag$.

\section{Systems Interacting with Quantum Noise \label{chQuLight:sec:QuantumWhiteNoise} }

Much of this chapter has alluded to coupling a quantum system of interest to a traveling wave field and describing the resulting evolution in terms of a quantum stochastic process.  In the quantum optics literature a system coupled to delta commuting field operators have been discussed since the work of \citeauthor{gardiner_input_1985}, if not before \cite{gardiner_input_1985}.  The field operators are typically defined as the Fourier transform of one-dimensional operators quantizing a continuous spectrum of harmonic oscillators.  That is, from the operators $\ahat(\omega)$ and $\ahat^\dag(\omega)$,  $[\ahat(\omega), \ahat^\dag(\omega')] = \delta(\omega - \omega')$,
\begin{equation}\label{chQuLight:eq:gardinerQuWhiteNoise}
  \ahat(t) \define \frac{1}{\sqrt{2 \pi} }  \int_{\infty}^{\infty} d\omega\ e^{+i \omega t} \, \ahat(\omega).
\end{equation}
It then follows that $[\ahat(t),\ahat^\dag(t')] = \delta(t'-t)$.  From these operators one typically formulates the noncommuting quadratures, $\hat{q}(t) =  \ahat^\dag(t) + \ahat(t)$, and  $\hat{p}(t) = i\, \ahat^\dag(t) - i \, \ahat(t)$.  If the state of the field is specified to be in vacuum, then $\expect{\hat{q}(t)}_\vac = 0$ and $\expect{\hat{q}(t)\hat{q}(t')}_\vac = \delta(t - t')$.  In other words, in vacuum expectation $\hat{q}(t)$ has the statistics of white noise and the same is true for $\hat{p}(t)$ and any rotated combination of the two.  This is why the field operator $\ahat(t)$ is typically given the designation as quantum white noise.

A system is introduced to the problem typically through a linear interaction Hamiltonian where after making a couple of approximations vary much in line with our assumed separation of timescale, the interaction Hamiltonian reads \cite{gardiner_quantum_2004}
\begin{equation}\label{chMath:eq:gardinerHamiltonian}
  H_{int}(\lambda, t) = i \hbar \sqrt{\gamma} \left( \ahat^{\dag}(\lambda, t)\, \hat{c} - \ahat(\lambda, t)\, \hat{c}^\dag \right)
\end{equation}
where $\hat{c}$ is a generic system operator and $\ahat(\lambda, t)$ is an operator that limits to $\ahat(t)$ as $\lambda \rightarrow 0$ \footnote{Actually \citeauthor{gardiner_input_1985} consider the limit in the frequency domain and so they consider a bandwidth $\theta$ that approaches infinity.  For our purposes it is more convenient to consider $\lambda \rightarrow 0$, but the spirit is the same.}.  To arriving at $H_{int}(\lambda, t)$, a transformation to an interaction picture was made and all time dependence was associated with either $\ahat(\lambda, t)$ or its adjoint.  In this interaction picture, the joint state of the system and field evolves under a unitary propagator $U(\lambda,t)$, which satisfies the equation
\begin{equation}
  \frac{d}{dt} U(\lambda,t) = -\tfrac{i}{\hbar} H_{int}(\lambda,t)\, U(\lambda,t).
\end{equation}
It is well known that for a general time-dependent Hamiltonian the resulting propagator is given by a time ordered exponential
\begin{equation}\label{chQuLight:eq:UtTimeOrdered}
  U(\lambda, t) = \vec{\mathcal{T}} \exp \left(-\frac{i}{\hbar}\int_0^t ds\, H_{int}(\lambda, s) \right).
\end{equation}
The ultimate goal is to interpret the operator $\lim_{\lambda \rightarrow 0} U(\lambda, t) \define U_t$ as a solution to an equivalent quantum stochastic differential equation.  Appendix \ref{app:QSDEs} reviews the mathematical background needed to fully discuss these kinds of equations in the language of quantum stochastic processes acting in terms of a second quantized Fock space.

In the textbook formulation of quantum noise, \citeauthor{gardiner_quantum_2004} define a form of quantum It\=o calculus that is explicitly tide to the statistical properties of the state of the field and considers only a small family of Gaussian states \cite{gardiner_quantum_2004}.  Conversely, the quantum It\={o} calculus defined by \citeauthor{hudson_quantum_1984} has an It\={o} rule that is \emph{independent} of the field state and the proof of convergence holds over a large domain of possibly correlated system field states \cite{hudson_quantum_1984}.  The chief distinction between the two formulations is that the quantum optics derivation is based in a specific model, one that does not initially assume the structure necessary for the more abstract version.

From the point of view of a physicist trying to model a quantum system starting from an interaction Hamiltonian and the Schr\"odinger equation, it is not at all clear how and when that fits into the abstract quantum It\={o} calculus.  While \citeauthor{hudson_quantum_1984} gave criteria for when a quantum stochastic differential equation describes a unitary process, they did not specify how one should arrive at such a process from a Schr\"odinger equation and an approximating principle.  This is exactly what the quantum optics derivation provides, however without making the final connection to the state independent It\={o} calculus.  In \citeyear{accardi_weak_1990}, \citeauthor{accardi_weak_1990} proved this connection where they showed how a unitary propagator generated from a linear Hamiltonian similar to Eq. (\ref{chMath:eq:gardinerHamiltonian}) converged to an It\={o} integral as specified by \citeauthor{hudson_quantum_1984}.  Additionally they showed that the proof holds in ``the weak sense of matrix elements.''  What this means is the following.

Suppose we are given two coherent states with smoothed wave packets $\psi[\Vg_1^{(\sigma)}]$ and $\psi[\Vg_2^{(\sigma)}]$ and that when $\sigma \rightarrow 0$ we have the equivalent discontinuous coherent states $\psi[\Vg_1']$ and $\psi[\Vg_2']$.  Furthermore suppose we are given a stochastic processes $\set{X_{t}^{(\sigma)}}_{t \ge 0}$ that contains both system and field operators.  Then $X_{t}^{(\sigma)} \rightarrow X_t$ in the weak sense of matrix elements if, for arbitrary system state vectors $\phi_1$ and $\phi_2$,
\begin{equation}
  \lim_{\substack{\sigma \rightarrow 0\\ \sigma\omega_0 \rightarrow \infty} } \braOket{\phi_1\otimes \psi[\Vg_1^{(\sigma)}]}{X_{t}^{(\sigma)}}{\phi_2\otimes \psi[\Vg_2^{(\sigma)}]} = \braOket{\phi_1\otimes \psi[\Vg'_1]}{X_{t}}{\phi_2\otimes \psi[\Vg_2']}.
\end{equation}
It's worth noting \citeauthor{accardi_weak_1990} shows that the limit holds for the bear propagator $U(\lambda, t)$ as well as for the Heisenberg evolution of a system operator, \ie $U^\dag(\lambda, t) X U(\lambda, t)$.

We mention this here because as long as the total system-field state $\rho_\tot$ can be represented in terms of the matrix elements
\begin{equation*}
    \lim_{\substack{\sigma \rightarrow 0\\ \sigma\omega_0 \rightarrow \infty} } \braOket{\phi_1\otimes \psi[\Vg_1^{(\sigma)}]}{\rho_\tot}{\phi_2\otimes \psi[\Vg_2^{(\sigma)}]},
\end{equation*}
then the quantum stochastic representation is appropriate.  Clearly these states are much more complex than simply mean-zero Gaussian states as they can represent entangled states between the system and the quasi-monochromatic field as well as nonclassical field states such as superpositions between modes and even single photon states.  However not every field state is included, as we must allow for discontinuous yet quasi-monochromatic matrix elements\footnote{For defining a reasonable calculus it is also required that the wave packet amplitudes take on a large but finite maximum value.}.
In other words, the total state $\rho_\tot$ must be compatible with the approximations that make the stochastic representation possible to begin with.

Here we are also interested in moving beyond an interaction Hamiltonian that is linear in $\ahat^\dag(\lambda,t)$ and $\ahat(\lambda,t)$.  This is because the Faraday interaction is fundamentally quadratic in the field operators as it describes the scattering of light in one polarization state to another, see Sec. \ref{chQuLight:sec:faraday}.  Fortunately, in 2006 \citeauthor{gough_quantum_2006} extended the results of \citeauthor{accardi_weak_1990} to include scattering or conservation interactions.  In classical stochastic calculus the conversion between a smooth ordinary differential equation and a stochastic differential equation is the Wong-Zakai theorem.  Not surprisingly, the conversion between a smooth Schr\"odinger equation and quantum stochastic differential equation for the propagator is called the quantum Wong-Zakai theorem in \cite{gough_quantum_2006}.

The specific Hamiltonian that the quantum Wong-Zakai theorem considers and the one that we will use here is the interaction Hamiltonian, (sum over repeated indices with $i,j = 1\dots d$)
\begin{equation} \label{chQuLight:eq:QuWongZakaiHamiltonian}
  H_{int}(\lambda, t) = \hbar \left(  E_{ij}\, \ahat^\dag_i(\lambda, t)\, \ahat_j(\lambda, t) + E_{i0}\, \ahat_i^\dag (\lambda, t) +   E_{0j}\, \ahat_j(\lambda, t) + E_{00} \right)
\end{equation}
where $\set{E_{\alpha \beta} :\  \alpha, \beta = 0, \dots, d}$ are bounded operators acting on a system Hilbert space $\Hilbert_{sys}$.
Each term in $H_{int}(\lambda, t)$ physically represent the following:
\begin{itemize}
  \item $E_{00}$ is an operator acting solely on system degrees of freedom, independent of the bosonic modes, with units of frequency, \emph{e.g.} it could be what remains of the free system Hamiltonian after transforming to an interaction picture.
  \item $E_{i0}$ is a system operator that accompanies the creation of an excitation in the $i^{th}$ bosonic mode centered at time $t$.  A canonical example would be an operator proportional to an atomic lowering operator, with units of $1/\sqrt{time}$.
  \item $E_{0j}$ is a complementary process where, at time $t$, an excitation in the $j^{th}$ mode is removed.
  \item $E_{ij}$ is a unitless system operator weighting an instantaneous scattering of quanta from the $j^{th}$ mode to the $i^{th}$.  When $i = j$ this can be interpreted as a system coupling to the number of quanta in that mode at time $t$.
\end{itemize}
Note that as this Hamiltonian is required to be self-adjoint, the system operators must satisfy the constraint $E_{\alpha\, \beta} = E_{\beta\, \alpha}^\dag$.

Adding the quadratic term has an interesting and slightly unexpected effect on the physics.  Again the term $E_{ij}\, \ahat^\dag_i(\lambda, t)\, \ahat_j(\lambda, t)$ represents the instantaneous transfer of a photon from mode $j$ to mode $i$.  However for $\lambda > 0$, $\ahat^\dag_i(\lambda, t)$ has temporal extent, meaning that its possible for the system to interact again with the scattered quanta.  If the magnitude of $E_{ij}$ is relatively small then the possibility of re-interaction maybe relatively small, but we have yet to impose any such constraint.  Fortunately the whole problem of converting from the equation $\dot{U}(\lambda,t) = -\tfrac{i}{\hbar} H_{int}(\lambda,t)\, U(\lambda,t)$ to a quantum stochastic differential equation, including the possibility of multiple scattering events was solved by Gough.  Some of the details of this derivation is reviewed in Appendix \ref{app:QuWongZakai}.

Intimately related to this conversion is how the formal quantum It\=o integral is related to the operator ordering of the constituent field operators.  There exists a fundamental connection between the rules of quantum It\={o} calculus (see Appendix \ref{app:QSDEs}) and whether an iterated integral containing a sequence of field operators are either time or normally ordered.  This connection was also formalized by Gough in \cite{gough_quantum_2006}, which Appendix \ref{app:QuWongZakai} reviews. The bottom line is that in order for the limit of the time ordered exponential in Eq. (\ref{chQuLight:eq:UtTimeOrdered}) to be interpreted as a solution to an equivalent quantum It\={o} stochastic integral it must be put into \emph{normal order} with all of the annihilation operators to the right of the creation operators. The effects of the multiple scattering events become mathematically apparent when converting from the time ordered solution to a normally ordered form.

Before we are able to fully write down the resulting propagator we must address two important issues.  The first is to concretely link the operator $\ahat_i(\lambda, t)$ to the wave packet theory introduced in this chapter as well consider a kind of field operator wholly different from what we have considered up to this point.

\subsection{Quantum white noise in paraxial wave packets\label{chQuLight:sec:paraxialQuNoise}}

There are several different derivations that lead to bosonic operators $\ahat(\lambda, t)$ and $\ahat^\dag(\lambda, t)$ that result in calling the object $\lim_{\lambda \rightarrow 0} \ahat(\lambda, t)$ quantum white noise.  For instance \citeauthor{gardiner_quantum_2004} use a wide-bandwidth limit where they assume that the interaction Hamiltonian $H_{int}$ is initially specified in the frequency domain and that the system couples preferentially frequencies centered at a large transition frequency $\omega_0$.  They then assume that the coupling between the system and the operators $\ahat(\omega)$ is nearly flat in a frequency band centered at $\omega_0$.  When this flat coupling band is sufficiently wide, the effect is for the system to be interacting with an operator representing a white spectrum and is therefore delta correlated \cite{gardiner_quantum_2004}.  \citeauthor{accardi_quantum_2002} take a different approach by considering a weak-coupling/long-time limit.  The weak coupling implies that on short ``optical'' timescales the system field interaction can be considered perturbatively but that on longer ``mesoscopic'' times the aggregate effect is nontrivial and the field fluctuations develop a diffusive characteristic.  Though a subtle re-scaling of time, a field operator $\ahat(\lambda, t)$ emerges and is delta commuting as $\lambda \rightarrow 0$ \cite{accardi_quantum_2002}.  Rather than fixing ourselves to a specific system-field interaction, this work has focused instead on integrating the language of second quantization and stochastic processes with a realistic description of classical optics meaning that neither model fully fits our needs.    Instead we hope to find a description that is not tide to a specific model but capture the spirit of each.

Consider the time ordered exponential
\begin{equation}
 U(\lambda, t) = \vec{\mathcal{T}} \exp \left(-\frac{i}{\hbar}\int_0^t ds\, H_{int}(\lambda, s) \right)
\end{equation}
where $H_{int}(\lambda, t)$, is given in Eq. (\ref{chQuLight:eq:QuWongZakaiHamiltonian}).  Expanding the exponential to just two terms shows us that
\begin{equation}\label{chQuLight:eq:timeOrderedUtexpanded}
\begin{split}
  U(\lambda, t) =& \ident - \tfrac{i}{\hbar}\int_0^t ds H_{int}(\lambda, s) + \dots\\
  =& \ident -i  E_{ij}\,\int_0^t ds\ \ahat^\dag_i(\lambda, s)\, \ahat_j(\lambda, s) -i  E_{i0}\int_0^t ds\ \ahat_i^\dag (\lambda, s) \\
  &\quad -i E_{0j}\int_0^t ds\ \ahat_j(\lambda, s) -i  E_{00} \int_0^t ds + \dots.
\end{split}
\end{equation}
Now consider the creation term, $E_{i0}\, \ahat_i^\dag(\lambda, s)$.  Physically, this means that localized in an time interval near time $s$, create an excitation in the $i^{th}$ field mode and while you're at it, apply the system operator $E_{i0}$.  Suppose the joint system was in a pure product state $\ket{\psi_i}\ket{\vac}$ and that $\ket{\psi_i}$ happens to be an eigenstate of $E_{i0}$ with eigenvalue $h_i$.  Then the time integral over this term acting on this state gives
\begin{equation}
  -i  E_{i0} \int_0^t ds\ \ahat_i^\dag (\lambda, s) \ket{\psi_i}\ket{\vac} =  -i  h_i \int_0^t ds\ \ahat_i^\dag (\lambda, s) \ket{\psi_i} \ket{\vac}.
\end{equation}
This integral looks \emph{almost} like a smoothed creation operator for a wave-packet with temporal envelope function $h_i(t) = -i\, h_i \indicate{[0, t]}\hspace{-4pt}(s)$ acting on vacuum.  Specifically, Eq. (\ref{chQuLight:eq:SmoothWavePacketOperator}) gives the expression for the one-dimensional smooth wave packet $\ahat[\Vg^{(\sigma)}(\Vk, t)]$.  Taking the adjoint of that equation results in
\begin{equation}
  \ahat^\dag[\Vg^{(\sigma)}(\Vk, t)] = \frac{\norm{\Vg}}{\sqrt{\tau}}\, \int_0^\infty ds\ f(s)\, e^{-i \omega_0 s}\, \ahat^\dag[\V{\varphi}^{(\sigma)}(t - s)].
\end{equation}
The major distinction between this operator and the integral $-i  h_i \int_0^t ds\ \ahat_i^\dag (\lambda, s)$ is the existence of the carrier phase $e^{-i \omega_0 s}$.  This means that in order to interpret the integral $-i  E_{i0} \int_0^t ds\ \ahat_i^\dag (\lambda, s)$ as creating a system dependent extended single photon state, the interaction must be inherently phase modulated at the carrier frequency $\omega_0$.   But this is exactly the same statement as \citeauthor{gardiner_input_1985} when they assume that the system interacts with the bath at a large characteristic frequency.    Therefore without specifying a detailed interaction model we can say that
\begin{equation}
  \ahat_i^\dag (\lambda, s) \cong e^{-i \omega_0 s}\, \ahat^\dag[\V{\varphi}^{(\sigma)}_i( - s)]
\end{equation}
and
\begin{equation}
  \ahat_j (\lambda, s) \cong e^{+i \omega_0 s}\, \ahat[\V{\varphi}^{(\sigma)}_j( - s)]
\end{equation}
for some system characteristic frequency $\omega_0$ and smoothing wave packets $\V{\varphi}^{(\sigma)}_i(\Vk, s)$ and  $\V{\varphi}^{(\sigma)}_j(\Vk, s)$.  The presence of the time reversal might be a little puzzling at first, but this is simply due to the fact that this wave packet was defined with respect to a convolution, which always time reverses one of the two functions.   One possible mapping to include a set of $d$ distinct modes is to assume the model considers a set of $d$ paraxial spatial mode functions $\V{u}_{i}^{(+)}(\Vx_T, z)$, which satisfy the orthogonality relation, $\int d^2 x_T\, \V{u}^*_{i}(\Vx_T, z)\cdot \V{u}_{j}(\Vx_T, z) = \delta_{ij}\, \sigma_T$.

To complete this discussion we should identify what the parameter $\lambda$ means in the wave packet context.   We are able to explicitly compute the unequal time commutator from Eq. (\ref{chQuLight:eq:paraxialInnerProduct}) and remembering that for the smoothing kernel $\norm{\Vg}/\sqrt{\tau} = 1$.  This results in
 \begin{equation}\label{chQuLight:eq:paraxialQuWhiteNoiseCommutator}
    \begin{split}
      \left[\ahat_i(\lambda,t),\,\ahat_j^\dag(\lambda,t') \right] &=e^{+i \omega_0 (t-t')} \inprod{\V{\varphi}^{(\sigma)}_i(-t)}{\V{\varphi}^{(\sigma)}_j(-t')}\\
      &= \delta_{ij}\,\left( \varphi^{(\sigma)} \star \varphi^{(\sigma)}\ (t - t') - i\frac{1}{\omega_0} \frac{d\varphi^{(\sigma)} }{dt}\star \varphi^{(\sigma)}\ (t - t') \right).
    \end{split}
\end{equation}
$\lambda$ is simply a parameter representing the formal limit that as $\lambda \rightarrow 0$, $\sigma \rightarrow 0$ and $(\sigma\, \omega_0)^{-1} \rightarrow 0$.

\subsection{The scattering process\label{chQuLight:sec:Lambda}}
Up until now, the only kind of field operator we have considered is a creation operator associated with a given single particle state $\Vg$.  While these operators are vitally important, it does leave out the possibility of a whole other class of field operators.  Eq. (\ref{chQuLight:eq:timeOrderedUtexpanded}) expanded the time order exponential for $U_t$ to first order, which included the integral $\int_0^t ds\ \ahat^\dag_i(\lambda, s)\, \ahat_j(\lambda, s)$.  We will now show that this operator is quite different from the product of two smeared wave packet operators, particularly in the limit $\lambda \rightarrow 0$.

Consider the exponential vectors $\e[\Vf(\lambda)]$ and $\e[\Vh(\lambda)]$, $\Vf,\Vh \in  \mathcal{L}^2(\R+)\otimes\Cn{d}$, defined as
\begin{equation}
  \ket{\e[\Vf(\lambda)]} \define \exp\left(\int_0^\infty dt\ f_i(t) \ahat^\dag_i(\lambda, t) \right) \ket{\vac}.
\end{equation}
If $\int_0^t ds\ \ahat^\dag_i(\lambda, s)\, \ahat_j(\lambda, s)$ where some how equivalent to the product $\ahat^\dag[\Vg] \ahat[\Vg]$  as $\lambda \rightarrow 0$, for some wave packet $\Vg$ we would have the eigenvalue relationship
\begin{equation}
  \lim_{\lambda \rightarrow 0} \braOket{\e[\Vf(\lambda)]}{\ahat^\dag[\Vg(\lambda)]\ahat[\Vg(\lambda)]}{\e[\Vh(\lambda)]} = \inprod{\Vf}{\Vg}\inprod{\Vg}{\Vh}\ \braket{\e[\Vf]}{\e[\Vh]}.
\end{equation}
We can explicitly show that this is not the case.  To simplify the notation, we'll define the function
\begin{equation}
  c_{ij}(\lambda, t-s) \define [\ahat_i(\lambda, t),\, \ahat^\dag_j(\lambda, s)],
\end{equation}
which has the property that
\begin{equation}
  \lim_{\lambda \rightarrow 0} c_{ij}(\lambda, t-s) = \delta_{ij}\, \delta(t - s).
\end{equation}
Explicit calculation then shows that
\begin{multline}
 \hspace{-15 pt} \braOket{\e[\Vf(\lambda)]}{\int_0^t ds\ \ahat^\dag_i(\lambda, s)\,\ahat_j(\lambda, s) }{\e[\Vh(\lambda)]} =\\
  \hspace{-10 pt} \int_0^t ds \int_{0}^{\infty} ds_1\, f^*_\ell(s_1) c^*_{i \ell }(\lambda, t - s)\, \int_{0}^{\infty} ds_2\,  c_{j k}(\lambda, s - s_2) h_k(s_2)\ \braket{\e[\Vf(\lambda)]}{\e[\Vh(\lambda)]}.
\end{multline}
Then in the discontinuous limit,
\begin{multline}
  \lim_{\lambda \rightarrow 0} \braOket{\e[\Vf(\lambda)]}{\int_0^t ds\ \ahat^\dag_i(\lambda, s)\,\ahat_j(\lambda, s) }{\e[\Vh(\lambda)]} \\
   = \int_0^t ds\, f^*_i(s)\, h_j(s)\ \braket{\e[\Vf]}{\e[\Vh]}.
\end{multline}
In terms of an inner product on single particle wave vectors this is actually equal to,
\begin{multline}
  \lim_{\lambda \rightarrow 0} \braOket{\e[\Vf(\lambda)]}{\int_0^t ds\ \ahat^\dag_i(\lambda, s)\,\ahat_j(\lambda, s) }{\e[\Vh(\lambda)]} \\
   = \inprod{\Vf}{\indicate{[0, t]} \Ve_i \Ve_j \cdot \Vh}\ \braket{\e[\Vf]}{\e[\Vh]}.
\end{multline}

This is clearly not an eigenvalue relationship involving a product of wave packets.  In actuality it is a second quantization of an operator acting on the wave packets themselves \cite{barchielli_continual_2006,parthasarathy_introduction_1992}. The specific operator here is multiplication by the indicator function $\indicate{[0,t]}\hspace{-4pt}(s)$ and the dot product into the dyad of basis vectors $\Ve_i \Ve_j$.   In the language of the Hudson and Parthasarathy formulation of QSDEs, this operator is a scattering or conservation process, and is notated as $\Lambda^{ij}_t$.  While these processes can be derived without reference to a limiting integral, our purposes we simply take this to be a definition
\begin{equation}
  \Lambda^{ij}_t \define \lim_{\lambda \rightarrow 0} \int_0^t ds\ \ahat^\dag_i(\lambda, s)\,\ahat_j(\lambda, s) .
\end{equation}
Appendix \ref{app:QSDEs} reviews the notation and manipulation of QSDEs in terms of the quantum It\={o} differentials $dA^{i}_t$, $dA^{j\,\dag}_t$ and $d \Lambda^{ij}_t$.

\subsection{The limiting stochastic propagator\label{chQuLight:sec:quWongZakaiLimit}}

With a firm connection between the interaction Hamiltonian with scattering terms and the wave packet theory, we are now able to express the limiting Hamiltonian in terms of an It\={o} form quantum stochastic differential equation.  Appendix \ref{app:QSDEs} shows that the most general quantum stochastic integral usually considered defines a process,
\begin{equation}
  U_t = U_0 + \int_0^t d\Lambda^{ij}_s\, F^{ij}_s + \int_0^t d A^{i\, \dag}_s\, F^{i0}_s + \int_0^t d A^{j}_s\, F^{0j}_s + \int_0^t ds\, F^{00}_s.
\end{equation}
In Sec. \ref{appQSDEs:sec:dUt} it also shows what constraint must be placed on the operators $F^{\alpha \beta}_s$ in order for $U_t$ to be a unitary process.  As any unitary can be written as $\exp(-i A_t)$ for some generator $A_t$, instead of working with $F^{\alpha \beta}_s$ it is more convenient to define the operators $G^{\alpha\beta}_s$ so that
\begin{equation}
  F^{\alpha\beta}_s = G^{\alpha \beta}_s U_s.
\end{equation}
The unitary constraints written in terms of $G^{\alpha \beta}_s$ is given in Eq. (\ref{appQSDEs:eq:dUtConstraints}).

The bottom line result of the Quantum Wong-Zakai theorem is that the operators $G^{\alpha \beta}_s$ are expressible in terms of the system operators $E_{\alpha \beta}$ defining $H_{int}(\lambda, t)$ in Eq. (\ref{chQuLight:eq:QuWongZakaiHamiltonian}) a matrix of constants $\kappa_{ij}$.  As $E_{\alpha \beta}$ are assumed to be time independent this results in $G^{\alpha \beta}_s = G^{\alpha \beta}_0$ and so we will omit the time index and demote the superscripts to subscripts.  While the constants $\kappa_{ij}$ are in general complex the simplest of all cases is when $\kappa_{ij} = \half \delta_{ij}$.  Not only is this the simplest of cases it is also well motivated for our problem and so we will use it here, see Appendix \ref{appQuWZT:sec:gaugeFreedom}.    With these simplifications, the propagator $U_t$ is expressible as a quantum stochastic It\={o} integral, which solves the recursive QSDE,
\begin{equation}
  dU_t =  G_{ij}\, U_t\ d\Lambda^{ij}_t + G_{i0}\, U_t\ d A^{i\, \dag}_t + G_{0j}\,U_t\ d A^{j}_t + G_{00}\, U_t\ dt.
\end{equation}
The limiting coefficients $G_{\alpha \beta}$ are
\begin{equation}\label{chQuLight:eq:LimitdUCoeffs}
    G_{\alpha \beta} = -i E_{\alpha\beta} - \half E_{\alpha i} \left(\frac{1}{\ident + i\, \half \boldsymbol{\mathbbm{E}} }\right)_{ij}\,  E_{j \beta},
\end{equation}
where $i$ and $j$ start from 1 and we defined $\boldsymbol{\mathbbm{E}}$ as the matrix of operators $E_{ij}$.  The appearance of this matrix in the denominator is precisely due to the possibility of having multiple scattering events.  A Neumann series is the operator-valued generalization of a geometric series, so that for an operator $A$, $\sum_{n = 0}^\infty A^n = (1 - A)^{-1}$,  which is well defined whenever $1 - A$ is invertible.  The equivalent It\={o} coefficients, generates a Neumann series of operators where in this case $A = -i \half \boldsymbol{\mathbbm{E}}$ and $A^n$ represents a quantum scattering between the modes $n$ times.   The limiting coefficient then involves the $i,j$ component of the operator/matrix inverse $(\ident + i \half \boldsymbol{\mathbbm{E}})^{-1}$.    For an intuitive physical picture, the coefficients $G_{\alpha \beta}$ can be interpreted in the following way.

Each coefficient $G_{\alpha \beta}$ can be roughly thought of a right-to-left acting transformation occurring on the system, dependent upon on how it couples though the field.  The original, direct couplings $E_{\alpha\beta}$ are still present as shown in the first term in Eq. (\ref{chQuLight:eq:LimitdUCoeffs}).  In addition to the direct coupling, there are the effects of coupling thought the various modes.  As an example the second part of the $G_{i0}$ coefficient shows a photon can be created in the $i^{th}$ mode not just by just direct excitation, represented by the $-i E_{i0}$ term, but also by first exciting $j^{th}$ mode, and then scattering any number of times and then finally being emitted into the $i^{th}$.  The same goes for $G_{00}$ and $G_{ij}$ except these either leave the field unchanged or transfer a quantum from mode $j$ to mode $i$.

\subsection{A simple 1D example}
Nearly the simplest of all nontrivial examples of the quantum Wong-Zakai theorem is when $d = 1$, and
\begin{equation}
  E_{11} = 0, \quad E_{10} = i\, \sqrt{\gamma}\, D,\quad E_{01} = -i\,\sqrt{\gamma}\, D^\dag,\quad\text{and }\quad E_{00} = H_{sys}
\end{equation}
where $D$ and $H_{sys}$ are system operators.  In other words, these are the coefficients for a total Hamiltonian
\begin{equation}
  H_{int}(\lambda, t) = \hbar \left( i \sqrt{\gamma}\, D\, a^\dag(\lambda, t) - i \sqrt{\gamma}\, D^\dag\, a(\lambda, t) + H_{sys} \right).
\end{equation}
This is an extremely common model in quantum optics where an atomic dipole operator $D$ couples with the rate $\gamma$ to a quantized quasi-monochromatic electric field, with ``white noise'' creation operator $a^\dag(\lambda, t)$.  $H_{sys}$ is the remaining system operator which includes any residual detuning of the field mode from the system transition frequency or any externally applied controls.  The quantum Wong-Zakai theorem states that this pre-limit Hamiltonian generates a propagator with the coefficients
\begin{equation}
  G_{11} = 0, \quad G_{10} = \sqrt{\gamma}\, D,\quad G_{01} = - \sqrt{\gamma}\, D^\dag,\quad\text{and }\quad G_{00} = -i H_{sys} - \half \gamma\, D^\dag D.
\end{equation}
This results in the propagator $U_t$ satisfying the QSDE
\begin{equation}\label{chQuLight:eq:1Dpropagator}
  d U_t = \left( \sqrt{\gamma}\, D\, dA^{\dag}_t - \sqrt{\gamma}\, D^\dag\, dA_t -i H_{sys}\, dt - \half \gamma\, D^\dag D\, dt\right) U_t.
\end{equation}
Next section we will apply the quantum Wong-Zakai theorem to the much more interesting case of the Faraday interaction where $d = 2$ and $E_{ij} \ne 0$.

\section{The Faraday Interaction\label{chQuLight:sec:faraday} }
The Faraday interaction is physically based on an optical field propagating in a polarizable medium.  In classical optics it is used as a magneto-optical effect, where the polarization of a linearly polarized probe is rotated by an amount proportional to the component of the magnetic field parallel to the direction of propagation.  At a macroscopic level, it is modeled terms of the energy shift of a polarizable particle induced by an oscillating electric field.  Such an energy shift can be easily implemented and precisely controlled by applying a quasi-monochromatic laser to a rarefied monatomic gas.  The laser's effect on a single atom can be considered a perturbation to its atomic ground states, if the laser drive is in a ``low saturation'' regime.  Specifically the laser's carrier frequency must be close to, but significantly off resonance from, a ground state transition.  Furthermore the intensity must be small enough so that the total number of excited atoms will be negligibly small.  Not surprisingly, the details of the ground state atomic structure effects both the magnitude and direction of the induced polarization and often leads to effects beyond a simple linear rotation of the probe polarization.  The derivation of this kind of interaction is elegantly presented by \citeauthor{deutsch_quantum_2010} in \cite{deutsch_quantum_2010}. Here we will consider the simplest of all settings where the atomic ground state is given by a spin 1/2 particle.  Such a ground state is experimentally realizably, if the atom has a single valance electron and negligible hyperfine structure, or if two valence electrons form a spin singlet ground state and the nucleus has a total spin $I = 1/2$.

In either the classical or quantum mechanical setting, the polarizability Hamiltonian for an atom located at position $\Vr_{a}$ is given by
\begin{equation}
  H_{polar} = - \VE^{(-)}(\Vr_{a})\cdot\overleftrightarrow{\V{\alpha}}\cdot{\VE}^{(+)}(\Vr_{a})
\end{equation}
where $\overleftrightarrow{\V{\alpha}}$ is the polarizability tensor.  Quantum mechanically, $\overleftrightarrow{\V{\alpha}}$ is an operator acting solely on the atomic ground states.  It is worth noting that in any real system there will be additional decoherence due to spontaneous emission, which in this treatment we will ignore.  Shortly, we will note that the strength of the coherent interaction is proportional to $\Gamma/\Delta$ where $\Gamma$ is the excited state decay rate and $\Delta$ is the probe detuning from the atomic resonance.  One can additionally show that the incoherent photon scattering is proportional to $\Gamma/\Delta^2$.  When $\Gamma/\Delta$ is small, $\Gamma/\Delta^2$ is smaller and so the decoherence is often ignored.

The connection to the quantum Wong-Zakai theorem is that $H_{polar}$ is quadratic in the field operators.  As $\VE^{(+)}(\Vr_{a}) \propto  \ahat^\dag$ and $\VE^{(-)}(\Vr_{a}) \propto \ahat$, the components of $\overleftrightarrow{\V{\alpha}}$ will be identifiable with the operators $E_{ij}$.  By ignoring spontaneous emission we are also able to restrict our attention to a single quasi-monochromatic paraxial mode.  While in principle the atom couples to any electric field at its location, including the quantized vacuum, we will be applying a coherent displacement in a definite mode $\V{u}^{(+)}(\Vx_T, z)$, with an envelope function $f(t)$.  As we have seen this envelope function is expressible in terms a convolution with the smoothing function $\varphi^{(\sigma)}(t)$, see Sec \ref{chQuLight:sec:paraxialQuNoise}, and so the relevant field operators we will be considering are $\ahat(\lambda, t)$ and its adjoint.   In other words, we are simply using this as an example for applying all of the theoretical machinery developed in this chapter.

Before finally writing down $H_{int}(\lambda, t)$, we will make one more extremely useful but only marginally justifiable approximation.  Here we assume that the spatial distribution of the atoms will be irrelevant and all atoms in the ensemble can be treated as existing at the same location in space.  From the point of view of the slowly varying envelope, this is a reasonable assumption if the dimension of the gas along the direction of propagation is on the order of $c\, \sigma$.  If the longitudinal extent of the atoms becomes significant then $H_{int}$ would have to treat atoms at the beginning of the gas differently from the atoms at the end.  In addition to having a spatial-temporal dependence, any realistic paraxial beam will have some intensity variation in both $\Vx_T$ and $z$.  If we take $\V{u}^{(+)}(\Vx_T, z)$ to be a standard Hermite-Gaussian beam, the transverse and longitudinal intensity can be treated as approximately constant if the gas has a transverse area that is small when compared to the beams characteristic area $\sigma_T$.  However, if a beam of a fixed input power has a large transverse area then it will have a relatively low intensity at any give point relative to a beam with a smaller $\sigma_T$.  Ultimately this means that the more uniform the probe is, the weaker the over all interaction will be.

With all of the above caveats and assumptions the Faraday interaction Hamiltonian is
\begin{equation}\label{chQuLight:eq:vacuumFaraday}
    H_{int}(\lambda, t) = \hbar \frac{\chi_0}{3} J_z\,\left( \ahat_r^{\dag}(\lambda, t) \ahat_r(\lambda, t) - \ahat_l^{\dag}(\lambda, t) \ahat_l(\lambda, t) \right)
\end{equation}
with the operators and constants defined though the following.  For a spin 1/2 ground state, the single atom polarizability $\overleftrightarrow{\V{\alpha}}$ is diagonal in the circular polarization basis, $\Ve_r$ and $\Ve_l$.  Up to an irreverent global energy shift $\overleftrightarrow{\V{\alpha}} \propto \sigma_z ( \Ve_r^* \Ve_r - \Ve_l^* \Ve_l)$ where $\sigma_z$ is the pauli $z$ operator and the proportionality constant depends upon the specifics of the atomic physics. By assuming that all of the atoms exist at the same location in space, computing the Hamiltonian for the whole ensemble reduces to computing a sum over the individual polarizabilities which further reduces to summing over all of the $\sigma_z$ operators.   It is well known that with $N$ spin 1/2 particles, a collective pseudo-spin $\V{J}$ can be defined whose components ($i = x, y, z$) are
\begin{equation}
    J_{i} = \sum_{n = 1}^N \half \sigma^{(n)}_i.
\end{equation}
The details of finding the proper proportionality to express the total interaction as Eq. (\ref{chQuLight:eq:vacuumFaraday}) is given in \cite{deutsch_quantum_2010} but the dimensionless constant $\chi_0$ has an exceedingly simple form with,
\begin{equation}
  \chi_0 = \frac{\sigma_0}{\sigma_T}\, \frac{\Gamma}{2 \Delta}
\end{equation}
where $\sigma_0$ is the resonant scattering cross-section for the given transition.  This constant can be viewed as giving the probability that a single atom absorbs and remits a photon into the paraxial beam.  In order to make all the above approximations valid (\emph{e.g.} assuming that the intensity is near constant and that spontaneous emission is negligible) requires that both $\sigma_0 \ll \sigma_T$ as well as $\Gamma \ll \Delta$, meaning that $\chi_0 \ll 1$.

When proving that the convergence of $U(\lambda, t)$ in the sense of matrix elements $\lim_{\lambda \rightarrow 0} \braOket{\e[\Vf(\lambda)]}{U(\lambda, t)}{\e[\Vh(\lambda)]}$, Gough used the useful relationship that
\begin{equation}
  \lim_{\lambda \rightarrow 0} \braOket{\e[\Vf(\lambda)]}{U(\lambda, t)}{\e[\Vh(\lambda)]} = \lim_{\lambda \rightarrow 0} \braOket{\vac}{\tilde{U}(\lambda, t)}{\vac}
\end{equation}
where $\tilde{U}$ is the propagator, except with the replacements
\begin{equation}
    \ahat_i(\lambda, t) \rightarrow \ahat_i(\lambda, t) + h_i(t) \quad \text{and}\quad \ahat^\dag_i(\lambda, t) \rightarrow \ahat_i^\dag(\lambda, t) + f^*_i(t).
\end{equation}
As we are applying this limit when the field has a coherent displacement, it is sufficient to work with these displaced versions and pretend that the field is in the vacuum.   The heart of Faraday interaction is the rotation of a linearly polarized input and so we will assume that our displacement is linearly polarized and that the equivalent amplitude function $\Vf(t) \in \mathcal{L}^2\otimes\Cn{2}$ represents the displacement in the electric field.  With these two constraints we have that
\begin{equation}
    f_r(t) = f^*_l(t) = \tfrac{i}{\sqrt{2}}\, f_0(t)
\end{equation}
where $f_0(t)$ is real-valued. Typically experiments involving the faraday interaction, the driving displacement is operated in switched on to a constant value  for the duration of an experiment.  In this case $\max(f_0) = \sqrt{N_L/\tau} $ where $N_L$ is the average number of photons in a pulse of duration $\tau$.

In the case of this displacement, we have the effective vacuum Hamiltonian
\begin{multline}
    H_{int}(\lambda, t) = \hbar \frac{\chi_0}{3} J_z\,\Big( \big(\ahat_r^{\dag}(\lambda, t)  + f^*_r(t) \big)\big(\ahat_r(\lambda, t)  + f_r(t) \big) \\
        - \big(\ahat_l^{\dag}(\lambda, t)  + f^*_l(t) \big)\big(\ahat_l(\lambda, t)  + f_l(t) \big) \Big).
\end{multline}
It is useful to define the function,
\begin{equation}
    \kappa(t) = \left(\frac{\chi_0}{3} f_0(t)\right)^2.
\end{equation}
When $f_0(t)$ is held at a constant level we have a characteristic rate $\kappa = (\chi_0/3)^2\, N_L/\tau$.
From this definition we can expand out the interaction
\begin{equation}\label{chQuLight:eq:displacedFaraday}
\begin{split}
    H_{int}(\lambda, t) &= \hbar \tfrac{\chi_0}{3} J_z\,\left( \ahat_r^{\dag}(\lambda, t) \ahat_r(\lambda, t) - \ahat_l^{\dag}(\lambda, t) \ahat_l(\lambda, t) \right) \\
     &\ + i\,\hbar \sqrt{\tfrac{\kappa(t)}{2}}\, J_z \big( \ahat_r^{\dag}(\lambda, t) + \ahat_l^{\dag}(\lambda, t) - \ahat_r(\lambda, t) - \ahat_l(\lambda, t) \big).
\end{split}
\end{equation}
Sec. \ref{chQuLight:sec:QuantumWhiteNoise} discusses the operators $E_{ij}$ as representing the scattering of field quanta in a system dependent way.  In addition, the limiting coefficients show that when $E_{ij}$ takes on relatively large values, there is a possibility for multiple scattering events.  In the practical approximations that resulted in $\chi_0 \ll 1$, the probability for multiple scattering is relatively small, unless $J_z$ takes on obscenely large values.  Note that when $f_0(t) \gg 1$, there can still be a significant interaction, as this means that the second linear term dominates the interaction.   Also note that we have  $\frac{1}{\sqrt{2}}\left(\ahat_r(\lambda, t) + \ahat_l(\lambda, t)\right) = \ahat_h(\lambda, t)$, \emph{i.e.} an annihilation operator for horizontal polarization.  Dropping quadratic terms in favor of the terms with a large displacement, we now have an approximately linear interaction in a single horizontally polarized mode.
\begin{equation}
    H_{int}(\lambda, t) \approx  i\,\hbar \sqrt{\kappa(t)}\, J_z \big( \ahat_h^{\dag}(\lambda, t)  - \ahat_h(\lambda, t) \big).
\end{equation}
From the 1-D example this means that we have $G_{10} = \sqrt{\kappa(t)}\, J_z$ and so if we add a time-dependent system control Hamiltonian $H_c(t)$ to this expression we have a propagator
\begin{equation}\label{chQuLight:eq:dUtPolarimetry}
    d U_t = \left( \sqrt{\kappa(t)}\, J_z\, dA^{h\,\dag}_t - \sqrt{\kappa(t)}\, J_z\, dA^{h}_t  - \half \kappa(t)\, J_z^2\, dt -i H_c(t)\, dt \right) U_t.
\end{equation}
This is the propagator that we will be considering in Chaps. \ref{chap:projection} and \ref{chap:QubitState}.

It is worth noting that the Faraday interaction has been applied to several different continuous measurement models in the QSDE formalism with varying levels of initial assumptions \cite{van_handel_modelling_2005, bouten_scattering_2007}.  In \cite{van_handel_modelling_2005}, the free field was assumed to be well modeled by a QSDE and a Faraday like interaction was derived via adiabatically eliminating an excited state as well as an artificial cavity mode which left many questions unanswered as the derivation was made strictly through a single mode picture and did not address the fundamental two mode structure of the Faraday interaction.  In contrast \cite{bouten_scattering_2007} considered a scattering interaction, however there they simply took the fundamental scattering interaction before the displacement and substituted the scattering processes $d\Lambda^{rr}_t$ and $d \Lambda^{ll}_t$ for the white noise operators.  What that model failed to consider was the effect of normally ordering the field operators in obtaining the proper It\={o} correction.  In the language of the quantum Wong-Zakai theorem, the propagator initially considered in \cite{bouten_scattering_2007} should have been interpreted as a quantum Stratonovich equation and not a Quantum It\={o} equation.

\subsection{The quadratic Faraday interaction}
It would be a shame to discuss the full solution to the quantum Wong-Zakai theorem and not give an example that retains the scattering interaction.  The Faraday interaction is a prime candidate for this, in a world where we have a weak drive $f_0(t) \sim 1$ but it is possible to see some kind of effect.  From Eq. (\ref{chQuLight:eq:displacedFaraday}) we can identify
\begin{equation}
\begin{split}
  E_{rr} &= -E_{ll} = \tfrac{\chi_0}{3} J_z,\\
  E_{r0} &= E_{l0} = i \sqrt{\tfrac{\kappa(t)}{2}} J_z, \\
  E_{0r} &= E_{0l} = - i \sqrt{\tfrac{\kappa(t)}{2}} J_z, \quad \text{and}\\
  E_{00} &= 0.
\end{split}
\end{equation}
We can substitute these operators into the coefficients $G_{\alpha \beta}$ in Eq. (\ref{chQuLight:eq:LimitdUCoeffs}).  After some algebraic simplifications we find that
\begin{equation}\label{chQuLight:eq:FaradaydUCoeffs}
\begin{split}
  G_{rr} &=  G_{ll}^\dag = \frac{-i \tfrac{\chi_0}{3} J_z}{1 + i \tfrac{\chi_0}{6} J_z},\\
  G_{r0} &= -G_{0r} =  \frac{ \sqrt{ \tfrac{\kappa(t)}{2}}\,J_z}{1 + i \tfrac{\chi_0}{6} J_z}\\
  G_{l0} &= -G_{0l} =  \frac{\sqrt{ \tfrac{\kappa(t)}{2}}\, J_z}{1 - i \tfrac{\chi_0}{6} J_z}, \quad \text{and}\\
  G_{00} &= -\frac{\tfrac{\kappa(t)}{2}\, J_z^2}{1 +  \left(\tfrac{\chi_0}{6} J_z\right)^2}.
\end{split}
\end{equation}
App. \ref{app:QSDEs} reviews the usual formulation of the propagator $dU_t$, in terms of the operators $S_{ij}$, $L_i$ and $H$.  Some more simple algebra shows that
\begin{equation}\label{chQuLight:eq:FaradaydUSLH}
\begin{split}
  S_{rr} &=  S_{ll}^\dag = \frac{1 -i \tfrac{\chi_0}{6} J_z}{1 + i \tfrac{\chi_0}{6} J_z},\\
  S_{rl}&= S_{lr} = 0,\\
  L_{r} &= L_{l}^\dag = \frac{ \sqrt{ \tfrac{\kappa(t)}{2}}\,J_z}{1 + i \tfrac{\chi_0}{6} J_z}, \quad \text{and}\\
  H &= 0.
\end{split}
\end{equation}
In order for $dU_t$ to be a unitary process, $S_{ij}$ must form a unitary matrix of operators, \emph{i.e.} $S_{ij}^\dag S_{jk} = S_{ij} S_{jk}^\dag = \delta_{ik}$, which is clearly satisfied in this case.  So finally the Faraday interaction generates the propagator $U_t$, which solves the QSDE
\begin{multline} \label{chQuLight:eq:dUtFaraday}
 \hspace{-10pt}dU_t = \Big( (S_{rr} - 1)\, d\Lambda^{rr}_t + (S_{rr}^\dag - 1)\, d\Lambda^{ll}_t +  L_{r}\,( dA^{r\, \dag}_t - dA^{r}_t) + L_{r}^\dag\,( dA^{l\, \dag}_t - dA^{l}_t) - L_{r}^\dag L_r \Big) U_t
\end{multline}
with the initial value $U_0 = \ident$.  In the small $\chi_0$ limit we were able to write the propagator in terms the linear polarized field operators $A_t^{h}$ and $A_t^{h\,\dag}$.   This is not the case here as the right and left polarization states have different atomic coupling operators.  This might result in creating some system dependent elipticity to the probe laser, however more analysis is clearly needed.

The power of writing the propagator in terms of the $(S, L, H)$ parameters is that a large number of results have already been computed for general coefficients which can simply be applied here.  As an example suppose we wish to compute the unconditioned master equation of the atomic system assuming that the displaced field is in vacuum, \emph{i.e.} other than the coherent drive laser.  Then the master equation is given in Lindblad form with jump operators $L_{r}$ and $L_l$.  Specifically the system density operator $\rho(t)$ is the solution to
\begin{equation}
\begin{split}
  \frac{d \rho}{d t} &= \mathcal{D}[L_r](\rho) + \mathcal{D}[L_l](\rho)\\
  & = L_r\, \rho\, L_r^\dag - \half L_r^\dag L_r\, \rho - \half  \rho\, L_r^\dag L_r  + L_l\, \rho\, L_l^\dag- \half L_l^\dag L_l\, \rho - \half \rho\, L_l^\dag L_l \\
  & = \half\,\kappa(t) \left( \tfrac{ J_z}{1 + i \tfrac{\chi_0}{6} J_z}\, \rho\, \tfrac{ J_z}{1 - i \tfrac{\chi_0}{6} J_z}  + \tfrac{ J_z}{1 - i \tfrac{\chi_0}{6} J_z}\, \rho\, \tfrac{ J_z}{1 + i \tfrac{\chi_0}{6} J_z}  - \tfrac{ J_z^2}{1 +  \left(\tfrac{\chi_0}{6} J_z\right)^2}\, \rho - \rho \, \tfrac{J_z^2}{1 +  \left(\tfrac{\chi_0}{6} J_z\right)^2} \right).
\end{split}
\end{equation}
Here we can see yet again that when $\chi_0 \rightarrow 0$ we recover the standard dissipative master equation with measurement operator $\kappa(t)\, J_z$.  Also note that when the system is prepared in either an eigenstate or a mixture of eigenstates of $J_z$ then it does not evolve in time.

\chapter{Classical and Quantum Probability Theory\label{chap:Math}}

This chapter serves two purposes.  The first and primary intention is to present a number of known results from classical and quantum probability theory, which will serve as a foundation for the novel work in later chapters. Those results rely on a detailed knowledge of the (classical) statistical properties of the Wiener process and so we review them here.  Additionally, we need to know how to extract a classical Wiener process from a fundamentally quantum system.  The procedure of identifying a stochastic process embedded in a quantum system is one useful application of a more general mapping between quantum systems and classical probability theory.  The second purpose of this chapter is to emphasize the power of this technique and to discuss how the language of classical probability theory can be used to identify certain symmetries that might exist in a quantum system.  In order to do this coherently, we also review some of the basic elements of classical probability theory.

By working in the language of classical probability theory, the tools of nearly 80 years of classical mathematical analysis can be applied to quantum problems, with one important example being a continuous-time quantum filter.  We will not rederive it here, merely discuss its origins, limitations and various formulations.  A particulary important form is the conditional master equation, an equation that is in some sense semiclassical and can be viewed as being generated by the quantum-to-classical mapping.  Here we use the term semiclassical in the sense that the measurement record is modeled as a real valued classical stochastic process whose statistics are given by a quantum system expectation (see Sec. \ref{chMath:sec:CME}).  Chaps. \ref{chap:projection} and \ref{chap:QubitState} work exclusively with this equation, albeit in three different variations.  The final topic of this chapter is to show how these various forms are derived.


\section{Classical Probability Theory \label{chMath:sec:ClassicalProbability} }
A physics Ph.D. program does not generally include a course in measure theory or axiomatic probability theory.  Most physics problems only consider a handful of discrete or real-valued random variables and so applying a full measure theoretic context is unnecessary.  However in some instances, working only with a probability density function becomes either intractable or conceptually problematic.  One example is when one is attempting to understand the behavior of a random function defined over continuous time.  In principle, this requires describing an uncountable number of random variables, one for each possible time, where the density function at a given time could be highly correlated with past (and maybe even future) times.

Furthermore, when adding the possibility of statistical inference to the picture, defining individual density functions becomes even more convoluted.  Consider trying to estimate the history of the random variable $x_t$ based upon a continuous observation of a nonlinear function of $x$, \emph{e.g.} $f(x_t) = \sin(x_t)$.  Writing down a joint and marginal density functions for $x_t$ and $f(x_t)$ is not particularly straightforward, as they are clearly distinct objects but are hardly independent.  In the long run, a much more efficient way of doing business is to decouple the notions of random events and their associated probabilities from the specifics of any one random variable.   By finding a way to associate $x_t$, $f(x_t)$, and maybe even a third random variable $y$ to the same underlying structure of events, we can then calculate the probability associated with those events, independent of the specifics of $x$, $y$ or $f(x)$.  The way this decoupling is made is by invoking some of the structure found in measure theory.

An axiomatized probability model contains three elements, usually written as the triple $(\Omega, \mathcal{F}, \mathbbm{P})$, with $\Omega$ being a sample space, $\mathcal{F}$ a $\sigma$-algebra of events, and $\mathbbm{P}$ is a probability measure over those events \cite{van_handel_stochastic_2007,schuss_theory_2009, oksendal_stochastic_2002}.   We will now discuss each element including specific examples.  Ultimately, we are interested in describing diffusive measurements and so we will focus on the example of Brownian motion.  Brownian motion is the canonical example for a system experiencing unforced diffusion and the Wiener process is the most widely used mathematical model for such a system.  Chap. \ref{chap:QuantumLight} already encountered an instance of a Wiener process, in the vacuum statistics of the quadrature $A_t + A^{\dag}_t$.

The first element of a probability space, $\Omega$, is called the \emph{sample space} and describes the set of all possible outcomes of the model.  In a system with a discrete number of outcomes, a flip of a coin or a roll of a die, then $\Omega$ is simply the set of all possible outcomes.  For the coin $\Omega = \set{\text{heads}, \text{tails} }$ and the die $\Omega = \set{1, 2, 3, 4, 5, 6}$.  In addition to these discrete examples, the sample space could also be uncountably infinite.   For a Brownian particle, moving in $d$ dimensions, The sample space is the space of all possible trajectories.  As a particle's trajectory must be a continuous real valued function, $\Omega$ is then then the set of all continuous functions of time \cite{schuss_theory_2009}
\begin{equation}
  \Omega = \set{ \omega(t) :  \R^+ \rightarrow \R^d,  \quad \omega \text{ continuous} }.
\end{equation}

The next element of the probability model, $\mathcal{F}$, is a $\sigma$-algebra over the sample space.  This represents any ``sensible'' question we can ask about the various outcomes.  Each object in the algebra represents such a question and is called an \emph{event}.  In this formalism, probabilities are computed not from the individual outcomes in $\Omega$, but instead from the events in $\mathcal{F}$.  The reason for this distinction is to exclude pathological cases that arise when working with uncountable sets and is the same reason measure theory was developed.  When the sample space $\Omega$ is uncountably infinite, one can find highly pathological sets that can be used to obtain paradoxical results.  For instance, by choosing just 5 disjoint subsets from the unit ball, one can construct, simply through translations and rotations, two independent and identical copies of that ball \cite{tao_introduction_2011}.  It would be problematic for a probability model to consider these kinds of sets, as one could then double the probability for picking a point in the unit ball simply by doubling the ball.  Identifying the elements of $\mathcal{F}$ with \emph{sensible} questions means that we are excluding these kinds of pathologies.

In the discrete case the sensible questions are things like, ``Did the die land with an even number?'', ``Did it land showing the number 6?'', or even ``Did it land showing any number 1 though 6?''.  Mathematically, these questions represent \emph{sets} of the underlying outcomes.  These correspond to the sets $\set{2, 4, 6}$, $\set{6}$, and $\set{1, 2, 3, 4, 5, 6} = \Omega$ respectively.  The $\sigma$-algebra $\mathcal{F}$ is the set of these sets, representing any possible question -event- we can ask about the system.   For a finite and discrete number of outcomes, $\mathcal{F}$ is usually the power set, in that it is the set of all possible sets one can make out of $\Omega$.  Operationally speaking, a $\sigma$-algebra has the following definition \cite{van_handel_stochastic_2007}.  A $\sigma$-algebra $\mathcal{F}$ is a collection of sets of $\Omega$ satisfying the following three properties\footnote{The ``$\sigma$'' in $\sigma$-algebra is to mean ``countable'' \cite{schuss_theory_2009}. },
\begin{enumerate}
  \item If a countable number of sets $\set{A_n}_{n \in \mathbbm{N}} \in \mathcal{F}$ then $\cup_{n} A_n \in \mathcal{F}$.
  \item If $A$ is a set in $\mathcal{F}$ than its complement, $A^c$, is also in $\mathcal{F}$.
  \item $\mathcal{F}$ must contain the space $\Omega$, and therefore by the second property, its complement the empty set $\set{\emptyset}$.
\end{enumerate}

For the case of Brownian motion, $\mathcal{F}$ is the $\sigma$-algebra of all ``cylinder sets'', which are defined in the following way \cite{schuss_theory_2009}.  In real valued random variables probabilities are given in terms of intervals.  The probability that a random variable $x$, with probability density $p(x)$, has a value in the interval $[a, b]$ is given by the integral $\int_a^b dx\, p(x)$.  Here the event is the interval $[a, b]$ and is an element of the \emph{Borel $\sigma$-algebra}, $\mathcal{B}$.  This is essentially the set of all intervals, open and closed over, the real line.

However, at any given time, a $d$-dimensional Brownian motion will take on values in $\R^d$.  In order to ask if a trajectory landed in some interval, $I$, we must also specify an associated time, $t$, for that measurement.  A basic cylinder set is then specified by both a time and an interval.  The actual set, $C(t; I)$, is the set of all Brownian trajectories that are in $I$ at time $t$,
\begin{equation}
  C(t;I) = \set{\omega \in \Omega\ :\ \omega(t) \in I }.
\end{equation}
A trivial example is the set $C(t, \R^d) = \Omega$, \emph{i.e.} all continuous trajectories will have a value in $\R^d$ at any time $t$.  A nontrivial example in one dimension is to ask for the set of all trajectories that are between $a =$ -10 $\mu$m and $b =$ 5 $\mu$m at time $t =$ 5 ms.

In addition, questions that involve multiple times are also sensible.  It is perfectly reasonable to ask, ``what one-dimensional trajectories are in $I_1 = (a_1, b_1)$ at time $t_1$ and in $I_2 = [a_2, b_2)$ at time $t_2 > t_1$?''  This is also a cylinder set, $C(t_1, t_2; I_1, I_2)$.  An image that might be helpful is to imagine that the cylinder set $C(t_1, t_2,\dots t_n; I_1, I_2, \dots I_n)$ defines the set of trajectories that successfully navigates the ``slalom'' defined by these intervals and these times.

The $\sigma$-algebra we will use for analyzing Brownian motion is the $\sigma$-algebra generated by the cylinder sets defined by all countable sequences of times and all open sets of $\R^d$ at those times \cite{schuss_theory_2009}.  Note that the issues of discussing a uncountably infinite number of random variables is avoided by defining the cylinder sets for a countable number of times.  In fact asking questions about an uncountable number of events is ultimately identified as ``unreasonable'' as it allows for the introduction of pathological possibilities.  Here we are only interested in describing the continuous sample paths of a Brownian particle and means that we can safely consider a countable number of events, \emph{e.g.} times defined by a sequence of rational numbers.

The final element of a probability space is the probability measure, $\mathbbm{P}$.  It defines the probability for observing the events in $\mathcal{F}$.  Mathematically $\mathbbm{P}$ is a function that takes \emph{sets of $\Omega$}, (elements of $\mathcal{F}$) and maps them to real numbers between zero and one, $\mathbbm{P}:\mathcal{F} \rightarrow [0, 1]$.  In order for a valid measure to be a probability measure we must have:
\begin{enumerate}
  \item The probability of something happening be one, $\mathbbm{P}(\Omega) = 1$, and the probability of nothing happening be zero, $\mathbbm{P}(\emptyset) = 0$.
  \item The probability of the union of a countable number of disjoint events in $\mathcal{F}$ must be additive,
  \begin{equation}
    \mathbbm{P}(\cup_{n} A_n) = \sum_{n}\mathbbm{P}( A_n) \quad \text{ if } A_n \cap A_m = \emptyset \text{ for } A_n, A_m \in \mathcal{F} \text{ and } n\ne m.
  \end{equation}
\end{enumerate}
The requirement that a probability measure be countably additive is simply a statement that if $A$ is independent of $B$ then the probability to observe $A$ or $B$ is the sum of the two probabilities.

Shortly we will discuss what the probability of observing a given cylindrical set, if the trajectories in those sets represent unforced Brownian motion.

\subsection{Stochastic processes and random variables \label{chMath:sec:processes} }
From a well constructed probability space we now need to see how random variables fit into the measure theoretic context.  Much more can be said on this topic than we can include here, so an interested reader is encouraged to consult \cite{williams_probability_1991,van_handel_stochastic_2007,schuss_theory_2009,oksendal_stochastic_2002}.  Chap. \ref{chap:QubitState} requires a reasonable understanding of the statistical properties of a one-dimensional Brownian motion, the Wiener process, and so we will focus on that example here.

Abstractly, a random variable $f$ is a function that maps elements of $\Omega$ to another space, usually the real numbers.  Placing a \$50 bet that a coin toss will land heads is an example of a random variable.  Another example of a random variable the indicator function $\indicate{A}\hspace{-4pt}(\omega)$ for any event $A \in \mathcal{F}$.  Chap. \ref{chap:QuantumLight} already found many uses for an indicator function, which in a probabilistic context, is a random variable defined as
\begin{equation}
  \indicate{A}\hspace{-4pt}(\omega) = \left\{\begin{array}{cc}
                                               1 & $if $ \omega \in A \\
                                               0 & $if $ \omega \notin A
                                             \end{array}
   \right. .
\end{equation}
Such a random variable is deceptively simple but is also extremely useful.  One of its primary uses is that they relate set operations in $\mathcal{F}$ to algebraic operations on random variables.  It is easy to show that the random variable $x(\omega) = \indicate{A}\hspace{-4pt}(\omega) + \indicate{B}\hspace{-4pt}(\omega)$ is equal to $1$ whenever $\omega \in A \cup B$.  Also the random variable $y(\omega) = \indicate{A}\hspace{-4pt}(\omega)\, \indicate{B}\hspace{-4pt}(\omega)$ is equal to $1$ only when $\omega \in A \cap B$, meaning
\begin{align}
\indicate{A\cup B }\hspace{-4pt}(\omega)&=  \indicate{A}\hspace{-4pt}(\omega) + \indicate{B}\hspace{-4pt}(\omega)\\
\indicate{A\cap B }\hspace{-4pt}(\omega)&=  \indicate{A}\hspace{-4pt}(\omega)\, \indicate{B}\hspace{-4pt}(\omega).
\end{align}

For the case of 1D diffusion, one of the most important random variable is parameterized by time and simply returns the value of trajectory at that time.  For all times $t \ge 0$, we define the function $x_t : \Omega \rightarrow \R$ such that
\begin{equation} \label{chMath:eq:brownianRandomVariable}
  x_t(\omega) = \omega(t).
\end{equation}
This definition might seem a bit pedantic, but note that the trivial random variable $y(\omega) = \omega$ is not real-valued, $\omega$ describes the entire trajectory not just at any one specific time.  From $x_t(\omega)$ a whole host of other random variables can be defined through functional composition.  A man could place a \$50 bet on whether or not a diffusive particle will be greater than +5 $\mu$m from its starting point by time $t = 1$ ms.  That bet is the composition $b( x_{1 \text{ms}}(\omega))$ where $b$ maps real number to $\pm$50.

The random variable $x_t$ in Eq. (\ref{chMath:eq:brownianRandomVariable}) only gives a snapshot of a trajectory at that time.  In order to describe the trajectory dynamically in time as a random variable, there is the notion of a stochastic process.  Most generally, a stochastic process is a family of random variables $\set{x_t}_{t \in I}$ indexed by some parameter $t$, almost always representing time.  Typically we will take time to start at $0$ and either let it continue off towards infinity or, when convenient, stop at some finite time.  When discussing the concept of the process we will use the notation $\set{x_t}_{t \ge 0}$ and $x_t$ is the random variable given at that time.    Before discussing a couple important types of processes, we should know how to compute the probability for a random variable to evaluate to a range of values.

The previous section showed that a probability measure $\mathbbm{P}$ acts on elements of $\mathcal{F}$ and returns probabilities.  To compute the probability of a \$50 dollar bet $b$ to win, we need to identify the set of events that the function $b: \Omega \rightarrow {\pm 50}$ evaluates to $50$.  Because $b$ is a function acting on $\Omega$, we can also consider its inverse map $b^{-1}$.   If a given random variable can take on a continuum of values, we can still run into pesky problems of having uncountable numbers of things.  The solution to this is to again only consider sensible sets of outcomes for any given random variable.  For the random variable $x_t$ from Eq. (\ref{chMath:eq:brownianRandomVariable}) we will have to take its inverse map $x_t^{-1}$ to act only on elements of the Borel $\sigma$-algebra, $\mathcal{B}$.  When we ask for probabilities of observing certain values of a random variable $x$, we must ask for the probability of observing \emph{sets} or intervals in the range of $x$.

For every ``reasonable'' interval that $x$ maps to, there must be a corresponding element $A \in \mathcal{F}$ in order for us to be able to calculate the probability of that underlying event.  Such a random variable is called \emph{measurable}.  If a random variable $x$ is not measurable, then there is little we can say about it when its outcomes lead to unreasonable questions.  In other words, a nonmeasurable random variable has an inverse that generates sets not in $\mathcal{F}$.  Faced with this possibility we can either ignore such questions and pray they never occur or redefine the probability space in order to make these sets measurable.    A nontrivial example of this problem is suppose we had a random variable $y_t$ that returned the value $1$ whenever the sample path exhibited a discontinuous jump in the time interval $[0, t)$ and zero otherwise.  The question, ``what is the probability of $y_t$ returning 1?'', corresponds to $\mathbbm{P}\big( y_t^{-1}(1) \big)$.  If our probability space is constructed \emph{only} of continuous functions, then we technically can't answer this question as the pre-image $y_t^{-1}(1)$ ask for the set of functions that have a discontinuity for times $0 \le s < t$, which is \emph{not} an element of $\mathcal{F}$.

By defining random variables as measurable functions, we can easily relate the statistics of multiple random variables to each other though their inverse maps.  Consider the stochastic process $\set{x_t}_{t\ge 0}$ defined by Eq. (\ref{chMath:eq:brownianRandomVariable}).  Then $x_{t_1}$ and $x_{t_2}$ are two random variables taking on values in the real number line.  Suppose we wish to calculate the probability of observing $x_{t_1}$ in the interval $(a, b)$ and $x_{t_2}$ in the interval $(c, d]$.  Individually, we have
\begin{equation}
  x^{-1}_{t_1} \big( (a, b)\, \big)= C(t_1;\, (a,b)\,) \ \in \mathcal{F}
\end{equation}
 and
\begin{equation}
  x^{-1}_{t_2}\big( (c, d]\,  \big) = C(t_2;\, (c, d]\, ) \ \in \mathcal{F}
\end{equation}
The joint probability of these two events is simply the probability of the intersection of these two sets,
\begin{equation}
  \mathbbm{P}\Big(\,  C(t_1;\, (a,b)\,) \cap C(t_2;\, (c, d]\, )\,\Big) = \mathbbm{P}\Big(\,  C\big(t_1, t_2; \, (a,b),\, (c, d]\, \big)\, \Big)
\end{equation}

\subsection{Expectation values, the conditional expectation, and measurability \label{chMath:sec:classicalConditioning}}
The most fundamental operation one performs with random variables is computing their expectation values.  If the random variable $z(\omega)$ takes on a finite number of values, $\set{z^{(i)}\ : \ i = 1, \dots, n}$, then calculating the expectation value for $z$ is no different than in the nonmeasure theoretic context
\begin{equation}\label{chMath:eq:discreteExpect}
  \mathbbm{E}(z) \define \sum_{i = 1}^n z^{(i)}\ \mathbbm{P}\left( z = z^{(i)} \right).
\end{equation}
The expectation value of $z$ is the average of all its outcomes, weighted by how likely they are to occur.   Note that writing $\mathbbm{P}\left( z = z^{(i)} \right)$ is shorthand for finding the event, $A_i \define z^{-1}\left(  z^{(i)} \right)$ with
\begin{equation}
\mathbbm{P}(z = z^{(i)}) \define \mathbbm{P}(A_i) = \mathbbm{P}\big(\,\set{\omega \in \Omega\ :\ z(\omega) = z^{(i)}}\big).
\end{equation}

In addition to these simple random variables, we need to formulate expectation values for random variables that can take a continuum of values.  This is done by defining a measure theoretic version of a standard Riemann integral, called the \emph{Lebesgue integral}.  One path for this construction is to make an approximation for $x_t$ that takes on a finite number of values.  The expectation value of such a discrete approximation is easily computed though Eq. (\ref{chMath:eq:discreteExpect}).  Then by taking a suitable limit where the number of values become continuous we can calculate the proper expectation value.  At this point this procedure is a bit vague, but as we have not specified a measure for Brownian motion, it is difficult to be more specific.  When discussing the Wiener process, we will be able to be more clear.

Probability theory gets a lot more lively when instead of considering simple expectation values we consider conditional quantities.  When working with simple random variables finding a conditional expectation values is no harder in a probability space than in a standard context.  No matter how sophisticated the framework, Bayes' rule still applies, in that for two events $A_1$ and $A_2$ we have
\begin{equation}
  \mathbbm{P}\left(A_1 \cap A_2\right) = \mathbbm{P}\left(A_1 | A_2\right) \mathbbm{P}(A_2).
\end{equation}
Whenever $\mathbbm{P}(A_2) \ne 0$  we can invert to find the conditional probability of $A_1$ given $A_2$,
\begin{equation}
  \mathbbm{P}\left( A_1 | A_2\right) = \frac{\mathbbm{P}\left(A_1 \cap A_2\right)}{\mathbbm{P}(A_2)}.
\end{equation}
We emphasize that we are calculating the probability of an event $A_1$ occurring, conditional on the event $A_2$.  While $A_2$ could correspond to the pre-image of a single value of a simple random variable, it could also correspond to another random variable taking on a range of values or even the intersection between the outcomes of two random variables.  All of these different possibilities correspond to the same underlying event and thus carry the same information.

Turning this conditional probability into a conditional expectation value is simply a matter of weighting the outcomes of one random variable by the conditional probabilities. We will illustrate this in a quick example.  Consider the two simple random $z(\omega)$ and $y(\omega)$ with values $\set{z^{(1)}, \dots, z^{(n)}}$ and $\set{y^{(1)}, \dots, y^{(m)}}$.   For each variable and each outcome find the corresponding events, $A_i = z^{-1}(z^{(i)})$ and $B_j = y^{-1}(y^{(j)})$.  Then the conditional expectation value of $z$ given that $y = y^{(j)}$ is
\begin{equation}
\begin{split}
  \mathbbm{E}\left( z \middle| y = y^{(j)} \right) &= \sum_{i = 1}^n z^{(i)}\ \mathbbm{P}\left( z = z^{(i)} | y = y^{(j)} \right) = \sum_{i = 1}^n z^{(i)}\ \frac{\mathbbm{P}\left( A_i \cap B_j \right)}{\mathbbm{P}(B_j)}.
\end{split}
\end{equation}
If one is having to compute this conditional expectation value by hand, hopefully the events $A_i$ and $B_j$ are relatively simple and that computing the probability of their union is relatively straight forward.  But even if this is not the case, by adding the structure that events are sets and probabilities are measures on events, computing conditional quantities does not require defining a new probability structure like a joint density function for each random variable we want to consider.

Sometimes it is convenient to write the conditional expectation not directly in terms of the events $A_i$ and $B_i$, but instead in terms of a regular expectation value and an indicator function.  This method will be useful when both $z$ and $y$ are not simple random variables, also when we want to abstract away $y$ and instead think about conditioning on just some abstract event $B$.  A nice property of indicator functions is that because they only the value 1 on a single event, we can write
\begin{equation}
  \mathbbm{E}(\indicate{A_i})  =  \mathbbm{P}\left( A_i \right)
\end{equation}
and in particular,
\begin{equation}
  \mathbbm{P}\left( A_i \cap B_j\right) = \mathbbm{E}(\indicate{A_i\cap B_j}) = \mathbbm{E}(\indicate{A_i}\indicate{B_j}).
\end{equation}
As the expectation value is a linear operation we have that
\begin{equation}\label{chMath:eq:condtionalExpectValue}
\begin{split}
  \mathbbm{E}\left( z \middle| B_j \right) &= \sum_{i = 1}^n z^{(i)}\ \frac{\mathbbm{E}(\indicate{A_i}\indicate{B_j}) }{\mathbbm{E}(\indicate{B_j})} = \frac{\mathbbm{E}\left(\,\left( \sum_i z^{(i)} \indicate{A_i} \right)\, \indicate{B_j}\right) }{\mathbbm{E}(\indicate{B_j})} = \frac{\mathbbm{E}\left( z \indicate{B_j}\right) }{\mathbbm{E}(\indicate{B_j})}.
\end{split}
\end{equation}
The last equality here is particulary useful as it holds even when $z$ isn't a simple random variable. 
When continuous-time and continuous-valued random variables are involved, it should not be surprising that explicitly computing conditional quantities by finding the underlying sets and computing their union is often impracticable.  While working with the indicator function $\indicate{{B_j}}$ makes some substantial simplifications, often an explicit computation is still impractical. 
Instead, taking an indirect root is often fruitful and one of the prime tools for doing so is what is called the \emph{conditional expectation}.

Moving from a conditional expectation value, to a conditional expectation, is only one step more complicated.  If one is prepared for computing the conditional expectation value for every outcome of $y$, \emph{i.e.} having computed all of the sets $B_j$ and know which have zero probability, you can write down a \emph{random variable} that in some sense computes all of the conditional expectation values at once.  The conditional expectation of $z$ on $y$ written as $\mathbbm{E}(z|y)$, is a random variable that takes on the value $\mathbbm{E}\left( z \middle| y = y^{(j)} \right)$ whenever $y = y^{(j)}$.   This works through the following.  For every outcome $y^{(j)}$ we have an underlying event $B_j$.  Whenever $\mathbbm{P}(B_j) \ne 0$ we can define the random variable,
\begin{equation}\label{chMath:eq:condtionalExpectation}
\begin{split}
  \mathbbm{E}\left( z \middle| y \right) \define \sum_{j} \mathbbm{E}\left( z \middle| y = y^{(j)} \right) \ \indicate{B_j}\hspace{-4pt}(\omega). 
\end{split}
\end{equation}
Whenever $\mathbbm{P}(B_j) = 0$ we can give the conditional expectation value any value we wish, safe in the knowledge that the probability for obtaining such an arbitrary values is zero.  A lot can be said for this object, but one of the most important is that it can be viewed as a reasonable estimate for $z$ given information about $y$.  More specifically suppose you wanted to find a least-mean-squared estimate for what value $z$ would return when the outcome you receive, $\omega$, results in $y(\omega) = y^{(i)}$.  Generally speaking, there are many different $\omega$ for any value of $y^{(i)}$ also the set of $\omega$ that gives this value, may return multiple values for $z$.  It turns out that if you want to make any estimate for any random variable, $z$, conditioned on the events given by another, $y$, then the estimate for $z$ must have the form
\begin{equation} \label{chMath:eq:Ymeasureable}
\begin{split}
  \hat{z}(\omega) = \sum_{j = 1}^m a_j\, \indicate{B_j}\hspace{-4pt}(\omega) 
\end{split}
\end{equation}
where $B_j$ are the events generated by the various outcomes for $y$ and $a_j$ are constants.  It should not be too surprising that by finding the constants that correspond to the least-mean-square estimate for $z$ given $y$ turn out to be the conditional expectation values $a_j = \mathbbm{E}\left( z \middle| y = y^{(j)} \right)$.

Any random variable that can be written as Eq. (\ref{chMath:eq:Ymeasureable}) is called a $y$-measurable random variable.  Note that $y$ itself and almost any function of $y$ can written like this.  The sets $B_j$ can be used to construct a $\sigma$-algebra by taking all (countable) unions and complements, then the random variable $\hat{z}$ is measurable with respect to that $\sigma$-algebra.  The $\sigma$-algebra formed by these sets, written as $\mathcal{Y}$ or sometimes $\sigma\set{y}$, is called the sigma algebra generated by $y$.  It is not difficult to show that if you have multiple random variables $x$, $y$, \emph{etc.} you can form a $\sigma$-algebra, $\sigma\set{x, y, \dots}$, generated by all of those as well simply by taking unions and complements of all of their various events.

All of the above concepts, joint probabilities, the conditional expectation, generating $\sigma$-algebras from random variables, \emph{etc.}, can be extended to continuous random variables by taking appropriate limits.  More often than not, one tries to avoid taking an actual limit, and instead looks for a random variable that is $y$-measurable and satisfies the basic properties of the condition expectation.  For general reference, here are some of those properties:
\begin{enumerate}
  \item The conditional expectation is linear,
        \begin{equation}
          \mathbbm{E}\left( \alpha\, x + \beta\, z  \middle| y \right) = \alpha \mathbbm{E}\left( x \middle| y \right) + \beta \mathbbm{E}\left( z \middle| y \right)
        \end{equation}
        for constants $\alpha$ and $\beta$.
  \item  The conditional expectation is consistent with usual expectation values,
        \begin{equation}
          \mathbbm{E}\big(\, \mathbbm{E}\left( x \middle| y \right)\, \big) = \mathbbm{E}(\, x \, ).
        \end{equation}
  \item  The conditional expectation of any $y$-measurable random variable, $x$, is itself
        \begin{equation}
          \mathbbm{E}\left( x\, \middle| y \right) = x.
        \end{equation}
  \item  If $z$ and $y$ are independent than the conditional expectation is just the expectation value for $z$ (times the ``identity'') \begin{equation}
          \mathbbm{E}\left( z \middle| y \right) = \mathbbm{E}(z)\, \indicate{\Omega}\hspace{-4pt}(\omega).
        \end{equation}
  \item For any $y$-measurable random variable $y'$, the conditional expectation value satisfies the property that
        \begin{equation}
          \mathbbm{E}\big(\, \mathbbm{E}\left( z \middle| y \right)\, y' \big) = \mathbbm{E}(\,z\, y' \, ).
        \end{equation}
\end{enumerate}
while this last property could be inferred from the second and third, we include it here because it is often what is used to ``guess'' what the conditional expectation is, without going though a bare bones construction.

Much, much more can be said about the conditional expectation.  The derivation of classical and quantum filtering theory is based upon computing a conditional expectation of some unobserved processed, $\set{x_t}$  based upon measurements of a correlated process $\set{y_t}$.  Unfortunately, we will be unable to go into the detail here but interested reader is encouraged to seek out a number of good references on the subject, some of which are \cite{van_handel_stochastic_2007,oksendal_stochastic_2002,bouten_introduction_2007,chase_parameter_2009}.

\subsection{Special processes - time-adaption and martingales}

Having discussed random variables and stochastic processes in terms of a classical probability space $(\Omega, \mathcal{F}, \mathbbm{P})$, we now need to introduce a couple of important and useful processes.  Chap. \ref{chap:QuantumLight} already introduced the concept of a time-adapted process.  An stochastic process $\set{x_t}_{t \ge 0}$ is time-adapted when it depends only on events defined in the present or past and not on the future.  Having introduced the concept of a measurable function and a $\sigma$-algebra over the cylindrical sets $C(t_1, t_2, \dots; I_1, I_2, \dots)$ we can easily give a precise meaning to time-adapted process.  A stochastic processes $\set{x_t}_{t \ge 0}$ is time-adapted when each random variable $x_t$ is measurable with respect to the $\sigma$-algebra, $\mathcal{F}_t$, generated from the cylindrical sets $C(t_1, \dots , t_n \le t; I_1, \dots, I_n)$.  (Sometimes a more general definition avoids using these specific cylinders and simply uses an indexed sequence of $\sigma$-algebras $\mathcal{F}_0 \subset \mathcal{F}_s \subset \mathcal{F}_t \subset \mathcal{F}$ called a \emph{filtration} \cite{van_handel_stochastic_2007}.  A process adapted to this filtration is called $\mathcal{F}_t$-adapted.)  In the context of statistical estimation, working with time-adapted processes is essential, as these are the processes that are independent of any future events.  Within the bounds of time-adapted processes there are an additional type of stochastic processes that have special and simplifying characteristics in terms of their conditional statistics, called \emph{martingales}.

A martingale is an important kind of stochastic process which plays a crucial role in classical probability theory.  More importantly for our purposes,  they will play a crucial role in significantly simplifying a quantum conditional master equation as we will see in Sec. \ref{chMath:sec:Innovation}.  They are used to represent fair betting games where no amount of past information is helpful in predicting future events.  The defining property is that the conditional expectation of any future value of the process is simply given by its current value.  You expect to leave a fair casino with the same amount of money as you had when you entered\footnote{In contrast to a real casino.}.  In essence a martingale is a random process where the conditional mean of any future increment is zero, \cite{williams_probability_1991, oksendal_stochastic_2002, van_handel_stochastic_2007}.  To illustrate this property, consider taking a fair coin and flip it $N$ times.  A typical sequence $\omega$ may be something like,
\begin{equation}
\omega = \set{H,\, T,\, H,\, H,\, H,\, H,\, T,\, T,\, H,\, \dots}.
\end{equation}
For any sequence $\omega$ we can create a random variable $x_n$ which is equal to the number of heads minus the number of tails seen in the first $n$ flips.  So that for this sequence
\begin{equation}
x = \set{1,\, 0,\, 1,\, 2,\, 3,\, 4,\, 3,\, 2,\, 3,\,\dots}.
\end{equation}
In the case of a fair coin, $x_n$ is a martingale.

To see why, note that in each flip there is equal probability of the coin landing heads or landing tails.  So that for any $n$ we have the expectation value,
\begin{equation}
\mathbbm{E}(x_n) = 0.
\end{equation}
In this specific realization, after the first four flips $x_4$ is not 0, but is in fact 2. However, because any future flip are independent of the past we should not expect to see any more heads than tails.  This means that conditioned upon the first four outcomes we should not expect for $x_{n \ge 4}$ to be 0, but instead it should average around 2.  In other words
\begin{equation}
\mathbbm{E}(x_m - x_n | \set{x_1, x_2\,\dots x_n}) = 0 \qquad \text{for all } m \ge n.
\end{equation}
This is the fundamental property of a martingale which is usually written as,
\begin{equation}
\mathbbm{E}(x_m | \set{x_1, x_2\,\dots x_n} ) = x_n \qquad \text{for all } m \ge n.
\end{equation}

Now imagine that that the coin was not in fact fair.  Say that the probability for heads was $P_H = 2/5$ and the probability for tails was $P_T = 3/5$.   Then in this case $x_n$ is not a martingale, as we would instead expect $x_n$ to trend negative.  Or in other words $\mathbbm{E}(x_m | \set{x_1, x_2\,\dots x_n} ) < x_n$ for $m >n$.  But while $x_n$ may not be a martingale, it is sometimes possible to construct from $x$ another random variable that is a martingale.  This kind of process is called a semi-martingale and defines the class of processes that are capable of being used in an It\={o} integral.  The fact that this is possible will play a crucial step in finding a maximum likelihood estimate in Chap. \ref{chap:QubitState}.

\subsection{The Wiener process \label{chMath:sec:Wiener} }
One of the most important processes one can consider is the \emph{Wiener process}, a mathematical model for Brownian motion named for Norbert Wiener.  In addition to elegantly describing diffusive motion, the Wiener process is used to model nearly any systems interacting with white noise.  Chap. \ref{chap:QuantumLight} already found an application outside of diffusive motion, that is the statistics of the quadrature $Q_t$ and $P_t$ under vacuum expectation.  This section reviews of the properties of the Wiener process, including how it relates to a classical probability space introduced in Sec. \ref{chMath:sec:ClassicalProbability}.

The defining characteristics of a Wiener process are two fold:
\begin{enumerate}
  \item A Wiener process makes a continuous trajectory in time, with probability one.
  \item A Wiener process has increments that are independent, mean zero, Gaussian distributed random variables with a variance given by the increment's time duration.
\end{enumerate}
The first property is obvious while the second is a bit more involved and requires some explanation.  Consider the stochastic process $\set{w_t}_{t \ge 0}$.  For times $0 <  s  < t $,  we can define the random variables, $a = w_s - w_0$ and $b = w_t - w_s$.  If $\set{w_t}_{t \ge 0}$ is a Wiener process then $a$ and $b$ are statistically independent and $a$ is a mean zero, Gaussian random variable with variance $s$ and $b$ is also a mean zero Gaussian, with variance $t-s$.  Sec. \ref{chMath:sec:processes} found that if a random process has statistically independent increments and each increment is mean zero then its a martingale, and so the Wiener process is also a martingale.  There are many more interesting and sometimes nonintuitive properties that one can calculate for a Wiener process, see \cite[Chap. 2]{schuss_theory_2009}.   Some interesting facts are (\emph{i}) a Wiener process is nondifferentiable with probability one and (\emph{ii}) if at some time $t$ a Wiener process takes on the value $w(t) = a$ then it will take on that value and infinite number of times in every interval $[t, t + \Delta t]$.

The statistical properties of a Wiener process are deceptively simple, and yet exceedingly rich.  The second defining property allows us to find a connection between this simple statement and a nontrivial probability measure over the space of continuous functions.  While we just used the times $0 <  s  < t $ to demonstrate what this property means there is nothing stopping us from using a countable sequence of times $ 0 < t_1 < t_2 < \dots\,$ .  We then know that for a Wiener process the random variables $\Delta w_i \define w_{t_i} - w_{t_{i-1}}$ are all independent mean zero Gaussian random variables with variances $\Delta t_i \define {t_i} - {t_{i-1}}$.  For each time $t_i$ we can calculate the probability that the Wiener process lies in the interval $I_i = (a_i, b_i)$.  This ultimately turns out to be
\begin{equation} \label{chMath:eq:wienerPathIntegral}
  P( \set{w_{t_i} \in I_i} ) = \int_{a_1}^{b_1} d\w_1 \int_{a_2}^{b_2} d\w_2 \dots \prod_{i} \left( \frac{1}{\sqrt{2 \pi \Delta t_i}} e^{-\frac{(\w_i - \w_{i -1})^2}{2 \Delta t_i} } \right),
\end{equation}
which is known as \emph{Wiener's discrete path integral} \cite{schuss_theory_2009}.  Notice, that by picking the sequence of times $ 0 < t_1 < t_2 < \dots\,$ and the intervals $I_i = (a_i, b_i)$, we just defined a cylindrical set, $C( t_1, t_2, \dots\, ; I_1, I_2, \dots)$.  Eq. \ref{chMath:eq:wienerPathIntegral} is a probability for observing a continuous trajectory to lie within this set, and therefore we can use these integrals to define a probability measure on the space of continuous functions $\omega : \R^+ \rightarrow \R$.  Not surprisingly, this is called the $\emph{Wiener measure}$
\begin{equation} \label{chMath:eq:wienerMeasure}
  \mathbbm{P}\Big(C( t_1, t_2, \dots\, ; I_1, I_2, \dots) \Big) =  P( \set{w_{t_i} \in I_i} ).
\end{equation}
It is worth noting that under this measure, all trajectories which do not have $\omega(0) = 0$ are given zero probability, \emph{i.e.}
\begin{equation}
  \mathbbm{P}\Big(\set{\omega \in \Omega\, :\ \omega(0) = 0 } \Big) = 1.
\end{equation}
This brings us back to the probability space $(\Omega, \mathcal{F}, \mathbbm{P})$ defined over the continuous functions $\omega(t)$ with the $\sigma$-algebra $\mathcal{F}$ over the cylindrical sets.  If the probability measure $\mathbbm{P}$ is the Wiener measure as defined above than the stochastic process, $\set{ w_t(\omega) = \omega(t)\, :\ 0 \le t < \infty}$  is Wiener processes.

\section{Quantum Probability Theory \label{chMath:sec:QSDEs} }


These same concepts can also be applied to quantum theory either directly or with some modification.  
The mathematics of quantum stochastic calculus and noncommutative probability theory is a broad and detailed subject, one that is beyond our scope.  Reasonable introductions with an emphasis on filtering can be found in \citep{van_handel_feedback_2005,bouten_introduction_2007} and with more detailed treatments in \citep{parthasarathy_introduction_1992,hudson_quantum_1984, barchielli_continual_2006}.  However a certain amount of review is necessary in order to address the physical implications of the formalism.  Before discussing the truly quantum nature of noncommutative probability theory, we will discuss its similarities with the classical theory.

\subsection{Embedding the quantum into the classical\label{chmath:sec:quantumIntoClassical}}
This section reviews how to constructs a classical probability space from a set of mutually commuting quantum observables.  The purpose for this review is two fold.  First, the quantum filtering problem relies upon this kind of mapping.  The continuous measurements we will be making is described by a set of mutually commuting operators which is increasing in time.  The eigenvalues that we receive will be viewed as a (read mapped to) classical random variables on a classical probability space.  The second reason for this review is to emphasizes its limitations.  In Chap. \ref{chap:QubitState} the quantum filter is used to estimate an unknown initial state of a qubit.  A natural tool in classical systems is the \emph{smoother} which is an estimate for an unobserved system at some past time, given measurements up to some current time.  However, na\"{i}vely applying this classical technique violates a necessary condition that allows for the classical mapping.

In classical probability theory we found that random variables could be viewed as functions mapping elements of the sample space to real numbers.  At its most practical level, quantum theory is used to predict the outcomes of experiments where the measured observables are represented as Hermitian operators acting upon some underlying Hilbert space.  The first step in bringing classical probability theory to the quantum is to formulate an analogy between classical random variables and Hermitian operators.
\begin{center}
\begin{tabular}{ccc}
  classical & $\leftrightarrow$ & quantum \\
  $x(\omega)$ & $\leftrightarrow$ & $X$ \\
\end{tabular}
\end{center}
This is a natural analogy, as the basic operation in classical probability is to calculate the expectation values of random variables.

The next association is that in the classical theory we have the probability measure $\mathbbm{P}$ to calculate expectation values while in the quantum we have the system state $\rho$.  This analogy is best illustrated in a discrete example where the classical sample space is the set of a finite number of $d$ realizations, $\Omega = \set{\omega_1, \dots \omega_d}$.  The $\sigma$-algebra $\mathcal{F}$ for this space is then the power set of $\Omega$ and the probability measure $\mathbbm{P}$ is completely described by the probabilities of the singleton events $p_i = \mathbbm{P}(\{\omega_i\})$.  The classical expectation value of a simple random variable $x(\omega)$ in this space is then
\begin{equation}\label{chMath:eq:classicalExpect}
    \mathbbm{E}(x) = \sum_{i = 1}^d x(\omega_i)\, \mathbbm{P}(\{\omega_i\}).
\end{equation}
In the quantum case a system described by a Hilbert space $\Hilbert$ of dimension $d$ is equipped with a positive trace one density matrix $\rho$.  Expectation values of operators $X$ acting on $\Hilbert$ are of course calculated as
\begin{equation}\label{chMath:eq:quantumExpect}
    \mathbbm{E}(X) = \Tr( \rho X)
\end{equation}
which will sometimes also be notated as $\expect{X}$.  Thus we have the correspondence
\begin{center}
\begin{tabular}{ccc}
  classical & $\leftrightarrow$ & quantum \\
  $\mathbbm{E}(x)$ & $\leftrightarrow$ & $\Tr( \rho X)$. \\
\end{tabular}
\end{center}
Rather than this loose analogy, a formal equivalence is possible where certain aspects of quantum theory can be embedded into a classical probability space.

While the classical probability space has a fixed, albeit sometimes abstract set of realizations $\Omega$, identifying such a set in quantum mechanics is problematic.  In the spirit of deterministic classical physics, the sample space $\Omega$ most often represents locally realistic fates of the system.  The probability of observing certain events is given by the probability measure $\mathbbm{P}$ which act on subsets of $\Omega$.  The utility of probability theory is that we have an event $A = \set{\omega \in \Omega \ |\ a \le x(\omega) \le b }$ which has a concrete meaning for multiple random variables, not just $x$ but $f(x)$.  In addition there could certainly be another random variable $y$ such that when it take on the values $c \le y \le d$ whenever $x$ is observed in the interval $[a, b]$, and so both correspond to the same underlying event $A$.   It is clear from Bell's theorem that this locally realistic interpretation of $\Omega$ is not consistent with quantum mechanics.

A less ambitious task is to find a classical probability representation that is capable of describing the joint statistics of \emph{compatible observations}.  Compatibility of two observables $X$ and $Y$ means that $[X, \, Y] = 0$ and more importantly they share a set of eigenvectors $\set{\ket{e_i}}$.  For each eigenfunction we have the projector $\proj_i = \ketbra{e_i}{e_i}$ and the operators $X$ and $Y$ have the spectral decompositions
\begin{equation}
    X = \sum_i x_i \, \ketbra{e_i}{e_i} \quad\text{and} \quad Y = \sum_i y_i \, \ketbra{e_i}{e_i}.
\end{equation}
Note that the eigenvalues $x_i$ and $y_i$ need not be distinct as they could have degenerate subspaces.

In a $d$-dimensional system there are at most $d$ distinct, mutually orthogonal projectors associated with a set of commutating operators $\mathcal{C} = \set{X,\, Y,\, Z,\dots}$.  If we associated projectors $\set{\proj_\lambda = \ketbra{\lambda}{\lambda}}$ so that
\begin{equation}
  X = \sum_{\lambda = 1}^d x_{\lambda} \, \proj_\lambda
\end{equation}
and
\begin{equation}
  \mathbbm{E}(X) = \sum_{\lambda = 1}^d x_{\lambda} \, \Tr(\, \rho\, \proj_\lambda ).
\end{equation}
The mapping between discrete quantum mechanics and classical probability is to associate the set of labels for projectors with the sample space in a classical probability space.  Then we have $\Omega = \set{\lambda_i\ : \ i = 1 \dots d}$ and $\mathcal{F}$ is simply the power set of $\Omega$.  From this assignment, the probability measure is simply the quantum expectation value of the associated projectors.  For example, probability for the event $\set{\lambda_1,\, \lambda_2}$ is
\begin{equation}
  \mathbbm{P}(\set{\lambda_1,\, \lambda_2}) = \Tr(\rho\,  \proj_{\set{\lambda_1,\, \lambda_2}} ) = \Tr(\rho\, \ketbra{\lambda_1}{\lambda_1} ) + \Tr(\rho\, \ketbra{\lambda_2}{\lambda_2} ).
\end{equation}
In classical probability the simplest of simple random variables are the indicator functions $\indicate{F}$ (for every set $F \in \mathcal{F}$) which correspond to the projectors in the  $ X \mapsto x(\omega)$  mapping.   This procedure is formalized in Theorem 2.4 of \citep{bouten_introduction_2007} and is summarized in Table \ref{chMath:table:quantumClassicalMapping}.
\begin{table}[thb]
    \begin{center}
    \begin{tabular}{c|c}
      Classical & Quantum \\ \hline
      $\Omega $  &   $\set{\lambda_1,\, \lambda_2,\, \dots}$  \\
      $\mathcal{F} $ & $\big\{ \set{\lambda_1},\, \set{\lambda_1,\,\lambda_2},\, \dots \big\}$ \\
      $\mathbbm{P}( \set{\lambda_i} ) $ & $\Tr(\rho\, \proj_{\lambda_i}\, )$ \\
      $x(\omega)$ &  $X$ \\
    \end{tabular}
    \end{center}
    \caption{The spectral mapping between a set of commuting observables and a classical probability space.\label{chMath:table:quantumClassicalMapping}}
\end{table}

The above discussion extends also to the case of infinite dimensional Hilbert spaces and operators with continuous spectra \cite{bouten_introduction_2007}.  For any ``normal'' operator\footnote{A normal operators is one that commutes with its adjoint and so has a spectral decomposition. It can be written in terms of commuting Hermitian and anti-Hermitian parts.} $A$, which may take on all values in $\R$, there exist a spectral decomposition for $A$ such that
\begin{equation}
  A = \int_\R a \proj(da)
\end{equation}
where $\proj(da)$ is the \emph{spectral measure}, also called a projection valued measure, associated with $A$ taking on values in interval $da$.  Often explicitly constructing $\proj(da)$ is a little tricky, especially if one must first identify any vectors $\psi$ in Hilbert space for which $\phi \define A\, \psi$ is not well behaved.  ( Hermitian operators whose eigenvalues span the entire real line, so called unbounded operators, exhibit these kinds of problems.    A trick for dealing with this case is to compute the spectral measure for the bounded operator $T = (A + i \ident)^{-1}$. This works because any function $f(A)$ commutes with $A$, therefore they share the same projectors.  When $T$ takes on the complex value $\lambda$ there is then the corresponding value for $A$, $a = \lambda^{-1} - i$.)  Once armed with a spectral measure for $A$ we can then find an equivalent classical probability model, whose sample space are labels for the possible values of $A$.

Regardless of whether or not the associate operators are unbounded, we emphasize this spectral mapping is only applicable to a subspace of operators which all commute with the underlying projectors.  While this may seem to indicate that the mapping is severely limited, in practice it is extremely useful for describing ancilla assisted measurements.  If one is interested in computing conditional expectation values for operators that commute with the projectors defining the classical space, then the quantum-to-classical mapping is still applicable.

\subsection{Quantum probability \label{chMath:sec:QuProb}}
The spectral mapping to a classical probability space lacked a representation that is independent of the specific choice of projectors.  Furthermore, Bell's theorem shows that there are no locally realistic sample spaces consistent with quantum mechanics.  The first step in discussing quantum mechanics in a probability theoretic framework is to omit the sample space $\Omega$ \cite{maassen_quantum_2010}.  At the level of making practical calculations, the sample space provided an underlying structure for associating random variables with the probability measure.  By observing a random outcome, $x(\omega) = a$ we were able to see what event this corresponds to and then calculate the probability for that event given the measure $\mathbbm{P}$.  In other words, we identify the set of possible realizations that are compatible with this observation then evaluate the probability for this event.

In quantum theory the underlying Hilbert space of the system, $\Hilbert$, provides this necessary structure.  By making the association between Hermitian operators and the results of experiments we already have the necessary mapping between random variables and probabilities.  In the above spectral mapping between quantum to classical, we associated events with collections of possible eigenvalues and so even in the infinite dimensional case, the probability of observing an event is given by the expectation value of the corresponding union of projectors.  In the fully quantum case, we need to consider all possible projections, not just those projections that commute.  The mathematical object that is guaranteed to contain all possible projections is a $\ast-$algebra (read ``star''-algebra) of operators\footnote{The $\ast$ in the name comes from the mathematical convention of using $^*$ to represent an operator adjoint rather than $^\dag$.}.  Therefore the correspondence between the $\sigma-$algebra $\mathcal{F}$ in classical probability space is a $\ast-$algebra $\mathcal{A}$ of operators on $\Hilbert$.

A $*-$algebra of operators acting on a Hilbert space, $\Hilbert$ is defined as the set of operators $\mathcal{A}$ so that
\begin{enumerate}
  \item $\mathcal{A}$ contains all complex linear combinations of its elements.  For all $A,\,B \in \mathcal{A}$  we also have  $C = c_1 A + c_2 B \in \mathcal{A}$ for any complex coefficients $c_1$ and $c_2$.
  \item $\mathcal{A}$ contains all adjoints of its elements.  $A \in \mathcal{A}$ implies that  $A^\dag \in \mathcal{A}$.
  \item $\mathcal{A}$ contains all products of its elements.  $A,B \in \mathcal{A}$ implies $AB \in \mathcal{A}$.
  \item $\mathcal{A}$ contains the identity $\ident$.
\end{enumerate}
In the finite dimensional case where $\Hilbert = \mathbbm{C}^n$ the largest $\ast-$algebra acting on $\Hilbert$ is simply $\mathcal{M}^n$, the space of all complex $n\times n$ matrices.  However the reason for introducing this algebraic structure is not just for a love of mathematical formalism.  In the same way that a set of classical random variables generate a $\sigma-$algebra, $\sigma\{x, y, z, \dots\}$ (see Sec. \ref{chMath:sec:classicalConditioning}) a set of operators generate a $\ast-$algebra.  For example, the above spectral mapping means that there is a $\ast-$algebera of operators generated by the set commuting of projectors $\set{\proj_\lambda}$.  The fact that a $\ast-$algebra generated from a set of commuting operators still commute means that a \emph{commutative $\ast-$algebra} is the set of operators that define the events in a classical probability space.

When $\Hilbert$ is infinite dimensional, defining a suitable $*-$algebra (or sub-$*-$algebra) becomes much more tricky.  This is particularly true when trying to show that any limiting sequence of operators in the $*-$algebra is still in the algebra.  As one might imagine, taking limits of unbounded operators becomes problematic as a sequence of operators might converge when acting on one set of vectors, but diverge when acting on another.  The details of how one solves these issues is beyond our scope.   Suffice it to say, the solution is to first consider only \emph{bounded operators} (while keeping the $T = (A + i \ident)^{-1}$ trick in mind) and then include the limits of all the sequences of operators in the algebra that converge on a class of well defined states.  The technical name for such an algebra is a \emph{von Neumann algebra} \cite{bouten_introduction_2007}.  Generally though, one hardly ever needs to apply this kind of construction directly.

One additional concept that is useful, particularly for discussing a quantum conditional expectation, is that of a \emph{commutant}.  Suppose you are given a set of operators $\mathcal{S}$ and you want to know what are the set of operators that commute with $\mathcal{S}$.  That set is call the commutant and is notated as $\mathcal{S}'$.  To see why this idea is import consider the following.  Assuming that you are able to measure the operators $\set{Y_1, Y_2, \dots, Y_n}$ and that all of these operators commute with each other.  We just showed how to form a commutative von Neumann algebra from this set, but one might wonder if that's the whole story in a quantum to classical mapping.  The answer turns out to be no.  The wiggle room is from the fact that the number of distinct eigenvectors made from the operators $\set{Y_1, Y_2, \dots, Y_n}$ may not be equal to the dimension of the underlying Hilbert space.  In that case, you can find operators $A$ and $B$ where $[A, B] \ne 0$ and still have $[A, Y_i] = [B, Y_i] = 0$.

One example is in a two qubit system.  Suppose you only measure $\sigma_z$ on one qubit but leave the other one alone.  Then the projectors  $\ketbra{+1}{+1}\otimes \ident$ and $\ketbra{-1}{-1}\otimes \ident$ form the singleton events in the the classical probability model.  But clearly any operator on the second qubit commutes with these projectors and so there is more in this system than is wholly representable in a classical probability system.    But because the second system does commute with these projectors it is possible to form a quantum conditional expectation of the system upon the first.   The commutant gives you the set of all possible operators that can be mapped onto a classical probability space though a quantum conditional expectation.  We will briefly discuss this mapping next.

\subsection{The quantum conditional expectation}

In a classical probability space, if we are given a random variable $y$, or more generally a set of random variables $\set{y_s\ :\ 0 \le s \le t  }$, the distinct outcomes of those  variables form a set of events.  From these events we are able to take unions and complements to make a $\sigma$-algebra, $\mathcal{Y} = \sigma\set{y_s\ :\ 0 \le s \le t  }$, representing the questions we can answer about the model given the observation of the random variables $\set{y_s\ :\ 0 \le s \le t  }$.   Then for every event $A_i \in \mathcal{Y}$, (assuming $\mathbbm{P}(A_i) \ne 0$) we are able to compute the conditional expectation value of \emph{any} random variable $x$, via Eq. (\ref{chMath:eq:condtionalExpectValue})
\begin{equation}
  \mathbbm{E}(x| A_i) = \frac{\mathbbm{E}(x \indicate{A_i})}{\mathbbm{E}(\indicate{A_i})}.
\end{equation}
In mapping quantum mechanics on to classical probability theory everything still applies, as long as the operators $\set{Y_s\ :\ 0 \le s \le t  }$ all mutually commute.  If we have a projection $\proj_{\lambda}$ generated from these operators, we can define a \emph{quantum conditional expectation value},
\begin{equation}
  \mathbbm{E}(X| \lambda ) \define \frac{\expect{X \proj_{\lambda}}}{\expect{\proj_{\lambda}}}  = \frac{\Tr (\rho\, X \proj_{\lambda} )}{\Tr (\rho\,\proj_{\lambda} )}.
\end{equation}
This equation shows how crucial that $X$ and $\proj_\lambda$ commute in order for this equation to make sense as a classical analogy.  Not only do we need $[X, \proj_\lambda] = 0$ in order for $X$ to be block diagonalized via the labels $\lambda$, but also in order for this expression to be interpretable as a conditional expectation value, we needed to have $\expect{X \proj_{\lambda}} = \expect{\proj_{\lambda} X}$ for any $\lambda$.   Classically, we took a conditional expectation value to a conditional expectation by multiplying by the ``projector'' onto the event $A_i$, see Eq. (\ref{chMath:eq:condtionalExpectation}).  The same is true in the quantum case, as can be seen in a finite dimensional example.

From a set mutually commuting observables $\set{Y_1, Y_2, \dots Y_n}$, we can form a commutative $*$-algebra $\mathscr{Y}$ that is spanned by the orthogonal projectors $\set{\proj_\lambda}$.  These projectors form a resolution of the identity so that $\sum_\lambda \proj_\lambda = \ident$.  For any operator $X$ in the commutant of $\mathcal{Y}$, the \emph{quantum conditional expectation} is defined as \cite{bouten_introduction_2007}
\begin{equation}\label{chMath:eq:quantumConditional}
  \mathbbm{E}(X|\mathscr{Y}) \define \sum_{\lambda} \frac{\expect{X \proj_{\lambda}}}{\expect{\proj_{\lambda}}}\ \proj_{\lambda}.
\end{equation}
The quantum conditional expectation has a number of properties in common with the classical conditional expectation.  Specifically,
\begin{align}\label{chMath:eq:quantumConditionalProperties}
    \mathbbm{E}\left( \mathbbm{E}\left(X  \middle\vert \mathscr{Y} \right) \right) &= \mathbbm{E}\left( X \right) & \\
    \mathbbm{E}\left( Y_1\, X\, Y_2  \middle\vert  \mathscr{Y} \right)\ &= Y_1\, \mathbbm{E}\left( X \middle\vert \mathscr{Y} \right)\, Y_2 \qquad \forall \ Y_1, Y_2 \in \mathscr{Y}\\
    \mathbbm{E}\left( X  \middle\vert  \mathscr{Y} \right)\ & = \mathbbm{E}(X)\, \ident \hspace{2.2 cm} \forall \ X \text{ independent of }  \mathscr{Y}.
\end{align}
The quantum conditional expectation has some operator specific properties,
\begin{align}\label{chMath:eq:quantumConditionalProperties2}
    \mathbbm{E}\left(\ident  \middle\vert \mathscr{Y} \right) &= \ident  \\
    \mathbbm{E}\left( X^\dag  \middle\vert  \mathscr{Y} \right)\ &= \mathbbm{E}\left( X \middle\vert \mathscr{Y} \right)^\dag\\
    \mathbbm{E}\left( X^\dag X  \middle\vert  \mathscr{Y} \right)\ & \ge 0.
\end{align}
Finally, an extremely important property that also carries over from the classical conditional expectation is that for all $Y \in \mathscr{Y}$ and $X \in \mathscr{Y}'$
\begin{align}\label{chMath:eq:quantumConditionalDefinition}
    \mathbbm{E}\left( \mathbbm{E}\left(X  \middle\vert \mathscr{Y} \right) Y \right) &= \mathbbm{E}\left(X Y \right).
\end{align}
This property is as important here as it is in the classical case, because it is often used to identify\footnote{\emph{i.e.} guess and check.} what the conditional expectation should be when it is intractable to find an explicit representation for $\set{\proj_\lambda}$.  This is particularly true in the infinite dimensional case and then Eq. (\ref{chMath:eq:quantumConditionalDefinition}) is taken as the defining characteristic for the conditional expectation.  In other words, if you can find an operator $\hat{X} \in \mathscr{Y}$ that satisfies this equation, then you have the conditional expectation for $X$ given $\mathscr{Y}$ \cite{bouten_introduction_2007}.

\subsection{The conditional expectation and generalized measurements}
Before considering a specific example of a continuous-time quantum conditional expectation, we briefly pause to discuss the connection between the quantum conditional expectation and generalized measurements, as traditionally formulated in quantum mechanics.   We argue here that these two ideas are essentially equivalent and will specifically show that any measurement given in terms of a countable set of distinct Kraus operators $\set{M_m}$ is equally well represented in terms of a quantum system mapped to a classical probability model.  In particular, the posterior state $\rho|_{m}$ is a Schr\"odinger picture version of a quantum conditional expectation value.

A general quantum measurement on a Hilbert space $\Hilbert$ is specified by a set of Kraus measurement operators $\set{M_m}$ where the indices $m$ label the outcomes of the measurement \cite{nielsen_quantum_2000}.  The measurement operators are required to satisfy a completeness relation
\begin{equation}
  \sum_m M^\dag_m M_m  = \ident.
\end{equation}
The completeness relation means that $\set{ M^\dag_m M_m }$ is a valid POVM, and in particular that under the state $\rho$ the expectation values, $\set{ \Tr(\rho\, M^\dag_m M_m ) }$ define a probability measure for a sample space $\Omega = \set{1,\dots, m,\dots}$.   Upon receiving the outcome $m$, a mixed state $\rho$ updates to the posterior state $\rho|_m$ via the map
\begin{equation}
  \rho|_m \define \frac{M_m \rho M_m^\dag}{ \Tr(\rho\, M^\dag_m M_m ) }.
\end{equation}

Our claim is that there exists a Heisenberg picture formulation where the use of this posterior state is replaced by a conditional expectation.  Proving this is not difficult, by using the fact that any generalized measurement is equivalent to a projective measurement performed on an ancillary system after an entangling unitary operation \cite{nielsen_quantum_2000}.  The equivalent Heisenberg/quantum probability picture is then to evolve all of the operators with the entangling unitary and then calculate a conditional expectation value for a post interaction projector.

The specific relation is that every measurement outcome can be modeled as a state in an ancillary system with Hilbert space $\Hilbert_A$ where there are basis vectors $\ket{m}$ that correspond to the outcomes of the measurement.  Clearly for this to make sense, the dimension of $\Hilbert_A$ must be as least as big as the number of measurement outcomes.  The entangling unitary operator $U$ then maps a fiducial pure state $\ket{0}$ to the basis vectors $\ket{m}$ and when doing so applies the operator $M_m$ to system.  Thus, there always exists a unitary $U$ such that for every system state vector $\psi$
\begin{equation}
  U\, \ket{\psi}\ket{0} = \sum_{m'} M_{m'} \ket{\psi}\ket{m'}.
\end{equation}
Operating the projector $\ident \otimes \ketbra{m}{m}$ on the post unitary state, results in
\begin{equation}
 ( \ident \otimes \ketbra{m}{m} )\, U \ket{\psi}\ket{0} = \sum_{m'} M_{m'} \ket{\psi}\ket{m}\, \braket{m}{m'} = M_m\ket{\psi}\ket{m}.
\end{equation}
In other words by applying the projector $\ketbra{m}{m}$ to the post interaction state, we have applied the measurement operator $M_m$ to the system and projected the ancilla into the measurement eigenstate $m$.  For a general system state $\rho$, the probability for obtaining the outcome $m$ is then given by
\begin{multline}
\mathbbm{P}(m) = \Tr\Big( \left(U\, \rho\otimes \ketbra{0}{0}\, U^\dag\right)\ \ident \otimes \ketbra{m}{m}\, \Big)\\
 = \Tr\left( \rho\otimes \ketbra{0}{0}\ \left( U^\dag\, \ident \otimes\ketbra{m}{m}\, U \right)\,\right).
\end{multline}
Applying a unitary transformation to an operator does not change its spectrum and so a unitary evolved projector is still a projector and in this case, one that is no longer acting solely on the ancilla.

The quantum probability description between the generalized measurement with operators $\set{M_m}$ is to use a Heisenberg picture version of the ``purification'' of that measurement.  Specifically the commuting set of operators that we are conditioning on is simply the unitarily evolved projectors $\set{ \proj_m \define U^\dag\,\ident \otimes \ketbra{m}{m} \,U }$.

It is not difficult to show that the partial trace of the posterior state $\rho|_m$ with the system operator $X$ is equivalent to the conditional expectation value of the operator $U^\dag (X \otimes \ident) U$ conditioned on the projector $\proj_m$, under the joint state $\rho \otimes \ketbra{0}{0}$.  In other words we have the equality
\begin{equation}
  \operatorname{tr}_{sys}(\rho|_m X) = \frac{\Tr\left(\rho\otimes\ketbra{0}{0}\ U^\dag( X \otimes \ident) U\  \proj_m \right)}{\Tr\left(\rho\otimes\ketbra{0}{0}\, \proj_m \right) } = \mathbbm{E}\left( U^\dag( X \otimes \ident) U\middle|\, \proj_m \right)      .
\end{equation}

\section{Quantum Filtering Theory \label{chMath:sec:quantumFiltering} }
Quantum filtering theory has a particularly grandiose title but in actuality it is not much more than what we have already developed here.  \citeauthor{bouten_introduction_2007}, wrote an award winning introduction to the problem quantum filtering and quantum stochastic calculus \cite{bouten_introduction_2007}.  This section does little more than quote their final results.  The quantum filter is in essence nothing more than the conditional expectation for a system observable $X$, based upon a light observable, \emph{e.g.} $Q^i_t$, after both have interacted though a unitary $U_t$.  The two light measurements that are typically considered are that of measuring an output quadrature, \emph{e.g.} $U^\dag_t Q^i_t U_t$, or a direct photon number measurement, \emph{e.g.} $U^\dag_t\, \Lambda^{ii}_{t}\, U_t$.  Here we have focused on classical and quantum diffusion, and so we will assume that we are measuring the quadrature $Q^i_t$.  In addition, to simplify the notation, we assume that we are considering a single field mode and will drop the label $i$.  More general expressions are not difficult to derive once the formalism is in place; for examples see \cite{bouten_quantum_2005}.

The quantum filter for time independent system observable $X$ is written as a time indexed map $\pi_t(X)$ and is the conditional expectation of the unitarily evolved operator $U_t^\dag X U_t$, conditioned on the (continuous) set of measurements of an output process $\set{Y_t}_{t \ge 0}$.  In the diffusive case
\begin{equation}
  Y_t \define U_t^\dag Q_t U_t = U_t^\dag (A_t + A_t^\dag ) U_t.
\end{equation}
When $U_t$ is given by a general single mode, it is the solution to the QSDE given in Appendix \ref{app:QSDEs}
 Eq. (\ref{appQSDEs:eq:generaldUt}).  In the 1D case with no scattering interactions we are able to calculate that,
\begin{equation}\label{chMath:eq:Yt}
    Y_t = Q_t + \int_0^t U_s^\dag ( L + L^\dag) U_s\, ds.
\end{equation}
The general expressions for the unitary evolution of any system operator $X$ and the fundamental field processes $A_t^j$, $A_t^{i\dag}$ and $\Lambda_t^{ij}$ are given in Appendix \ref{app:QSDEs}, Sec. \ref{appQSDEs:sec:Jt}.   

Sec. \ref{chQuLight:sec:QuWienerCharacteristic} showed that in vacuum expectation, $Q_t$ has statistics of a Wiener process.  Because of this one may be tempted to interpret Eq. (\ref{chMath:eq:Yt}) as the time integral of a system operator plus quantum white noise.  We urge the reader to avoid this temptation because, as Sec. \ref{appQSDEs:sec:Jt} shows, $U_t^\dag(L + L^\dag)U_t$ is generally a very complicated expression involving integrals with respect to $d \Lambda_t$, $d A_t$, and $d A_t^\dag$.  $Y_t$ is a fully coherent operator acting on the joint Hilbert space $\Hilbert \otimes \Fock(\hilbert_{[0, t]})$ and does not generally commute with $Q_t$.

It is, however, not difficult to show that $Y_t$ commutes with itself at different times, \emph{i.e.} $[Y_t, Y_s]  =0$ for any times $t$ and $s$.  Therefore a continuous observation of $Y$ between the times $0 \le s \le t$ makes a set of commuting observables $\set{Y_s\ :\ 0 \le s \le t}$.  This set of observations can then be used to form a commutative von Neumann algebra $\mathscr{Y}_t$.    The quantum filter $\pi_t(X)$ is then given by the conditional expectation
\begin{equation}
  \pi_t(X) \define \mathbbm{E}(U_t^\dag X U_t | \mathscr{Y}_t).
\end{equation}

Finding an expression for $\pi_t(X)$ requires implementing the conditional expectation in the form given in Eq. (\ref{chMath:eq:quantumConditionalDefinition}).  Note that in general, the conditional expectation depends upon the properties of the joint system field state and so you will arrive at different filtering equations if the system is in vacuum, \cite{bouten_introduction_2007}, a coherent state \cite{gough_quantum_2010}, or a state with nonclassical photon statistics \cite{gough_quantum_2012}.  The quadrature measurement of a single mode in vacuum expectation is arguably the simplest of all cases, and is what we will use exclusively here.  The bottom line result is that the quantum filter for any system operator $X$ is given by the recursive QSDE
\begin{equation}\label{chMath:eq:QuFilter}
    \begin{split}
        d\pi_t(X) & = \pi_t(\, \mathcal{L}_{00}(X) \,)\, dt \\
              &\ + \big( \pi_t(\, L^\dag X + X L \,) - \pi_t(L^\dag + L)\, \pi_t(X)\, \big ) \big( dY_t - \pi_t( L + L^\dag )\, dt \big).
    \end{split}
\end{equation}
with the initial condition $\pi_0(X) = \mathbbm{E}(X)$.  This is very analogous to the classical Kushner–-Stratonovich equation of nonlinear filtering \cite{bouten_introduction_2007}.  The operator map
\begin{equation}\label{chMath:eq:Liouvillian}
\mathcal{L}_{0 0}(X) = +i [H, X] + L^{ \dag} X L - \half L^{ \dag} L\, X - \half X\, L^{\dag} L
\end{equation}
is the $00$ Evens-Hudson map, (see Sec. \ref{appQSDEs:sec:Jt}) and is essentially the Heisenberg picture version of the Lindblad master equation.
A serious draw back to the quantum filter is that because it is recursive, it will very rarely close.   In order to propagate Eq. (\ref{chMath:eq:QuFilter}) for the operator $X$, we need to also calculate in parallel the filter for the operators $A  = L^\dag + L$, $B = L^\dag X + X L$, and $C = \mathcal{L}_{0 0}(X)$.  It's also highly likely that the space of possible system operators is not generated by simply these four operators.  By calculating $\pi_t( A)$, we will also need to know the filter for $\pi_t( \mathcal{L}_{0 0}(A) )$, which itself will likely generate more complicated operators.  Fortunately a saving grace is that we can invert this equation to find an effective ``noisy'' system operator $\rho_t$.  The equation of motion for $\rho_t$ is the conditional master equation which will discuss in Sec. \ref{chMath:sec:CME}.

Before doing so, we would like to highlight one important issue that makes a strong distinction between quantum and classical filtering.  In the classical case the filter is one of a couple of operations that one is interested in computing conditioned on an observation process $\set{y_t}$.  Another process that one is interested in is a \emph{smoother}, which is defined classically as
\begin{equation}
  \pi_{s,t}(x) \define\mathbbm{E}(x_s |\set{y_{t'}: 0 \le t' \le t} )\quad \text{for } s \le t.
\end{equation}
Classically this is a perfectly well defined thing to do, as long as $x_s$ is measurable with respect to the $\sigma$-algebra defining the global probability space $(\Omega, \mathcal{F}, \mathbbm{P})$.  One would generally then be tempted to define a quantum mechanical smoother,
\begin{equation}\label{chMath:eq:quantumSmoother}
  \pi_{s,t}(X) \define \mathbbm{E}(U_s^\dag X U_s |\mathscr{Y}_t)\quad \text{for } s < t.
\end{equation}
Unfortunately this object is not well defined for any system operator, $X$, because $U_s^\dag X U_s$ is not in the commutant of $\mathscr{Y}_t$.  To see why, consider that $Y_t = Q_t + \int_0^t U_r^\dag ( L + L^\dag) U_r\, dr$, which certainly has support upon the system Hilbert space via the time integral of $U_r^\dag ( L + L^\dag) U_r$.  There is no guarentee that $[U_s^\dag X U_s, U_t^\dag (L + L^\dag) U_t]$ for any $X$ and times $t,s$.  The reason that $U_t^\dag X U_t$ is in the commutant of $\mathscr{Y}_t$ is because we can show that $U_s^\dag Q_s U_s = U^\dag_t Q_s U_t$,  for  $s \ge t$.  This property then shows us that 
\begin{equation}
[U_t^\dag X U_t,\, Y_s] = [U_t^\dag X U_t,\, U^\dag_t Q_s U_t] = U_t^\dag [X ,\, Q_s] U_t = 0
\end{equation}
for $s \le t$.  This means that the post interaction system operator at time $t$ is able to be conditioned on \emph{past} measurements.  However the same ``advancement'' trick is not possible for the system observable, and therefore there is no guarantee that Eq. (\ref{chMath:eq:quantumSmoother}) is well defined.  If you simply threw caution to the wind and went through a smoothing calculation, even though $U_s^\dag X U_s$ is not in the commutant of $\mathscr{Y}_t$, then it quite possible that by conditioning you could take positive operators to negative ones or even Hermitian observables into non-Hermitian operators \cite{bouten_introduction_2007}.

\citeauthor{tsang_time-symmetric_2009} proposed a time-symmetric quantum smoother where one calculates a smoothing operation for a \emph{classical} signal imprinted on a quantum system \cite{tsang_time-symmetric_2009}.  In this case, the smoother is calculating a conditional estimate for the classical signal and therefore commutes with both the system and field operators.   In Chap. \ref{chap:QubitState} we wish to form an estimate for the system state at the initial time $t = 0$ given measurements up to time $t$.  One might be tempted to try and formulate a quantum smoothing equation $\mathbbm{E}(X|\mathscr{Y}_t)$, but as we just showed such an object is not in general well defined.  Therefore we have to resort to different methods.

\section{The Conditional Master Equation \label{chMath:sec:CME}}
Sec. \ref{chMath:sec:quantumFiltering} just showed how one could form a conditional estimate for system obervables based upon a measurement of an output light quadrature via the Heisenberg picture formalism of quantum probability.  A serious drawback is that the filtering equations are recursive and hardly ever close.  The saving grace of this is to convert to a randomized Schr\"{o}dinger picture and work with a \emph{Conditional Master Equation} (CME).

We know from Sec. \ref{chMath:sec:quantumFiltering}, that every commutative space of operators is mappable to a classical probability space.  We also know that from the definition of the conditional expectation, the filter $\pi_t(X) = \mathbbm{E}(U_t^\dag X U_t| \mathscr{Y}_t) $ is an operator in $\mathscr{Y}_t$.   And so if we generate a classical probability space $(\Omega, \mathcal{F}, \mathbbm{P})$ for $\mathscr{Y}_t$ then the filter $\pi_t(X)$ should be representable in that space.  Furthermore in a given experiment, the eigenvalues we receive from measuring $Y$ form a realization of a \emph{classical} stochastic process $y_t$ defined on that probability space.

What this means in practice is that we will now focus our attention to solely system variables a treat the measurement record $y_t$ as a classical stochastic process.  It is in this sense that we call the conditional master equation a \emph{semiclassical} equation.  Specifically, it treats the output measurements $\set{Y_t}_{t \ge 0}$ as a classical random variable while the system undergoes a noisy quantum evolution.   In our opinion, it cannot be over emphasized that this process has its origin as a quantum object and so not every operator will commute with $Y_t$ -- particularly past system observables.

With that warning to tread lightly, finding a semiclassical equation for a noisy system state $\rho_t$ is remarkably easy. Such an equation begins by enforcing that for every system operator $X$, we must have\footnote{Mathematically, this equivalence may seem strange as the left hand side is a scalar valued random variable while the right hand side is an operator in $\mathscr{Y}_t$.  The equivalence is made though the classical outcome $\omega$, that labels the set of eigenvalues we receive from the measurement.}
\begin{equation}\label{chMath:eq:conditionalDensity}
    \Tr( \rho_t\, X ) \cong \pi_t(X).
\end{equation}
To find an SDE for $\rho_t$, we simply notice two things.  In every term of Eq. (\ref{chMath:eq:QuFilter}), there is a coefficient $\pi_t( Y )$ of some operator $Y$ which is in turn relatable to $\Tr(\rho_t Y)$.  The second is that the only quantum stochastic differential in Eq. (\ref{chMath:eq:QuFilter}) is $dY_t$, which from Eq. (\ref{chMath:eq:Yt}), satisfies the quantum It\=o rule,
\begin{equation}
  d Y_t\, dY_t = dt.
\end{equation}
Therefore in the semiclassical mapping $dy_t$ also has the It\=o rule
\begin{equation}
  dy_t dy_t = dt.
\end{equation}
With these two observations we have,
\begin{multline}
        \Tr(d\rho_t\, X)  = \Tr(\rho_t\,\mathcal{L}_{00}(X)\,)\, dt \\
              + \big( \Tr \left(\rho_t\,( L^\dag X + X L ) \right) - \Tr \left( \rho_t\,( L^\dag + L ) \right)\, \Tr \left(\rho_t X \right)\, \big ) \big( dy_t - \Tr(\rho_t (L^\dag + L) )\, dt \big).
\end{multline}
We can use the cyclic property of the trace to decompose $\mathcal{L}_{00}(X)$ into an adjoint map acting on $\rho_t$,
\begin{equation}
    \Tr(\rho_t\, \mathcal{L}_{00}(X) ) = \Tr\left( \left( -i [H_t, \rho_t] +  L \rho_t L^\dag\,  -\half L^\dag L \rho_t - \half \rho_t L^\dag L \right)\, X \right).
\end{equation}
By making the same kind of transformation of the remaining terms and noting that it is true for any system operator $X$, we arrive at the conditional master equation (CME)
\begin{equation}\label{chMath:eq:CME}
    d \rho_t = -i [H_t, \rho_t]\, dt + \mathcal{D}[L](\rho_t)\, dt + \mathcal{H}[L](\rho_t)\, dv_t,
\end{equation}
with the initial condition is $\rho_0 = \rho(0)$ and we made the following definitions.
$\mathcal{D}[L](\rho_t)$ is the Lindblad operator map commonly found in open quantum systems and is defined as
\begin{equation}\label{chMath:eq:D}
    \mathcal{D}[L](\rho_t) \define L\, \rho_t\, L^\dag - \half L^\dag L\, \rho_t - \half \rho_t\,L^\dag L.
\end{equation}
$\mathcal{H}[L](\rho_t)$ is the state update map defined as
\begin{equation}\label{chMath:eq:H}
    \mathcal{H}[L](\rho_t) \define L\, \rho_t + \rho_t\, L^\dag -  \Tr( (L + L^\dag)\, \rho_t)\, \rho_t.
\end{equation}
This map shows how the state updates, weighted by the strength of the stochastic process,
\begin{equation}\label{chMath:eq:dv}
    dv_t = dy_t - \Tr( (L + L^\dag)\, \rho_t)\, dt.
\end{equation}

The random process $v_t$, called \emph{the innovation process}, plays an important role as it is the only random contribution to the CME.  In the next section we will review the proof that when everything about the measurement $y_t$ is properly specified, then $dv_t$ is a realization of a Wiener process.

\subsection{The innovation process \label{chMath:sec:Innovation}}

Here we will show that in the innovation process $v_t$ transforms $y_t$ into a Wiener process by subtracting off the conditional expected mean.  In classical probability, L\'{e}vy's theorem is an important result because it gives necessary and sufficient conditions for showing that a given process is in fact a Wiener process.  Roughly stated, if a stochastic process $m_t$ is a ``local martingale'' and obeys the It\={o} rule that $(dm_t)^2 = dt$ then it must be a Wiener process \cite{bouten_introduction_2007}.   Martingales are an important kind of stochastic process that play a crucial role in classical probability theory (see Sec. \ref{chMath:sec:processes}).  In essence it is a random process where the conditional mean of any future increment is zero \cite{oksendal_stochastic_2002, van_handel_stochastic_2007}.

The proof that $v_t$ is a Wiener process is given in theorem 7.1 of reference \cite{bouten_introduction_2007} and relies on some fundamental properties of the conditional expectation.  We quote this result in Lemma \ref{chMath:lemma:innovations}, for two reasons.  The first is simply because it is easily shown and is a rather elegant result.  The second is that Chap. \ref{chap:QubitState} uses the fact that $v_t$ is Wiener process only when $\rho_t$ is ``consistent'' with the actual statistics of $\set{y_t}_{t \ge 0}$.  Here consistency means that the correspondence $\Tr( \rho_t\, X ) \cong \pi_t(X)$ holds in the sense that $\pi_t(X)$ is a conditional expectation of $X$ with respect to $\mathscr{Y}_t$, under the true quantum state.  If $\rho_t$ does not exactly match $\pi_t(\cdot)$ because its initial condition is wrong or any number of other approximations, then $v_t$ will not generally have the statistics of a Wiener process.  See Secs. \ref{chQubit:sec:likelihood} and \ref{chQubit:sec:Simulations} for further discussion.

\begin{lemma}\label{chMath:lemma:innovations}
    In vacuum expectation, the quantum stochastic process $M_t \define Y_t - \int_0^t \pi_s(L + L^\dag)\, ds$ is an instance of a quantum Wiener process in that its finite dimensional statistics are independent mean zero Gaussian random variables with variances equal to the time differences.
\end{lemma}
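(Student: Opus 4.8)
The plan is to verify the two hypotheses of L\'evy's characterization theorem for $M_t$, namely that $M_t$ is a (local) martingale with respect to the filtration $\mathscr{Y}_t$ under vacuum expectation, and that it satisfies the quantum It\^o rule $(dM_t)^2 = dt$. Once both are established, L\'evy's theorem (quoted in the excerpt just above the lemma) immediately gives that the finite-dimensional statistics of $M_t$ are independent mean-zero Gaussians with variances equal to the time increments, which is exactly the claim.

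First I would establish the It\^o rule. We have $M_t = Y_t - \int_0^t \pi_s(L+L^\dag)\,ds$, and since the integral term is of bounded variation it contributes nothing to the quadratic variation; hence $dM_t\,dM_t = dY_t\,dY_t$. From Eq.~(\ref{chMath:eq:Yt}), $Y_t = Q_t + \int_0^t U_s^\dag(L+L^\dag)U_s\,ds$, and again the bounded-variation term drops out, so $dY_t\,dY_t = dQ_t\,dQ_t$. Since $Q_t = A_t + A_t^\dag$, the quantum It\^o table for the fundamental processes gives $dQ_t\,dQ_t = dA_t\,dA_t^\dag = dt$. This is the routine part of the argument.

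The main work is the martingale property, and this is where I expect the real obstacle to lie. The strategy is to write, for $s \le t$,
\begin{equation}
  \mathbbm{E}\big( M_t - M_s \,\big|\, \mathscr{Y}_s \big) = \mathbbm{E}\Big( Y_t - Y_s - \int_s^t \pi_r(L+L^\dag)\,dr \,\Big|\, \mathscr{Y}_s \Big),
\end{equation}
and show this vanishes. The key identity is that $\mathbbm{E}(Y_t - Y_s \mid \mathscr{Y}_s) = \mathbbm{E}\big(\int_s^t U_r^\dag(L+L^\dag)U_r\,dr \mid \mathscr{Y}_s\big)$, using that $Q_t - Q_s$ has vacuum-conditional mean zero given $\mathscr{Y}_s$ (this follows from $Q_t - Q_s$ being a generator of displacements in $\Fock(\hilbert_{[s,t)})$, which is independent of $\mathscr{Y}_s$ via the continuous-time tensor decomposition, Lemma~\ref{chQuLight:lemma:tensorDecomposition}, together with the mean-zero Gaussian characteristic function computed in Sec.~\ref{chQuLight:sec:QuWienerCharacteristic}). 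Then one uses the tower property and the definition of the filter, $\pi_r(X) = \mathbbm{E}(U_r^\dag X U_r \mid \mathscr{Y}_r)$, to swap the conditional expectation inside the time integral: $\mathbbm{E}\big(\int_s^t U_r^\dag(L+L^\dag)U_r\,dr \mid \mathscr{Y}_s\big) = \int_s^t \mathbbm{E}\big(\pi_r(L+L^\dag)\mid\mathscr{Y}_s\big)\,dr = \mathbbm{E}\big(\int_s^t \pi_r(L+L^\dag)\,dr \mid \mathscr{Y}_s\big)$, where the middle step uses $\mathscr{Y}_s \subseteq \mathscr{Y}_r$ and the tower property of the quantum conditional expectation (Eq.~(\ref{chMath:eq:quantumConditionalProperties})). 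Subtracting, the conditional increment is zero.

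The subtle points I would need to be careful about are: (i) justifying that $Y_t - Y_s$, or rather its constituent $Q_t - Q_s$, is genuinely independent of $\mathscr{Y}_s$ in the operator-algebraic sense required — this is where the continuous-time tensor decomposition does the heavy lifting, since $\mathscr{Y}_s$ is generated by operators built from $U_r$ and $Q_r$ for $r \le s$, all of which act trivially on $\Fock(\hilbert_{[s,\infty)})$; (ii) interchanging the conditional expectation with the time integral, which needs a Fubini-type argument but is standard for the bounded coefficients assumed here; and (iii) the fact that all of this is ``in vacuum expectation'' — everything is computed under the state $\rho \otimes \ketbra{\vac}{\vac}$, and the martingale property is with respect to this state. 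I do not anticipate needing anything beyond the tools already assembled in the excerpt; the proof is essentially an exercise in bookkeeping with the quantum conditional expectation plus one application of L\'evy's theorem, so I would present it compactly rather than belaboring the measure-theoretic technicalities.
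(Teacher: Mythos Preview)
Your proposal is correct and follows essentially the same route as the paper's proof: verify the martingale property by reducing $M_t - M_s$ to $Q_t - Q_s$ (via the definition of $\pi_r$ and the explicit formula for $Y_t$), then use the continuous-time tensor decomposition and vacuum mean-zero of $Q_t - Q_s$, and finish with $dM_t\,dM_t = dt$ and L\'evy's theorem. The only cosmetic difference is that the paper phrases the martingale step by testing against an arbitrary $K \in \mathscr{Y}_s$ via the defining property Eq.~(\ref{chMath:eq:quantumConditionalDefinition}), whereas you invoke the tower property directly; these are equivalent formulations of the same manipulation.
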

\begin{proof}
    In Sec. \ref{chMath:sec:processes}, we review that the classical definition of a martingale is that it satisfies the property $\mathbbm{E}(m_t| \mathscr{F}_{s}) = m_s$ for $s \le t$.  In the quantum case this is equivalent to showing that $\mathbbm{E}(\,(M_t - M_s) \,|\mathscr{Y}_s) = 0$.  The reason for this is because the conditional expectation obeys the property that for every $K \in \mathscr{Y}_s$, $\mathbbm{E}(\,K \,|\mathscr{Y}_s) = K$.

    By the definition of the conditional expectation, we have that for every $K \in \mathscr{Y}_s$
    \begin{equation}
        \mathbbm{E}\left( \mathbbm{E}( M_t - M_s | \mathscr{Y}_s)\, K \, \right) = \mathbbm{E } \left(  (M_t - M_s) \, K \, \right).
    \end{equation}
    Substituting the definition of $M_t$,
    \begin{equation}\label{chMath:eq:InnovationMartingale}
        \mathbbm{E } \left(  (M_t - M_s) \, K \, \right) = \mathbbm{E } \left( (Y_t - Y_s) \, K \, \right) - \mathbbm{E }( \int_s^tds'\, \pi_{s'}(L + L^\dag) \, K \, ).
    \end{equation}
    Notice, however, that $\pi_{s'}(X) = \mathbbm{E}(U_{s'}^\dag X U_{s'}|\mathscr{Y}_{s'})$ we can again use the definition of the conditional expectation to convert the second term into an expectation of an integral of $U_{s'}^\dag(L+ L^\dag) U_{s'} $.  In Eq. (\ref{chMath:eq:Yt}) we solved for $Y_t$, and found that $Y_t = Q_t + \int_0^t ds'\, U_{s'}^\dag(L+ L^\dag) U_{s'}$.  After substituting that solution, Eq. (\ref{chMath:eq:InnovationMartingale}) simplifies to
    \begin{equation}
        \begin{split}
           \mathbbm{E } \left(  (M_t - M_s) \, K \, \right) &= \mathbbm{E } \left( (Y_t - Y_s) \, K \, \right) - \mathbbm{E } \left( \int_0^t ds'\, U_{s'}^\dag(L+ L^\dag) U_{s'} \, K \, \right) \\
                &= \mathbbm{E } \left( (Q_t - Q_s) \, K \, \right).
        \end{split}
    \end{equation}
    Any operator $K \in \mathscr{Y}_s$ is an operator which acts on the system Hilbert space and the Fock space associated with light operators defined for times $s' \in [0, s)$.  The operator $Q_t - Q_s$ acts on light field states defined on the time interval $[s, t]$.  This means that this final expectation value factorizes to show,
    \begin{equation}
        \mathbbm{E } \left(  (M_t - M_s) \, K \, \right)  = \mathbbm{E}(Q_t - Q_s)\, \mathbbm{E}(K) = 0.
    \end{equation}
    This is zero because the quadrature operator $Q_t$ is mean zero in vacuum, and so $M_t$ is indeed a martingale, when we condition on $\mathscr{Y}_s$.  The proof is finished by simply observing that $dM_t dM_t = dt$ and so $M_t$ is a quantum Wiener process by L\'{e}vy's thoerem.  \qedhere
\end{proof}

\subsection{The It\={o} correction in the conditional master equation\label{chMath:sec:QuantumItoCorrection} }

The quantum filter $\pi_t(\cdot)$ is given by an It\=o form quantum stochastic differential equation and therefore the conditional master equation is a semiclassical It\=o equation.  In addition to an It\=o integral, there is also a Stratonovich integral where the rules of standard calculus still apply, but the statistical properties are more subtle (see Appendix \ref{app:SDEs} for their respective definitions).  While the two forms of integration are distinct, they are related by a conversion formula, resulting in the ``It\=o correction'', derived in Appendix \ref{appSDEs:sec:ItoConversion}.   In Chap. \ref{chap:projection}, we are required to work with a conditional master equation written as a Stratonovich integral and so we derive this conversion here.

For a general measurement operator $L$ and Hamiltonian $H$, the conditional master equation is
\begin{equation}
    d\rho_t = -i[H, \rho_t]\, dt + \mathcal{D}[L](\rho_t)\, dt + \mathcal{H}[L](\rho_t)\, dv_t.
\end{equation}
The first two terms are simple deterministic integrals and are unaffected by the choice of stochastic integral and can be ignored.   The integrand in It\=o integral is the conditioning map and for reference is,
\begin{equation}
    \mathcal{H}[L](\rho_t) =  L\, \rho_t + \rho_t\, L^\dag - \Tr( L \rho_t + \rho_t L^\dag  )\, \rho_t.
\end{equation}
For the remainder of this section we will suppress the parameterizing argument and simply write $\mathcal{H}(\rho_t)$.

A one-dimensional It\=o integral that is typically considered in an It\=o-Stratonovich conversion has a differential
\begin{equation}
  dx_t = b(x_t) dw_t,
\end{equation}
for a smooth integrand $b(x)$.  When written as a Stratonovich equation, this differential is notated as
\begin{equation}
  dx_t = b(x_t) \circ dw_t.
\end{equation}
The It\=o correction is what results when you enforce that both integrals must give the same process $x_t$, and the final result is that \begin{equation}
     b( x_t ) \circ d w_t =  b( x_t )\, d w_t + \half\, \frac{d b}{d x}(x_{t})\,b( x_{t} )\, d t.
\end{equation}
The additional term is known as the It\=o correction.

To immediately apply this result to the conditional master equation would involve defining what it means to take the derivative the $\mathcal{H}(\rho )$ operator with respect to $\rho$.  Rather than defining a calculus of super--operators, we will return to the roots of the relation (see Appendix \ref{appSDEs:sec:ItoConversion}) and write the correction as
\begin{equation}\label{chMath:eq:CMEItoCorrection}
    d I = \mathcal{H}(\rho_t) \circ dv_t - \mathcal{H}(\rho_t) dv_t = \left( \mathcal{H}(\rho_t + \half\, d\rho_t) - \mathcal{H}(\rho_t)\right) dv_t.
\end{equation}
The map $\mathcal{H}$ is unfortunately not a linear operator in $\rho_t$ and so the integrand on right hand side is not simply $\mathcal{H}( \half d\rho_t)$.  After a little algebra we find that
\begin{multline}
d I = \half \Big( L\, d \rho_t +d \rho_t \, L^\dag  -  \Tr( L \rho_t + \rho_t L^\dag ) d \rho_t \\
- \Tr( L\, d\rho_t + d\rho_t\, L^\dag  )\, ( \rho_t  + \half\, d  \rho_t )\, \Big) dv_t.
\end{multline}

To simplify this expression into a final form, we will substitute the It\={o} equation expression for $d\rho_t$ and apply the It\={o} rule that $dv_t dv_t = dt$ with all other differential products being zero.  This means that when substituting $d\rho_t$ we need to only use the stochastic term as any deterministic term will result in a product $dt dv_t = 0$.  Furthermore any term with two powers of $d\rho_t$ will also be zero as that will result in three powers of $dv_t$. The simplified expression is then,
\begin{equation}
\begin{split}
   dI =&\, \half\left( L\, \mathcal{H}(\rho_t) +\mathcal{H}(\rho_t) \, L^\dag -  \Tr\left( (L + L^\dag) \, \mathcal{H}(\rho_t)  \right)  \rho_t  -  \Tr( (L  +  L^\dag) \rho_t )\, \mathcal{H}(\rho_t)\,  \right) \,  dt\\
    \enifed&\, \mathcal{I}_c[L](\rho_t)\, dt.
\end{split}
\end{equation}

Substituting the definition of $\mathcal{H}[L](\rho_t)$, the It\={o} correction map, $\mathcal{I}_c[L](\rho_t)$, simplifies to
\begin{equation}\label{chMath:eq:itoCorrectionMap}
\begin{split}
   \mathcal{I}_c[L](\rho_t) =&\, \left( L \rho_t L^\dag + \half L^2\rho_t + \half \rho_t L^{\dag\, 2 } \right)\\
    &-\, \left( \langle L^\dag L \rangle + \half \langle L ^2 \rangle + \half \langle L ^{\dag 2} \rangle \right) \rho_t\\
    &-\, \langle L + L^\dag \rangle\, (L\, \rho_t + \rho_t\, L^\dag - \langle L + L^\dag \rangle\, \rho_t )
\end{split}
\end{equation}
where $\langle X \rangle = \Tr (X \rho_t)$.

Ultimately the Stratonovich form of the conditional master equation is then given by
\begin{equation}\label{chMath:eq:CMEStratonovich}
    d \rho_t = -i [H_t, \rho_t]\, dt + \mathcal{D}[L](\rho_t)\, dt - \mathcal{I}_c[L](\rho_t)\, dt + \mathcal{H}[L](\rho_t)\circ dv_t.
\end{equation}

\subsection{The conditional Schr\"{o}dinger equation  \label{chMath:sec:CSE}  }

In this chapter we have focused solely on the interpretation of quantum mechanics in terms of probability spaces.  That description lead to a quantum conditional expectation and a quantum filter, which is described in a Heisenberg picture.  From that Heisenberg picture description we found a conditional master equation (CME).  When the state of the system is pure and the dynamics are such that it will remain pure, then propagating a full density matrix is unnecessary and a conditional Schr\"{o}dinger equation (CSE) is sufficient.   Chap. \ref{chap:QubitState} uses this fact for computational efficiency and therefore we include the general expression for a CSE based upon the CME in Eq. (\ref{chMath:eq:CME}).  The details of the conversion can be found in \cite{chase_parameter_2009}.

The CME gives the evolution for $\rho_t$ in terms of an It\={o} differential $d \rho_t$, Eq. (\ref{chMath:eq:CME}).  Any density matrix whose purity is $1$ can be represented by an outer product of a normalized state vector in Hilbert space \cite{nielsen_quantum_2000},
\begin{equation}
  \rho_t = \ketbra{\psi_t}{\psi_t}  \quad \text{if and only if } \Tr(\rho^2)= 1.
\end{equation}
Furthermore $\ket{\psi_t}$ is unique up to an arbitrary constant phase.  While we have worked quite hard to derive the CME and give it physical meaning, practically speaking it is ``nothing'' more than a matrix valued stochastic differential equation defined on a classical probability space.  Therefore a method for moving from a CME to a CSE is to hypothesize the existence of a random state vector $\ket{\psi_t}$ satisfying some vector valued SDE $d \ket{\psi_t}$ and then solve for the differential that give the differential $d (\ketbra{\psi_t}{\psi_t})$ equal to the CME.

We note that this is not a standard derivation.  Typically in quantum optics, one first derives a stochastic Schrodinger equation via an unraveling of a master equation that considers photon counting and then takes a diffusive limit \cite{gardiner_quantum_2004}.  Having already developed the CME from the quantum filter, it is much simpler to perform the above calculation, rather than including an independent derivation.   The resulting equations are identical.

The derivation of $d\ket{\psi_t}$ is not difficult as we can see that the only random process that enters the Eq. (\ref{chMath:eq:CME}) is through the innovations $v_t$ and it does so linearly.  We also know that $v_t$ satisfies the It\={o} rule, $dv_t dv_t = dt$.  Therefore a reasonable form for $d\ket{\psi_t}$ is
\begin{equation}
  d\ket{\psi_t} = A_t \ket{\psi_t} \,dt + B_t\, \ket{\psi_t}\, dv_t.
\end{equation}
for some time-adapted but possibly state dependent operators $A_t$ and $B_t$.
The adjoint of this equation is then
\begin{equation}
  d\bra{\psi_t} = \bra{\psi_t} A^\dag_t \,dt +  \bra{\psi_t}\, B^\dag_t\, dv_t.
\end{equation}
And so we need to solve for $A_t$ and $B_t$ subject to the constraint,
\begin{equation}
  d \rho_t = d\ket{\psi_t}\ \bra{\psi_t} + \ket{\psi_t}\ d\bra{\psi_t} + d \ket{\psi_t}\ d \bra{\psi_t}.
\end{equation}
Doing so is not too difficult and the operators turn out to be
\begin{equation}
  A_t = -i H_t - \frac{1}{2} \left( L^\dag L - 2 \expect{L^\dag} L + \expect{L^\dag} \expect{L} \right)
\end{equation}
and
\begin{equation}
  B_t =  L - \expect{L}.
\end{equation}

Traditionally the operator $L$ is Hermitian, and so if we choose our favorite example of $L = \sqrt{\kappa}\, J_z$ then
\begin{equation}\label{chMath:eq:CSE}
  d \ket{\psi_t} = \Big(\,-i H_t - \half \kappa \left(J_z  - \expect{J_z}\right)^2\, \Big) \ket{\psi_t}\, dt + \sqrt{\kappa}\,\big( J_z - \expect{J_z} \big)\, \ket{\psi_t}\,  d v_t
\end{equation}
with
\begin{equation}
  d v_t = d y_t - 2 \sqrt{\kappa} \expect{J_z}\, dt.
\end{equation}
This is the equation used in Chaps. \ref{chap:projection} and \ref{chap:QubitState}.

\chapter{Projection Filtering for Qubit Ensembles\label{chap:projection}}

This chapter derives an approximate form for the conditional dynamics of an ensemble of $n$ qubits under the assumption that the state will remain nearly an identical separable state.  We assume that the system is undergoing a diffusive measurement of the collective angular momentum operator $J_z$ while simultaneously experiencing strong global rotations.  The approximation is made by formulating a projection filter from the exact conditional master equation.  The projection is made through the technique of orthogonal projections in differential geometry.  Here we identify the space of identical separable states as a Riemannian manifold and then project the conditional master equation into its tangent space.  We also review the elements of differential geometry that make such a mapping possible.  Finally we test the accuracy of the projection filter numerical by comparing it to simulations of a stochastic schrodinger equation.  We find that it matches the conditional mean spin projections to within a $5 \%$ RMS error.

\section{Introduction}
Numerical integration of a conditional master equation is generally a resource intensive exercise.  Specifying a general mixed state for a $d$-dimensional quantum system requires $d^2 - 1$ real parameters.  Furthermore the total dimension of a many body system grows exponentially.  A system of $n$ qubits generates a $2^n$-dimensional Hilbert space, requiring $2^{2 n} - 1$ parameters.  This ``curse of dimensionality'' is true even in a unconditioned system and so physicists often search for symmetries that allow for a more efficient description.  The nonlinearity in the conditional master equation means that a number of symmetries that are often preserved in an uncondition map are no longer exploitable.

A projection filter is a tool that was developed in the context of classical filtering theory and provides a general method for constraining nonlinear estimators to remain in a lower dimensional space \citep{brigo_differential_1998, brigo_approximate_1999}.  Within the past decade these tools have also been applied to quantum systems, specifically for cavity QED systems \citep{van_handel_quantum_2005, mabuchi_derivation_2008, hopkins_reduced_2009, nielsen_quantum_2009}, collective spin systems \citep{chase_parameter_2009}, and low rank approximations for general master equations \cite{bris_low_2012}.  The flexibility of the projection method is provided by its formulation in the language of differential geometry.  In the quantum framework we have a high, possibly infinite, dimensional manifold representing the space of possible states.  It is often the case that the system is initialized in a state with a large amount of symmetry thereby initially allowing an efficient, lower dimensional representation.  The project filter modifies the exact evolution in such a way as to constrain the system to remain in the lower dimensional submanifold.   It does so by projecting the differential into the lower dimensional tangent space.

Here we focus on an ensemble of $n$ qubits initially prepared in an identical tensor product state.  In other words, the total state of the system $\rho_\tot$ is initialized in as a $n$-fold tensor product of a single qubit state $\rho$,
\begin{equation}
  \rho_\tot = \rho^{\otimes n}.
\end{equation}
Clearly this is a highly symmetric and easily represented state, as a single qubit state requires only $3$ parameters to be specified uniquely.  If  the master equation acts on each qubit individually then the total system will remain in an identical separable state for all future times.  However for a joint qubit system undergoing a weak, diffusive measurement of the \emph{collective} angular momentum variable $J_z$, the conditional master equation is generally entangling.  In the long time limit, this kind of measurement most often results in the system projecting into a nonseparable Dicke state.

In this chapter we demonstrate, through numerical simulation, that if the system also undergoes strong, randomized rotations in addition to the collective measurement then the system will remain nearly sparable.  Under this assumption that this is the case, we apply the technique of projection filtering to the conditional master equation so that it maps identical separable states to identical separable states.

\subsection{An introduction to differential projections}

The general technique of differential projections can be understood though the following example.   Consider an ordinary scalar function defined on three dimensions, $f(x, y, z)$. The chain rule shows that the differential for $f$ is
\begin{equation}
  df = \frac{\partial f}{\partial x}\, dx + \frac{\partial f}{\partial y}\, dy + \frac{\partial f}{\partial z}\, dz.
\end{equation}
Suppose that we have a particle with position vector $\Vx(t)$, and at each time $t$ we evaluate $f(x(t), y(t), z(t) )$.  In order to have a complete description for $f$ we clearly need to keep track of all three components because a change in $x$, $y$ or $z$ induces a change in $f$.  Now suppose that keeping track of $z$ is too much of a hassle and we are only interested in tracking $x(t)$ and $y(t)$.  The question posed by the projection filter is, ``how should we modify $f$ so that we only need to track $x$ and $y$?''  The answer comes from the fact that if $\frac{\partial f}{\partial z} = 0$ everywhere then $f$ doesn't change with $z$ and ultimately $z$ can be ignored.  Therefore the modification we should make it set the gradient of $f$ to point only in the $xy$-plane, \emph{i.e.} set $\frac{\partial f}{\partial z} = 0$.  This modification is the differential projection of $f$.  Therefore we have a modified function $\restrict{f}{\set{x,y}}$, whose differential is simply,
\begin{equation}
  \restrict{df}{\set{x,y}} = \frac{\partial f}{\partial x}\, dx + \frac{\partial f}{\partial y}\, dy + 0\, dz.
\end{equation}
The difficulty in forming a projection filter is that $f$ is not usually written in terms of, $\set{x, y, z}$, but instead some other set of parameters, $\set{x', y', z'}$, or even just $t$.  Furthermore the desired subspace might be some complicated 2D surface with parameters $v$ and $w$.   It is very likely that $v$ and $w$ may not even be orthogonal, at least not in the same sense $x$, $y$ and $z$ are orthogonal.  The first challenge in developing a projection filter is to give the desired objective a geometric interpretation.

\subsection{The conditional master equation}

Before embarking on a description of the geometry of quantum states, we will first collect all of the necessary equations from previous chapters here for a single point of reference.  Sec. \ref{chMath:sec:CME} found that the state of an atomic system conditioned on a continuous diffusive measurement is easily represented by the conditional master equation (CME) given by the It\={o} differential,
\begin{equation}\label{chProj:eq:GeneralCME}
    d \rho_t = -i [ H,\, \rho_t] dt + \mathcal{D}[L](\rho_t) dt + \mathcal{H}[L](\rho_t) dw_t.
\end{equation}
(See Appendix \ref{app:SDEs} for a review of classical stochastic differential equations.)   The dissipation and conditioning maps, $\mathcal{D}[L](\cdot)$ and $\mathcal{H}[L](\cdot)$, are parameterized by the measurement operator $L$ and are defined as
\begin{equation}\label{chProj:eq:dissipator}
    \mathcal{D}[L](\rho) = L \rho L^\dag  - \half L^\dag L \rho - \half \rho L^\dag L
\end{equation}
and
\begin{equation}\label{chProj:eq:conditioning}
    \mathcal{H}[L](\rho_t) = L \rho_t  + \rho_t L^\dag  - \Tr( (L + L^\dag) \rho_t)\, \rho_t.
\end{equation}
Often we will omit the parameterizing argument and simply write $\mathcal{D}(\rho_t)$ and $\mathcal{H}(\rho_t)$.  Also note that $\hbar$ has been set equal to one, so that the Hamiltonian operator $H$ has units of frequency and the measurement operator $L$ has units of root frequency.

Note that Sec. \ref{chMath:sec:CME} used a slightly different notation, referring to the innovation as $dv_t$ rather than $dw_t$.  Sec. \ref{chMath:sec:Innovation} showed that innovation computed from the measurement record $y_t$ has the statistics of a Wiener process, when the initial condition $\rho_0$ coincides with the ``true'' initial state.  In Chap. \ref{chap:QubitState} this will not be the case, however here we are assuming that the initial condition is known, and in particular, that it can be written as $\rho^{\otimes n}$.  Therefore, throughout this chapter we will consider the innovation to be Wiener process and write it as $dw_t$.  Sec. \ref{chMath:sec:Wiener} reviews the statistical and defining properties of the Wiener process.

The physical system that we have in mind is the idealized linear Faraday interaction in Sec. \ref{chQuLight:sec:faraday}, meaning the measurement operator is
\begin{equation}\label{chProj:eq:JzMeasurement}
    L = \sqrt{\kappa}\, J_z
\end{equation}
where $\kappa$ is a constant rate.  In addition to this measurement, we consider applying a uniform but time varying magnetic field, leading to the Hamiltonian
\begin{equation}\label{chProj:eq:controlHt}
    H = f^x(t) J_x +f^y(t) J_y + f^z(t) J_z.
\end{equation}
The control fields $f^i(t)$ are assumed to be real valued, deterministic functions of time\footnote{ In Chap. \ref{chap:QubitState} the control fields are written as $\V{b}(t)$, however in this chapter the coordinates $b^i$ indicate the projected coefficients for the stochastic terms, so here we use $f^i(t)$ instead.}.

For reasons made apparent in Sec. \ref{chProj:sec:StochCalcManifolds}, we will also need to work with the \emph{Stratonovich} form of the CME,
\begin{equation}\label{chProj:eq:StratonovichCME}
    d \rho_t = -i [ H,\, \rho_t] dt + \mathcal{D}[L](\rho_t) dt - \mathcal{I}_c[L](\rho_t) dt + \mathcal{H}[L] \circ dw_t.
\end{equation}
The conversion from the It\={o} form generated the It\={o} correction map, derived in Sec. \ref{chMath:sec:QuantumItoCorrection}, which is
\begin{equation}\label{chProj:eq:ItoCorrectionMapIntro}
\begin{split}
   \mathcal{I}_c[L](\rho_t) =&\,  L \rho_t L^\dag + \half L^2\rho_t + \half \rho_t L^{\dag\, 2 } \\
    &-\, \left( \expect{L^\dag L} + \half \expect{ L ^2 } + \half \expect{ L ^{\dag 2} } \right) \rho_t\\
    &-\, \expect{ L + L^\dag }\, (L\, \rho_t + \rho_t\, L^\dag - \expect{ L + L^\dag }\, \rho_t ).
\end{split}
\end{equation}
Here the expectation value of the operator $X$ has been written as $\expect{X} \define \Tr(\rho_t\, X)$.

\section{Differential Manifolds \label{chProj:sec:DifferentialManifolds} }
A manifold $\man{M}$ is most generally a continuous set of point that can be locally mapped to a $d$-dimensional Euclidean space.  In a neighborhood of any point in $\man{M}$ we can define a smooth mapping points in that neighborhood to a flat space of dimension $d$.  How smooth this mapping needs to be, often depends upon the author and the context, generally it must be smooth enough so that the tools of differential calculus can be applied.  The concept of \emph{smooth} is quite at odds with the random nature of Brownian motion, as the Wiener process is provably nondifferentiable with probability one.  Here we will be ultimately considering random trajectories on a differential manifold.  The resolution between these two conflicting notions is that while a diffusive trajectory is nondifferentiable, it is a trajectory in a smooth space.  \emph{i.e.} a two-dimensional Brownian motion is not a differentiable curve, but it is defined on a 2-D plane which is smooth.

The specific manifold we need is the space of all valid density operators for $n$ qubits.  For a single qubit, the Bloch vector defines a perfectly respectable one-to-one mapping between a quantum state and the 3-dimensional Euclidian ball.  The conditional master equation then has a representation as a diffusive trajectory within the Bloch ball.  Defining an equivalent representation for a $d-$dimensional quantum state is nontrivial and is still the subject of current research.  While there does exist an equivalent mapping to a ball living in a $2^d -1$-dimensional space, the boundary and smoothness of this mapping is quite complex and not well understood \cite{kimura_bloch-vector_2005}.  Here we will only be interested in a geometric representation of states that can be written as $n$ copies of a single qubit state.  Ultimately a Bloch vector representation is sufficient for our purposes.


\subsection{Tangent spaces \label{chProj:sec:tangentSpaces}}

The differential projection we ultimately want to preform requires a deeper understanding of how to define a gradient in a more abstract setting.  A key conceptual point is that we make an association between basis vectors in a $d$ dimension space and the partial derivatives we can take of a function defined on the manifold.  Specifically, a point $p$ in the manifold $\man{M}$ is representable by the coordinates $\set{x^1, x^2, \dots, x^d}$.  Any smooth function $f$ on the manifold, evaluated at this point $p$ can therefore also be represented as a function of these coordinates, $f(x^1, x^2,\dots x^d)$.  At the point $p$ the partial derivative of the function $f$ with respect to the coordinate $x^i$ defines the rate of change of $f$ as $x^i$ is varied, \emph{i.e.} it defines a line tangent to $f$ pointing in the direction of $x^i$.

The relation between partial derivatives and vectors can be formed by associating the basis element $\Ve_i$ with the partial derivative operator $\dydx{}{x^i}$.  Differential geometry is concerned with defining structures that are independent of any given coordinate system.  Calculating a partial derivative with respect to a different coordinate system, $\set{y^i}$, is easily accomplished by applying the usual chain rule,
\begin{equation}
 \dydx{}{y^i} = \dydx{x^j}{y^i} \dydx{}{x^j}.
\end{equation}
The coordinate independent quantity here is the space of all possible partial derivatives we could take at this point.  At first glance this may seem like a rather large object, however the chain rule just showed that a partial derivative in one basis is simply a linear combination of partial derivatives in another basis.  Therefore the space of all possible partial derivatives is simply the linear span of partials taken with respect to some basis.  This space is called the tangent space of $\man{M}$ at point $p$, denoted by,
\begin{equation} \label{chProj:eq:tangentSpace}
    T_p \man{M} = \lspan{ \left. \tfrac{\partial}{\partial x^i} \right|_p\ :\ i = 1\dots d}.
\end{equation}
Note that the tangent space is a $d$-dimensional vector space, as we are taking linear combinations of $d$ basis vectors.  Often we will discuss a \emph{directional derivative}, meaning that we will be taking a derivative in the direction of another point in the manifold.  But as this could have any relation to a given coordinate system, the direction derivative defines a vector in the tangent space.  Another useful bit of jargon is that if you have a tangent vector defined for ever point in the manifold then this defines a vector field.

\subsection{Riemannian Metrics and orthogonal projections\label{chProj:sec:Metrics}}

The tangent space $T_p \man{M}$ defines the set of all possible partial derivatives one could make at the point $p$.  However it does not describe how those derivatives are related.  While in a Cartesian basis we have a sense that $\Ve_x$ is orthogonal to $\Ve_y$, in general its hard to tell how the arbitrary vector $\Ve_u$ is related to $\Ve_w$.  The missing element is a \emph{metric}, $\inprod{\cdot}{\cdot}_p$, describing a positive definite inner product between any two tangent vector fields.  At each point $p$ we can take the dot product of $\Ve_u$ and $\Ve_v$ to see how they are related.  If the space is Euclidean, then the metric well report the fact that $\Ve_x$ is orthogonal to $\Ve_y$, which is not true in general.

A \emph{Riemannian manifold} is a manifold $\man{M}$ that is equipped with a metric that varies continuously between different points.  While in a Euclidean space the inner product between two vectors doesn't change between different points, this is not true in a general space leading to much richer geometries.  For a basis of vectors $\set{\Ve_j}$ spanning the tangent space $T_p\man{M}$ the metric at that point can be written as a $d \times d$ matrix with components,
\begin{equation}
  g_{ij}(p) \define \inprod{\Ve_i}{\Ve_j}_{p}.
\end{equation}
In addition to being positive definite, a metric is also symmetric in that $\inprod{\Ve_i}{\Ve_j}_{p} = \inprod{\Ve_j}{\Ve_i}_{p}$.

A metric gives a notion of two vectors being orthogonal and from that we are able to make an orthogonal projection.  This is crucially important as we wish to project the conditional master equation into the tangent space of states that are $n$ copies of a single qubit state.  For a Euclidean space, the orthogonal projection of the vector $\V{v} = v^1 \Ve_1 + v^2 \Ve_2 + v^3 \Ve_3$ onto the $XY$ plane is trivial to compute, as it simply discards the $\Ve_3$ component.   Given a metric and a general manifold we can make a similar formulation.

Suppose for a Riemannian manifold $\man{M}$ we have a submanifold $\man{N} \subseteq \man{M}$ of dimension $n \le d$.  Without explicitly constructing an orthogonal basis for every tangent space $T_p \man{M}$, we would like to find a map that discards the vector components orthogonal to $T_p\man{N}$.  This can easily be done, given a basis of vectors $\set{ \V{v}_i \, :\, i = 1,\dots, n}$ that span $T_p\man{N}$.  The metric $\inprod{\cdot}{\cdot}_{p}$ taken from $\man{M}$, can equally well be applied to $\man{N}$ as their tangent spaces overlap.   Applying this metric to $\set{\V{v}_i}$ we have the $n \times n$ matrix with elements
\begin{equation}
  g_{ij}(p) = \inprod{\V{v}_i}{\V{v}_j}_{p}.
\end{equation}
As the metric is positive definite,  this matrix is invertible whose entries are often written as, $g^{ij}(p) \define \big(\, \V{g}(p)\, \big)^{-1}_{ij}$.  We can now show that that the map $\Pi_{\man{N}}\ :\ T_p\man{M} \rightarrow T_p\man{N}$,
\begin{equation}\label{chProj:eq:projMap}
  \Pi_{\man{N}} ( \cdot )  = g^{ij}(p) \inprod{\V{v}_j}{\cdot\, }_{p}\, \V{v}_i
\end{equation}
is equivalent to discarding the component of $\V{w}$ orthogonal to $T_p\man{N}$.

The projection map should operate as the identity for any vector $\V{u} \in T_p\man{N}$.  To check that this is true, $T_p\man{N} = \lspan{ \V{v}_i\, :\, i = 1\dots n}$, and so $\V{u}$ be written as $\V{u} = u^k\, \V{v}_k$ for some coefficients $u^k$.  Then we have
\begin{equation}
 \begin{split}
  \Pi_{\man{N}}( \V{u} )  &= g^{ij}(p)\, \inprod{\V{v}_j}{ u^k \V{v}_k }_{p}\, \V{v}_i\\
   &= g^{ij}(p)\, \left(\,u^k\, g_{jk}(p)\, \right) \, \V{v}_i\\
   &= u^k\, g^{ij}(p)\,g_{jk}(p)\, \V{v}_i\\
   &= u^k  \delta^{i}_{k} \, \V{v}_i = \V{u}.
 \end{split}
\end{equation}
$\Pi_{\man{N}}$ should also return zero for every vector orthogonal to $T_p\man{N}$.  This is also easy to check, as for every $\V{v}^\perp$ in the orthogonal complement of $T_p\man{N}$, we have $\inprod{\V{v}}{\V{v}^\perp}_p = 0$, if $\V{v} \in T_p \man{N}$.  Therefore,
\begin{equation}
  \Pi_{\man{N}}( \V{v}^\perp )  = g^{ij}(p) \inprod{\V{v}_j}{ \V{v}^\perp }_{p}\, \V{v}_i = 0.
\end{equation}
But as $T_p\man{M} = T_p\man{N} \cup (T_p\man{N})^c$, $\Pi_{\man{N}}$ is the correct mapping.

Note that Eq. (\ref{chProj:eq:projMap}) required only specifying the metric $g_{ij}(p)$ on the submanifold $\man{N}$ and does not require an explicit representation for tangent vectors outside of this subspace.  This is the reason why it is not necessary to find an explicit mapping between the space of $n$-qubit density matrices to a $2^d - 1$-dimensional Euclidean space in order to use the projection filtering methods.  All we need is the valid metric for density matrices and a spanning set of tangent vectors in the submanifold we wish to project onto.

\subsection{Differentials on abstract manifolds \label{chProj:sec:Differentials}}

As the conditional master equation is written in terms of stochastic differentials, we must see how a differential operates in a geometric context.  In multivariable calculus the fundamental object is the differential of the coordinates, \emph{e.g.} $dx$, $dy$, \emph{etc.}.  In a more abstract space, its difficult to intuit what the differential means.  For instance how would one define a differential of a matrix, say the Pauli matrix $\sigma_x$.  Would it be the differential of its entries, the differential of its eigenvalues or maybe even a differential of both the eigenvalues and eigenvectors?  The solution to this problem is to consider the differential not the individual points themselves, but the differential after the application of a smooth map to the Euclidean space.  The differential in the abstract space is then inferred from the Euclidean differential.  This process of inference  is called the \emph{pullback}, in that you are pulling back from the original mapping.   Our ultimate goal is to interpret the conditional master equation $d \rho_t$ in the language of differential geometry and so we need to understand how it relates to a Euclidean mapping.

%
%

Basic multivariable calculus shows that the total differential of the scalar function $f$ is given by, 
\begin{equation}\label{chProj:eq:df}
    df = \dydx{f}{x^i} d x^i.
\end{equation}
A differential can also be view as a linear map acting on tangent vectors.  The action of $df$ on the tangent vector $\dydx{}{x^i}$ is defined to be
\begin{equation}\label{chProj:eq:dfOnDi}
    df\left( \frac{\partial}{\partial x^i} \right) \define \dydx{f}{x^i}.
\end{equation}
While this may seem a bit obtuse at first, it is actually a very useful concept.  To see why, consider the most basis function we can consider, namely the coordinate function $x^i$.  The differential $d x^i$ has an action on the basis vector $\frac{\partial}{\partial x^j}$,
\begin{equation}
    dx^i\left(\dydx{}{x^j}\right) = \dydx{x^i}{x^j} = \delta^i_j.
\end{equation}
This shows that the coordinate differential is biorthogonal to $\frac{\partial}{\partial x^j}$, and therefore can be thought of as a dual basis vector.  When defining the tangent space in Sec. \ref{chProj:sec:tangentSpaces} we found that the partial derivatives spanned that space and the coordinate transformation coefficients were simply linear expansion coefficients.  The same is true for the differential $df$ in that $\dydx{f}{x^i}$ are the expansion coefficients in a dual space spanned by the basis vectors $dx^i$.  The dual space is often called the \emph{cotangent space}.

A differential of a function between two spaces can also be defined.  While we just considered the differential of a scalar valued function, $f$, we can also consider the differential of a vector, matrix, or operator valued function.  When $df$ acted on a basis vector $\dydx{}{x^i}$ it returned a scalar value $\dydx{f}{x^i}$, but with a more general mapping function the returned value should be something other than a scalar.   It turns out that when you have a function $\varphi: \R^3 \rightarrow \man{M}$, the differential of this is a function $d\varphi: T_\Vx \R^3 \rightarrow T_{\varphi(\Vx)} \man{M}$.  The point being that when a function maps one space into another, the differential maps tangent vectors to tangent vectors.  This is best illustrated though a concrete example, which we will give in Sec. \ref{chProj:sec:projUnconditional}, after formulating the Bloch vector representation as a Riemannian manifold.

\subsection{Stochastic calculus on differential manifolds \label{chProj:sec:StochCalcManifolds} }
There seems to be a fundamental inconsistency between a \emph{smooth}, infinitely differentiable manifold and the nowhere differentiable path of a Wiener process.  From the Wong-Zakai theorem, (see Appendix \ref{app:QuWongZakai}) we know that if there exists a smooth, ordinary differential equation that limits to a stochastic differential equation, then the limit should be interpreted as a \emph{Stratonovich} SDE.  When trying to incorporate stochastic differential equations into the language of differential forms, one approach would be to enforce a smooth approximation, apply the differential technique and then take a stochastic limit at the end.  However a skeptical mathematician might wonder if such a result could be believed as the end result might depend heavily on how the smooth approximation was made.

At a practical level, the second order nature of the It\={o} rule is difficult to reconcile with the notion of constrained motion on submanifold.  A simple example of this is made in \cite{brigo_approximate_1999}, which we will reproduce here.  Consider the ordinary differentials,
\begin{equation}
\begin{split}
  dx &=\, dt \\
  dy &=\, 2 t\, dt.
\end{split}
\end{equation}
We can easily see that this describes the parabola $y = x^2$, which can also be considered an immersion of a one-dimensional manifold into $\R^2$.  (The parameterizing function is $\varphi(t) = (t, t^2)$.)  Furthermore we can see that the coefficients of these equations, describe a vector, $(1, 2 x)$, tangent to the parabola.  Were these equations used to describe the evolution of a system whose initial condition is on the parabola, we expect the system to remain on this submanifold.

In contrast, consider an equivalent system of \emph{It\={o}} stochastic differential equations
\begin{equation}
\begin{split}
  dx_t &=\, dw_t \\
  dy_t &=\, 2 x_t\, dw_t.
\end{split}
\end{equation}
The coefficients still describe a vector, $(1, 2 x_t)$, tangent to the parabola $\varphi(t) = (x_t,\, x_t^2)$.  However, a simple application of the It\={o} rule shows that these SDEs have the solution,
\begin{equation}
\begin{split}
  x_t &=\,x_0 + w_t \\
  y_t &=\,y_0 + w_t^2 - t.
\end{split}
\end{equation}
So even if $(x_0, x_0) = (0,0)$, these equations clearly does not remain on the parabola, even though they are described by a vector field in its tangent space.
Conversely, the \emph{Stratonovich} SDEs
\begin{equation}
\begin{split}
  dx_t &=\,dw_t \\
  dy_t &=\,2 x_t\circ dw_t
\end{split}
\end{equation}
have the solution
\begin{equation}
\begin{split}
  x_t &=\,x_0 + w_t \\
  y_t &=\,y_0 + w_t^2,
\end{split}
\end{equation}
which properly describes diffusion on the parabolic manifold.

Our ultimate goal is to take a system of stochastic differential equations and modify their coefficients so that they remain constrained to a particular submanifold.  This example demonstrates that in order for the tangent space projection to be effective, we must first express the It\={o} equation in a Stratonovich form.

\section{The Bloch Sphere as a Riemannian Manifold\label{chProj:sec:BlochSphereManifold}}

In order to describe the space of density matrices in geometric terms, we need to \emph{choose} a metric.  There are an infinite number we can choose from and it is likely that any results we arrive at will depend upon this choice.  In the classical projection filtering problem, the metric \citeauthor{brigo_differential_1998} choose the the Fisher information, as it endows information theory with a nontrivial geometry \cite{brigo_differential_1998}.  \citeauthor{van_handel_quantum_2005} follow this example and use a quantum version of the Fisher information \cite{van_handel_quantum_2005}.  Later authors choose a different metric, namely the trace inner product \citep{mabuchi_derivation_2008, hopkins_reduced_2009, nielsen_quantum_2009, chase_parameter_2009}.  While the trace inner product does not have an immediate connection to quantum information theory, it is significantly simpler to work with and, as we will shortly show, under this metric the Bloch sphere for a single qubit is Euclidean.   In showing this, we will also formally construct the state space for a single qubit as a Riemannian manifold.

For a Hilbert space of dimension $d$, we will follow \cite{kimura_bloch-vector_2005} and refer to the set of all valid density operators as $\man{S}(d)$.  In the case of a qubit with $d =2$, we already know that the Bloch sphere is an incredibly useful parametrization of this set.  Formally, we define this as the map $\rho: \man{B} \subset \R^3 \rightarrow \man{S}(2)$ so that
\begin{equation}\label{chProj:eq:BlochSphMap}
\rho(\Vx) = \tfrac{1}{2}\left(\ident + \Vx \cdot \V{\sigma} \right).
\end{equation}
As every valid quantum state is required to be trace $1$ and positive semi-definite, we have the constraint that $\abs{\Vx} \le 1$, implying that $\man{B}$ is the unit ball.

Through the Bloch sphere mapping, we can construct a tangents space for $\man{S}(2)$.  This is first done by defining a directional derivative for $\man{S}(2)$. Consider the Bloch vectors $\V{x}$ and $\V{y}$ ($ \abs{\V{x}}, \abs{\V{y}} \in (0, 1) $ ).  The derivative of $\rho(\Vx)$ in the direction of $\V{y}$ is defined to be
\begin{equation}
    D_{\V{y}} \define \lim_{\lambda \rightarrow 0} \frac{\rho(\Vx + \lambda \V{y}) - \rho(\Vx) }{\lambda} = \frac{\V{y} \cdot \V{\sigma} }{2}.
\end{equation}
Then assuming the standard Cartesian coordinate system $\set{x^1, x^2, x^3}$, we have the basis of tangent vectors
\begin{equation}\label{chProj:eq:TangentBasis}
    D_i \define \half \sigma_i.
\end{equation}
The tangent space at the point $\rho(\Vx) \in \man{S}(2)$ is then
\begin{equation}\label{chProj:eq:TangentSpace}
    T_{\rho(\Vx)}\man{S}(2) = \lspan{ D_i \ : \ i = 1, 2, 3}.
\end{equation}

Armed with these tangent vectors we will choose, with some foresight, the trace inner product as a metric.  For two tangent vectors $D_i$ and $D_j$ we have the metric
\begin{equation}
  g_{ij} = \inprod{D_i}{ D_j}_{\rho} \define \Tr(D_i^\dag\, D_j).
\end{equation}
While this could result in a complex metric, we can see that for the qubit the basis vectors are Hermitian and therefor the metric is real.  Also note that due to the cyclic property of the trace, it is also symmetric.  Then for the qubit, simply calculating shows
\begin{equation}\label{chProj:eq:metric}
    \inprod{D_i}{ D_j}_{\rho} = \frac{1}{4} \Tr(\sigma_i \sigma_j) = \frac{1}{2}\delta_{i j}.
\end{equation}
Up to a factor of a half, the Bloch sphere is Euclidian under this metric.

\subsection{Projecting the unconditional master equation\label{chProj:sec:projUnconditional}}

In this section we work though an example of explicitly expressing an unconditional master equation for a single qubit in terms of a differential form $d \rho$. We will also do so generally, without assuming a Euclidean metric.  Most generally $\rho(t)$ is a map $\rho : \R^+ \rightarrow \man{S}(2)$. For any time, $t$, $\rho(t)$ returns a valid density matrix.  Then as a differential, the master equation is the map $d \rho : T_t R^+ \rightarrow T_{\rho(t)}\man{S}(2)$, which is specifically
\begin{equation}\label{chProj:eq:GeneralMasterEquation}
    d \rho = -i [ H,\, \rho] dt + \mathcal{D}[L](\rho) dt,
\end{equation}
for a general Hamiltonian $H$ and jump operator $L$.  Instead of the direct mapping between time and density matrices, we would like to consider this in terms of the Bloch sphere mapping of Eq. (\ref{chProj:eq:BlochSphMap}). This can be done if we consider the time component as a kind of functional composition, so that $\rho(t) = \rho(\Vx(t))$ for a map $\Vx(t)$ between time and Bloch vectors.
From Eq. (\ref{chProj:eq:BlochSphMap}), the general expression for $d \rho: T_{\Vx} \R^3 \rightarrow T_{\rho(\Vx)} \man{S}(2)$ is
\begin{equation}\label{chProj:eq:drhoGeneral}
    d \rho = \half a^i(\Vx)\, \sigma_i\, d x^i.
\end{equation}
To see that this is indeed a map the two tangent spaces we can simply calculate its action on the basis vector $\dydx{}{x^j}$
\begin{equation}
    d \rho \left( \dydx{}{x^j} \right) = \sum_i \half a^i(\Vx)\, \sigma_i d x^i \left( \dydx{}{x^j} \right) = \sum_i \half a^i(\Vx)\, \sigma_i \, \delta^i_j = \half a^j (\Vx) \sigma_j
\end{equation}
where there is no sum in the final expression.  This is clearly in the tangent space $T_{\rho(\Vx)} \man{S}(2)$ as it is proportional to $D_j = \half \sigma_j$.   Our ultimate goal is then to solve for the coefficients $a^i(\Vx)$.

Any traceless matrix $2 \times 2$ matrix can be written as a linear combination of Pauli matrices.  As both the commutator $[H,\, \rho]$ and the map $\mathcal{D}[L](\rho)$ are traceless, both of these operations have some expansion coefficient in terms of the Pauli matrices.  Sec. \ref{chProj:sec:BlochSphereManifold} found that the tangent space $T_{\rho}\man{S}(2)$ is also spanned by the Pauli matrices, meaning that $-i [H,\, \rho]$ and $\mathcal{D}[L](\rho)$ are vectors in this space.  Thus finding the coefficients $a^i(\Vx)$ simply comes to projecting these maps onto the basis vectors $D_i$.


Sec. \ref{chProj:sec:Metrics} established that the general projection map $\Pi_{\man{N}}$ can be written as Eq. (\ref{chProj:eq:projMap}), in terms of the metric and its inverse $g^{ij}(\Vx)$.  We are able to write $d\rho(t)$ as
\begin{equation}\label{chProj:eq:qubitBlochVectorMasterDifferential}
    d \rho(\Vx(t)) = - i\, g^{i j}(\Vx) \, \inprod{D_i}{ [ H, \rho]}_{\rho} D_j dt +  g^{i j}(\Vx)\, \inprod{D_i}{\mathcal{D}[L](\rho) }_{\rho} D_j dt.
\end{equation}
But as this is a differential with respect to $dt$ and not $dx^j$ we can define the differentials for the time-dependent coordinates $\set{x^j(t)}$
\begin{equation}
    d x^j = -i\, g^{i j}(\Vx) \, \inprod{D_i}{ [ H, \rho] }_{\rho} dt +  g^{ij}(\Vx)\, \inprod{D_i}{ \mathcal{D}[L](\rho)}_{\rho}  dt,
\end{equation}
meaning that
\begin{equation}
    d \rho = D_j\, dx^j.
\end{equation}

\section{Projections in the tensor product submanifold}

Our ultimate goal is to form a projection from a general state over $n$ qubits, to the closest $n$-fold tensor product of a single qubit state.  We will define $\man{P}$ to be the submanifold of $\man{S}(2^n)$ which describes the space of all states of the form $\rho(\Vx)^{\otimes n}$.  This space has the simple parameterization $ \varrho: \man{B} \subset \R^3 \rightarrow \man{P} \subset \man{S}(2^n)$ such that
\begin{equation}\label{chProj:eq:rhoOtimesN}
    \varrho(\Vx) \define \rho(\Vx)^{\otimes n} = \frac{1}{2^n}\left(\ident + \Vx \cdot \V{\sigma} \right)^{\otimes n}.
\end{equation}

We also need to identify the tangent spaces for each point in the submanifold.  Because of the linear nature of the one qubit map $\rho$ the directional derivative of $\varrho(\Vx)$ with respect to $\V{y}$ is simply
\begin{equation}
    D_{\V{y}} = \left.\dydx{}{\lambda} \varrho(\Vx + \lambda \V{y})\right|_{\lambda =0}.
\end{equation}
A derivative acting on a tensor product must obey the Leibnitz rule.   The directional derivative of $\rho(\Vx)^{\otimes\, n}$ in the direction $\V{y}$ must then be equal to
\begin{equation}
D_{\V{y}}(\varrho(\Vx)) = \left.\dydx{}{\lambda} \rho(\Vx + \lambda \V{y})^{\otimes n} \right|_{\lambda =0} = \sum_{i = 1}^n\rho(\Vx)^{\otimes\, i -1} \otimes \frac{1}{2}\V{y}\cdot\V{\sigma} \otimes \rho(\Vx)^{\otimes\, n-i}.
\end{equation}

For the single qubit, the directional derivative was uniform over the manifold, which implied the Euclidean geometry for our simple metric.  For multiple qubits, this is no longer the case, which implies that  $\man{P}$ has a richer geometry.
With a slight abuse of notation, the basis vector associate with the coordinate $x^i$, evaluated at the state $\rho(\Vx)^{\otimes\, n}$ will be notated $D_i(\Vx)$ and is given by
\begin{equation}\label{chProj:eq:tangentVectorCollective}
    D_{i}(\Vx) = \sum_{j = 1}^n \rho(\Vx)^{\otimes\, j -1} \otimes \half \sigma_i \otimes \rho(\Vx)^{\otimes\, n-j}.
\end{equation}
The tangent space at $\varrho(\Vx)$ is then
\begin{equation}
    T_{\varrho(\Vx)}\man{P} = \lspan{ D_{i}(\Vx)\, :\ i = x, y, z}.
\end{equation}

The metric on $\man{P}$ induced from the trace inner product is now easily calculated.  The product of the two basis vectors $D_i$ and $D_j$ is equal to
\begin{equation}
\begin{split}\label{chProj:eq:DiDj}
D_i D_j =&\, \sum_{p, q = 1}^n (\rho^{\otimes\, p -1} \otimes \half \sigma_i \otimes \rho^{\otimes\, n-p} ) (\rho^{\otimes\, q -1} \otimes \half \sigma_j \otimes \rho^{\otimes\, n-q} ) \\
  =&\, \sum_{p = q = 1}^n  {\rho^2}^{\otimes\, p -1} \otimes \tfrac{1}{4} \sigma_i\, \sigma_j \otimes {\rho^2}^{\otimes\, n-p} \\
  &+\,\sum_{q > p = 1}^n  {\rho^2}^{\otimes\, p -1}\otimes \half \sigma_i \rho \otimes {\rho^2}^{\otimes\,q- p -1} \otimes \half \rho \sigma_j \otimes {\rho^2}^{\otimes\, n-q} \\
  &+\,\sum_{q < p = 1}^n {\rho^2}^{ \otimes\, p -1}\otimes \half \rho \sigma_j \otimes {\rho^2}^{\otimes\,p- q-1}\,\otimes \half \sigma_i \rho  \otimes {\rho^2}^{ \otimes\, n-q}.
  \end{split}
\end{equation}
The metric coefficient is then
\begin{equation}\label{chProj:eq:MetricTensorProduct}
  \begin{split}
    \inprod{D_i}{D_j}_{\varrho(\Vx)} =&\, \Tr(D_{i}\, D_{j}\,)\\
        =&\,\frac{n}{4}\, \Tr(\rho^2)^{n-1}\, \Tr(\,\sigma_i\, \sigma_j\,) +
            \frac{n(n-1)}{4} \Tr(\rho^2)^{n-2}\, \Tr(\rho\, \sigma_i)\, \Tr(\rho\, \sigma_j)\\
        =&\,\frac{n}{2^n} \left(1 + \abs{\Vx}^2 \right)^{n-1}\, \delta_{ij}  +
            \frac{n(n-1)}{2^n} \left(1 + \abs{\Vx}^2 \right)^{n-2} \, x^k x^\ell\, \delta_{k i} \delta_{\ell j}.
  \end{split}
\end{equation}

We will often need to calculate the product between several collective operators and then take the trace.  While Eq. (\ref{chProj:eq:DiDj}) has a distinct ordering to the tensor products, resulting in the two sums $p < q$ and $q < p$, upon taking the trace this order becomes irrelevant.  Thus, there are only two relevant terms: $p = q$ and $p \ne q$.

\subsection{The metric in spherical coordinates}
The metric as given by Eq. (\ref{chProj:eq:MetricTensorProduct}) has a simple form when written in spherical coordinates.  In terms of the Cartesian basis vectors $\set{\Ve_x, \Ve_y, \Ve_z}$ the standard spherical basis vectors are defined as,
\begin{equation}\label{chProj:eq:sphericalBasis}
  \begin{split}
    \Ve_r =&\, \sin \theta\, \cos \phi\, \Ve_x + \sin \theta\, \sin \phi\, \Ve_y +  \cos \theta\, \Ve_z\\
    \Ve_\theta =&\, \cos \theta\, \cos \phi\, \Ve_x + \cos \theta\, \sin \phi\, \Ve_y -  \sin \theta\, \Ve_z\\
    \Ve_\phi =&\, - \sin \phi\, \Ve_x + \cos \phi\, \Ve_y.
  \end{split}
\end{equation}
In analogy, we will define the associated tangent vectors,
\begin{equation}\label{chProj:eq:sphericalTangents}
  \begin{split}
    D_r(\Vx) =&\, \sin \theta\, \cos \phi\, D_x(\Vx) + \sin \theta\, \sin \phi\, D_y(\Vx) +  \cos \theta\, D_z(\Vx)\\
    D_\theta(\Vx) =&\, \cos \theta\, \cos \phi\, D_x(\Vx) + \cos \theta\, \sin \phi\, D_y(\Vx) -  \sin \theta\, D_z(\Vx)\\
    D_\phi(\Vx) =&\, - \sin \phi\, D_x(\Vx) + \cos \phi\, D_y(\Vx).
  \end{split}
\end{equation}
When $\Vx$ is in the subset $\set{\Vx \in \man{B}\, : \, (0 < r < 1,\, 0 < \theta < \pi ,\, 0 < \phi < 2 \pi)}$, these vector fields form a perfectly valid basis for each tangent space $T_{\varrho(\Vx)} \man{P}$.

It will also be convenient to define ``spherical'' Pauli matrices,
\begin{equation}\label{chProj:eq:sphericalPauli}
  \begin{split}
    \sigma_r \define&\, \sin \theta\, \cos \phi\, \sigma_x + \sin \theta\, \sin \phi\, \sigma_y +  \cos \theta\, \sigma_z\\
    \sigma_\theta \define&\, \cos \theta\, \cos \phi\, \sigma_x + \cos \theta\, \sin \phi\, \sigma_y - \sin \theta\, \sigma_z\\
    \sigma_\phi \define&\, -  \sin \phi\, \sigma_x + \cos \phi\, \sigma_y.
  \end{split}
\end{equation}
These operators obey the usual properties associated with Pauli matrices, in that for $i,j, k \in \set{ r, \theta, \phi}$
\begin{subequations}
\begin{align}
    \Tr(\sigma_i) &= 0\\
    \Tr(\sigma_i\,\sigma_j) &= 2 \delta_{ij} \\
    [\sigma_i,\, \sigma_j] &=  i\, \varepsilon_{ijk}\, 2 \sigma_k\\
    (\sigma_i \sigma_j + \sigma_j \sigma_i) &= \delta_{ij}\, 2 \ident.
\end{align}
\end{subequations}
Furthermore, we have that
\begin{equation}
    D_i(\Vx) = \sum_j \rho(\Vx)^{\otimes j-1}\otimes \half \sigma_i \otimes \rho(\Vx)^{\otimes n-j}
\end{equation}
for both Cartesian and spherical bases.
And the state $\rho(\Vx)$ can now be written as
\begin{equation}\label{chProj:eq:rhoSpherical}
    \rho( \Vx ) = \half\left(\ident + r\, \sigma_r \right).
\end{equation}

We can now use the fact that the Pauli matrices are orthogonal, and the fact that the state $\rho$ is now orthogonal to $\sigma_\theta$ and $\sigma_\phi$ to evaluate the inner product between the spherical tangent vectors, and thus write the metric as a matrix in spherical coordinates. From the general expression
\begin{equation}
    \inprod{D_i}{D_j} = \tfrac{1}{4} n \Tr(\rho^2)^{n-1}\, \Tr(\sigma_i \sigma_j)  +
           \tfrac{1}{4}  n(n-1)\Tr(\rho^2)^{n-2}\, \Tr(\sigma_i \rho)\, \Tr(\sigma_j \rho),
\end{equation}
we have
\begin{equation}\label{chProj:eq:MetricTensorSpherical}
\begin{split}
    \inprod{D_r}{ D_r} =&\, \frac{n}{2^n} \left(1 + r^2 \right)^{n-1}\frac{1 + n r^2}{1 + r^2},\\
    \inprod{D_\theta}{ D_\theta} =&\, \frac{n}{2^n} \left(1 + r^2 \right)^{n-1},\\
    \inprod{D_\phi}{ D_\phi} =&\, \frac{n}{2^n} \left(1 + r^2 \right)^{n-1},
\end{split}
\end{equation}
and
\begin{equation}
   \inprod{D_r}{D_\theta} =\, \inprod{D_r}{D_\phi} = \inprod{D_\theta}{D_\phi} = 0.
\end{equation}

As a matrix, the metric in spherical coordinates is given by
\begin{equation}\label{chProj:eq:metricMatrixSpherical}
    G(\Vx) = \frac{n}{2^n} \left(1 + r^2 \right)^{n-1} \left(
       \begin{array}{ccc}
         \frac{1 + n r^2}{1 + r^2} & 0 & 0 \\
         0 &1 & 0 \\
         0 & 0 & 1
       \end{array}
     \right)
\end{equation}
and its inverse is
\begin{equation}\label{chProj:eq:metricInverseMatrixSpherical}
    G^{-1}(\Vx) = \frac{2^n}{n} \left(1 + r^2 \right)^{-(n-1)} \left(
       \begin{array}{ccc}
         \frac{1 + r^2}{1 + n r^2} & 0 & 0 \\
         0 & 1 & 0 \\
         0 & 0 & 1
       \end{array}
     \right).
\end{equation}
Notice that when $n = 1$ we recover the simple Euclidean metric of $g_{ij} = \half\, \delta_{ij}$.

\subsection{Calculating collective operator inner products\label{chProj:sec:CollectiveProjections}}
This section contains the detailed calculations necessary for projecting the various components of the conditional and unconditional master equations onto the space of identical separable states.  We will first derive the projection for a general conditional master equation of the form
\begin{equation}
        d \rho_\tot = -i [ H_\tot,\, \rho_\tot] dt + \mathcal{D}[L_\tot](\rho_\tot) dt + \mathcal{I}_c[L_\tot] (\rho_\tot) dt +  \mathcal{H}[L_\tot](\rho_\tot)\circ dw_t.
\end{equation}
The subscript $_\tot$ is used to specify that these operators are operators on the total Hilbert space consisting of $N$ particles.  Any single particle operator $A$, acting on the $n^{th}$ particle of the ensemble, is denoted by $A^{(n)}$ and is given by the tensor product,
\begin{equation}
    A^{(n)} \define \ident^{\otimes n -1}\otimes A \otimes \ident^{\otimes N -n}.
\end{equation}

The fundamental assumption for this derivation is that the operators $H_\tot$ and $L_\tot$, act independently and identically on each each qubit and may be written as
\begin{subequations}
\begin{align}\label{chProj:eq:collectiveOperators}
    H_\tot =&\, \sum_{n = 1}^{N} H^{(n)} = \sum_{n = 1}^{N}\ident^{\otimes n -1}\otimes H \otimes \ident^{\otimes N -n}\\
    L_\tot =&\, \sum_{n = 1}^{N} L^{(n)} = \sum_{n = 1}^{N}\ident^{\otimes n -1}\otimes L \otimes \ident^{\otimes N -n}.
\end{align}
\end{subequations}
Furthermore, the tangent vectors $D_i$, for the single particle state $\rho$ are
\begin{equation}
    D_{i} = \sum_{n = 1}^N \rho^{\otimes\, n -1} \otimes \half \sigma_i \otimes \rho^{\otimes\, N-n}.
\end{equation}

In projecting the collective master equation onto the identical product states, we will need to calculate the product of up to three collective operators and then take the trace.  Each collective operator is composed of a sum over single particle operator, each acting on $n^{th}$ member.  When taking the product of sums there will be $N$ terms where both single particle operators act on the same subsystem, as well as $N(N-1)$ terms where the constituent operators act on different systems.

The simplest case is when there are no collective operators \emph{i.e.}, simply calculating the overlap between $D_i$ and the state $\varrho$.  This is equal to
\begin{equation}\label{chProj:eq:Dirho}
\begin{split}
    \inprod{D_i}{\varrho} =&\, \sum_{n = 1}^N \Tr \left( (\,\rho^{\otimes\, n -1} \otimes \half \sigma_i \otimes \rho^{\otimes\, N-n} )\, \varrho \right)\\
    =&\, N \Tr( \rho^2)^{N-1}\, \Tr(\half \sigma_i\, \rho).
\end{split}
\end{equation}
The next step up in complexity is to include a single collective operator, $A_\tot$.   This requires two sums, the sum from $D_i$ and the sum from $A_\tot$.  We then have
\begin{equation}\label{chProj:eq:DiArho}
    \begin{split}
        \inprod{D_i}{A_\tot\,\varrho} =&\,\sum_{n, m = 1}^N \Tr \left( \left( \rho^{\otimes\, m -1} \otimes\half \sigma_i \otimes \rho^{\otimes\, N-m}\right)  A^{(n)} \rho^{\otimes N} \right)\\
        =&\,\sum_{n = m = 1}^N \Tr \left(  {\rho^2}^{ \otimes\, m -1} \otimes \half \sigma_i\, A \rho  \otimes {\rho^2}^{ \otimes\, N-m} \right) \\
        &+\,\sum_{n <  m = 1}^N \Tr \left(  {\rho^2}^{ \otimes\, n -1} \otimes \rho A \rho \otimes {\rho^2}^{ \otimes\, m-n-1} \otimes \half \sigma_i\, \rho \otimes {\rho^2}^{ \otimes\, N -m} \right) \\
        &+\,\sum_{n >  m = 1}^N \Tr \left(  {\rho^2}^{ \otimes\, m -1} \otimes \half \sigma_i\, \rho \otimes {\rho^2}^{ \otimes\, n-m-1} \otimes \rho A \rho \otimes {\rho^2}^{ \otimes\, N -n} \right)\\
        =&\,N \Tr(\rho^2)^{N-1} \Tr( \half \sigma_i\,  A \rho) + N ( N - 1 ) \Tr(\rho^2 )^{N -2}\,  \Tr( \half \sigma_i\,\rho )\, \Tr( A\, \rho^2).
    \end{split}
\end{equation}
The cyclic property of the trace shows us that upon switching the order of $A_\tot$ and $\varrho$, (\emph{i.e.} to instead calculate $\inprod{D_i}{\varrho\,A_\tot}$) the second term will be left unchanged, so
\begin{equation}\label{chProj:eq:DirhoA}
    \inprod{D_i}{\varrho\, A_\tot} =\,N \Tr(\rho^2)^{N-1} \Tr( \half \sigma_i\, \rho A) + N ( N - 1 ) \Tr(\rho^2 )^{N -2}\,  \Tr( \half \sigma_i\,\rho )\, \Tr( A\, \rho^2).
\end{equation}

When calculating the projection of the dissipator terms, we need to calculate the product of two collective operators, which will have a triple sum.  For two collective operators $A_\tot$ and $B_\tot$ we have,
\begin{align}
    \inprod{D_i}{A_\tot\,B_\tot\, \varrho } =&\,\sum_{n, m, l
     = 1}^N \Tr \left( \left( \rho^{\otimes\, n -1} \otimes\half \sigma_i \otimes \rho^{\otimes\, N-n}\right) A^{(m)}\, B^{(l)}\,\rho^{\otimes N} \right).
\end{align}
The previous result shows us that there will be five distinct terms,  for cases where $n = m = l$, $n \ne m = l$, $n = m \ne l$, $n = l \ne m$, and  $n \ne m \ne l$.  This expression then simplifies to
\begin{equation}\label{chProj:eq:DiABrho}
    \begin{split}
    \inprod{D_i}{A_\tot\,B_\tot\, \varrho } =&\, N \Tr(\rho^2)^{N-1} \Tr(\half \sigma_i\, A\, B\, \rho)\\
    &+\, N ( N - 1 ) \Tr(\rho^2 )^{N -2}\,\Tr(\half \sigma_i\, \rho )\, \Tr(  \rho\, A\, B\, \rho )\\
    &+\, N ( N - 1 ) \Tr(\rho^2 )^{N -2}\, \Tr(\half \sigma_i A \rho)\, \Tr(\rho\, B\, \rho )\\
    &+\, N ( N - 1 ) \Tr(\rho^2 )^{N -2}\,\Tr(\half \sigma_i B \rho )\, \Tr( \rho\, A\, \rho )\\
    &+\, N ( N - 1 )(N-2) \Tr(\rho^2 )^{N -3}\,\Tr( \half \sigma_i \,\rho )\,\Tr(\rho\,  A\, \rho )\, \Tr(\rho\, B\,  \rho ).
    \end{split}
\end{equation}
The terms $\Tr(\rho X \rho)$ can be simplified to $\Tr(X \rho^2)$ but were left to make it more explicit.
The order of the collective operators can be exchanged, but this won't effect the five term structure.  Thus we calculate that
\begin{equation}\label{chProj:eq:DiArhoB}
    \begin{split}
    \inprod{D_i}{A_\tot\, \varrho\, B_\tot } =&\, N \Tr(\rho^2)^{N-1} \Tr(\half \sigma_i\, A\, \rho\, B)\\
    &+\, N ( N - 1 ) \Tr(\rho^2 )^{N -2}\,\Tr(\half \sigma_i\, \rho )\, \Tr(  \rho\, A\,\rho\, B)\\
    &+\, N ( N - 1 ) \Tr(\rho^2 )^{N -2}\, \Tr(\half \sigma_i A\, \rho)\, \Tr( B\, \rho^2 )\\
    &+\, N ( N - 1 ) \Tr(\rho^2 )^{N -2}\,\Tr(\half \sigma_i \rho\, B)\, \Tr(A\, \rho^2 )\\
    &+\, N ( N - 1 )(N-2) \Tr(\rho^2 )^{N -3}\,\Tr( \half \sigma_i \,\rho )\,\Tr(A\, \rho^2 )\, \Tr(B\, \rho^2 )
    \end{split}
\end{equation}
and
\begin{equation}\label{chProj:eq:DirhoAB}
    \begin{split}
    \inprod{D_i}{\varrho\,A_\tot\,B_\tot } =&\, N \Tr(\rho^2)^{N-1} \Tr(\half \sigma_i\,\rho\, A\, B)\\
    &+\, N ( N - 1 ) \Tr(\rho^2 )^{N -2}\,\Tr(\half \sigma_i\, \rho )\, \Tr(  A\, B\, \rho^2 )\\
    &+\, N ( N - 1 ) \Tr(\rho^2 )^{N -2}\, \Tr(\half \sigma_i\rho A )\, \Tr(B\, \rho^2 )\\
    &+\, N ( N - 1 ) \Tr(\rho^2 )^{N -2}\,\Tr(\half \sigma_i\rho B )\, \Tr(A\, \rho^2 )\\
    &+\, N ( N - 1 )(N-2) \Tr(\rho^2 )^{N -3}\,\Tr( \half \sigma_i \,\rho )\,\Tr( A\, \rho^2 )\, \Tr(B\,  \rho^2 ).
    \end{split}
\end{equation}

The final two calculations involving collective operators are the expectation values, $\expect{A_\tot}$ and $\expect{A_\tot\, B_\tot}$.  They are
\begin{equation}\label{chProj:eq:Arho}
    \expect{A_\tot} =\Tr(A_\tot \varrho)= N  \Tr(A\,\rho)
\end{equation}
and
\begin{equation}\label{chProj:eq:ABrho}
 \expect{A_\tot B_\tot} =\Tr(A_\tot B_\tot\,\varrho) = N  \Tr(A\, B\,\rho) + N(N-1) \Tr(A\, \rho) \Tr(B\,\rho).
\end{equation}

The general expressions in Eqs. (\ref{chProj:eq:Dirho}-\ref{chProj:eq:ABrho}) are all we need in order to calculate all of the terms in the conditional master equation and a tangent vector $D_i$.  Starting with the Hamiltonian commutator,  Eq. (\ref{chProj:eq:DiArho}) and its permutated version gives,
\begin{equation}\label{chProj:eq:DiHamTerm}
        \inprod{D_i}{[H_\tot,\,\varrho]} = N \Tr(\rho^2)^{N-1} \Tr( \half \sigma_i\,  [H,\, \rho] ).
\end{equation}

For the dissipator term, substituting in to Eq. (\ref{chProj:eq:DiABrho}-\ref{chProj:eq:DirhoAB}) with the appropriate collective operator $L_\tot$ or $L_\tot^\dag$,
\begin{equation}\label{chProj:eq:dissipatorTerm}
    \begin{split}
    \inprod{D_i}{\mathcal{D}[L_\tot]( \varrho)} =&\, \Tr \left( D_i\, ( L_\tot \varrho L_\tot^\dag  - \half L_\tot^\dag L_\tot \varrho - \half \varrho L_\tot^\dag L_\tot )\,\right)\\
    =&\, N \Tr(\rho^2)^{N-1} \Tr(\half \sigma_i\, \mathcal{D}[L](\rho)\, ) \\
    &+\, N ( N - 1 ) \Tr(\rho^2 )^{N -2}\, \Tr(\half \sigma_i\, \rho)\, \Tr(\rho\, \mathcal{D}[L](\rho) )\\
    &+\, N ( N - 1 ) \Tr(\rho^2 )^{N -2}\, \Tr(\tfrac{1}{4} \sigma_i\, [L,\,\rho] )\, \Tr( L^\dag \rho^2 )\, \\
    &+\, N ( N - 1 ) \Tr(\rho^2 )^{N -2}\, \Tr(\tfrac{1}{4} \sigma_i\, [\rho,\, L^\dag])\,\Tr( L \rho^2 ).
    \end{split}
\end{equation}
Note that final lines in Eqs. (\ref{chProj:eq:DiABrho}-\ref{chProj:eq:DirhoAB}) are all equal and hence cancel in the dissipator term.

For the conditioning map $\mathcal{H}[L_\tot]$  we again need Eq. (\ref{chProj:eq:DiArho}), with the addition of Eq. (\ref{chProj:eq:Dirho}) and the single operator expectation value Eq. (\ref{chProj:eq:Arho}).  Its overlap then reduces to
\begin{equation}\label{chProj:eq:DiConditioning}
    \begin{split}
    \inprod{D_i}{\mathcal{H}[L_\tot]( \varrho)} =&\, \Tr\left( D_i\, \left( L_\tot \varrho + \varrho L_\tot^\dag - \expect{L_\tot + L_\tot^\dag} {}\, \varrho\, \right) \right)\\
    =&\,N \Tr(\rho^2)^{N-1}\, \Tr( \half \sigma_i (L \rho + \rho L^\dag))\\
     &+ N ( N - 1 ) \Tr(\rho^2 )^{N -2}\, \Tr( (L + L^\dag)\, \rho^2)\, \Tr( \half \sigma_i\,\rho )\\
     &- N^2 \Tr(\rho^2)^{N-1} \Tr( (L +L^\dag)\rho )\, \Tr( \half \sigma_i \rho ) .
    \end{split}
\end{equation}

Finally for the general It\={o} correction map Eq. (\ref{chMath:eq:itoCorrectionMap}),
\begin{equation}\label{chProj:eq:DiItoCorrect}
\begin{split}
   \inprod{D_i}{\mathcal{I}_c[L_\tot](\varrho)} =&\, N \Tr(\rho^2)^{N-1} \Tr \left( \half \sigma_i\, ( L \rho L^\dag  + \half L^2 \rho + \half \rho L^{\dag\, 2}  )\,\right)\\
   &+N(N-1)\Tr(\rho^2)^{N-2} \Tr(\half \sigma_i\, \rho)\,\Tr \left( \rho ( L \rho L^\dag  + \half L^2 \rho + \half \rho L^{\dag\, 2} )  \right) \\
   &+N(N-1)\Tr(\rho^2)^{N-2} \Tr(\half \sigma_i (L \rho + \rho L^\dag)\, ) \Tr( (L + L^\dag) \rho^2) \\
   &-N^2\Tr(\rho^2)^{N-1} \Big( \expect{L^\dag L + \half L^2 + \half L^{2\, \dag}} \\
   &\qquad \qquad \qquad\qquad \left.+ \half (N - 1) \expect{L + L^\dag}^2 \right) \Tr(\half \sigma_i\, \rho)\\
   &-N \expect{L + L^\dag}\, \inprod{D_i}{\mathcal{H}[L_\tot]( \varrho)}.
\end{split}
\end{equation}

These expressions simplify, in the case of a Hermitian operator $L$ acting on a single qubit state.   Specifically when $L = J_z$ and $H$ is the control Hamiltonian in Eq. (\ref{chProj:eq:controlHt}), Eqs. (\ref{chProj:eq:DiHamTerm}-\ref{chProj:eq:DiItoCorrect}) simplify to
\begin{equation}\label{chProj:eq:DiControlHt}
        \inprod{D_i}{[H_\tot,\,\varrho]} = \tfrac{1}{4} N \Tr(\rho^2)^{N-1} \Tr \left( \sigma_i\, [f^j(t) \sigma_j,\, \rho] \right),
\end{equation}
\begin{equation}\label{chProj:eq:DiLiouJz}
    \begin{split}
    \inprod{D_i}{\mathcal{D}[J_z]( \varrho)} =&\, \tfrac{1}{8} N  \Tr(\rho^2)^{N-1}\left(\, \Tr \left( \sigma_i\, \sigma_z \rho\, \sigma_z\,\right) - N \Tr ( \sigma_i\, \rho )\, \right)\\
    &+\,\tfrac{1}{8} N ( N - 1 ) \Tr(\rho^2 )^{N -2}\, \Tr(\rho \sigma_z \rho \sigma_z  )\,\Tr( \sigma_i\, \rho),
    \end{split}
\end{equation}
\begin{equation}\label{chProj:eq:DiHJz}
    \begin{split}
    \inprod{D_i}{\mathcal{H}[J_z]( \varrho)} =&\, \tfrac{1}{4} N  \Tr(\rho^2)^{N-1}\, \Tr \left( \sigma_i\, (\sigma_z \rho + \rho\,  \sigma_z)\,\right)\\
     &- \half N^2 \Tr(\rho^2)^{N-1} \Tr ( \sigma_z\, \rho )\, \Tr ( \sigma_i\, \rho ) \\
    &+ \tfrac{1}{2} N ( N - 1 ) \Tr(\rho^2 )^{N -2}\, \Tr(\sigma_z \rho^2 )\, \Tr( \sigma_i\, \rho),
    \end{split}
\end{equation}
\begin{equation}\label{chProj:eq:DiICJz}
  \begin{split}
    \inprod{D_i}{ \mathcal{I}_c[ J_z](\varrho ) } =&\, \inprod{D_i}{ \mathcal{D}[J_z]( \varrho)}\\
    &+ \tfrac{1}{4} N ( N - 1 ) \Tr(\rho^2 )^{N -2}\, \Tr(  \sigma_z \rho^2 )\, \Tr \left( \sigma_i\, (\sigma_z \rho + \rho  \sigma_z)\,\right)\\
    &- \tfrac{1}{4} N^2(N-1) \Tr(\rho^2)^{N-1} \Tr(\sigma_z\, \rho)^2\Tr( \sigma_i  \,\rho)\\
    &+ \tfrac{1}{4} N ( N - 1 )(N-2) \Tr(\rho^2 )^{N -3}\,\Tr(  \sigma_z \rho^2 )^2\, \Tr(  \sigma_i\,\rho )\\
    &- N \Tr(\sigma_z\, \rho) \inprod{D_i}{\mathcal{H}[J_z]( \varrho) }.
  \end{split}
\end{equation}
It is worth noting that in Eqs. (\ref{chProj:eq:DiLiouJz}-\ref{chProj:eq:DiICJz}), a majority of the terms are proportional to $\Tr(  \sigma_i\,\rho )$.  If the state has zero expectation along the $\sigma_i$ axis, then the overlap with that tangent vector will be greatly simplified.    However, any qubit state (which is not completely mixed) has a Bloch vector pointing along \emph{some} axis, leaving the orthogonal axes with zero expectation.  When the state happens to align with the $D_i$, \emph{i.e.} the Bloch vector is $\Vx = r \Ve_i$, $\Tr(  \sigma_i\,\rho ) = r$.  This suggests that these terms may simplify for any state, if we choose to work in \emph{spherical} coordinates.

\subsection{The spherical projection of the CME}

With the metric inverse in Eq. (\ref{chProj:eq:metricInverseMatrixSpherical}), and the inner product expressions in Eqs. (\ref{chProj:eq:DiHJz}-\ref{chProj:eq:DiICJz}) we can finally calculate the projection coefficients $h^i,\, l^i,\, c^i,\, b^i$.    To fully simplify the inner products we need the following relations, which are easily calculated:
\begin{subequations}\label{chProj:eq:TrRhoRelations}
\begin{equation}
\Tr(\sigma_i\, \rho) = \left\{
                        \begin{array}{cc}
                          r & \text{for } i = r \\
                          0 & \text{for } i = \theta \\
                          0 & \text{for } i = \phi
                        \end{array}
                      \right.,
\end{equation}
\begin{equation}\label{chProj:eq:siAntiSzrho}
\Tr(\sigma_i\, ( \sigma_z \rho + \rho \sigma_z) ) = \left\{
                        \begin{array}{cc}
                          2 \cos(\theta) & \text{for } i = r\\
                          -2 \sin(\theta) & \text{for } i = \theta \\
                          0 & \text{for } i = \phi
                        \end{array}
                      \right.,
\end{equation}
\begin{equation}\label{chProj:eq:siSzrhoSz}
\Tr( \sigma_i\,  \sigma_z\,\rho\,  \sigma_z) = \left\{
                        \begin{array}{cc}
                          r\cos(2 \theta)& \text{for } i = r \\
                          - r \sin(2 \theta) & \text{for } i = \theta \\
                          0& \text{for } i = \phi
                        \end{array}
                      \right.,
\end{equation}
\begin{equation}\label{chProj:eq:SzRhoSq}
\Tr( \sigma_z\, \rho) = \Tr( \sigma_z\, \rho^2) = r \cos(\theta),
\end{equation}
and
\begin{equation}\label{chProj:eq:rhoSzrhoSz}
\Tr( \rho\,  \sigma_z\, \rho\,  \sigma_z) = \half (1 + r^2 \cos(2\theta) ).
\end{equation}
\end{subequations}
The azimuthal symmetry of the problem is directly apparent in these expressions, as the $J_z$ projection carries no information about $\phi$.

Substituting the spherical Pauli matrices into Eq. (\ref{chProj:eq:DiHJz}), the Hamiltonian inner products simplify to    \footnotesize
\begin{equation}\label{chProj:eq:DiHamJzSp}
  \begin{split}
    \inprod{D_r}{ -i [H,\,\varrho ] } &= 0\\
    \inprod{D_\theta}{ -i [H,\,\varrho ] } &= \half n \Tr(\rho^2)^{n-1}\, r  \left(f^2(t)\, \cos\, \phi - f^1(t) \sin\, \phi\, \right)\\
    \inprod{D_\phi}{ -i [H,\,\varrho ] } &= \half n \Tr(\rho^2)^{n-1} \, r \left(f^3(t)\, \sin\, \theta - f^1(t)\, \cos\,\theta\, \cos\,\phi - f^2(t)\, \cos\,\theta\, \sin\, \phi \right).
  \end{split}
\end{equation}
\normalsize
Physically, applying magnetic fields to a spin ensemble cannot change the total magnetization, a fact that is confirmed by having the $D_r$ projection be zero.

The remaining projections we need to calculate are all based on the $J_z$ measurement operator and so contain no information about the $\phi$ coordinate.  This can be verified by substituting the $D_\phi$ tangent vector into Eqs. (\ref{chProj:eq:DiLiouJz}-\ref{chProj:eq:DiICJz}), which all evaluate to zero.

The $D_\theta$ projections are greatly simplified by the fact that $\Tr(\sigma_\theta\, \rho) = 0$.   Using Eq. (\ref{chProj:eq:siSzrhoSz}) we find that
\begin{equation}\label{chProj:eq:DthetaLiouJz}
    \inprod{D_\theta}{\mathcal{D}[J_z]( \varrho)} =- \tfrac{1}{8} n \Tr(\rho^2)^{n-1}\, r \sin(2 \theta).
\end{equation}
From Eq. (\ref{chProj:eq:siAntiSzrho}) the conditioning product reduces to
\begin{equation}\label{chProj:eq:DthetaHJz}
    \inprod{D_\theta}{\mathcal{H}[J_z]( \varrho)} = - \half n \Tr(\rho^2)^{n-1}\, \sin(\theta).
\end{equation}
And combing these two results into Eq. (\ref{chProj:eq:DiICJz}) we have
\begin{equation}\label{chProj:eq:DthetaICJz}
  \begin{split}
    \inprod{D_\theta}{ \mathcal{I}_c[ J_z](\varrho ) } =&\,- \tfrac{1}{8} n \Tr(\rho^2)^{n-1}\, r \sin(2 \theta)\\
    &- \tfrac{1}{4} n ( n - 1 ) \Tr(\rho^2 )^{n -2}\, r \sin(2 \theta)\\
    &+ \tfrac{1}{4} n^2 \Tr(\rho^2)^{n-1}\, r \sin(2 \theta).
  \end{split}
\end{equation}

Simplifying the $r$ projections is obviously a more complicated task.  However, the dissipator product with $D_r$ is not particularly more difficult than the $D_\theta$ product. By including the fact that $\Tr(\sigma_r \rho) = r$ and substituting Eqs. (\ref{chProj:eq:siSzrhoSz} and \ref{chProj:eq:rhoSzrhoSz}) into Eq. (\ref{chProj:eq:DiLiouJz}) we find
\begin{equation}\label{chProj:eq:DrLiouJz}
    \inprod{D_r}{\mathcal{D}[J_z]( \varrho)} = -\tfrac{1}{8} n \Tr(\rho^2)^{n-2} (1 + n r^2)\, r \sin^2(\theta).
\end{equation}
Evaluating the conditioning map requires Eqs. (\ref{chProj:eq:siSzrhoSz}) and (\ref{chProj:eq:SzRhoSq}) which reduce Eq. (\ref{chProj:eq:DiHJz}) to
\begin{equation}\label{chProj:eq:DrHJz}
\inprod{D_r}{\mathcal{H}[J_z]( \varrho)} = -\tfrac{1}{4} n \Tr(\rho^2)^{n-2} (1 + n r^2) (r^2 - 1 ) \cos(\theta).
\end{equation}
Finally when we combining these past results into the It\={o} correction product, Eq. (\ref{chProj:eq:DiICJz}) simplifies to
\begin{equation}
  \begin{split}\label{chProj:eq:DrICJz}
    \inprod{D_r}{ \mathcal{I}_c[ J_z](\varrho ) } =&\, -\tfrac{1}{8} n \Tr(\rho^2)^{n-2} (1 + n r^2)\, r \sin^2(\theta)\\
    &+ \tfrac{1}{2} n ( n - 1 ) \Tr(\rho^2 )^{n -2}\, r \cos^2(\theta) \\
    &- \tfrac{1}{4} n^2(n-1) \Tr(\rho^2)^{n-1} r^3 \cos^2(\theta)\\
    &+ \tfrac{1}{4} n ( n - 1 )(n-2) \Tr(\rho^2 )^{n -3}\,r^3 \cos^2(\theta)\\
    &+\tfrac{1}{4} n^2 \Tr(\rho^2)^{n-2} (1 + n r^2) (r^2 - 1 )r  \cos^2(\theta).
  \end{split}
\end{equation}

Having now simplified the the projections, we are able to include the inverse metric components in Eq. (\ref{chProj:eq:metricInverseMatrixSpherical}) to arrive at the proper projections.
The Hamiltonian projections are
\footnotesize
\begin{equation}\label{chProj:eq:HamProjections}
  \begin{split}
    h^r(\Vx, t) &= 0\\
    h^\theta(\Vx, t) &= g^{\theta \theta}\inprod{D_\theta}{ -i [H,\,\varrho ] } = f^2(t)\, r \cos\, \phi - f^1(t)\,r \sin\, \phi \\
    h^\phi(\Vx, t) &= g^{\phi \phi} \inprod{D_\phi}{ -i [H,\,\varrho ] } =  f^3(t)\, r \sin\, \theta - f^1(t)\,r \cos\,\theta\, \cos\,\phi - f^2(t)\, r \cos\,\theta\, \sin\, \phi.
  \end{split}
\end{equation}
\normalsize
The dissipator projections are
\begin{equation}\label{chProj:eq:LiouProjections}
  \begin{split}
    l^r(\Vx) &=  g^{r r}\inprod{D_r}{ \kappa\,\mathcal{D}[ J_z](\varrho ) }= -\half \kappa \, r \sin^2\theta\\
    l^\theta(\Vx) &= g^{\theta \theta}\inprod{D_\theta}{\kappa\, \mathcal{D}[ J_z](\varrho )  } = -\tfrac{1}{4} \kappa \, r \sin 2\theta \\
    l^\phi(\Vx) &=0.
  \end{split}
\end{equation}
The conditioning projections are
\begin{equation}\label{chProj:eq:StochProjections}
  \begin{split}
    b^r(\Vx) &= g^{r r}\inprod{D_r}{ \sqrt{\kappa}\,\mathcal{H}[ J_z](\varrho ) }= - \sqrt{\kappa}(1 - r^2) \cos \theta\\
    b^\theta(\Vx) &= g^{\theta \theta}\inprod{D_\theta}{\sqrt{\kappa}\, \mathcal{H}[ J_z](\varrho )  } = -\sqrt{\kappa} \sin \theta \\
    b^\phi(\Vx) &=0.
  \end{split}
\end{equation}
The It\={o} correction projections are
\begin{equation}\label{chProj:eq:ItoProjections}
  \begin{split}
    c^r(\Vx) &= g^{r r}\inprod{D_r}{\kappa\,  \mathcal{I}_c[ J_z](\varrho ) } = l^r(\Vx) +\kappa\,\alpha(r)\,  r \cos^2\theta\\
    c^\theta(\Vx) &= g^{\theta \theta}\inprod{D_\theta}{\kappa\,  \mathcal{I}_c[ J_z](\varrho )  } = l^\theta(\Vx) + \half  \kappa\, \beta(r)\, r \sin 2 \theta \\
    c^\phi(\Vx) &=0
  \end{split}
\end{equation}
where we defined the coefficients
\begin{equation}\label{chProj:eq:alphaCoeff}
\begin{split}
      \alpha(r) \define &\,  n (r^2-1 ) + \frac{2(n-1)}{(1 + n r^2)}
      - \frac{n (n - 1) (1 + r^2)}{2 (1 + n r^2)}\, r^2
      + \frac{2 (n - 2) (n - 1)}{(1 + r^2) (1 + n r^2)}\, r^2
\end{split}
\end{equation}
and
\begin{equation}\label{chProj:eq:betaCoeff}
    \beta(r) \define n - 2\frac{  n - 1}{1 + r^2}.
\end{equation}

To bring all of this together, we are reminded that the projected conditional master equation is
\begin{equation}
   \begin{split}
     d \rho_t = \,\left(\,  h^i + l^i - c^i\, \right) D_i \, dt + b^i\, D_i\circ dw_t.
   \end{split}
\end{equation}
Furthermore a general Stratonovich SDE is traditionally written as $d x_t = \tilde{A}(x_t)\, dt + B(x_t)\circ dw_t$.  To conform to this convention we will define the coefficients
\begin{equation}\label{chProj:eq:DetStratonovich}
   \begin{split}
    \tilde{a}^r(\Vx, t) \define&\, - \kappa\,\alpha(r)\,  r \cos^2\theta\\
    \tilde{a}^\theta(\Vx, t) \define&\, h^\theta(\Vx,t) - \half \kappa\,\beta(r)\, r \sin 2 \theta\\
    \tilde{a}^\phi(\Vx, t) \define&\,h^\phi(\Vx,t).
   \end{split}
\end{equation}
with the Hamiltonian projections $h^i(\Vx, t)$ defined in Eq. (\ref{chProj:eq:HamProjections}).
The projected conditional master equation is finally given by the Stratonovich SDE
\begin{equation}
   \begin{split}\label{chProj:eq:projectedCME}
     d\rho_t =\,\tilde{a}^i(\Vx, t) D_i(\Vx) \, dt + b^i(\Vx)\, D_i(\Vx) \circ dw_t
   \end{split}
\end{equation}
with the spherical tangent vectors defined in Eq. (\ref{chProj:eq:sphericalTangents}).

\section{The Projection Filter\label{chProj:sec:projectionFilter}}
From the projected conditional master equation in Eq. (\ref{chProj:eq:projectedCME}), we would like to find a closed set of easily simulated stochastic differential equations.  We now have a nonlinear matrix-valued SDE that propagates the closest identical product state to the exact conditional state.  A single qubit state is completely characterized by its Bloch vector, and so to simulate $\rho_t$ we need only find SDEs for the three Bloch components.

In the notation of quantum filtering theory \cite{bouten_introduction_2007}, the filter is a QSDE that propagates the conditional expectation of a given observable,
\begin{equation}\label{chProj:eq:filterDefinition}
\pi_t(X) \cong \Tr(\rho_t\, X).
\end{equation}
The filter is generally expressed as a differential, so that for the observable $X$ (on $n$ qubits), we have
\begin{equation}\label{chProj:eq:filterDifferential}
d\pi_t(X) = \Tr(d \rho_t\, X) = \tilde{a}^i(\Vx, t)\, \Tr(D_i(\Vx) X)\, dt + b^i(\Vx)\, \Tr(D_i(\Vx) X) \circ dw_t.
\end{equation}

The Bloch vector components can of course be identified by the expectation value of the Pauli operators.  For $n$ qubits, we also have the relation that $\Tr(J_i\, \varrho) = \frac{n}{2}\Tr(\sigma_i\, \rho) = \frac{n}{2}\, x^i,$  so to extract SDEs for the Cartesian Bloch components $x^i$ from $d\rho_t$, we simply need to calculate the filtering equations for the operators $2 J_i/n$.  Thus,
\begin{equation}\label{chProj:eq:dxDefinition}
d x^i_t = \frac{2}{n} \Tr(d \rho_t\, J_i) = \frac{2}{n}\left( \tilde{a}^\alpha(\Vx_t, t)\, \Tr(D_\alpha(\Vx_t) J_i )\, dt + b^\alpha(\Vx_t)\,\Tr(D_\alpha(\Vx_t) J_i ) \circ dw_t \right)
\end{equation}
for $\alpha \in \set{ r, \theta, \phi}$ and $i \in \set{ 1,2,3} $.  Explicit calculation shows that
\begin{equation} \label{chProj:eq:DiJiSp}
\begin{split}
    \Tr(D_r(\Vx) J_i ) &= \frac{n}{2}\left(\, \sin\theta\, \cos \phi\, \delta_{i\, 1} + \sin\theta\, \sin \phi\, \delta_{i\, 2} +  \cos \theta\, \delta_{i\, 3}\,\right)\\
    \Tr(D_\theta(\Vx) J_i ) &=\frac{n}{2}\left(\, \cos \theta\, \cos \phi\, \delta_{i\, 1} + \cos \theta\, \sin \phi\, \delta_{i\, 2} -  \sin \theta\, \delta_{i\, 3}\,\right)\\
    \Tr(D_\phi(\Vx) J_i ) &= \frac{n}{2}\left(\,-\sin \phi\, \delta_{i\, 1} + \cos  \phi\, \delta_{i\, 2}\,\right).
\end{split}
\end{equation}

These equations show how to find a mixed coordinate expression for the projection filter, \emph{i.e.} it expresses $dx$ in terms of the variables $r, \theta, \phi$. We would like to combine these results with the expressions for $\tilde{a}^\alpha$ and $b^\alpha$, Eqs. (\ref{chProj:eq:DetStratonovich} and \ref{chProj:eq:StochProjections}) to obtain deterministic and stochastic coefficients expressed in Cartesian coordinates.  We seek the functions, $\tilde{a}^i(\Vx, t)$ and $b^i(\Vx)$ such that
\begin{equation}\label{chProj:eq:dxSimple}
d x^i_t = \tilde{a}^i(\Vx_t, t)\, dt + b^i(\Vx_t)\,\circ dw_t.
\end{equation}
With the standard conversion between spherical and Cartesian coordinates, Eqs. (\ref{chProj:eq:HamProjections}, \ref{chProj:eq:DetStratonovich}, \ref{chProj:eq:StochProjections}, and \ref{chProj:eq:DiJiSp}) we can easily find these coefficients.

In Cartesian coordinates the deterministic Stratonovich coefficients are
\begin{equation}\label{chProj:eq:atildeCartesian}
  \begin{split}
    \tilde{a}^1(\Vx,t) =&\, f^2(t)\,x^3 - f^3(t)\, x^2 - \kappa\,\left(\, \alpha(r)  + \beta(r)\, \right) x^1\,\frac{(x^3)^2}{r^2},\\
    \tilde{a}^2(\Vx,t) =&\, f^3(t)\,x^1 - f^1(t)\, x^3 - \kappa\,\left(\, \alpha(r)  + \beta(r)\, \right) x^2\,\frac{(x^3)^2}{r^2},\\
    \tilde{a}^3(\Vx,t) =&\, f^1(t)\,x^2  -f^2(t)\, x^1 + \kappa\, \beta(r)\, x^3 - \kappa\,\left(\, \alpha(r)\,  + \beta(r)\, \right) x^3\, \frac{(x^3)^2}{r^2},
  \end{split}
\end{equation}
and the stochastic coefficients are
\begin{equation}\label{chProj:eq:StochCartesian}
  \begin{split}
    b^1(\Vx) &=  - \sqrt{\kappa}\, x^1\, x^3, \\
    b^2(\Vx) &=  - \sqrt{\kappa}\, x^2\, x^3, \\
    b^3(\Vx) &= \sqrt{\kappa}(1 - (x^3)^2).
  \end{split}
\end{equation}
Note that the functions $\alpha(r)$ and $\beta(r)$ actually only depend upon $r^2 = \norm{\Vx}^2$.

To complete the derivation we will convert these Stratonovich equations back to the It\={o} form.  The It\={o} correction for the multivariable $b^i$ coefficients is given by,
\begin{equation}\label{chProj:eq:multiVariableItoCorrection}
    \Delta\,a^i(\Vx) =  a^i(\Vx,t) -  \tilde{a}^i(\Vx,t) = \frac{1}{2} b^j(\Vx) \dydx{b^i(\Vx)}{x^j}.
\end{equation}
Substituting $b^{i}(\Vx)$ into this formula, the It\={o} corrections simplify to
\begin{equation}\label{chProj:eq:ICprojCartesian}
    \begin{split}
      \Delta\,a^1(\Vx) =&\, \kappa\, x^1 \left((x^3)^2 -\half\right) \\
      \Delta\,a^2(\Vx) =&\, \kappa\, x^2 \left((x^3)^2 -\half\right) \\
      \Delta\,a^3(\Vx) =&\, \kappa\, x^3 \left((x^3)^2 - 1 \right).
    \end{split}
\end{equation}
By adding these terms to the deterministic coefficients in Eq. (\ref{chProj:eq:DetStratonovich}) we find
\begin{equation}\label{chProj:eq:aCartesian}
  \begin{split}
    a^1(\Vx,t) =&\, f^2(t)\,x^3 - f^3(t)\, x^2 - \half \kappa\, x^1 + \kappa\, \gamma(r)\, x^1\, (x^3)^2,\\
    a^2(\Vx,t) =&\, f^3(t)\,x^1 - f^1(t)\, x^3 - \half \kappa\, x^2 + \kappa\, \gamma(r) \, x^2 \,(x^3)^2,\\
    a^3(\Vx,t) =&\, f^1(t)\,x^2  -f^2(t)\, x^1 + \kappa\, (\beta(r) - 1) \, x^3 - \kappa\, \gamma(r) (x^3)^3
  \end{split}
\end{equation}
were we defined the new coefficient function $\gamma(r)$ as
\begin{equation}\label{chProj:eq:gammaCoeff}
  \begin{split}
    \gamma(r) &\define 1 -  \frac{\alpha(r) + \beta(r)}{r^2} = (1 - r^2) \left( \frac{n\,(n+1)}{2\,(1 + n\, r^2)} -  \frac{1}{1 + r^2} \right)
  \end{split}
\end{equation}
and the $\beta(r)$ coefficient is given in Eq. (\ref{chProj:eq:betaCoeff}).  Note that for any $n$ and $0 \le r \le 1$, the coefficient $\gamma(r)$ is strictly nonnegative.  The zeros of this function however are two special cases, which we will discuss next.

If we substitute the stochastic coefficients $b^i(\Vx)$ into Eq. (\ref{chProj:eq:multiVariableItoCorrection}) the projection filtering equations are
\begin{equation}\label{chProj:eq:projectionFilterIto}
\begin{split}
    dx^1_t =&\, a^1(\Vx_t, t)\, dt - \sqrt{\kappa}\, x^1_t\, x^3_t\, dw_t\\
    dx^2_t =&\, a^2(\Vx_t, t)\, dt - \sqrt{\kappa}\, x^2_t\, x^3_t\, dw_t\\
    dx^3_t =&\, a^3(\Vx_t, t)\, dt + \sqrt{\kappa}(1 - (x^3_t)^2)\, dw_t.
\end{split}
\end{equation}

\subsection{Special cases for the projection filter\label{chProj:sec:SpecialProjectionFilters}}

There exist two very interesting special cases for the projection filter. The first is when we only have a single qubit, and the second is when the state is pure.
We have already shown how, when $n = 1$, the metric is simply a Euclidean metric.  (Up to a factor of a half.) Therefore, we expect the projection filtering equations to dramatically simplify for a single qubit.  The first thing to notice is that when $n = 1$ the $\beta(r)$ coefficient Eq. (\ref{chProj:eq:betaCoeff}) simplifies to
\begin{equation}
  \beta(r) = 1 \qquad \text{ for  } n = 1.
\end{equation}
Furthermore, for $n = 1$ the $\gamma(r)$ coefficient Eq. (\ref{chProj:eq:gammaCoeff}) simplifies to
\begin{equation}
  \gamma(r) = 0 \qquad \text{ for  } n = 1.
\end{equation}
When we evaluate the projection filter for pure states, we arrive at remarkably similar results.  In other words for any $n = 1, 2, \dots$ we also have
\begin{equation*}
    \beta(r = 1) = 1  \quad \text{ and } \quad  \gamma(r = 1) = 0.
\end{equation*}

These fantastic simplifications means that for a (possibly mixed) single qubit or \emph{any} pure multi-qubit state, the projection filtering equations are simply
\begin{equation} \label{chProj:eq:projectionFilterItoSingleQubit}
  \begin{split}
    dx^1_t =&\, \left( f^2(t)\,x^3_t - f^3(t)\, x^2_t - \half \kappa\, x^1_t\,\right)\, dt  - \sqrt{\kappa}\, x^1_t\, x^3_t\, dw_t\\
    dx^2_t =&\, \left( f^3(t)\,x^1_t - f^1(t)\, x^3_t - \half \kappa\, x^2_t\,\right)\, dt - \sqrt{\kappa}\, x^2_t\, x^3_t\, dw_t\\
    dx^3_t =&\, \left( f^1(t)\,x^2_t  -f^2(t)\, x^1_t\, \right)\, dt + \sqrt{\kappa}\, (1 - (x^3_t)^2)\, dw_t.
  \end{split}
\end{equation}
It is an interesting exercise to see that if one simply computed the Heisenberg picture filtering equations for the Pauli operators, $\pi_t(\sigma_i)$, one arrives at these vary same 3 coupled SDEs.

What the pure state evaluation says is that when $r = 1$, the dynamics of the separable system no-longer depends on the total number of qubits and evolve simply as $n$ identical copies of a single qubit state.  The only remaining dependence upon $n$ is in the innovations process, where we have the differential $d w_t \rightarrow dv_t = d y_t - \sqrt{\kappa}\, n\, x^3_t\, dt$, where $ dy_t$ is the integrated measurement current in time $[t, t + dt]$.

One finial question about the pure state projection filter is, ``Will the filter remain pure, once it becomes pure?''  In other words, is $r = 1$ a trap for the system so that if $r_s = 1$ for some time $s$, will $r_{t} =1$ for all $t \ge s$.  The physics of the situation dictates that it must, as with no sources of decoherence the state will purify and stay pure.  However, it is difficult to see that this is indeed the case simply by inspecting Eq. (\ref{chProj:eq:projectionFilterItoSingleQubit}).  Where this not the case, then it would likely indicate an error with the model.   Thankfully the answer is decidedly yes and can be seen in the spherical basis representation of $d \rho_t$, with the Stratonovich coefficients $\V{\tilde{a}}(\Vx, t)$ and $\V{b}(\Vx,t)$ given in Eqs. (\ref{chProj:eq:DetStratonovich}) and (\ref{chProj:eq:StochProjections}).  From these two equations we see that
\begin{equation}
    \tilde{a}^r(\Vx, t) = - \kappa\,\alpha(r)\,  r \cos^2\theta
\end{equation}
and that
\begin{equation}
  b^r(\Vx, t) = - \sqrt{\kappa}(1 - r^2) \cos \theta.
\end{equation}
Substituting in for $r = 1$ into Eq. (\ref{chProj:eq:alphaCoeff}), we find that $\alpha(r = 1) = 0$.  But this means that $\tilde{a}^r(\Vx, t)|_{r = 1} = 0$ and $ b^r(\Vx, t)|_{r = 1} = 0$.
Following this logic to its conclusion we find that
\begin{equation}
\begin{split}
   d \rho|_{r = 1} &= \tilde{a}^\theta(\Vx, t)|_{r = 1}\, D_{\theta} \, dt + b^\theta(\Vx, t)|_{r = 1}\, D_{\theta} \circ dw_t\\
   &\quad +\tilde{a}^\phi(\Vx, t)|_{r = 1}\, D_{\phi} \, dt + b^\phi(\Vx, t)|_{r = 1}\, D_{\phi} \circ dw_t .
\end{split}
\end{equation}
In other words, by evaluating $d \rho $ at $r = 1$ we see that it is independent of the tangent vector $D_r$ and thus will remain on the submanifold of identical separable states defined by $r = 1$.

\section{Simulations and Performance \label{chProj:sec:Simulations} }

The primary purpose for deriving the projection filter was to find a reduced dimensional description of a system of qubits undergoing a continuous measurement of the collective angular momentum $J_z$.  However we know that the end result of a continuous measurement of $J_z$, absent of any other influence, is an eigenstate of $J_z$, a so called Dicke states.   With the exception of the stretched states, Dicke states are \emph{not} separable.  Even after a relatively short time, the reduction of the $J_z$ spin component results in a nonclassical spin squeezed state, a phenomena observed in several experiments \cite{kuzmich_generation_2000,hald_spin_1999,schleier-smith_states_2010}.

In this section we test through numerical simulation how well the projection filter reproduces certain properties of the exact conditional state.  Here we show that by adding strong randomized external control fields during the collective measurement the joint state remains highly separable and the approximate description of the projection filter reproduces collective expectation values much more faithfully.

Chap. \ref{chap:QubitState} uses the projection filter in an algorithm to reconstruct the initial condition of a SCS from a continuous measurement of $J_z$, characterized by the rate $\kappa$.  In order to obtain information about observables other than $J_z$, an external control Hamiltonian must be applied.  For reasons discussed in Sec. \ref{chQubit:sec:ControlLaw}, this takes form of a sequence of global $\pi/2$ rotations, where each rotation is about an axis $\V{n}$ that was independently sampled from a uniform distribution.  Therefore, we will test the performance of the projection filter with this control law in mind.

Fully characterizing the control amplitude $\V{f}(t)$ requires specifying the amplitude and duration of each pulse, as a larger Larmor frequency is needed to enact the same rotation in a shorter time.  For simplicity, we will fix $\V{f}(t)$ to have a constant magnitude and will only vary its direction.  This constrains the $\pi/2$ rotations to be square-wave pulses, each of duration $\tau$,
\begin{equation}\label{chProj:eq:randomControl}
\V{f}(t) = \frac{\pi}{2\, \tau} \sum_{m =1} \indicate{[{m-1}, m)}\hspace{-4pt}\left(\,t/\tau\right)\, \V{n}_m
\end{equation}
where $\indicate{[a, b)}\hspace{-4pt}(t)$ is the indicator function for the interval $[a, b)$ and  $\set{ \V{n}_m} $ are \emph{i.i.d.} unit vectors drawn from a isotropic distribution.

To efficiently simulate the exact dynamics we utilize two conserved quantities.  The first is that because the system Hamiltonian $H_t$ and measurement operator $L$ commute with $J^2$, the total angular momentum of the atomic system will be conserved.  Furthermore, the states we ultimate use are all initialized in states with a maximum projection of angular momentum along some direction, thereby always possessing $n/2$ units of angular momentum. This allows us to restrict the simulations to a $d = n +1$-dimensional space.  In other words we simulate a single $J = n/2$ spin system.

The second conservation property we will use is the fact that without any additional sources of decoherence, the conditional master equation maps pure quantum states to pure quantum states.  Sec. \ref{chMath:sec:CSE} discusses the \emph{conditional Schr\"{o}dinger equation} (CSE) and how it can be derived from a conditional master equation (CME).  Using a CSE generates significant computational savings, as each time step propgates a single a complex vector, rather than a complex matrix.  The general form of the CSE is given in Eq. (\ref{chMath:eq:CSE}).  In our case, $L = \sqrt{\kappa} J_z$ for a real, positive $\kappa$ this simplifies to
\begin{equation}\label{chProj:eq:CSE}
  d \ket{\psi_t} = \left(\,-i H_t - \half \kappa ( J_z - \expect{J_z} \,)^2\, \right) \ket{\psi_t}\, dt + \sqrt{\kappa }\big( J_z - \expect{J_z} \big)\, \ket{\psi_t}\,  d w_t.
\end{equation}

\subsection{Simulation parameters}

In absence of the control Hamiltonian $H_t$, the one universal timescale in the CSE is set by the measurement strength, \emph{i.e.} the characteristic time $\kappa^{-1}$.  Therefore these simulations are all reported in time units of this characteristic time.  The range of qubits total qubit numbers we will test are between $25 - 100$, meaning that the simulations will be of  collective spin values of $ 12.5 \le J \le 50$.  In addition to these collective spins, we will compare the projection filtering equations to the exact simulations for a single qubit, proving that they generate the same dynamics.

The remaining parameters, namely the gate duration $\tau$ and the fixed terminal simulation time $t_f$ will be chosen to correspond to the parameters that will be ultimately used in Chap. \ref{chap:QubitState}.    Specifically, $\tau = 5 \times 10^{-3}\,\kappa^{-1}$ and $t_f = 0.2\,\kappa^{-1}$.

The actual simulations are implemented in the MATLAB computing environment using a hand coded, weak second order predictor-corrector stochastic differential equation integrator.  The algorithm is described by
 \citet[page 200]{kloeden_numerical_1994} and was implemented in MATLAB by Brad Chase for his PhD dissertation \cite{chase_parameter_2009}.

\subsection{Spin squeezing comparisons\label{chProj:sec:spinSqueezing}}

Spin squeezing is a much sought after and well studied effect in atomic spin ensembles.  The general phenomena describes the reduction in uncertainty in a expected value of a spin component transverse to the mean spin direction.  Due to a Heisenberg uncertainty relationship, this reduction in uncertainty is accompanied by an increase in uncertainty in the orthogonal quadrature.

The standard example is to consider an collective spin system composed of $n$ qubits, initialized in a SCS pointing along the $+\Ve_x$ direction.  This state is clearly an eigenstate of $J_x$ with eigenvalue $m_x = J = n/2$.  It is also easy to show that this state is a minimum uncertainty state so that it minimizes the Heisenberg uncertain relation
\begin{equation}\label{chProj:eq:spinUncertainty}
  \expect{\Delta J_z^2} \expect{\Delta J_y^2} \ge \frac{1}{4} \expect{J_x}^2
\end{equation}
with equal uncertainties $\expect{\Delta J_z^2} = \expect{\Delta J_y^2} = \half J$.  An example of a spin squeezed state is a state that is still mostly polarized along the $\Ve_x$ axis but also contains quantum correlations so that $\expect{\Delta J_z^2} < \expect{\Delta J_y^2}$ but still maintains the equality of Eq. (\ref{chProj:eq:spinUncertainty}) \cite{kitagawa_squeezed_1993}.

A spin squeezed state is one of the immediate consequences of a continuous measurement of a collective angular momentum variable, such as $J_z$ \cite{kuzmich_atomic_1998}, for a SCS prepared transverse to the measurement axis.  A number of papers have investigated the relation between spin squeezing and various measures of entanglement (see \emph{e.g.} \citep{yin_spin_2011} and references therein).  One particular measure of spin squeezing, $\xi^2_T$, has been shown by \citeauthor{yin_spin_2011} to be directly related to the concurrence, a measure of pairwise entanglement \citep{yin_spin_2011}.  They show that when the concurrence $C$ is greater then zero, indicating entanglement, then $\xi^2_T < 1$ and that when $\xi^2_T \ge 1$, $C = 0$ and the state is unentangled.  $\xi_T^2$ takes on the following definition.

For each component of angular momentum we can compose the symmetrize correlation and covariance matrices ($i, j = x, y, z$)
\begin{equation}
  \operatorname{Corr}_{i,j} = \half \expect{J_i J_j + J_j J_i}
\end{equation}
and
\begin{equation}
  \operatorname{Covar}_{i,j} = \operatorname{Corr}_{i,j} - \expect{J_i}\expect{J_j}.
\end{equation}
From these matrices we can also form the Hermitian matrix
\begin{equation}
  \Gamma = (n-1) \operatorname{Covar} + \operatorname{Corr}.
\end{equation}
The squeezing parameter $\xi^2_T$ is then defined as
\begin{equation}\label{chProj:eq:TothSqueezing}
  \xi_T^2 \define \frac{\lambda_{\min}}{\expect{J^2} - \frac{n}{2}}
\end{equation}
where $\lambda_{\min}$ is the minimum eigenvalue of the matrix $\Gamma$.

\subsection{Squeezing simulations\label{chProj:sec:SqueezingSimulations}}
This section presents simulations that benchmark the typical effect the measurement has upon the states of interest, as well as how much the control law mitigates these effects.  We test here five classes of states, each composed of $n = \{1, 25, 50, 75, 100\}$ qubits.  The one qubit case is included as a control, testing that the numerics produce reasonable results.

As discussed previously, a continuous measurement of $J_z$ is a standard protocol for producing a spin squeezed state.  However, a spin coherent state (SCS) prepared along to the $\Ve_z$ axis will not squeeze at all, while the states prepared in the equatorial plane squeeze the most.  Therefore to demonstrate the maximum amount of quantum correlations a typical measurement realization can produce, a natural choice is a SCS prepared along the $\Ve_x$ axis.
\begin{figure}[bht]
	\begin{center}
		\includegraphics[width=1\hsize]{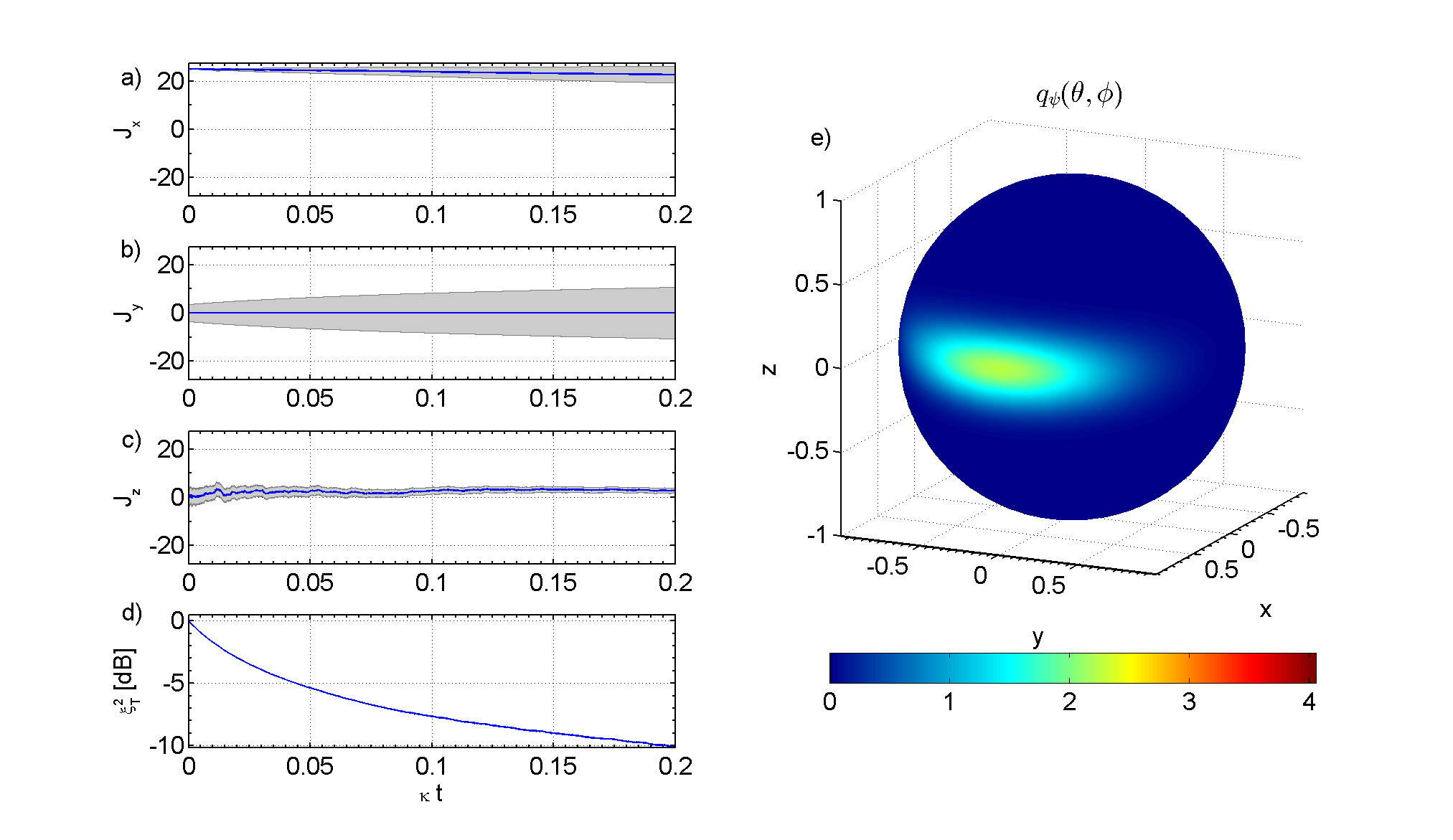}
   		 \caption{Typical Uncontrolled Evolution.  CSE evolution for a $+\Ve_x$ SCS initial state, containing $n = 50$ qubits, ($J = 25$). a-c: The conditional expectation values for $J_x$, $J_y$, and $J_z$ vs time, including a $1 \sigma$ region of confidence. d: The squeezing parameter $\xi_T^2$ in dB vs time.  e: The Husimi Q-function $q_\psi(\theta, \phi)$ for the conditional state, $\ket{\psi_{t_f}}$, at the final time $t_f = 0.2\,\kappa^{-1}$. \label{chProj:fig:FrameNoControl} }
		\end{center}
\end{figure}

Fig. \ref{chProj:fig:FrameNoControl} characterizes the typical results of a CSE simulation for a $+\Ve_x$ SCS initial state, containing $n = 50$ qubits, (spin $J = 25$).  Figs. \ref{chProj:fig:FrameNoControl}.a - \ref{chProj:fig:FrameNoControl}.c show the conditional expectation values of $J_x$, $J_y$ and $J_z$ as a function of time.  Also plotted are $1\sigma$ regions of confidence indicating the expected deviations from these mean values. In other words, the grey regions are bounded by the values $\expect{J_i} \pm \sqrt{\expect{\Delta J_i^2}}$.  Fig. \ref{chProj:fig:FrameNoControl}.a shows how $\expect{J_x}$ tends to decrease as the state squeezes around the sphere and how its uncertainty grows.  Fig. \ref{chProj:fig:FrameNoControl}.b shows how $\expect{J_y}$ remains zero throughout the measurement, while its variance increase due to the characteristic anti-squeezing.  Conversely, Fig. \ref{chProj:fig:FrameNoControl}.c shows the decrease in $\expect{\Delta J_z^2}$ due to the squeezing as well as the deviation of $\expect{J_z}$ from zero as the system evolves towards an eigenstate of $J_z$.
Fig. \ref{chProj:fig:FrameNoControl}.d plots the evolution of the squeezing parameter $\xi_T^2$, in dB, as a function of time.  ( Where $\xi_T^2$ in decibels is $10 \log_{10}( \xi_T^2 )$. )  Fig. \ref{chProj:fig:FrameNoControl}.e  shows a single 3D plot of the Husimi Q-function quasi-probability distribution for the state at the final time $t_f$.   The Q-function for a pure spin state $\psi$ with total angular momentum $J$  is defined as
\begin{equation}
   q_\psi(\theta, \phi) \define \frac{2 J + 1}{4 \pi}\, \abs{\braket{\theta,\phi}{\psi} }^2
\end{equation}
where $\ket{\theta,\phi}$ is a SCS parameterized by the polar angles $\theta$ and $\phi$.  The constant factor ensures normalization.  The color scale of Fig. \ref{chProj:fig:FrameNoControl}.e has been normalized to the maximum value of $\frac{2 J + 1}{4 \pi}$.  The fact that the Q-function has a maximum value of $\sim 2.2$ shows that this squeezed state has poor overlap with SCSs.

\begin{figure}[htb]
	\begin{center}
		\includegraphics[width=1\hsize]{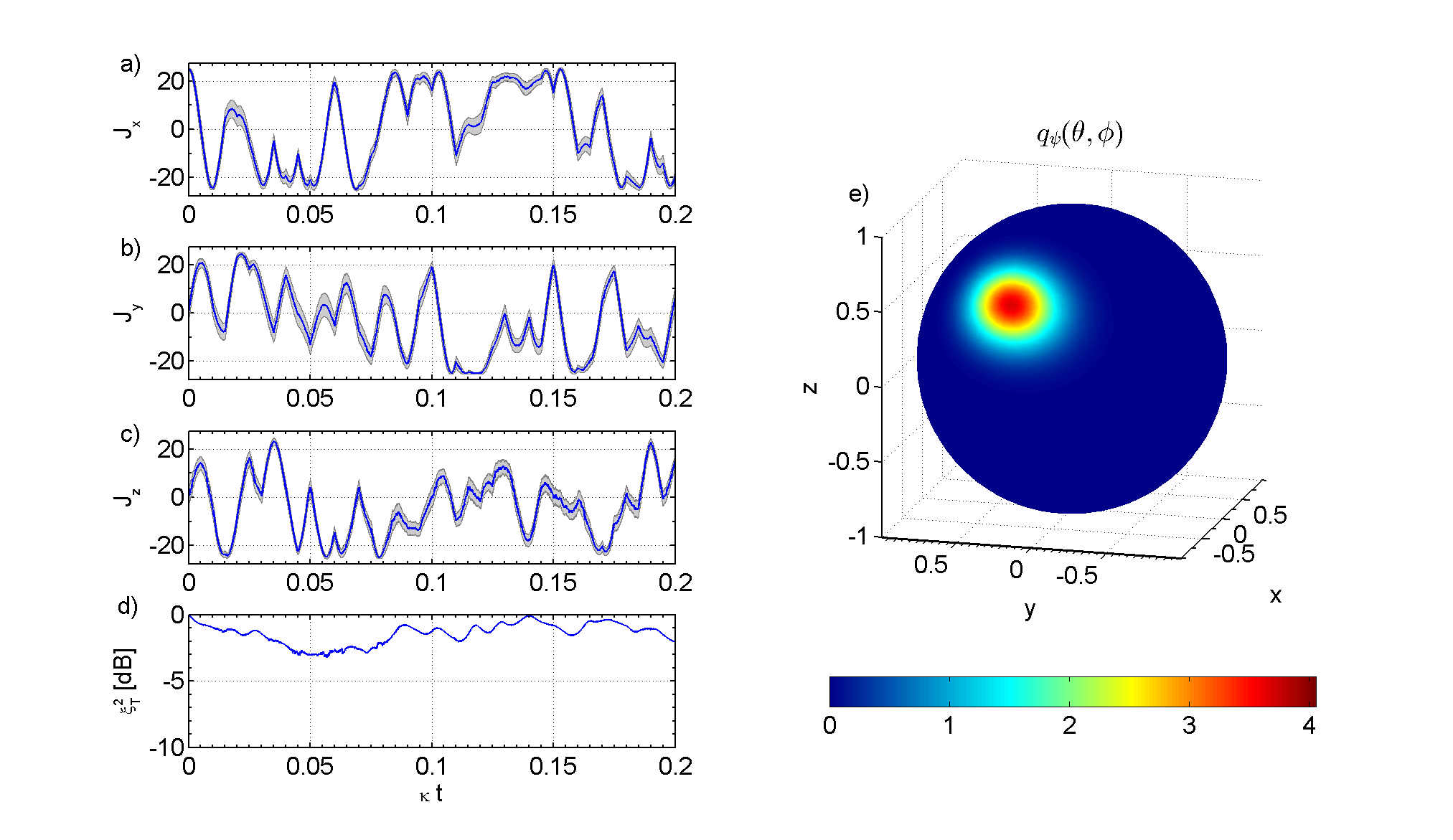}
   		 \caption{Typical Controlled Evolution.  CSE evolution for a $+\Ve_x$ SCS initial state, containing $n = 50$ qubits, with 40 randomized $\pi/2$ rotations. a-c: The conditional expectation values for $J_x$, $J_y$, and $J_z$ vs time, including a $1 \sigma$ region of confidence. d: The squeezing parameter $\xi_T^2$ in dB vs time.  e: The Husimi Q-function $q_\psi(\theta, \phi)$ for the conditional state, $\ket{\psi_{t_f}}$, at the final time $t_f = 0.2\,\kappa^{-1}$. \label{chProj:fig:FrameControl} }
		\end{center}
\end{figure}

Fig. \ref{chProj:fig:FrameControl} presents a typical realization for the same initial condition in the presence of the randomized $\pi/2$ rotations, with a duration of $\tau = 5 \times 10^{-3}\, \kappa^{-1}$.  By the time $t_f = 0.2\, \kappa^{-1}$, this period leads to a total of 40 rotations or one for every horizontal tick mark in Figs. \ref{chProj:fig:FrameControl}.a - \ref{chProj:fig:FrameControl}.d.  Figs. \ref{chProj:fig:FrameControl}.a - \ref{chProj:fig:FrameControl}.c show the conditional expectation values $\expect{J_x}$, $\expect{J_y}$ and $\expect{J_z}$ as a function of time.  In contrast to the example, lacking the randomized controls, these Figs. show that there is little qualitative difference between the three expectation values.  Fig. \ref{chProj:fig:FrameControl}.d indicates that there is a significant reduction in the amount of squeezing produce during measurement compared to the uncontrolled system. $\xi^2_{T \min{}} = -3.21 \text{ dB}$ in the presence of controls while $\xi^2_{T \min{}} = -10.1 \text{ dB}$ without them.  The $1\sigma$ confidence regions indicate that this squeezing is not with respect to a fixed coordinate axis but is rotated between all three and so there is a substantial averaging effect that leads to far less squeezing than the uncontrolled case.  In fact there is nearly a factor of 5 decrease in the maximum amount of squeezing and therefore the controlled state is kept much more separable.  This separability is indicated in the Q-function of the final state, shown in Fig. \ref{chProj:fig:FrameControl}.e, with its more spherical appearance and near perfect overlap with a spin coherent state, indicated by the maximum value $\sim 3.7$.

The amount of squeezing is significantly reduced because the randomized controls tends to mix both the squeezed and anti-squeezed components leading to a near zero average.  Not only does the mean spin rotate, but the orientation of the squeezing ellipse also rotates.  As the rotation axes are chosen from a uniform distribution, the squeezed component is just as likely as the anti-squeezed component to be oriented along the measurement axis.  At any given time, the uncertainty in the $J_z$ component is equally likely to be above or below the uncertainty of an equivalent spin coherent state.  Therefore it is difficult for any significant squeezing to develop.

\subsection{Projection filter simulations}

Ultimately we need to compare how well the projection filter performs when it calculates an innovation from a measurement record $y_t$ that \emph{is not} generated by a separable state.  In this case $dw_t$ is actually given by the innovation process, $d v_t = dy_t - \sqrt{\kappa}\, n\, x^3_t\ dt$, which is not a Wiener process in all cases.  We make this comparison through two measures.  The first is to see how well the projection filter is able to reproduce the expectation values $\expect{J_x}$, $\expect{J_y}$ and $\expect{J_z}$ compared to the exact conditional state.  The second is through the fidelity, squared overlap, between the exact and approximate states.
\begin{figure}[hbt]
	\begin{center}
		\includegraphics[width=1\hsize]{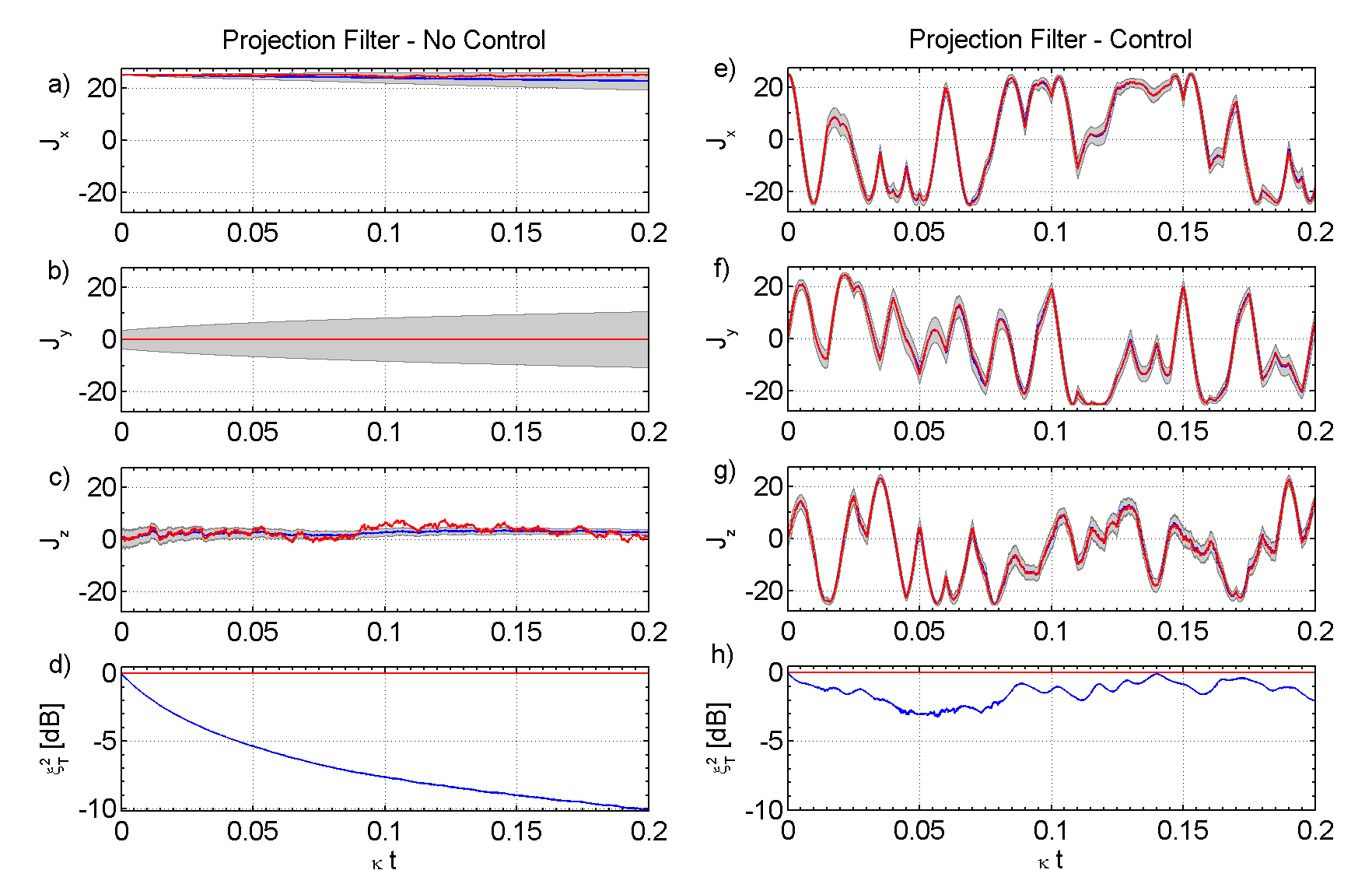}
   		 \caption{Projection Filter Tracking example. The comparison between the projection filter predictions and the exact conditional expectation values for the simulations shown in Figs. \ref{chProj:fig:FrameNoControl} and \ref{chProj:fig:FrameControl}. a-c: The conditional expectation values $\expect{J_i}$  are shown for the exact uncontrolled state (blue) with the $1\sigma$ regions of confidence (grey).  Also shown are the projection filtered values $\frac{n}{2}x^i_t$ (red).  d:  The squeezing in the exact uncontrolled state (blue) and the squeezing reported by the projection filter (red).  e-h: Same as a-d but with the controls now applied. \label{chProj:fig:ProjectionControlComparison}}
		\end{center}
\end{figure}
Fig. \ref{chProj:fig:ProjectionControlComparison} makes the comparison between the projection filter and the expectation value shown in Figs. \ref{chProj:fig:FrameNoControl} and \ref{chProj:fig:FrameNoControl}.  Figs. \ref{chProj:fig:ProjectionControlComparison}.a - \ref{chProj:fig:ProjectionControlComparison}.c and \ref{chProj:fig:ProjectionControlComparison}.e - \ref{chProj:fig:ProjectionControlComparison}.g re-plot the true conditional expectation values, $\expect{J_i}$, as well as the projection filter values, given simply as $\frac{n}{2} \, x^i_t$.  Fig. \ref{chProj:fig:ProjectionControlComparison}.a shows that as the uncontrolled state becomes significantly squeezed, the $\expect{J_x}$ value reduces accordingly. The projection filter is unable to account for this and therefore has a noticeable error.  Fig. \ref{chProj:fig:ProjectionControlComparison}.c shows that after a time $t \sim 0.05 \, \kappa^{-1}$, there is an increase in the difference between the conditional expectation value $\expect{J_z}$ and the value calculated from the projection filter.  These differences are in stark contrast to the tracking results in Figs. \ref{chProj:fig:ProjectionControlComparison}.e - \ref{chProj:fig:ProjectionControlComparison}.g where the differences between all three expectation values are almost all within the line thicknesses.  Figs. \ref{chProj:fig:ProjectionControlComparison}.d and \ref{chProj:fig:ProjectionControlComparison}.h emphasise the fact that the projection filter reports separable states and so the projected squeezing parameter $\xi_T^2$ remains fixed at 0 dB.

Beyond these two sample trajectories, we also test the quality of the projection filter for a variety of initial states and qubit numbers.  This comparison is made in Fig. \ref{chProj:fig:ProjAverageError}, showing a trial averaged RMS error between the projection filter and the exact conditional expectation values.
\begin{figure}[bht]
	\begin{center}
		\includegraphics[width=1\hsize]{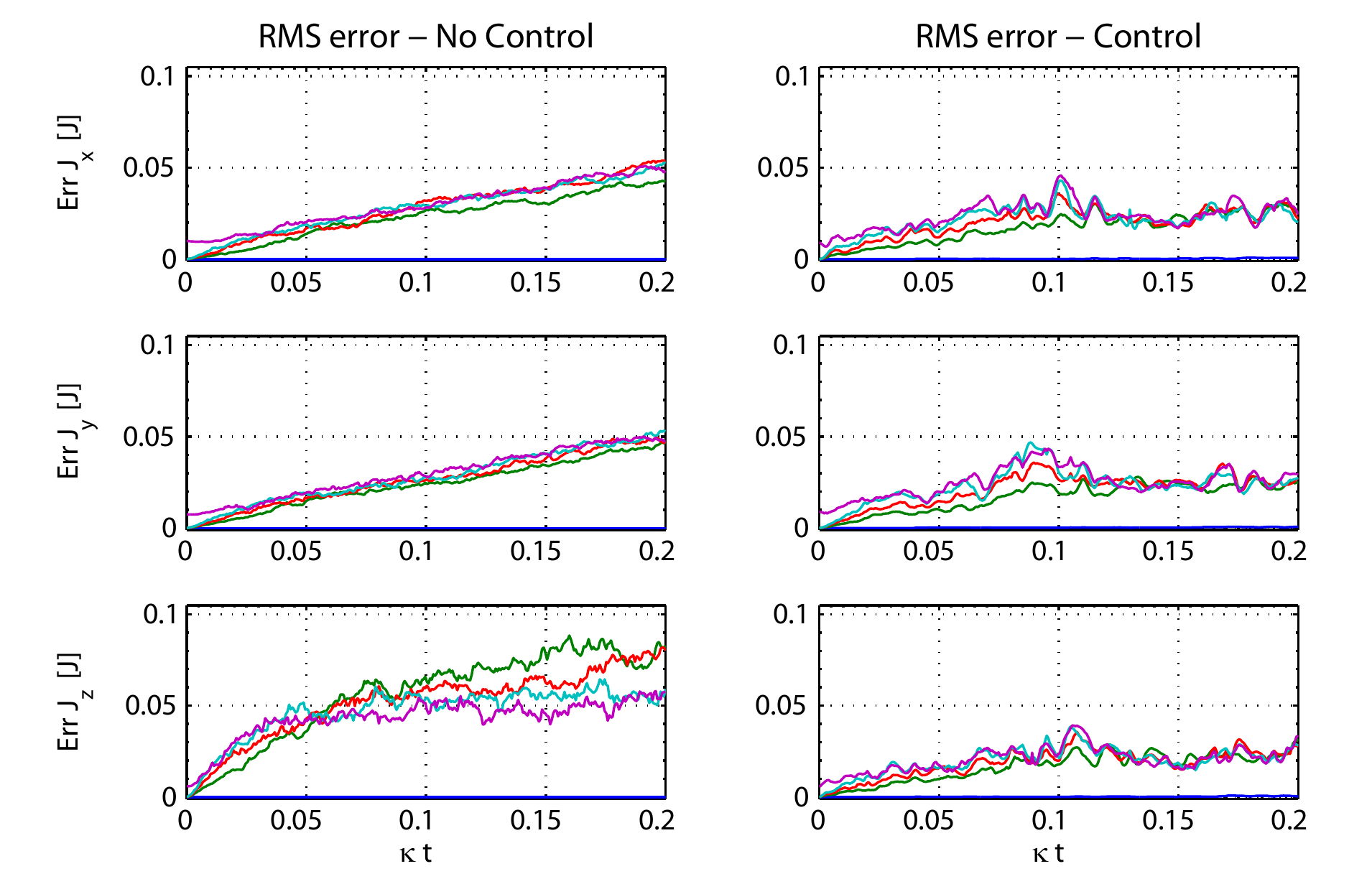}
   		 \caption{Average RMS Tracking error vs time. The left column shows $\text{Err } J_i$, with $i = x, y, z$ (in descending order) in the case of no control fields.  The right column shows the same but in the presence of control fields.  The average is over $\nu = 100$ uniformly random Bloch angles with a single noise realization per state.  For each Bloch vector, the RMS error is computed for $n = 1$ (blue), $n = 25$ (green), $n = 50$ (red), $n = 75$ (cyan) and $n = 100$ (purple) qubits.\label{chProj:fig:ProjAverageError} }
		\end{center}
\end{figure}
Here we test five different qubit values, $n \in \{1, 25, 50, 75, 100\}$.  For each $n$, the average is made over $\nu = 100$ input SCSs chosen at random with a uniform distribution over the Bloch sphere.  These same Bloch angles are used for each $n$.  For each input state we run a single simulation to compute the three exact conditional expectation values, $\expect{J_i}$, as well as the projection filter Bloch components $x^i_t$.  Then for each run we compute the RMS errors
\begin{equation}
\text{Err } J_i \define \sqrt{\expect{ \left(\,\tfrac{1}{J}\braOket{\psi_t}{J_i}{\psi_t}  - x^i_t\,\right)^2 }_\nu},
\end{equation}
as a function of time.  The expectation value $\expect{\cdot}_\nu$ represents the athermic mean over the $\nu$ trials.  The normalization of the exact expectation value means that $0 \le \text{Err } J_i \le 1$ or in other words, is in units of the total spin length $J$.  With the exception of the single qubit case (showing only numerical integration error), the scaled RMS errors are relatively independent of number of qubits.  This is likely due to the fact that when the system is in a pure state, the projection filtering equations are independent of $n$ (see Sec. \ref{chProj:sec:SpecialProjectionFilters}).

However, the presence of the strong randomized controls has a significant effect.  Sans the controls, Fig. \ref{chProj:fig:ProjAverageError} shows a near linear increase in $\text{Err }J_x$ and $\text{Err }J_y$.   In Fig. \ref{chProj:fig:ProjectionControlComparison}.a, the projection filter was unable to track the decrease in the $\expect{J_x}$ component as the uncontrolled state became squeezed and developed significant curvature on the sphere.  Were the system initialized in a $+ \Ve_y$ spin coherent state, the roles of $J_x$ and $J_y$ would be reversed but still have the same behavior.  We attribute the increase in $\text{Err }J_x$ and $\text{Err }J_y$ to this effect.  In contrast, when the randomized controls are applied, the RMS error is equally distributed across all expectation values and remain $\lesssim  5\%$ of the total spin length.  Additionally, the $\text{Err }J_z$ values is significantly worse in the uncontrolled case.  The $\sim 1\%$ error at time $t = 0$ in the $n = 100$ simulations is attributed to using Stirling's approximation to calculate the $J_z$ basis coefficients in the initial SCS.
\begin{figure}[hbt]
	\begin{center}
		\includegraphics[width=1\hsize]{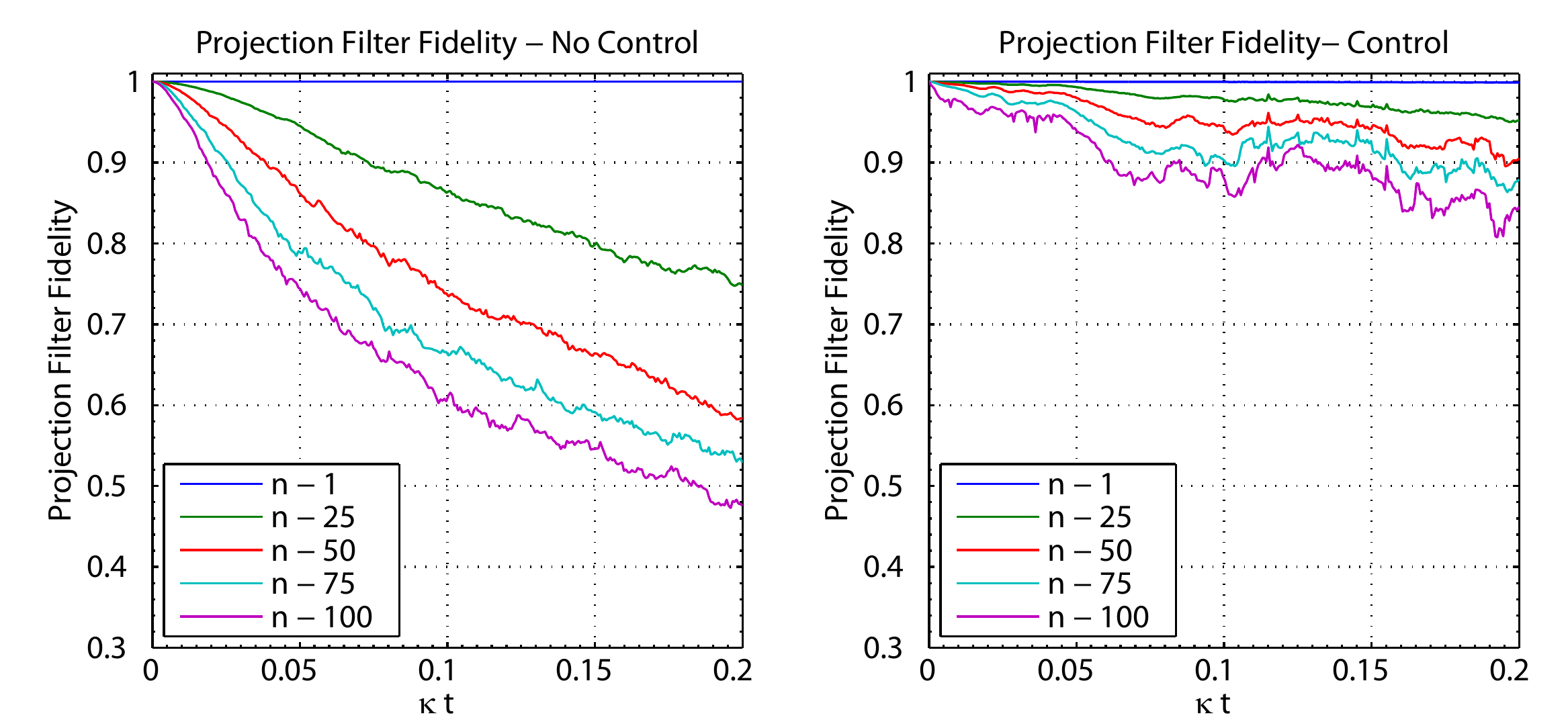}
   		 \caption{Average Projection Filter Fidelities.  These plots show the average fidelity between the exact CSE simulation and the SCS given by the projection filter.  The left plot shows the fidelity in the uncontrolled case with the right adding the 40 $\pi/2$ gates per simulation.  The average is over $\nu = 100$ uniformly random Bloch angles and a single noise realization per state.  For each Bloch vector, the average fidelity is computed for $n = 1$ (blue), $n = 25$ (green), $n = 50$ (red), $n = 75$ (cyan) and $n = 100$ (purple) qubits.\label{chProj:fig:ProjAvgerageFidelity} }
		\end{center}
\end{figure}

While these results indicate that the projection filter performs well in the presence of rapid, randomized rotations, an arbitrary spin state with $J > \half$ contains more information than simply three expectation values.  To characterize the general performance, we turn to the second comparison and calculate the average fidelity between the exact state and a SCS given by the projection filter.  Fig. \ref{chProj:fig:ProjAvgerageFidelity} makes this comparison, averaged over $\nu = 100$ uniformly sampled states. This is made both with and without controls and again for $n = \{ 1, 25, 50, 75, 100\}$ qubits.  The overall state fidelity for SCSs with $n >1$ shows a poor performance as the number of qubits increases, indicating an increase in the squeezing produced during the fixed measurement duration.  In the worse case with $n = 100$ and no controls applied, the average fidelity reaches a minimum value $\sim 0.47$.  However with the controls, the fidelity is $> 0.80$ for any $n$.  The non-monotonic decrease in the controlled fidelity suggests that the specifics of the control law impacts this fidelity and so it might be possible to optimize the control law so that the average fidelity is maximized.

\chapter{Qubit State Reconstruction \label{chap:QubitState} }

This chapter describes how to use the quantum filtering formalism to construct a tomographic estimate for an unknown initial quantum state from an ensemble of identical copies experiencing a joint continuous measurement.  We make a maximum likelihood estimate of this state, based upon the statistics of a continuous measurement of an output field quadrature.  The purpose of this work is to extend previous results \cite{silberfarb_quantum_2005,riofrio_quantum_2011,smith_efficient_2006} into a regime where the quantum backaction significantly effects the measurement statistics.

We consider here the case of an ensemble of $n$ qubits coupled to a single traveling wave quantum light field.  The qubit ensemble is assumed to be in a pure spin coherent state characterized by the unknown polar angles $(\theta,\, \phi)$.  The quantum state estimation problem is mapped to a parameter estimation problem, which is then approximated by a Monte Carlo sampling algorithm.  Numerical experiments show that the ultimate performance of the estimate approaches an optimum fidelity bound, found by \citet{massar_optimal_1995}.  The deficit in the reconstruction fidelity is attributed to a separability approximation in the Monte Carlo algorithm.  This algorithm is compared to, and significantly out performs, an equivalent ``Schr\"{o}dinger'' estimate that ignores the backaction of the measurement.  At long times the Schr\"{o}dinger estimate is shown to be biased away from the true state, indicating the significance of the conditional dynamics and the utility of the quantum filtering framework.

\section{Previous reconstruction results}

A fundamental task in quantum information processing is the ability to both reliably prepare an arbitrary quantum state and to experimentally verify its production.  Traditional quantum state estimation relies on an exhaustive tomographic procedure where the target state is repeatedly prepared and then destructively measured in an informationally complete number of measurement settings.  Such a procedure is often extremely time intensive, requiring both a tremendous amount of data as well as significant post processing time \citep{haffner_scalable_2005,leibfried_creation_2005}.  

In an alternative protocol proposed by \citeauthor{silberfarb_quantum_2005}, these inefficiencies can be largely side-stepped though a weak continuous measurement of an identically prepared ensemble in conjunction with a well chosen dynamical control \citep{silberfarb_quantum_2005}.  In particular, an atomic ensemble is prepared in an identical tensor product state $\rho_{\text{tot}} = \rho_0^{\otimes n}$ and experiences a known Hamiltonian while simultaneously coupled to a traveling wave probe, via a collective degree of freedom.  A continuous measurement of this probe then generates a measurement record that is strongly correlated with the evolution of the system.  If the dynamic drives the system in such a way as to make the measurement informationally complete, then a statistical estimate of an unknown initial system state should have a high fidelity with the true initial condition.

Such a system naturally arises in the field of laser cooled atoms, were an ensemble of $n$ atoms are easily assembled and then weakly coupled to an off-resonant probe laser.  One can then measure a collective spin state of the ensemble via the amount of polarization rotation induced by the Faraday effect. 
This protocol has been implemented in several experiments, ultimately reconstructing the full 16-dimensional hyperfine ground state manifold \citep{smith_efficient_2006, riofrio_quantum_2011}.  However, they were performed in a parameter regime where the intrinsically quantum nature of the continuous measurement could be ignored.  The amount of state disturbance caused by the nonlinear measurement process, the so called \emph{backaction}, was negligibly small when compared to the decoherence induced by diffuse light scattering as well as inhomogeneous effects in the control fields.

This work investigates, through theoretical analysis and numerical simulation, the fundamental limits of this protocol.  We do so in an idealized model, where the effects of decoherence are absent and thus the backaction becomes a significant effect.  To avoid unnecessary complications, we will also reduce the dimensionality of our fundamental system and consider only pure qubits, initialized in an identical tensor product state $\ket{\psi_{\text{tot}}} = \ket{\psi_0}^{\otimes n}$.  With a fully quantum model of the atom-light interaction, we formulate a maximum likelihood (ML) estimate of the single particle initial state, which we will denote as $\ket{\hat{\psi}_0}$.

\section{The Estimation Procedure}
When ignoring backaction, the linearity of an unconditioned master equation means that the measurement signal can be considered as a linear function of the initial state of the atomic system, $\rho_0$.  In an additive white noise model, the instantaneous polarimetry signal $y(t)$ can be modeled as,
\begin{equation}\label{chQubit:eq:y_t:semiclassical}
        y(t) = g\, \Tr \left( \mathcal{V}(t, O_0 )\, \rho_0 \right) + \text{``white noise''}
\end{equation}
where $g$ is a measurement gain relating to the signal-to-noise ratio, $\mathcal{V}(t, \cdot )$ is the Heisenberg picture equivalent to a dissipative master equation and $O_0$ is the initial system coupling observable \citep{riofrio_quantum_2011}.  The problem of state reconstruction in this model then becomes a constrained linear estimation problem.

In a generalized measurement model, the set of possible outcomes is described by a positive operator-valued measure (POVM), with elements $\set{E_\alpha}$ indexed by a discrete outcome $\alpha$.  In a given model of this measurement there exist a (possibly not unique) decomposition of a POVM into a set of Kraus operators $\set{A_\alpha}$, which satisfy the relation $E_\alpha = A^\dag_\alpha\, A_\alpha$ for every outcome $\alpha$.  Then upon obtaining the outcome $\alpha$, a pure state $\ket{\psi}$ updates via the transformation
\begin{equation}\label{chQubit:eq:krausUpdate}
    \ket{\psi} \rightarrow \frac{1}{\sqrt{\left\langle \psi \middle|E_\alpha \middle| \psi \right\rangle} }\, A_\alpha\, \ket{\psi}.
\end{equation}
Due to the renormalization factor, this update map is inherently nonlinear in the state vector.  Any generalized measurement scheme can be decomposed into a continuous measurement process \citep{varbanov_decomposing_2007}.  Conversely, a continuous measurement process can be modeled as a limiting sequence of weak generalized measurements.  Then in general, the nonlinearity of a repeated application of a time-dependent update map means that a measurement sequence is no longer a linear functional of the initial state $\rho_0$.

Much is known about the fundamental quantum limits of reconstructing pure qubit states from a finite number of measurements. \citeauthor{massar_optimal_1995} showed that given $n$ copies of a pure qubit state, it is possible to find a generalized measurement that optimizes the average fidelity $\langle \mathcal{F}\rangle$ between of the state estimate and the true state, averaged over all possible input states \citep{massar_optimal_1995}.  The fidelity for the pure states $\psi_1$ and $\psi_2$ is
\begin{equation}
  \mathcal{F} \define \abs{\braket{\psi_1}{\psi_2}}^2.
\end{equation}
(For mixed states this corresponds to the Uhlmann fidelity, but here we will only be concerned with pure states.)  With this definition the optimum average fidelity bound is simply given
\begin{equation}\label{chQubit:eq:optimumBound}
    \langle\mathcal{F}\rangle_\text{opt}  = \frac{n+1}{n+2}.
\end{equation}
They also showed that such a generalized measurement is necessarily a joint measurement involving all $n$ qubits, and no single measurement applied in series to each qubit can achieve this bound.  Later, \citeauthor{bagan_comprehensive_2005} found that a generalized measurement scheme that achieves this bound is a measurement that is uniform over all possible spin coherent states (SCS) composed from $n$ qubits \cite{bagan_comprehensive_2005}.  While \citeauthor{varbanov_decomposing_2007} gave a constructive proof for a continuous time stochastic process that reproduces a given generalize measurement, it is often quite difficult to obtain a closed form expression for what POVM the entirety of a given continuous measurement implements.  Instead of pursuing this track however, we instead turn to a Monte Carlo sampling framework.

At its most basic level, the initial state estimation problem is a parameter estimation problem, in that we observe a time varying signal whose statistics parametrically depend upon the initial state of the atomic system.  The simplest of all initial state estimation problems is binary state discrimination.  In this problem, the initial condition is know to be one of two possibilities, $\psi^a$ or $\psi^b$.  Then based upon a sequence of measurements, $\set{y_t}$, we wish to identify which state was most likely to generate these data.

In our more general problem, we have a data set $\set{y_t}$ and a detailed model of the dynamical system that generated the data, but only with the knowledge that the initial state is a SCS.  To deal with the continuous nature of this parameter estimation problem we resort to Monte Carlo sampling.  We randomly generate a collection of $m$ sample SCS, $\set{\psi^{j}\ : \ j = 1,\dots m}$, picked from some prior distribution. In Sec. \ref{chQubit:sec:likelihood} we describe how we choose the prior distribution though a two step resampling procedure, seeded from a uniform distribution over spherical angles.   Because the space of qubit SCS is isomorphic to the surface of the sphere, with just a few hundred samples we can easily cover that space so that any discretization error is well below the infidelity implied by the optimum bound $\langle \mathcal{F}\rangle_\text{opt}$.

Irrespective of how the candidate states are chosen, we have reduced the continuous parameter estimation problem to a much simpler state discrimination problem.  We will choose the state $\ket{\psi^{m'}}\in \set{\ket{\psi^m}}$ that maximizes the likelihood function $P(\set{y_t}|\psi^{m})$.  In other words, the ML state $\ket{\hat{\psi}}$ defined as
\begin{equation}\label{chQubit:eq:MLEstimate}
    \ket{\hat{\psi}} = \ket{\psi} \in \set{\ket{\psi^m}}\, : \quad  p(\set{y_t}|\psi)  = \argmax_m \set{\, p(\set{y_t}|\psi^m)\, }.
\end{equation}   In order to evaluate the likelihood function, we are still left with the problem of solving the recursive POVM expression or finding an equivalent method for calculating it.

Here we choose to formulate an equivalent expression.  Because we are working with a finite set of hypothesis states, we find that it is more efficient to propagate $m$ (approximate) conditional states from their initial values and calculate the likelihood for seeing the \emph{next} increment, given the current estimates.  This method is discussed in detail in Sec. \ref{chQubit:sec:likelihood}.

\section{The Model}

Sec. \ref{chQuLight:sec:faraday} reviews how the Faraday interaction can be modeled as a collective angular momentum $\V{J}$ coupled to a single $P$ quadrature in vacuum.  We align our coordinates so that we couple to the $J_z$ projection of angular momentum, with the collective angular momentum operators
\begin{equation}\label{chQubit:eq:CollectiveAngularMomentum}
    J_i \define \frac{1}{2} \sum_{j=1}^n \sigma^{(j)}_i,
\end{equation}
where $\sigma^{(j)}_i$ is the $i^{th}$ Pauli operator for the $j^{th}$ qubit and we have set $\hbar = 1$.  The coupling rate $\kappa$ between $J_z$ and $P$ is proportional to the local power in the drive laser field, which in general could be a time varying quantity.  For simplicity, we will assume that the laser is operated in a switched mode, where at time $t=0$ it achieves a constant value and that the measurement record ends before it is turned off.

In order to make the measurement record informationally complete, (or in the language of filter stability, make the system observable), we need to add an external control Hamiltonian $H_t$, acting solely on the collective spin system.  The exact form for $H_t$ to make it observable will be discussed in Sec. \ref{chQubit:sec:ControlLaw}.  Under these parameters the system field interaction is given by the unitary propagator $U_t$, which is the solution to the QSDE
\begin{equation}\label{chQubit:eq:dUt}
    d U_t = \left( \sqrt{\kappa}\, J_z\, dA^{\dag}_t - \sqrt{\kappa}\, J_z\, dA_t  - \half \kappa\, J_z^2\, dt -i H_t\, dt \right) U_t, \quad U_0 = \ident.
\end{equation}
From this stochastic propagator we are able to apply the results of Sec. \ref{chMath:sec:quantumFiltering} and work with a conditional master equation (CME).  For reference, upon the receipt of the measurement realization $\set{y_t}_{t \ge 0}$, the CME for this model is given by the SDE
\begin{equation}\label{chQubit:eq:CME}
    d \rho_t = -i [H_t, \rho_t]\, dt + \kappa\, \mathcal{D}[J_z](\rho_t)\, dt + \sqrt{\kappa}\, \mathcal{H}[J_z](\rho_t)\, dv_t
\end{equation}
with the initial condition $\rho_0 = \rho(0)$, where we have the following definitions.
$\mathcal{D}[J_z](\rho_t)$ is the Lindblad map commonly found in open quantum systems and is defined as
\begin{equation}\label{chQubit:eq:D}
    \mathcal{D}[J_z](\rho_t) \define J_z\, \rho_t\, J_z - \half J_z^2\, \rho_t - \half \rho_t\,J_z^2.
\end{equation}
$\mathcal{H}[J_z](\rho_t)$ is the state update map defined as
\begin{equation}\label{chQubit:eq:H}
    \mathcal{H}[J_z](\rho_t) \define J_z\, \rho_t + \rho_t\, J_z - 2 \Tr( J_z\, \rho_t)\, \rho_t.
\end{equation}
This map shows how the state updates, weighted by the strength of the innovation process,
\begin{equation}\label{chQubit:eq:dv}
    dv_t = dy_t - 2 \sqrt{\kappa} \Tr( J_z\, \rho_t)\, dt.
\end{equation}

\subsection{Observability and randomized controls.\label{chQubit:sec:ControlLaw} }
In reconstructing the full Cs ground state manifold, \citeauthor{riofrio_quantum_2011} used a randomized control policy to generate an informationally complete measurement record \citep{riofrio_quantum_2011}.  \citeauthor{merkel_quantum_2008} showed that by combining traverse RF magnetic fields and microwave radiation, with fixed magnitudes and time varying phases, the 16-dimensional ground state manifold is controllable \citep{merkel_quantum_2008}.  In other words, through these fundamental operations it is possible to generate any ground state operation and thereby map any state to any other state.

The connection between controllability and observability is a natural one.  Imagine that at time $t=0$ the probe couples to the operator $J_z$.  In order for the measurement statistics of this probe to depend upon the $J_y$ Bloch component, an external control must at some point rotate the system so that field now couples to the part of Hilbert space spanned by the projectors of $J_y$.  If the controls are unable to effect some hidden subspace, then the only other way to know about that part of Hilbert space is to apply an additional probe.  Not every observable system needs to be controllable, however.  One can certainly observe a system completely without being able to affect it in an arbitrary way.

The strictest definition for a system to be observable is that if there are two quantum states $\rho^A$ and $\rho^B$ where $\rho^A \ne \rho^B$ then there cannot exist a projector $\proj$ in the von Neumann algebra generated by the observation process $\set{Y_t}_{t \ge 0}$ such that $\Tr(\rho^A\, \proj) = \Tr(\rho^B\, \proj)$ \cite{van_handel_stability_2009}.  (See Sec. \ref{chMath:sec:QuProb} for a discussion of von Neumann algebras and quantum stochastic processes.)  This definition guarantees that after many trials, one will always be able to distinguish $\rho^A$ from $\rho^B$ by looking at the statistics of $Y$.

However, even if a given system is observable, this does not guarantee that it is well observed in a given measurement realization.  In order for the statistics of a single realization to give a high fidelity estimate, the space of possible initial states, \emph{e.g.} the space of all spin coherent states, should be well represented throughout the measurement record.  If the goal was to measure $J_z$ to a high degree of accuracy, the optimum control policy would be to apply no control at all.  However our objective is to measure every spin coherent state with equal weight, there by hopefully achieving the optimum POVM fidelity bound.

\citeauthor{riofrio_quantum_2011} found that high fidelity reconstructions were possible by choosing random, piecewise constant phase angles, thereby randomly cycling though a controllable set of operations.  Here we choose to implement a control policy that is randomized between a set of generators that rapidly spans the space of spin coherent states.  This policy then guarantees that these states will be well represented in the measurement statistics.  To achieve this, the control Hamiltonian $H_t$ is chosen to have the form
\begin{equation}\label{chQubit:eq:control}
    H_t = \V{b}(t) \cdot \V{J} = b^x(t) J_x + b^y(t) J_y + b^z(t) J_z,
\end{equation}
where the control field components $b^i(t)$ are drawn from a random distribution but are predetermined before the start of the measurement, \emph{i.e.} are without measurement feedback.

For simplicity, we further emulate the control policy of the Cs experiments and fix the magnitude of the control field while varying its direction in a randomized but piecewise constant way.  Furthermore we will constrain the magnitude so that for each direction, the Bloch vector will rotate by $\pi/2$.  Switching the field direction with a period of $\tau$ then requires $\norm{\V{b}(t)} = \pi/(2\tau)$.    With this constraint, the control law is fully defined.

To generate a control waveform with $m$ randomized $\pi/2$ gates with a period $\tau$, we first generate a set of $m$ of unit vectors $\set{\V{e}_i}$ so that each vector $\V{e}_i$ is drawn from a uniform distribution across the unit sphere.  The control field is then
\begin{equation} \label{chQubit:eq:controlField}
    \V{b}(t) =  \frac{\pi}{2\, \tau} \sum_{i =1}^m \indicate{[{i-1}, i)}\hspace{-3pt}\left(\,t/\tau\right)\, \V{e}_i .
\end{equation}

\section{The Likelihood Function \label{chQubit:sec:likelihood} }

In a discrete setting where the space of all possible outcomes, (the entire measurement record $\set{y_t}_{t \ge 0}$) can only have a finite number of outcomes, the likelihood function is simply the probability of receiving the observed values, given a parameter value.  The maximum likelihood estimate is then the parameter value that maximizes the probability for obtaining the observed data.  When the measurement takes on a continuous number of outcomes the probability for receiving a specific outcome is in fact zero.  However, we can still formulate a likelihood function by instead considering the probability density for the observed value.  

Things become a bit more complicated when considering stochastic processes in continuous time.  In Chap. \ref{chap:Math}, we found that the probability measure for a Wiener process was defined by Wiener's discrete path integral.  This means that for a sequence of $n$ times $\set{0 = t_0 <  \dots < t_i \dots < t_n = t_f}$ we can ask for the probability that the Wiener process evaluated at time $t_i$ will be within the interval $(a_i, b_i)$.  The resulting probability is given by the integral
\begin{equation}\label{chQubit:eq:nTimeWiener}
  P( \set{w_{t_i} \in I_i} ) = \int_{a_1}^{b_1} dw_1 \int_{a_2}^{b_2} dw_2 \dots \prod_{i} \left( \frac{1}{\sqrt{2 \pi \Delta t_i}} \exp \left( -\frac{(w_i - w_{i -1})^2}{2 \Delta t_i} \right) \right).
\end{equation}
If one attempts to take a continuous limit of this expression you find something rather peculiar \cite{hunter_lecture_2009}.  By focusing on just the product of exponentials, one finds
\begin{equation}
\begin{split}
    \lim_{n \rightarrow \infty} \prod_{i = 1}^n \exp\left(-\frac{(w_i - w_{i -1})^2}{2 \Delta t_i} \right)
    &= \lim_{n \rightarrow \infty} \exp\left( -\half \sum_{i=1}^n \Delta t_i \left( \frac{w_i - w_{i -1}}{\Delta t_i} \right)^2  \right) \\
      &= \exp\left(- \half \int_0^{t_f} ds \left(\tfrac{d w_s}{ds}\right)^2 \right).
\end{split}
\end{equation}
Were the Wiener process in anyway differentiable, this expression might be exceedingly useful.  However with our limited knowledge of stochastic analysis, it merely indicates the subtleties in working with densities of continuous time, nondifferentiable processes.  Attempting to make sense of these kinds of objects lead to the formulation of a stochastic calculus of variations, which has proved exceedingly useful for extending an It\={o} integral for anticipative integrands \cite{nualart_malliavin_1995} as well as a theory of white noise stochastic partial differential equations \cite{holden_stochastic_2010}.   We will not follow this path here.

There is an additional consideration as we know that $\set{y_t}_{t \ge 0}$ is decidedly not a Winer process.  Even making a discrete approximation, we still need to find an expression for the discrete density and how it depends upon the initial system state.  In this problem, we can make some progress.  Sec. \ref{chMath:sec:Innovation} showed that the innovation process, $v_t = y_t - 2 \sqrt{\kappa} \int_0^t ds \Tr ( J_z \rho_s)$ is an instance of a Wiener process.  More specifically, $v_t$ is a Wiener process when the filtered state $\rho_s$ accurately represents the conditional state of the system.  In our Monte Carlo setting we do not have just a single conditional state $\rho_t$, we in fact have a set of $m$ conditional states $\set{\rho_t^m}$, as the proper initial condition is unknown.  It is possible that a candidate state $\rho_t^m$ will differ in some aspects from the conditional state we would calculate, had we know then true initial condition.  For each hypothetical state we will have a set of possible innovations $\set{v_t^m}$, each a function of the measurement record $y_t$ and the filtered state $\rho_t^m$.  It should be clear that not every $v_t^m$ will be an instance of a Wiener process.  In fact the maximum likelihood estimate that we will construct \emph{hinges} upon the fact that not every $v^m_t$ will be a Wiener process.  This is because rather than computing the entire unknown and highly complicated statistics of $y_t$, we will compute the statistics of the known and simple statistics of the Wiener process $v_t$.  We then seek the candidate initial condition that makes the statistics of $v^m_t$ most resemble a Wiener process.

Stepping back from the mathematics for a moment, converting Eq. (\ref{chQubit:eq:nTimeWiener}) into an expression for $p(\set{y_t}|\rho^{m})$ via the innovation is deceptively simple.  We can write $v_t$ as
\begin{equation}
  v_t = v_s + y_t - y_s - 2 \sqrt{\kappa} \int_s^t ds\, \Tr(J_z\, \rho_s),
\end{equation}
or in other words,
\begin{equation}
\Delta v_{i} \define y_{t_{i}} - y_{t_{i-1}} - 2 \sqrt{\kappa} \int_{t_{i-1}}^{t_{i}} ds\, \Tr(J_z\, \rho_s).
\end{equation}
From the nonanticipative construction of the It\={o} integral, we have, for the smallest of possible time differences,
\begin{equation}
  \Delta v_{i-1} \approx \Delta y_{i} - 2 \sqrt{\kappa} \Delta t_i \Tr(J_z\, \rho_{t_{i-1}}).
\end{equation}
The density for $\set{y_t}_{t \ge 0}$ is then made by simply substituting $\Delta v_i$ into Eq. (\ref{chQubit:eq:nTimeWiener}).  In other words, the likelihood for the increment variables $\set{\y_i\define \Delta y_i}$ is then given by
\begin{multline} \label{chQubit:eq:approxLikelihood}
    p_n(\y_1, \y_2,\dots, \y_n| \rho^{m} ) \approx \\
    \left(\prod_{i = 1}^n (2 \pi \Delta t_i)^{-\half} \right)\, \exp\left(- \sum_{i=1}^n\, \frac{( \y_i - 2 \sqrt{\kappa} \Delta t_i \Tr(J_z\, \rho^{m}_{t_{i-1}}) )^2}{2\, \Delta t_i} \right).
\end{multline}
The only possible way to maximizing Eq. (\ref{chQubit:eq:approxLikelihood}) with respect to the initial condition $\rho^{m}$ is by minimizing the argument of the exponential.  If we set all of the time increments to be equal, $\Delta t_i = \Delta t$, we can even factor out the denominator and so the maximum likelihood estimate then becomes a problem of minimizing the sum,
\begin{equation}\label{chQubit:eq:QV}
     \mathrm{QV}( v^{m}_t ) \define \sum_{i= 1}^n (\Delta v^{m}_i)^2 = \sum_{i=1}^n\, \left( \Delta y_i - 2 \sqrt{\kappa} \Delta t \Tr(J_z\, \rho^{m}_{t_{i-1}}) \right)^2.
\end{equation}
This kind of object is called the \emph{quadratic variation}\footnote{Technically the quadratic variation is given in the infinitesimal limit \cite{oksendal_stochastic_2002}.} and Appendix \ref{app:SDEs} showed that it is ultimately what gives rise to the rules of It\={o} calculus.  So while it is a relatively delicate mathematical object, it is well defined in the infinitesimal limit. Furthermore, in proving the It\={o} rules, one shows that $\mathrm{QV}(w_t) = t$ with probability one, so that we expect, and observe numerically, that $\mathrm{QV}( v^{m}_t ) \sim t$ for most candidate initial conditions.  It is often the case that in Guassian problems such as ours, a maximum likelihood estimate over a Gaussian probability density simple becomes the least squared estimate.  So in our Monte Carlo search we have
\begin{equation}
    \argmax_m p( y_t|\rho^{m}) = \argmin_{m} \mathrm{QV}( v^{m}_t ).
\end{equation}

\subsection{The reconstruction procedure}
The Monte Carlo sampling estimator we have outlined follows this rough procedure:
\begin{enumerate}
  \item Sample $m$ pure Bloch vectors uniformly from the unit sphere.
  \item For each sample state compute the forward time evolution conditional on the measurement record $y_t$.
  \item Compute the quadratic variations of the innovation processes for each conditional state.
  \item Select as the estimate, the sample state that minimizes the quadratic variation at the final time.
\end{enumerate}
In practice we need to modify this procedure in two respects.  The first is that due to involving the stability of Markov Filters \cite{van_handel_stability_2009}, the above procedure suffers from poor numerical stability when the hypothesis initial condition has very little overlap with the true initial condition.  To rectify this problem, we implement a two step procedure, by first sampling $m$ \emph{mixed} initial conditions and then resampling, within some solid angle, pure states about the direction of the most probable mixed state.  This issue will be discussed in detail in Sec. \ref{chQubit:sec:Stability}.

The second modification stems from the fact that propagating the full conditional Schr\"odinger equation for a sufficient number of samples requires a large amount of computer time.  To fully propagate a spin $J$  \emph{pure} state requires $2 J + 1$ complex numbers.  The stochastic integrator we choose to use implements a weak second-order predictor-corrector method ( \citet[page 200]{kloeden_numerical_1994} ) and empirically requires a time step $\Delta t \sim 10^{-6}\, \kappa^{-1}$ to produce reliable expectation values.  When considering ensembles of mixed qubits it is not sufficient to consider the maximum projection of the collective angular momentum, but instead requires considering all possible total angular momentum values one could construct with $n$ spin-$\half$ particles.  This requires a total density matrix of order $n^2 \times n^2$ in size \cite{chase_collective_2008}.

In Chap. \ref{chap:projection} we developed a projection filter that projected the conditional master equation for the collection of $n$ qubits onto the manifold of identical separable states, which greatly reduces the computation demand.  We also showed that in the presence of strong randomized control, the projection filter tends to track the exact expectation values with a RMS error of less than $5\%$ of the total spin length. For mixed initial conditions, rather than propagating matrices of dimension $\sim n^2 \times n^2$ \emph{for each sample state}, the projection filter allows us to reduce this to tracking a single mixed Bloch vector, \emph{i.e.} three real numbers.  With these modifications, the Monte Carlo separable least squares estimate is computed though the pseudocode algorithm \ref{chQubit:alg:MonteCarlo}.
\begin{algorithm}
\caption{A Monte Carlo Separable Least Squares Estimate  \label{chQubit:alg:MonteCarlo}}
\begin{algorithmic}
    \State $\{\, \Vr_m\,\} \longleftarrow m$ uniformly random Bloch vectors with $r = r_\text{mixed} < 1$
    \ForAll{ $\Vr_m \in \{\, \Vr_m\,\}$ }
        \State $\Vr^{m}_t \longleftarrow$ Integrate Eq. (\ref{chProj:eq:projectionFilterIto}) with record $y_t$ and initial value $\Vr^{m}_0 = \Vr_m $.
        \State $\operatorname{QV}( v^{m}_t) \longleftarrow \sum_{i} \left( \Delta y_{i} - \sqrt{\kappa}\,n\,x^{3\,m}_t\,\Delta t_i\, \right)^2 $
    \EndFor
    \State $ \Vr_{\min{}} \longleftarrow  \Vr_{m'} \in \{\, \Vr_m\,\} :\ \operatorname{QV}( v^{(m')}_t) = \min\, \{\, \operatorname{QV} ( v^{m}_t)\, \} $
    \State $\{ \, \Vr'_m \} \longleftarrow m $ random Bloch vectors with $r' = 1$ and $ \Vr'_m \cdot \Vr_{\min{}}/ r_{\text{mixed}} \le \cos( \Theta_{\max{}})$
    \ForAll{ $\Vr'_m \in \{\, \Vr'_m\,\}$ }
        \State $\hat{\Vr}^{m}_t \longleftarrow$ Integrate Eq. (\ref{chProj:eq:projectionFilterItoSingleQubit}) with record $y_t$ and initial value $\hat{\Vr}^{m}_0 = \Vr_m' $.
        \State $\operatorname{QV}( v^{m}_t) \longleftarrow \sum_{i} \left( \Delta y_{i} - \sqrt{\kappa}\,n\,x^{3\,m}_t\,\Delta t_i\, \right)^2 $
    \EndFor
    \State $ \Vr'_{\min{}} \longleftarrow  \Vr'_{m'} \in \{\, \Vr'_m\,\} :\ \operatorname{QV}( v^{(m')}_t) = \min\, \{\, \operatorname{QV} ( v^{m}_t)\, \} $\\
    \Return $\rho(\Vr'_{\min{} })$
\end{algorithmic}

\end{algorithm}

\subsection{Coupled CMEs\\ and filter stability \label{chQubit:sec:Stability}}
For a given hypothesis state $\rho^{m}$, there is a hidden ``true'' state $\rho^{\star}$ that generated the measurement record $y_t$.  When using this measurement record to propagate the hypothesis state, that ends up coupling $\rho^{m}$ to $\rho^{\star}$.  This leads to a coupled set of stochastic differential equations.  By definition, the conditional state $\rho^{\star}_t$ results in an innovations process that is a \emph{true} Wiener process.  Explicitly, $\rho^{\star}_t$ is evolves according to the stochastic differential equations,
\begin{subequations}
   \begin{align}
      d \rho^{\star}_t =&-i [H_t, \rho^{\star}_t]\, dt + \kappa \mathcal{D}[J_z](\rho^{\star}_t)\, dt +  \sqrt{\kappa}\mathcal{H}[J_z](\rho^{\star}_t)\, d w_t,  \\
      d y_t =&\, d w_t + 2 \sqrt{\kappa} \Tr( J_z \rho^{\star}_t) dt. \label{chQubit:eq:dy_t}
   \end{align}
\end{subequations}
where $w_t$ is the correct innovation process and is an unobserved Wiener process.  The candidate initial condition $\rho^{m}_0$, and the measurement record $y_t$ are the elements necessary for propagating $\rho^{m}_t$ in time via,
\begin{subequations}
    \begin{align}
      d \rho^{m}_t =&-i [H_t, \rho^{m}_t]\, dt + \kappa \mathcal{D}[J_z](\rho^{m}_t)\, dt +  \sqrt{\kappa} \mathcal{H}[J_z](\rho^{m}_t)\, d v^{m}_t, \label{chQubit:eq:drho'_t}  \\
      dv^{m}_t =&\, dy_t - \Tr( ( L + L^\dag) \rho^{m}_t) dt. \label{chQubit:eq:dv'_t}
    \end{align}
\end{subequations}
In terms of the unobserved Wiener process $w_t$, $\rho^{m}_t$ evolves as,
\begin{subequations}
  \begin{align}
      d \rho^{m}_t =&-i [H_t, \rho^{m}_t]\, dt + \kappa \mathcal{D}[J_z](\rho^{m}_t)\, dt +  \sqrt{\kappa} \mathcal{H}[J_z](\rho^{m}_t)\, d w_t \\
      & - 2 \kappa\, \mathcal{H}[J_z](\rho^{m}_t)\, \Tr \left( J_z (\rho^{m}_t-  \rho^\star_t) \right) dt. \nonumber
  \end{align}
\end{subequations}
Note that 
if at some time $t$ we happen to have the equality $\rho^\star_t = \rho^{m}_t$, then we also have $d \rho^{\star}_t = d \rho^{m}_t$.  But the general state $\rho_\tau$ at some time $\tau > t$ can be written as $\rho_{\tau} = \rho_t + \int_t^\tau d \rho_{t'}$.  This implies that because the differentials are equal whenever the states are equal we will have $\rho^\star_\tau = \rho^{m}_\tau$ for every $\tau > t$ if $\rho^\star_t = \rho^{m}_t$.

One possible result of this coupling is that it acts as an attractor, always decreasing the ``distance'' between the $J_z$ projections of $\rho^\star_t$ and $\rho^{m}_t$.  This correction effect is known as \emph{filter stability}.  If the filter is able to correct for certain modeling errors, it is stable.  The differences in the two initial states $\rho^\star_0$ and $\rho^{m}_0$ can be viewed as a modeling error and the convergence of $\rho^{m}_t \rightarrow \rho^*_t$ is a correction of this error.  This is a well studied effect in both the quantum and classical settings, see \cite{van_handel_stability_2009} and references there in.

In \citep{van_handel_stability_2009}, \citeauthor{van_handel_stability_2009} gave explicit criteria for when a quantum filter is stable for an incorrect initial conditions.  For our purposes these criteria boiled down to the following two issues.  The first is that the system must be \emph{observable}, in that the measurement record must be informationally complete.  If we did not have a transverse magnetic field, then the measurement statistics would only include information about the eigenstates of $J_z$ and so the system is not observable.  The second issue is that the probability density for $y_t$ calculated with the true state $\rho^\star$ must be \emph{absolutely continuous} with respect to the density calculated under the guessed state $\rho^{m}$.   This is a term borrowed from classical probability theory and embodies the concept that a probability measure $\mathbbm{P}^{B}$ is compatible with observations that are actually governed by $\mathbbm{P}^{A}$.  The quantum version is given by the following definition.  In order for $\rho^A$ to be absolutely continuous with respect to $\rho^B$, then for any projector $\proj$ in the von Neumann algebra generated by $\set{Y_t}_{t \ge 0}$, we must have $\Tr(\proj \rho^B) = 0$ implying that $\Tr(\proj \rho^A) = 0$.  This need not be a two sided relationship so that $\rho^B$ need not be absolutely continuous with respect to $\rho^A$. These requirements are not just important to the question of filter stability but also apply to the Monte Carlo sampling procedure.   As it has been discussed previously, the observability condition is vital in order to obtain a high fidelity estimate.  However absolute continuity is also quite important.

In a Kraus operator formulation of a continuous measurement, the state after the measurement outcome $i$ is updated as
\begin{equation}
  \rho \mapsto \rho|_i =\frac{ A_i \rho A_i^\dag}{\Tr(A_i^\dag A_i \rho)}.
\end{equation}
If the denominator $\Tr(A_i^\dag A_i \rho) = 0$, then the update cannot be made as it requires dividing by 0.  However $\Tr(A_i^\dag A_i \rho)$ is also the probability for obtaining the outcome $i$, as calculated according to the state $\rho$.   Therefore, if the event $\set{i}$  occurs with this probability, dividing by zero is not an issue as it will never happen.  Suppose we obtain the outcome $i$ that occurred with probability $\Tr(A_i^\dag A_i \rho^\star) = p_i^\star$.  Furthermore, suppose we tried to update a state $\rho^{m}$ that had the audacity to assert $p^{m}_i = \Tr(A_i^\dag A_i \rho^{m}) = 0$.  This results in a crisis of conscious, as there is no way to incorporate this incompatible information into our world view.  The condition that $\rho^\star$ must be absolutely continuous with respect to $\rho^{m}$ means that $p^{m}_i$ will never be zero without $p^\star_i$ also equal to zero.

In principle any valid initial spin state could generate a given diffusive measurement record.  This can be easily seen by noting that the ``true'' innovations process is given by
\begin{equation}\label{chQubit:eq:vtStar}
  v^\star_t = y_t - 2 \sqrt{\kappa} \int_0^t ds \Tr(J_z \rho^\star_s)
\end{equation}
and is a Brownian motion.  Because $J_z$ takes on eigenvalues in the range $- n/2 \le m_z \le n/2 $, any candidate innovation $v^{m}_t$ will be within the range,
\begin{equation}
   y_t - n \sqrt{\kappa}\, t \le v_t^{m} \le y_t + n \sqrt{\kappa}\, t.
\end{equation}
For finite, $n$, $\kappa$, and $t$, it is perfectly possible for a Brownian motion to obtain any of these values, it is just increasingly unlikely.  Therefore if the measurement record is observable, the conditional master equation is in principle stable.

In practice, the numerical stability of states conditioned on highly improbable measurements becomes a issue.   By not taking this into account preliminary results that did not consider the possibilities of unstable trajectories, showed nearly a $1\%$  drop in the average reconstruction fidelity from what we ultimately achieve.  Investigating the cause of this sub-optimal performance showed that the average fidelity was significantly biased by outlier trajectories that gave estimated states that were nearly orthogonal to the true state.   The cause of these outliers was the numerical stability of Monte Carlo sample points with very poor overlap with the true state.

By switching to the two step sampling procedure in algorithm \ref{chQubit:alg:MonteCarlo}, every initial mixed single qubit state can be viewed as a convex combination of pure states pointing along opposite directions.  That is, if we have the possibly mixed single qubit Bloch vector $\Vr$ with length $0 \le r \le 1$ we have
\begin{equation}
  \rho( \Vr ) = \tfrac{1 + r}{2}\rho(\Ve_{\Vr}) + \tfrac{1 - r}{2}\, \rho(\Ve_{-\Vr} ),
\end{equation}
where $\rho(\Ve_{\Vr})$ is a projector on the the pure SCS pointing in the $\Ve_{\Vr}$ direction.  This implies that by using initial mixed vectors each initial state has some support over the orthogonal spin coherent state $\psi(\Ve_{-\Vr})^{\otimes n}$.  In the numerical simulations presented in Sec. \ref{chQubit:sec:Simulations} an initial mixed state vector of radius $r_{\text{mixed}} = \tfrac{3}{4}$ provides enough of a signal to choose an appropriate direction for the pure state resample as well as enough orthogonal support for the trajectories remain stable.

\subsection{Backaction in continuous quantum measurement}
In order to identify what impact the backaction has on the reconstruction fidelity, we need to construct a similar but backaction-free estimator.  A figure of merit commonly used to consider the importance of backaction is the ratio of the ``projection noise'' to the ``shot-noise''.  The projection noise is a description of the fluctuations (\emph{i.e.} noise) in a given observable if a projective measurement is made.   As we are considering a continuous measurement of $J_z$ with respect to a SCS, the relevant projection noise is
\begin{equation}
  \expect{\Delta J_z^2}_{\psi^{\otimes n}} = \frac{n}{4} \expect{\Delta \sigma_z^2}_{\psi} = n\, p_{+1}(1-p_{+1})
\end{equation}
where $p_{+1} = \abs{\braket{+1}{\psi}}^2$ is the probability to observe the individual spin state to be in the $+1$ eigenstate of $\sigma_z$ \cite{itano_quantum_1993}.

The shot-noise describes the noise added by making a continuous measurement over a finite time.  To identify the order of magnitude of this additive noise, note that from Eq. (\ref{chQubit:eq:vtStar}) we have
\begin{equation}\label{chQubit:eq:ytVtStar}
  y_t = v_t^\star + 2 \sqrt{\kappa}\int_0^t ds\ \Tr(\rho_s^\star\, J_z)
\end{equation}
and that $v_t^\star$ is a realization of Brownian motion.  We would like to invert this formula to arrive at a random variable whose statistics allow for an estimate of $\Tr( \rho_0^\star\, J_z)$.  Suppose that we wished to model the system completely ignoring the theory of continuous quantum measurement and that for times $0 \le s \le t$, $\rho^\star_s$ evolves according to the Schr\"{o}dinger equation.  If we further assume that $H_t  = 0$, we then have that $\Tr( \rho_s^\star\, J_z) = \Tr( \rho_0^\star\, J_z)$ and so the classical random variable
\begin{equation}
    \mathrm{j}_z \define \frac{y_t}{2 \sqrt{\kappa}\, t} = \Tr(J_z\, \rho_0^\star) + \frac{1}{2 \sqrt{\kappa}\, t} \,v_t^\star
\end{equation}
is Gaussian distributed with mean $\Tr(J_z\, \rho_0^\star)$ and $\text{Var}(\,\mathrm{j}_z) = ( 4\,\kappa \, t)^{-1}$.  It is this variance that is referred to as the shot-noise added by the probe.   Looking at the ratio of these two fluctuations we have
\begin{equation}
  \zeta \define \frac{\expect{\Delta J_z^2}_{\psi^{\otimes n}}}{\text{Var}(\,\mathrm{j}_z)} = 4\, n\,\kappa\, t\, p_{+1}(1-p_{+1}).
\end{equation}
If the system is prepared in a SCS with $p_{+1} = \half$ then Sec. \ref{chProj:sec:SqueezingSimulations} showed that we then expect a maximum amount of spin squeezing or equivalently a large amount of bipartite entanglement.  In this case $\expect{\Delta J_z^2}_{\psi^{\otimes n}}$ takes on its maximum value of $n/4$ and so $\zeta = n\,\kappa\, t$.  When $\zeta \gg 1$ then one expects a significant contribution of quantum backaction in the system and therefore the measurement effects must be accounted for \cite{deutsch_quantum_2010}.   In the uncontrolled spin squeezing simulations of Sec. \ref{chProj:sec:SqueezingSimulations}, we found that for $n = 100$ and $\kappa\, t = 0.2$ we found $\xi_T^2 \sim 10 \text{ dB}$ and so that in the absence of strong Hamiltonian controls, $\zeta = 20$ indeed leads to a strongly nonclassical state.

However, the above discussion assumed no controls.  It is possible that with the randomized controls considering only the Hamiltonian evolution is sufficient to obtain a high fidelity estimate.  To make this comparison we formulate a backaction-free estimator, one that only includes the Hamiltonian in the model for the forward time dynamics.  Rather than considering a measurement record where $y_t$ is given by Eq. (\ref{chQubit:eq:ytVtStar}), we instead propose a model were
\begin{equation}\label{chQubit:eq:ytSchrodinger}
    y_t \approx w_t + 2 \sqrt{\kappa}\int_0^t ds\ \Tr(J_z\, \tilde{\rho}^\star(s) )
\end{equation}
and $\tilde{\rho}^\star(t)$ is the solution to the Schr\"{o}dinger equation
\begin{equation}\label{chQubit:eq:Schrodinger}
    \frac{d}{dt} \tilde{\rho}^\star(t) = -i[H_t, \tilde{\rho}^\star(t) ]
\end{equation}
and $w_t$ is a Wiener process.

To make a fair comparison, this backaction-free estimator will also be implemented though a Monte Carlo sampling procedure.  We use a algorithm similar to algorithm \ref{chQubit:alg:MonteCarlo}, but with two modifications.  The first is that the two step sampling procedure is unnecessary because there are no conditional dynamics to cause numerical stability. The second is that because the dynamics are linear, the Schr\"{o}dinger evolution in Eq. (\ref{chQubit:eq:Schrodinger}) is most efficiently computed in the Heisenberg picture.  In the Heisenberg picture, we simply need to integrate the time evolution of the $J_z$ observable \emph{once} and then compute its expectation value with each candidate state.  Furthermore in this decoherence free model the system state will always remain in a separable state and so we need only consider the Heisenberg evolution for the single qubit Pauli operator, $\sigma_z$.  In other words,
\begin{equation}
    2 \sqrt{\kappa} \Tr( J_z \tilde \rho^{m}(t) ) = \sqrt{\kappa}\, n \, \braOket{\psi^{m}}{\sigma_z(t)}{\psi^{m}}
\end{equation}
where $\sigma_z(t)$ is the solution to the Heisenberg equation of motion
\begin{equation}\label{chQubit:eq:HeisenbergSz}
    \frac{d}{dt} \sigma_z(t) = +i [ H_t,\, \sigma_z(t)], \quad\text{ with }  \sigma_z(0) = \sigma_z.
\end{equation}
The pseudocode for the backaction-free estimator is given in algorithm \ref{chQubit:alg:Schordinger}

\begin{algorithm}
\caption{A Monte Carlo Backaction-free Estimate  \label{chQubit:alg:Schordinger}}
\begin{algorithmic}
    \State $\set{\sigma_z(t_i)} \longleftarrow$ Integrate Eq. (\ref{chQubit:eq:HeisenbergSz}) and evaluate at times $\set{t_i}$.
    \State $\{\, \Vr_m\,\} \longleftarrow m$ uniformly random Bloch vectors with $r = 1$
    \ForAll{ $\Vr_m \in \{\, \Vr_m\,\}$ }
        \State $\operatorname{QV}( v^{m}_t) \longleftarrow \sum_{i} \left( \Delta y_{i} - \sqrt{\kappa}\, n\, \Tr(\sigma_{z}(t_{i-1})\, \rho(\Vr_m)\,) \, \Delta t_i\, \right)^2 $
    \EndFor
    \State $ \Vr_{\min{}} \longleftarrow  \Vr_{m'} \in \{\, \Vr_m\,\} :\ \operatorname{QV}( v^{(m')}_t) = \min\, \{\, \operatorname{QV} ( v^{m}_t)\, \} $\\
    \Return $\rho(\Vr_{\min{} })$
\end{algorithmic}

\end{algorithm}

\section{Numeric Simulations\label{chQubit:sec:Simulations}}
This section presents the results of numerical simulations, comparing algorithms \ref{chQubit:alg:MonteCarlo} and $\ref{chQubit:alg:Schordinger}$ to the optimum POVM bound in Eq. (\ref{chQubit:eq:optimumBound}).   The bound $\langle\mathcal{F}\rangle_\text{opt}  = (n+1)/(n+2)$ gives the average fidelity of a single POVM where the average is taken over measurement outcomes \emph{as well as} an average over possible input SCS.  Therefore, the results of these simulations are reported as an average of ensemble of $\nu$ trials.  All results in this section use $\nu = 1000$ trials.

\subsection{Simulation parameters}
For each trial, we choose a single qubit Bloch vector from a distribution that is uniform over the surface of the unit sphere.  We then use this vector to generate SCSs composed of $n$ qubits.  This simulations use the qubit numbers $n = 25, 40, 55, 70, 85, \text{ and } 100$.  Then for each initial state and each number of qubits, we  generate a single measurement realization $y_t$ and use this record to then estimate the initial Bloch vector.  In total 6000 measurement records were generated.

Every simulation uses the same control Hamiltonian, where the randomized piecewise constant control vector $\V{b}(t)$ was generated at the start of the simulation.  The directions of rotation are again distributed uniformly across the unit sphere and no attempt was made to select an optimum realization.  The parameters that fully constrains the simulation are the measurement strength $\kappa$, the final measurement time $t_f$ and the control gate period $\tau$. With no other scales in the problem we choose to essentially set $\kappa$ to one and discuss the remaining two parameters in units of $\kappa^{-1}$.

In Chap. \ref{chap:projection} we found that the separable approximation is valid in regime where the randomizing magnetic field strength $\kappa \ll b_0$.  By fixing the strength to generate a $\pi/2$ rotation in one gate period $\tau$ this means that $b_0 = \pi/(2 \tau)$, implying that $\kappa\, \tau \ll 1$.  We also found that  a gate period $\tau = 5 \times 10^{-3} \kappa^{-1}$ gave less than a $5\%$ RMS tracking error for the separable projection filter, (see Sec. \ref{chProj:sec:Simulations}), and places $b_0$ two orders of magnitude greater than $\kappa$.

For $n = 25-100$ qubits we find that the reconstruction fidelities have saturated by a time $t \sim t_f = 0.2\,\kappa^{-1}$, which we fix as final time for every simulation run.  With this final time and gate period, each simulation has 40 randomized $\pi/2$ rotations.

To efficiently implement these simulations we exploit two conservation properties of the system.  The first is that because the total angular momentum operator $J^2$ commutes with the stochastic unitary of Eq. (\ref{chQubit:eq:dUt}), the total angular momentum of the system is conserved.  This means that by initializing the system in a state of maximum projection of angular momentum (\emph{i.e.} in a pure SCS) we are initializing the system in the eigenspace with total angular momentum $J = n/2$.  Rather than considering the entire $d = 2^n$ dimensional Hilbert space we only need to simulate a spin $J = n/2$ particle and work in its $d = 2 J + 1 = n + 1$ dimensional Hilbert space.  The second property is that the conditional master equation we consider here maps pure states to other pure states, because it has no additional loss channel.  This means that we can in fact integrate a conditional Schr\"{o}dinger equation rather than a conditional master equation.  This makes a substantial savings in computational overhead as we be propagating a $d = n+1$ complex vector in time, rather than a $d\times d$ complex matrix.  These two properties that makes it computationally feasible to generate 1000 measurement records for a system containing 100 qubits.

The actual simulations are implemented in the MATLAB computing environment using a hand coded weak second-order predictor-corrector stochastic differential equation integrator.  The algorithm is described in \citet[page 200]{kloeden_numerical_1994} and was implemented in MATLAB by Brad Chase for his PhD dissertation \citep{chase_parameter_2009}.

\subsubsection{Monte Carlo Parameters}

The Monte Carlo separable estimator used 250 sample states for each part of the two-step estimation.  In the initial step, the 250 mixed states produce a sparse but uniform covering of all possible SCS directions.  A typical sampling has an average angular separation between adjacent points of $\sim 6^\circ$ and a maximum separation of $\sim 20^\circ$.  As mentioned above, the mixed state radius of the Bloch vector used in these simulations is $r_{\text{mixed}} = 0.75$.
In the second step, we sample 250 pure states that are constrained to be no more than $45^\circ$ from the most likely mixed state direction.  Example first and second step sampling distributions are shown in Fig. \ref{chQubit:fig:monteCarloResampling}.

\begin{figure}[ht]
    \begin{center}
        \includegraphics[width=1\hsize]{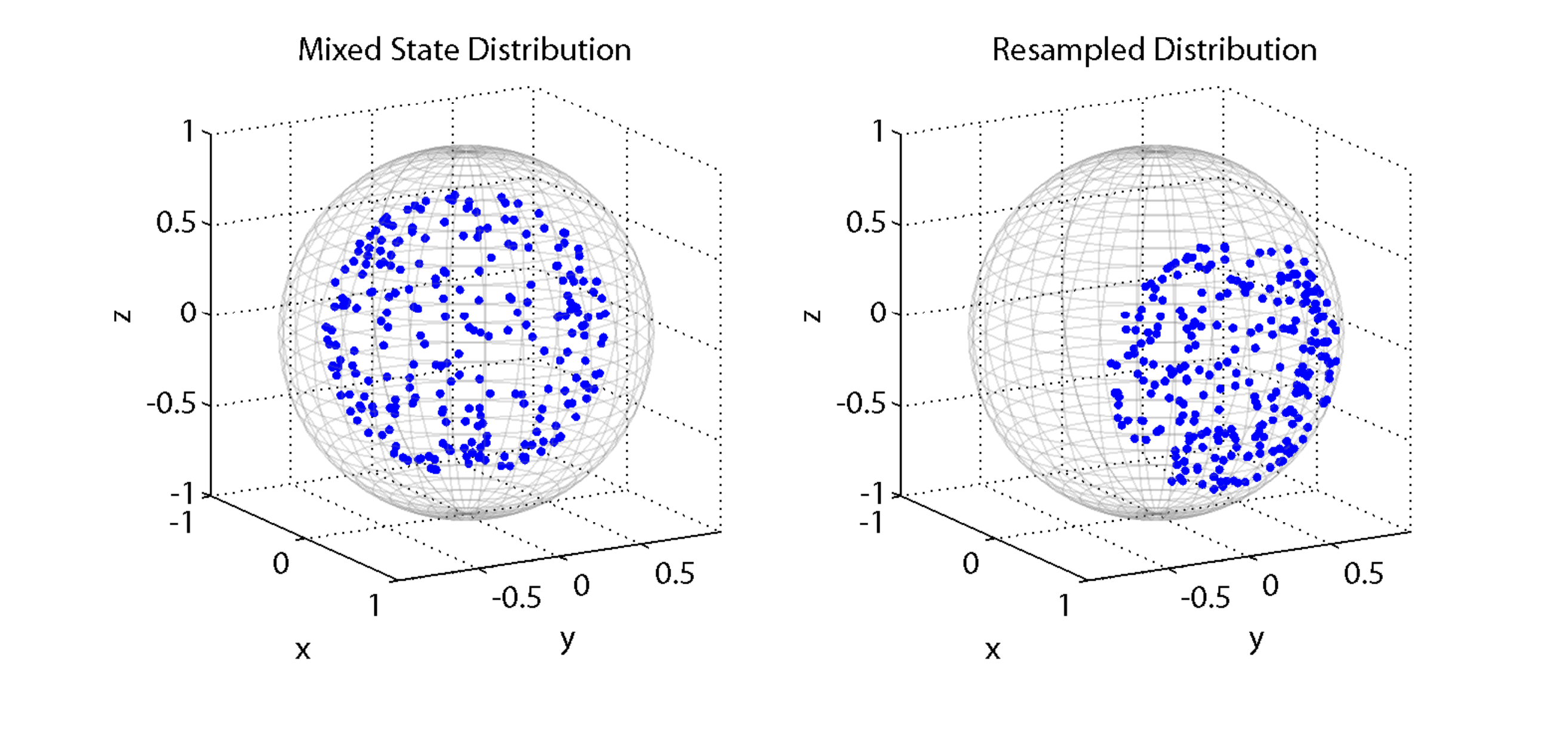}
        \caption{An Example Monte Carlo State Sampling Distribution.  (left) An example of the initial mixed state sampling for the Monte Carlo algorithm, with $m = 250$ and $r_{\text{mixed}} = 0.75$.  (right) An example of resampling $m = 250$ states about the $+\Ve_x$ axis with $\Theta_{\text{max}} = 45^\circ$. \label{chQubit:fig:monteCarloResampling} }
    \end{center}
\end{figure}

For the backaction-free comparison, we use an number of samples matching the density of points in the second resample step.  The resampled solid angle covers approximately $15\%$ of the Bloch sphere, meaning that $m = 1700$ cover the whole sphere with roughly the same density of states.  This number of samples lead to an average fidelity between nearest neighbors of $\langle \mathcal{F} \rangle_{\text{sample}} = .9994$, meaning that if the true ML estimate falls between two sample points, on average, the infidelity caused by the Monte Carlo sampling will be on the order of $10^{-4}$.  This is well below the optimum POVM bound for the simulated qubit number and so any loss in fidelity should not be attributable to sampling errors.

\subsection{Results and discussions \label{chQubit:sec:results} }

\begin{figure}[ht]
    \begin{center}
        \includegraphics[width=1\hsize]{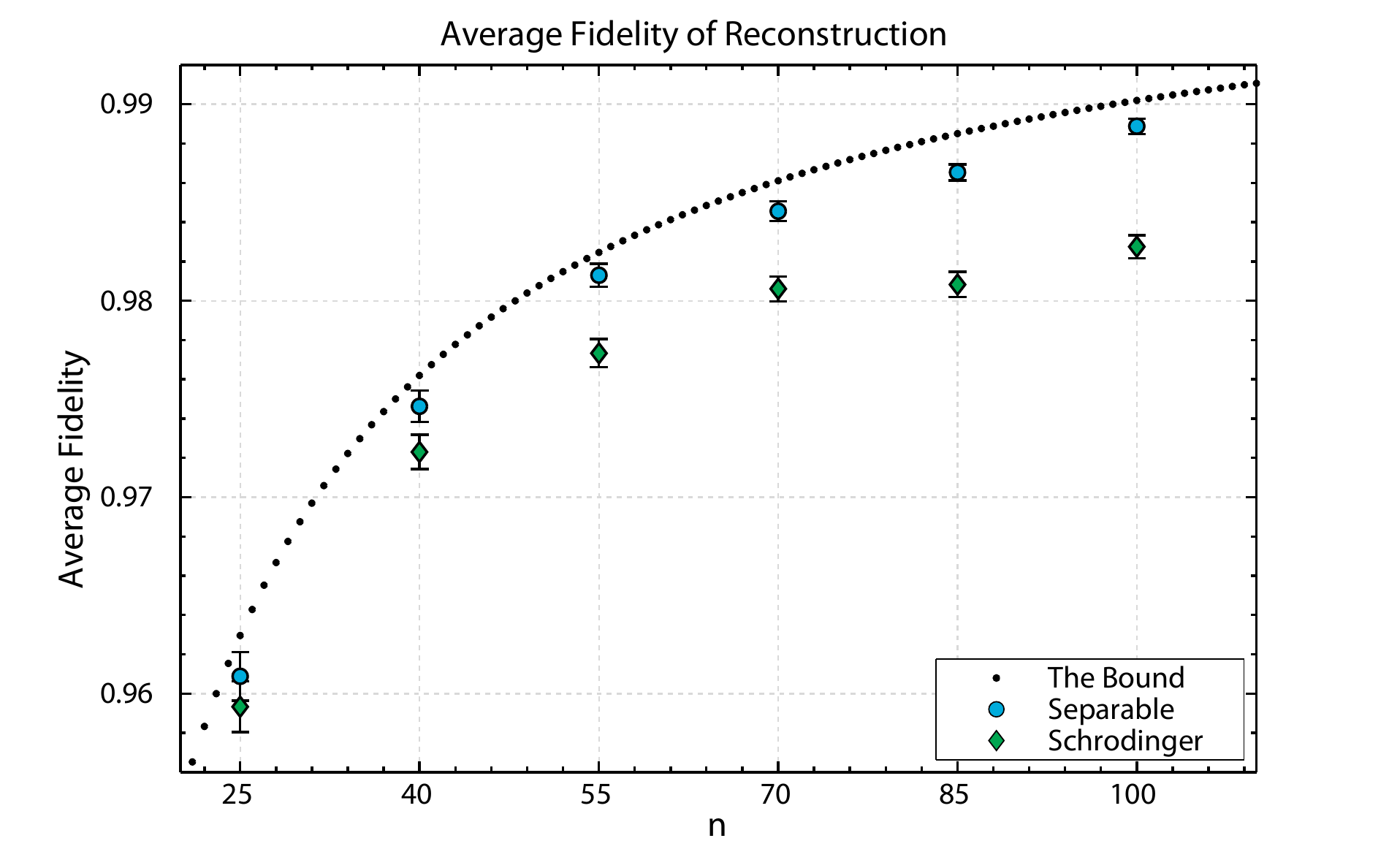}
        \caption{ \label{chQubit:fig:FidelPlot} A comparison of numerical reconstructions to the optimum bound (Color Online.)  Data points show the average fidelity of single shot reconstruction as a function of the number of qubits $n$, averaged over $\nu = 1000$ randomly chosen pure initial states states.  Blue circles show the separable estimator.  Green diamonds show the backaction-free Schr\"{o}dinger equation estimator.  The optimum POVM bound is show in as a dotted line.  Error bars show a standard error of $\pm \sqrt{\text{Var}[\mathcal{F}]/\nu}$. }
	\end{center}
\end{figure}
Fig. \ref{chQubit:fig:FidelPlot} shows the results of the numerical simulations.  The trial-averaged reconstruction fidelity is plotted as a function of the number of qubits in the system for both the separable estimate (\emph{i.e.} with backaction) and the Schr\"{o}dinger evolved, backaction-free estimator.  The fidelity is computed by taking the squared overlap between the single qubit state for that measurement record with the single qubit state estimate.  In other words, if the true qubit state is given by the Bloch vector $\Vr_0$ and the estimate reports the Bloch vector $\Vr_{m}$ then the fidelity of that reconstruction is given by $\mathcal{F} = \half ( 1 + \Vr_0\cdot \Vr_m\, )$.

In these numerical experiments, the separable Monte Carlo estimator shows a significant improvement over a simple backaction-free estimator that considers only the unitary evolution of the state due to the control fields.  The discrepancy increases as the number of qubits increase, keeping the duration of the measurement fixed.

Furthermore, the separable estimator almost achieves the optimum bound.  The deficit between the bound and the numerical averages never exceeds $0.21\%$ with an average of $0.16\%$, which is still above the expected error caused by the discrete Monte Carlo sampling.   A possible source for this deficit could be the separability assumption in the projection filtering method, which is known to have a non-negligible tracking error in the $J_z$ expectation value (see Sec. \ref{chProj:sec:Simulations}).

The performance of the backaction-free Schr\"{o}dinger estimator is best understood by considering not just the estimate for the initial state given the entire measurement record, but to instead consider the family of estimates created by only taking part of the measurement record.

\subsubsection{Estimator Bias}
The Monte Carlo estimators take as input a measurement record $y$ containing data for times $t \in [0, t_f]$ and returns an estimate for the initial state $\hat{\rho}_0$.  It is just as easy to consider a whole family estimates computed with only part of the total measurement record, \emph{i.e.}, instead of using the entirety of $y$ we use $y_s$ for $0 < s \le t_f$ in computing the estimate.  Ideally, having more data should only improve the estimate.  However, in order to use the data at times $t > s$ we are required to compute an estimate for the state of the system at time $s$.  If this estimate is in fact inaccurate, then any modeling errors might bias the conclusions drawn from future measurements.

Moreover \emph{both} of the estimators considered here have modeling errors.  The separable estimator uses the projection filtering equations, which explicitly remove any entangling dynamics.  The estimator based simply upon the unitary Schr\"{o}dinger dynamics makes a much greater sin.  This estimate completely ignores any effect the measurement has on the system of qubits.  Figures \ref{chQubit:fig:FilterBias} and \ref{chQubit:fig:SchBias} indicate what affect these modeling errors have on the average reconstruction fidelity.

\begin{figure}[ht]
    \begin{center}
    	\includegraphics[width=0.85\hsize]{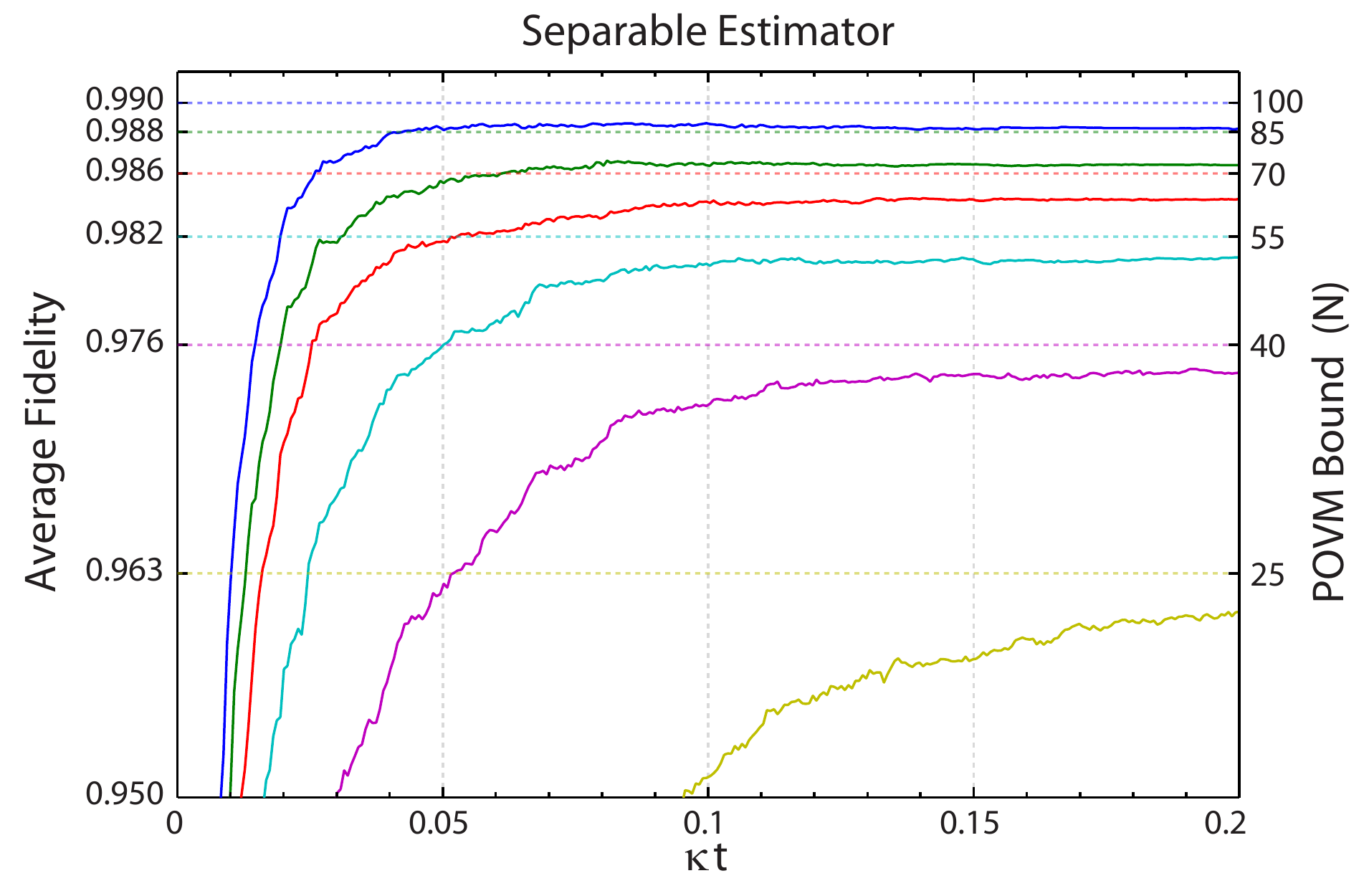}
         \caption{\label{chQubit:fig:FilterBias} Reconstruction fidelity vs measurement duration.  This plot shows the average reconstruction fidelities for the separable filter estimate, as a function of the length of the measurement record.  Shown are traces for the $6$ qubit numbers considered, which are (in order of  decreasing reconstruction fidelity) 100, 85, 70, 55, 40, and 25 qubits respectively.  The vertical axis is a linear scale, with grid lines indicating the optimum fidelity bound for these same number of qubits.  The averaging was over $\nu = 1000$ randomly chosen pure initial states.}
	\end{center}
\end{figure}

Fig. \ref{chQubit:fig:FilterBias} shows for the separable filter, the trial averaged reconstruction fidelities for all 6 qubit numbers plotted against the duration of the measurement record.  It is clear from this figure that having a larger signal composed of more qubits improves the final fidelity.  It also shows how, as the number of qubits increases, the fidelity improves at a faster rate.  Furthermore, the modeling error introduced by the separable approximation does not seem to significantly bias the estimate away from an optimum sample state.

\begin{figure}[ht]
    \begin{center}
    	\includegraphics[width=0.85\hsize]{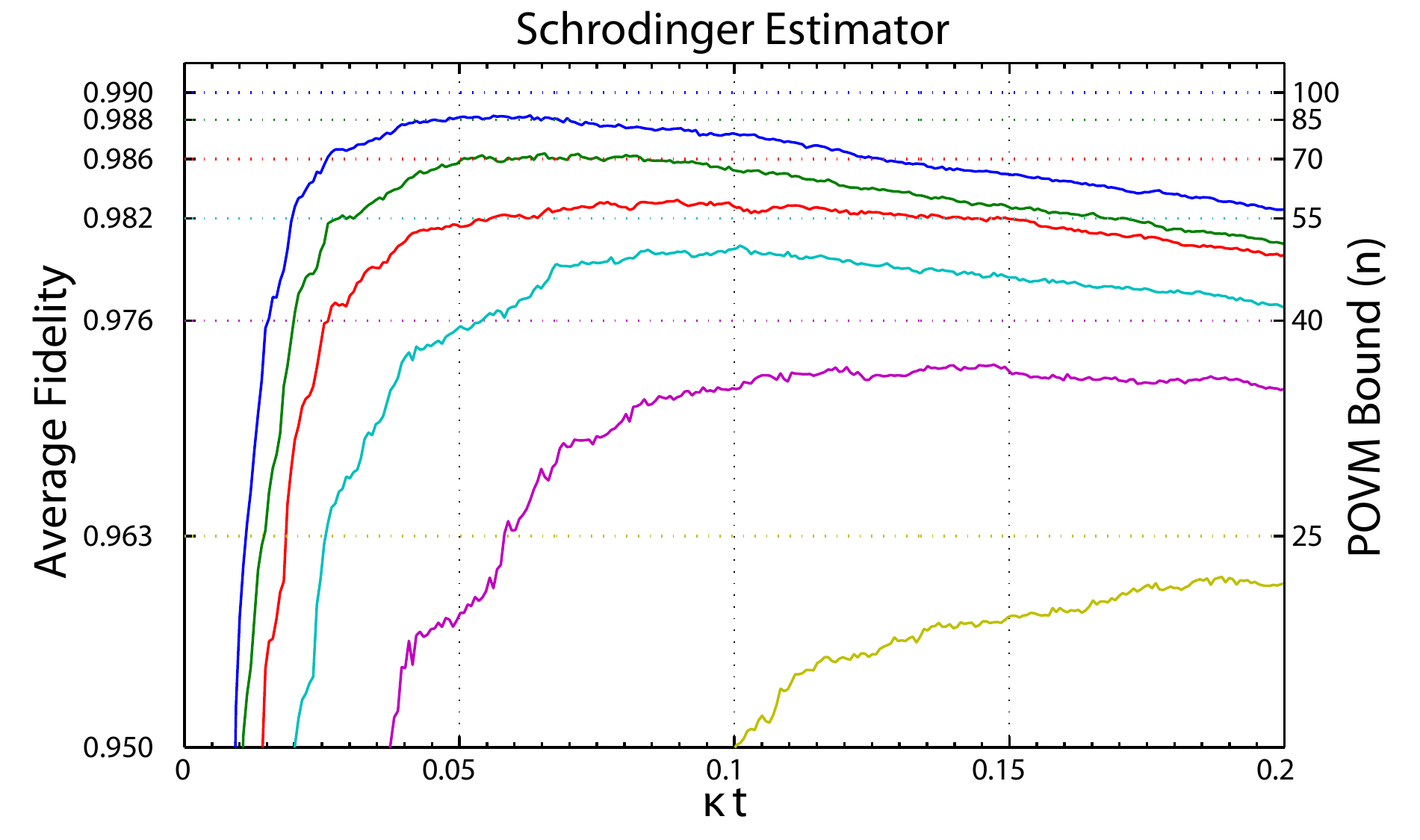}
         \caption{\label{chQubit:fig:SchBias} Reconstruction fidelity vs measurement duration.  This plot shows the average reconstruction fidelities for the backaction-free Schr\"{o}dinger estimate, as a function of the length of the measurement record.  Shown are traces for the $6$ qubit numbers considered, which are (in order of  decreasing reconstruction fidelity) 100, 85, 70, 55, 40, and 25 qubits respectively.  The vertical axis is a linear scale, with grid lines indicating the optimum fidelity bound for these same number of qubits.  The averaging was over $\nu = 1000$ randomly chosen pure initial states.}
	\end{center}
\end{figure}

Fig. \ref{chQubit:fig:SchBias} shows an identical plot for the backaction-free Schr\"{o}dinger estimate, showing that a larger number of qubits improves the reconstruction fidelity with a higher fidelity estimate at shorter measurement times.  However, it also shows that not including the backaction into the model significantly \emph{decreases} the reconstruction fidelity at longer measurement times.  This bias tends to be more pronounced as number of qubits increases. 

While the \emph{peak} reconstruction fidelities between the two methods seem to be comparable (certainly within error bars), the fact that the Schr\"{o}dinger evolution is biased away from an optimum state shows the importance of including the backaction in the dynamical model.

\chapter{Summary and Outlook\label{chap:Outtro}}
We conclude with a summary of each research chapter and a discussion of the possible avenues this research might take in the future.

\section{Quantum optics and quantum stochastic differential equations}

Chap. \ref{chap:QuantumLight} derived a relation between quasi-monochromatic traveling wave packets and the bosonic Fock space necessary for defining a formal quantum It\={o} stochastic calculus.  We identified a limit where the continuous-time tensor product decomposition is consistent with a quasi-monochromatic approximation.  The limit was ultimately enforced by convolving any bounded, square-integrable complex function with a smoothing kernel, constraining the resulting object to be slowly-varying in time.  A suitable white noise limit was identified when the kernel approached a delta function in conjunction with a limit where its derivative remained remained infinitesimal when compared to an optical period.  This produced a separation of three timescales.  The resulting quantum stochastic integral is on the slowest scale, the delta correlated smoothing kernel is in the middle, and the fastest is an optical period.

This version of a quantum white noise limit is new and distinct from two existing explanations.  The first is a static picture of a bosonic heat bath lacking any dynamical flow of information, which does not capture the fundamental propagation of a traveling wave field \cite{gardiner_input_1985,gardiner_quantum_2004}.  In an alternative description, \citeauthor{accardi_quantum_2002} derive a quantum white noise limit through a rescaling argument \cite{accardi_weak_1990,accardi_quantum_2002}.  If the system field coupling Hamiltonian had a fundamental interaction strength of $\lambda$, then by rescaling time as $t \rightarrow t/\lambda^2$ and taking the limit $\lambda \rightarrow 0$, a quantum white noise operator appears.  While this is in no doubt mathematically correct, it is in our opinion \emph{ad hoc}, as requiring time to be not just big, but specifically $1/\lambda^2$-big is an artificial constraint.  Here we do require a white noise limit in a ``middle timescale'', meaning that the smoothing envelop approaches a delta function ($\sigma \rightarrow 0$) while the timescale of one carrier oscillation tends to zero \emph{faster} ($(\sigma \omega_0)^{-1} \rightarrow 0$).  We do not require any fixed or delicate scaling law.  Additionally, in our model the QSDE treatment stands independently from any system-field interaction, as long as that coupling respects the above approximations.

After formulating this a quantum white noise approximation, we rederived a QSDE description of the propagator.  This derivation applies the recently derived quantum Wong-Zakai theorem \cite{gough_quantum_2006}.   This result describes how a general interaction involving scattering operations converges to a valid QSDE.  This result allows for a stochastic description of a dispersive Faraday interaction in a regime of a weak drive but high optical density.  A fundamental characteristic of any scattering based propagator is that admits for the possibility of having multiple scattering events in the intermediate timescale.  It effectively renormalizes over this effect.

The scattering propagator we derive is similar to a propagator derived by \citeauthor{bouten_adiabatic_2008} \cite{bouten_adiabatic_2008}.  They derive a polarizability interaction for a 4-level atom starting from a QSDE expression for a quantum field coupling two ground states, via two excited states.  They then adiabatically eliminated the excited atomic states under the usual approximations of weak excitation.  The resulting propagator contained similar but not identical scattering processes.  The difference is that the resulting integrands for the $d \Lambda^{rr}_t$ and $d \Lambda^{ll}_t$ terms only contain single scattering events.  By starting from a QSDE for a dipole interaction and then eliminated an atomic manifold, the atoms are only capable of making a single scattering transition in one intermediate time increment.  It is unclear at this point which model is more applicable for describing the underlying physics or if both are equally ``wrong'', just in different ways. It should be noted that both models agree in the limit of a weak forward scattering rate and negligible spontaneous emission.  More analysis is clearly needed to settle any debate.

\section{Classical and quantum probability theory}

Chap. \ref{chap:Math} served as a review of both classical and quantum probability theory.  It ultimately focused on the mapping, via the spectral theorem, between sets of commuting observables and a classical probability space. When a family of operators $\set{Y_i}_{i \in I}$ pairwise commute, the underlying projectors, \emph{i.e} the spectral measure $\proj(d\lambda)$, define the classical probability model $(\Omega, \mathcal{F}, \mathbbm{P})$.  The sample space $\Omega$ is set of labels $\set{\lambda}$, $\mathcal{F}$ is the smallest $\sigma$-algebra over $\Omega$, and the probability measure is defined by the quantum expectation value for a projector associated with any element in $\mathcal{F}$,   $\mathbbm{P}(d \lambda) = \Tr(\rho \proj(d\lambda)) $.

The utility of this mapping is that it allows us to define a classical stochastic process that is in some sense equivalent to a QND observable.  When restricting our attention to operators that commute with the projectors defining the classical space, we are able to treat any operator $A = \int_\Omega a(\lambda)\, \proj(d\lambda)$ in terms of the classical random variable $a$.  When the eigenvectors of $A$ do not form a complete basis in the underlying Hilbert space, there exist operators $X$ that commutes with $A$ but still must be treated quantum mechanically.   Rather than being a liability, this is in fact a feature, as it allows us to simplify a system-probe interaction by treating the system quantum mechanically and the probe as a classical random variable.  We call the resulting description semiclassical, but it still captures the physics of a system-probe-measurement model, when $\set{Y_i}_{i \in I}$ are the measured probe observables.  However a careful physicist should always check that any operator $X$ considered in this model commutes with every $Y_i$.

We then apply this formalism to rederive the conditional master equation, when measurements are generated from the observation process $\{ Y_t = U_t^\dag (A_t + A_t^\dag) U_t \}_{t \ge 0}$ in vacuum expectation.  This derivation is initially performed in the Heisenberg picture and is then referred to a quantum filter.  In the quantum filtering language the filtered observables $U_t^\dag X U_t$ are still operators, written as $\pi_t(X)$, and are linear combinations of the projectors $\set{\proj(d \lambda)}$.  When converted to a conditional master equation, we compute a system state $\rho_t$ that is defined on the system Hilbert space only.  When taking an expectation value of a system operator $X$, we have the equality
\begin{equation}
  \pi_t(X)|_{\set{Y_t = y_t}} = \Tr(\rho_t X)
\end{equation}
for every system operator $X$ and time $t$.  We also described how the quantum filter is equivalent to a purification of any generalized measurement scheme.

When the conditional master equation has complete information, and the initial condition is a pure state, then it is sufficient to use an equivalent Schr\"{o}dinger equation and calculate the random system state vector $\ket{\psi_t}$.  Given a correct and complete description of the system, a conditional Schr\"{o}dinger equation is identical to the stochastic Schr\"{o}dinger equation derived in the quantum optics literature and the choice of the measurement observables defines the specific unraveling of the unconditioned master equation.

In a recent paper by \citeauthor{tsang_evading_2012}, they define a \emph{quantum mechanics free subsystem} as a set of time-dependent operators subject to the constraint that the operators commute for any time \cite{tsang_evading_2012}.    Given the set of Heisenberg picture operators $\set{O_i(t)\, :\, i = 1, \dots, n}$, if $[O_i(t),\, O_j(t')] = 0$ for all $i,j, t, \text{and } t'$, then they are free of the laws of quantum mechanics.  This is exactly the same idea as the quantum to classical mapping via the spectral theorem and relies on exactly the same principle.  We mention this result here as it is an example showing active research utilizing the mapping between quantum and classical structures.  The tools of classical probability theory should have an important role to play in this line of research.  In particular, the concept of a commutant should be invaluable as it describes the set of operators that are compatible with this subsystem.

The results of Chap. \ref{chap:QubitState} would not have been possible without noting that the fundamental noise process driving the conditional master equation is not simply a Wiener process, but is instead the measurement realization itself.  With different initializations, the conditional master equation produces different innovations and only a few of them will have the statistics of a Wiener process.  From a statistical perspective, the conditional master equation is an estimator that allows us to predict the outcomes of measurements performed on the system, given an ancilla coupled measurement record.  Furthermore in classical estimation theory, the concepts of robustness and stability play an important roles, as it is a desirable for an estimator to be robust to modeling imperfections and incorrect initializations.  The stability of the quantum filter shows that in most cases the conditional master equation is able to correct itself given bad initial information.  This supports the case that a conditional quantum state should be viewed as a quantum analog to a classical estimator.  Using this perspective, it is possible that one might be able to formulate a variant of the quantum filter, that is more robust to modeling errors or corrupting noise in the measurement signal.

For real optical beams, the continuous-time tensor product decomposition was only an approximation, and that for short enough timescales the operators did not strictly commute.  Outside of this approximation, it is not immediately clear if one could formulate a truly commutative space of operators for the purpose of defining a conditional expectation.  As the technology of ultrafast lasers progresses, it might be possible to experimentally test a regime where subsequent optical measurements \emph{almost} commute and the conditional dynamics might then reveal surprising quantum effects.  Doing so would likely require formulating a conditional expectation on a noncommutative von Neumann algebra, which in some cases is possible \cite{van_handel_stability_2009}, but the classical probabilistic interpretation is lost \cite{bouten_introduction_2007}.

Von Neumann argued that a commuting approximation to almost commuting observables was always possible \cite{von_neumann_mathematical_1996}, however this has been shown to be not the case \cite{davidson_local_2001}.  In the context of spin chains, \citeauthor{ogata_approximating_2011} recently established the existence of commuting approximations for ``macroscopic'' observations \cite{ogata_approximating_2011}.  It is likely that these mathematically rigorous results will be invaluable in identifying the consistent information embedded in a sequence of noncommuting observations.

\section{Projection filtering for qudit ensembles}
The methods of differential geometry are a set of powerful and flexible tools, readily applied to a wide variety of problems.  Orthogonal projections are fundamental to quantum theory.  Chap. \ref{chap:projection} combines both of these tools to derive an approximation to the conditional master equation for an ensemble of $n$ qubits, given a diffusive measurement of the collective angular momentum projection $J_z$, and in the presence of strong global rotations.  The approximation was based on the ansatz that if the system was initialized in an identical tensor product state, $\varrho = \rho^{\otimes n}$, then it should remain close to a state $\rho^{\prime\, \otimes n}$ of some single qubit state $\rho'$.   The approximation was made to find a modified evolution that preserved this symmetry.  It was formulated by projecting the conditional master equation, acting on the state $\varrho$, into the tangent space of the manifold of states
\begin{equation}
  \mathcal{P} \define \set{ \rho^{\otimes n}\ : \ \rho\text{ is a valid qubit state} }.
\end{equation}
We worked in a parametrization where the single qubit state is mapped to a vector defined within the unit Bloch ball.  We were able to derive an analytic expression for a projection filter that describes the diffusion of the Bloch vector in a non-Euclidean but isotropic space.

We subsequently tested the quality of the resulting approximation numerically for pure spin coherent states.  These simulations were performed for a systems composed of $25 \le n \le 100$ qubits under a variety of conditions.  We make this comparison first without an external Hamiltonian, allowing the system to evolve only under the action of the measurement.  In an exact description this model would produce a significant amount of spin squeezing and is  confirmed in the simulations.  The projection filter tracked the mean expectation values with $\sim 90\%$ accuracy but failed to describe the correlations induced by the squeezing, as it was designed to do.  We then performed the same analysis in the presence of a Hamiltonian driving strong randomized rotations.  In this case, spin squeezing failed to significantly accumulate, leading to a $\gtrsim 95\%$ agreement between the exact and projected mean expectation values and an average fidelity $> 80\%$ between the projected and exact states for all qubit numbers tested.

A natural extension of the projection filter is to move beyond qubits and consider higher spin systems.  Unfortunately, the simplicity of the Bloch sphere is lost for $d > 2$.  There certainly exist $d^2 - 1$ traceless, orthogonal, Hermitian matrices for decomposing a $d$-dimensional quantum state.  The problem is that in attempting to formulate a mapping between valid quantum states and a $d^2-1$-dimensional ball, you find that not every point inside the ball, or its surface, corresponds to a valid quantum state \cite{kimura_bloch-vector_2005}.  The problem is that while the orthonormal matrices have a number of useful features, they do not share the same spectra, and so the boundary between valid and invalid states is not isotropic.  For qubits, we happily ignored any issues involving the boundary between valid and invalid states.  It is likely that for qudits willful ignorance may lead to disaster.  The best course of action may be to seek a more abstract representation of the state.

In addition to moving to a higher spin system, we can also consider correlated states.  One family of correlated states of general interest are spin squeezed states.  Finding a smooth parametrization for pure spin squeezed states is not difficult as the canonical example of spin squeezing is generated by a specific Hamiltonian \cite{kitagawa_squeezed_1993}.  By composing the one parameter group of squeezers with the group of $SU(d)$ rotations, it is likely that one can describe the space of pure spin-$d$ squeezed state as a $d^2$-dimensional manifold, baring any issues with linear independence.  Whether or not there is a wieldy metric induced on this space is a whole other question entirely.

An additional complication is the unavoidable fact that for a model to be at all experimentally useful, it must be able to handle mixed states and decoherence.  Adding single qubit decoherence to the separable projection filter is a trivial under taking.  Any map that acts identically and independently on each qubit is, by definition, in the tangent space of identical separable states.  The reason why our simulations only considered pure state dynamics is because generating exact simulations for $n \sim 50$ qubits is quite challenging when the total angular momentum is not a conserved quantity.   \citeauthor{chase_collective_2008} derive a simulation technique that required only order $n^2$ parameters for exactly propagating $n$ qubits under symmetry preserving local decoherence \cite{chase_collective_2008}.  By applying this or a similar method, we expect to be able to extend our numerical tests to include some decoherence to the model.

While the algorithmic and ``optimal'' nature of the projection filter is appealing, control and system engineers confronted by nonlinear problems have derived a number of suboptimal but highly successful estimation techniques.  Some of which are a linearized extended Kalman filters \cite{jazwinski_stochastic_2007}, ``unscented'' Kalman filters \cite{julier_unscented_2004}, Monte Carlo ``particle'' filters \cite{arulampalam_tutorial_2002}, symmetry preserving filters \cite{bonnabel_symmetry-preserving_2008}, and so on.  Some or all of these techniques may prove useful for partially observed quantum systems.  Although, without a general mapping between quantum observables and classical statistics, none of these tools are applicable.

\section{Qubit State Reconstruction}
Chap. \ref{chap:QubitState} applied the quantum filtering formalism to construct a tomographic estimate for an unknown initial quantum state from an ensemble of identical copies experiencing a joint continuous measurement.  We found a maximum likelihood estimate of the initial state, based upon the statistics of a \emph{single} continuous measurement realization.  The purpose of this work was to extend previous results using a continuous measurement for quantum state tomography, into a regime where the quantum backaction significantly affect the measurement statistics.  In a numerical study with ideal conditions, we found that our estimate nearly saturate an optimum bound.  Derived by \citeauthor{massar_optimal_1995}, this bound states that the average reconstruction fidelity given $n$ copies of a pure qubit state and no other prior information, the best average fidelity is given by $\langle\mathcal{F}\rangle_\text{opt}  = (n+1)/(n+2).$

The problem of identifying a tomographic estimate was mapped to a parameter estimation problem, where the statistics of the measurement record parametrically depended upon the initial qubit state.  We then found that the likelihood function for the measurement record was ultimately Gaussian, leading to an equivalence between a maximum likelihood estimate and a least-squares estimate.   Maximizing the likelihood function then ultimately reduced to minimizing the quadratic variation of an innovation process, computed from the measurement record and a conditional state estimate.  When the conditional state corresponds to a ``correct'' description, then the innovation is a Wiener process, setting the minimum value to be $\sim t$.

In order to make a numerical implementation computationally feasible, we approximated an exact innovation by one computed with the projection filter.  As this reconstruction procedure is tied to the quality of the projection filter, any improvements in its accuracy will almost surely improve the reconstruction fidelity.  We expect that by extending the projection filter to include squeezed states, there will be near perfect agreement between an innovation computed from the projection filter and an innovation computed from the exact conditional master equation.

The nature extensions of the projection filter carries over to the case of state reconstruction.  The principle of finding a least-squared estimate is system independent as long as evolution still described by a diffusive conditional master equation.  However when moving to qudits, the number of parameters we need to estimate grows unfavorably with $d$.  It is likely that in the general case, it will no longer be feasible to simply sample from the compact parameter space and select the most likely candidate.  Parameter estimation in nonlinear statistical models is a well studied problem for classical systems and we believe that a classical solution will be adaptable to the quantum case.  One possible avenue to investigate is a statistical importance and resampling technique, a ``particle filter'' \cite{arulampalam_tutorial_2002}, which has already been adapted to a quantum parameter estimation problem \cite{chase_single-shot_2009}.

From our perspective, it is an open question as to whether or not minimizing the innovation's quadratic variation is the optimum statistical test to use.  In hypothesis testing, comparing the ratio of two likelihood functions has been shown to have the most predictive power out of all statistical tests.  From that fact, one possible method for improving the reconstruction procedure is to compute the likelihood ratio between each candidate state and a master equation initialized in the completely mixed state.  In hypothesis testing, the ratio is made by comparing the likelihood of the data being generated from your model compared to a null hypothesis.  For a quantum system the most logical null hypothesis is the completely mixed state.  It is possible that by computing this likelihood ratio we will be able to better discriminate the signal arising from the initial state from the signal caused by the quantum backaction.

It is still an open question as to why this continuous measurement scheme approaches the optimum bound computed by \citeauthor{massar_optimal_1995}.  Because the numerical results perform so well, it is likely that the randomized controls are mapping the continuous measurement to a uniform measure over all spin coherent states, as this is known to achieve the optimum bound \cite{bagan_comprehensive_2005}.  Understanding what effective POVM a given controlled-continuous measurement implements will likely be a powerful result in itself.  Armed with that knowledge, a clever experimentalist could engineer any number of complicated measurement protocols.  Not the least of which being unambiguous state discrimination \cite{nielsen_quantum_2000}.



\appendix

\chapter{Paraxial Optics \label{app:paraxialOptics}}
This appendix review the paraxial wave equation and ultimately calculates Fourier transform of a paraxial mode function.  The paraxial wave equation begins by assuming a quasi-monochromatic solution to the wave equation that takes the form of a rapidly oscillating plane wave, $\exp( + i( k_0 z - \omega_0 t) ) $ modulate by an envelope function that changes slowly in both space and time.  If there is a vector valued function $\V{\mathcal{U}}(\Vx,t)$ satisfying the wave equation
\begin{equation}
    \grad^2 \V{\mathcal{U}}(\Vx,t) - \frac{1}{c}\frac{\partial^2}{\partial t^2} \V{\mathcal{U}}(\Vx,t) = 0,
\end{equation}
then we hypothesize a real-valued solution of the form $\V{\mathcal{U}}(\Vx,t) = \V{\mathcal{U}}^{\pf}(\Vx,t) + c.c.$ where
\begin{equation}
    \V{\mathcal{U}}^\pf(\Vx,t) = \V{u}^\pf(\Vx, t) e^{+ i ( k_0 z - \omega_0 t)}
\end{equation}
and $\V{u}^\pf(\Vx, t)$ is a slowly varying function.  Slowly varying is characterized by the inequalities
\begin{subequations}\label{chPara:eq:paraxialApproximation}
\begin{equation}
    \abs{\frac{\partial^2 \V{u}^\pf}{\partial z^2} }  \ll k_0 \abs{\frac{\partial \V{u}^\pf}{\partial z} } \ll k_0^2 \abs{\V{u}^\pf}
\end{equation}
and
\begin{equation}
    \abs{\frac{\partial^2 \V{u}^\pf}{\partial t^2} } \ll \omega_0 \abs{\frac{\partial \V{u}^\pf}{\partial t} }\ll \omega_0^2 \abs{\V{u}^\pf}.
\end{equation}
\end{subequations}
Based upon this assumption one can then neglect terms in the wave equation that are second derivatives with respect to $z$ and $t$.  The result is the paraxial wave equation,
\begin{equation}
    \begin{split}\label{chPara:eq:paraxialWaveEq}
        \frac{1}{2 k_0} \grad_T^2 \V{u}^\pf + i \left(\frac{\partial \V{u}^\pf}{\partial z} + \frac{1}{c}\frac{\partial \V{u}^\pf}{\partial t} \right) = 0.
    \end{split}
\end{equation}
$\grad_T^2$ denotes the Laplacian with respect to the remaining transverse coordinates $\set{x, y}$, and we will denote the transverse direction as $\Vx_T$.  It is often said that the paraxial approximation is valid under the assumption that the envelope function $\V{u}^\pf$ varies slowly when compared to an optical period, $2 \pi/\omega_0$.

Under the change of variables $(\Vx; t) \rightarrow (\Vx_T, z; t - z/c)$ the paraxial wave equation becomes independent of retarded time $t_r = t - z/c$.  Without a loss of generality it can be assumed that 
\begin{equation} \label{chPara:eq:paraxialFactoredSolution}
    \V{\mathcal{U}}^{\pf}(\Vx,t) = f(t_r )\, \V{u}^\pf_T(\Vx_T, z)\, e^{- i \omega_0\, t_r }
\end{equation}
for any slowly varying $f$. In this case we have,
\begin{equation}\label{chPara:eq:paraxialSchrodingerEquation}
    \frac{1}{2 k_0} \grad_T^2 \V{u}^\pf_T(\Vx_T, z)  = - i \frac{\partial}{\partial z} \V{u}^\pf_T(\Vx_T, z).
\end{equation}
If we make the replacement $z \rightarrow t$ and $k_0 \rightarrow m/\hbar$ then Eq. (\ref{chPara:eq:paraxialWaveEq}) is identical to a two-dimensional Schr\"{o}dinger equation in free space.  Furthermore in the Fourier domain, each plane wave component is an eigenstate of the ``Hamiltonian'' $\frac{1}{2 k_0} \grad_T^2$ ultimately implying,
\begin{equation}
    \VF{u}^{\pf}_T(\Vk_T, z) = \VF{u}^{\pf}_T(\Vk_T, 0) e^{- i \frac{\abs{\Vk_T}^2}{2 k_0} z}.
\end{equation}
where $\Vk_T = k_x \Ve_x + k_y \Ve_y$.  Note that this is Fourier transform is with respect to the transverse components only and is still a function of the longitudinal component $z$.  Taking the Fourier transform of the full solution $\V{\mathcal{U}}(\Vx, t) \rightarrow \VF{\mathcal{U}}(\Vk, t)$,
\begin{equation}
    \VF{\mathcal{U}}^{\pf}(\Vk, t) = \VF{u}^{\pf}_T(\Vk_T, z = 0) \int \frac{dz}{\sqrt{2 \pi}} e^{-i k_z z} e^{- i \frac{\abs{\Vk_T}^2}{2 k_0} z} f(t - z/c )e^{i k_0 (z - c t)}.
\end{equation}
If we change variable from taking the spatial transform with respect to $z$ to transforming with respect to the retarded time $t_r$ we find that
\begin{equation}
    \VF{\mathcal{U}}^{\pf}(\Vk, t) = c\, \F{f}(\,c |\Vk_T|^2/(2 k_0) + c k_z - \omega_0)\, \VF{u}^{\pf}_T(\Vk_T, 0)\,e^{-i c \left(\frac{\abs{\Vk_T}^2}{2 k_0} + k_z\right) t}
\end{equation}
where $\F{f}(\omega)$ is the temporal Fourier transform of $f$.

We know that regardless of any approximations the positive frequency component of a traveling wave solution evolves according to $\VF{\mathcal{U}}^{\pf}(\Vk, t) = \VF{\mathcal{U}}^{\pf}(\Vk, 0) e^{-i c \abs{\Vk} t}$.  Evidently, the paraxial approximation is an approximation that
\begin{equation}\label{chPara:eq:Dispersion}
    \omega(\Vk) = c \abs{\Vk} \approx c\,(\frac{\abs{\Vk_T}^2}{2 k_0} + k_z).
\end{equation}
This approximation seems slightly at odds with the fact that $\abs{\Vk}$ is strictly nonnegative, while the right-hand side of Eq. (\ref{chPara:eq:Dispersion}) extends to negative frequencies.  This issue is resolved by the assumption that $\V{u}^{\pf}(\Vx, t)$ is a slowly varying function or equivalently $\VF{u}^{\pf}(\Vk, t)$ is a sharply peaked function about $\Vk = 0$.  The resolution is that because of the carrier plane wave, $\VF{\mathcal{U}}^{\pf}(\Vk, t)$ is a sharply peaked function about $\Vk_0 = k_0 \, \Ve_z$.  While in principle  $\omega(\Vk)$ can be negative, these negative components never contribute as long as the paraxial approximation holds.  Finally we find that
\begin{equation}
    \VF{\mathcal{U}}^{\pf}(\Vk, t) = c\, \F{f}\left(\omega(\Vk) - \omega_0\right)\, \VF{u}^{\pf}_T(\Vk_T, 0)\,e^{-i \omega(\Vk) t}.
\end{equation}

\chapter{Classical Stochastic Calculus \label{app:SDEs}}
This appendix reviews the derivation of the Stratonovich and It\=o integrals as well as the rules of It\=o calculus.  We attempt to describe the salient points found in standard texts while leaving out the proofs.

Stochastic integration, in either the quantum or classical sense begins from a form of functional integration.  The traditional Riemann integral takes a function $f(t)$ and integrates with respect to a small difference in its argument $\Delta t$. In a functional integral, as defined by Stieltjes, the function $f(t)$ is integrated with respect to a small difference in another function $g(t)$, \emph{i.e.}
\begin{equation}\label{appSDEs:eq:stieltjes}
  \int_0^t f(s)\, dg(s)\define \lim_{n \rightarrow \infty} \sum_{i = 1}^n f(t^*_{i})\ \big(\, g(t_i) - g(t_{i-1})\, \big)
\end{equation}
where $t_{i-1} \le t^*_i \le t_i$.  This limit can be shown to make sense if $f$ and $g$ are reasonably well behaved.  One limitation is that $g$ can't vary ``too much'' over a time interval $\Delta t$, (the total variation of $g$ must be finite) \cite{van_handel_stochastic_2007}.  Like the usual definition of a Riemann integral, the convergence of this integral does not depend upon where $t^*_i$ lies in the interval $[t_{i-1}, t_i]$.

A stochastic integral, often called a stochastic differential equation (SDE), replaces both functions $f$ and $g$ by stochastic processes.  However in this replacement the problems of working with nondifferentiable functions leads to a more delicate situation.  The fact that Brownian motion has a nowhere smooth trajectory means that its total variation is infinite, leading to a divergence in a Riemann-Stieltjes limit \cite{oksendal_stochastic_2002}.  Not only does this mean that we are forced to consider a different kind of limit, but the choice of $t^*_i$ makes a dramatic difference on its mathematical and statistical properties. Specifically, if $t^*_i = t_{i-1}$, one arrives with an It\={o} integral, which has several desirable statistical properties but does not obey the chain rule as seen in ordinary calculus.  If, however, $t_i^*$ is taken at the midpoint of the interval, called a Stratonovich integral, then the rules of calculus are preserved but the statistical properties are more involved.  Fortunately there exists a simple conversion between the two integral definitions.  We will discuss all of these issues in greater detail in the following section.

Concretely, the It\={o} and Stratonovich integrals begin with the following definitions.  Consider the partitioning of the time interval $[0, \tau]$ into in increasingly dense mesh $n$-ordered times $\set{t_n : n \in \mathbbm{N} }$, $0 < t_1 < \dots < t_n = \tau$.  The It\={o} integral takes the (time-adapted) process $\set{x_t}_{t \ge 0}$ and defines the integral of the well-behaved function $b(x_t)$, with respect to the Wiener process $w_t$, to be
\begin{equation}\label{appSDEs:eq:classicalIto}
    \int_0^\tau b(x_t) \, d w_t \define \lim_{n \rightarrow \infty} \sum_{i = 1}^n b(x_{t_{i-1}})\, ( w_{t_i} - w_{t_i-1}).
\end{equation}
This limit is then shown to converge to an almost unique object, with probability one\footnote{There is a slight caveat where one could add another random process which happens to have zero probability of ever occurring.}  The Stratonovich integral takes a different definition,
\begin{equation}\label{appSDEs:eq:classicalStratonovich}
    \int_0^\tau b( x_t ) \circ d w_t \define \lim_{n \rightarrow \infty} \sum_{i = 1}^n b \left(  \tfrac{x_{t_{i}} + x_{t_{i-1}}}{2} \right)\, ( w_{t_i} - w_{t_i-1}).
\end{equation}
These two definitions arrive at fundamentally different, but not unrelated, integrals.  One integral can be converted to another by using a simple trick, derived in Sec. \ref{appSDEs:sec:ItoConversion}.

Before moving on to discussing the operational and statistical properties of these integrals it is worth noting, that in attempting to model a classical physical system, with SDEs the choice of calculus is crucial.  Fortunately, the question as to which calculus to use is answered by the Wong-Zakai theorem \citep{wong_convergence_1965} (see the introduction to Appendix \ref{app:QuWongZakai}).  In this paper they showed that an ordinary differential equation containing a piecewise smooth approximation to Brownian motion, limits to a Stratonovich equation and not an It\={o} equation.  If one derives an equation of motion for a system including an approximation to Brownian motion then the solution to that equation must be interpreted in the Stratonovich sense.

\section{It\={o} Calculus}
The rules of It\={o} calculus can be derived with varying levels of detail and sophistication.  Their practical purpose is to give a method for manipulating and combining multiple It\={o} integrals into new and different expressions.  The bottom line result is that the standard differential chain rule, $d (f g) = f' dg + g' df$  must be extended to include a second order correction, see Eqs. (\ref{appSDEs:eq:ItoExpansion} - \ref{appSDEs:eq:ItoExampleProducts}).  An often cited reference for the derivation of the It\={o} integral and It\={o} calculus is the book by \citeauthor{oksendal_stochastic_2002} \cite{oksendal_stochastic_2002}.  There he shows how the limits in Eqs. (\ref{appSDEs:eq:classicalIto}) and (\ref{appSDEs:eq:classicalStratonovich}) may or may not converge.  Specifically, the standard techniques for defining a integral with respect to a Riemann sum fails, because in doings you ultimately consider the quantity, called the total variation,
\begin{equation}
    \lim_{\Delta t \rightarrow 0} \sum_{i = 1}^n \abs{ w_{t_i} - w_{t_i-1} }
\end{equation}
for the partition of times $a \le t_0 < \dots < t_n = b$.  What you can show is that with probability one, this is infinite for the Wiener process.
However you can also show that instead of summing the absolute value of each increment, $|w_{t_i} - w_{t_i-1}|$, you sum the square of each increment, $(w_{t_i} - w_{t_i-1})^2$, then you obtain a finite quantity.  This is called the quadratic variation, and with probability one,
\begin{equation}
    \lim_{\Delta t \rightarrow 0} \sum_{i = 1}^n( w_{t_i} - w_{t_i-1} )^2 = b - a.
\end{equation}
What this is saying is that while the Wiener process travels an infinite absolute distance in any finite time, it RMS displacement only grows like the square root of time.  Using the fact that the Wiener process has a well behaved quadratic variation, the limits of the It\=o and Stratonovich integrals make sense if you consider their squared expectation value, the so called $\mathcal{L}^2(\mathbbm{P})$ limit.

For a flavor for how a squared expectation value might make sense for a Wiener process, observe that as it is constructed to have Gaussian statistics for any finite interval,
\begin{equation}
\lim_{n \rightarrow \infty} \mathbbm{E}\left( \left( \sum_{i = 1}^n\,( w_{t_{i}} - w_{t_{i-1}}\, ) \right)^2 \right) = \mathbbm{E}\left(  (\, w_{t} - w_s\, )^2 \right) =  t - s .
\end{equation}
The way this is turned into an integral is that if one has the \emph{time-adapted} stochastic process $b_t$, \emph{i.e.} it is assumed to be independent of $w_{t'}$ for times $t' > t$, then it can be shown that
\begin{equation}
\lim_{n \rightarrow \infty} \mathbbm{E}\left( \Big(\, \sum_{i = 1}^n\,b_{t_{i-1}}( w_{t_{i}} - w_{t_{i-1}})\, \Big)^2 \right) = \lim_{n \rightarrow \infty} \mathbbm{E}\left(  \sum_{i = 1}^n\,b_{t_{i-1}}^2( t_{i} - t_{i-1}\, ) \right)
\end{equation}
as long as $\mathbbm{E}\left(  \int_s^t\, b_{t}^2\, dt \right) < \infty$  \cite{oksendal_stochastic_2002}.  A fair amount of analysis goes into showing for what kinds of processes a piecewise constant approximation.  More work is needed to extend the proof to hold with probability one.  We mention this here only because it indicates the line of reasoning that relates the product of two It\={o} integrals to a integral over time of the product of the integrands.

Moving from a consistent definition of an integral to a mature calculus involves placing a constraint on what kind of integrands we are able to use in a stochastic integral.  Consider the example of the recursively defined It\={o} process
\begin{equation}\label{appSDEs:eq:classicalItoExample}
  x_t = x_0 + \int_0^t a(s, x_s)\, ds + \int_0^t b(s, x_s)\, dw_s,
\end{equation}
where $a$ and $b$ are continuous functions that are once differentiable in time and twice in $x$.
An extremely common and useful notational device is to write an It\={o} integral in a differential form,
\begin{equation}\label{appSDEs:eq:classicalItoDifferential}
  dx_t = a(t, x_t)\, dt +  b(s, x_t)\, dw_t
\end{equation}
to represent that integral.  The utility of this notation is apparent in that if $x_t$ were an ordinary deterministic equation, (setting $b(t, x_t) = 0$) then $x_t$ is the solution to the equation
\begin{equation}
  \frac{d x}{dt} = a(t, x).
\end{equation}
This is why that these kinds of equations are called \emph{stochastic differential equations}.

A typical application is to consider two such equations, so in addition to $x_t$ we have the differential
\begin{equation}
  dy_t = c(t, y_t)\, dt +  d(t, y_t)\, dw_t.
\end{equation}
We wish to find a differential for the product $x_t\, y_t$ or even some other function $f(x_t, y_t)$.  The answer to the first example is that
\begin{equation}\label{appSDEs:eq:ItoProductRule}
  \begin{split}
  d(x_t\,y_t) &= a(t, x_t)\,y_t\, dt + b(t, x_t)\, y_t\, dw_t\\
              &\quad + x_t\, c(t, y_t)\, dt +  x_t\, d(t, y_t)\, dw_t\\
              &\quad + b(t,x_t)\, d(t, y_t)\, dt.
  \end{split}
\end{equation}
This expression can be derived by computing the $\mathcal{L}^2(\mathbbm{P})$, $\Delta t_i \rightarrow 0$ limit of the definition of the It\={o} integral.
It is important to emphasize that the right-hand side of this equation is itself another It\={o} integral.  By multiplying, adding, and subtracting It\={o} integrals one still finds ``just'' another It\={o} integral, with the whole closing upon itself to form an algebra.  Compare this example to a second order Taylor expansion of the product $x_t\, y_t$,
\begin{equation}\label{appSDEs:eq:secondOrderProductRule}
  d(x_t\,y_t) = dx_t\, y_t + x_t\, dy_t + dx_t\, dy_t.
\end{equation}
If we apply the It\={o} rules to the second order product $dx_t\, dy_t$, the only surviving term is $(b\, dw_t)\, (d\, dw_t) = b\, d\, dt$.  This means that there is a consistency between Eqs. (\ref{appSDEs:eq:ItoProductRule}) and (\ref{appSDEs:eq:secondOrderProductRule}).
The general case for some function $f(x_t, y_t)$ is given by the second order Taylor expansion,
\begin{multline} \label{appSDEs:eq:ItoExpansion}
    df(x_t, y_t) = \frac{\partial f(x_t, y_t) }{\partial x_t}\, d x_t + \frac{\partial f(x_t, y_t) }{\partial y_t}\, d y_t\\
 + \frac{1}{2} \frac{\partial^2 f(x_t, y_t) }{\partial x_t^2}\, d x_t\, d x_t + \frac{1}{2} \frac{\partial^2 f(x_t, y_t) }{\partial y_t^2}\, d y_t\, d y_t + \frac{\partial^2 f(x_t, y_t) }{\partial x_t\, \partial y_t}\, d x_t\, d y_t
\end{multline}
where
\begin{align} \label{appSDEs:eq:ItoExampleProducts}
  dx_t\, dx_t &= b^2(t, x_t)\, dt, \\
  dy_t\, dy_t &= d^2(t, y_t)\, dt,\quad \text{ and}\\
  dx_t\, dy_t &= b(t, x_t)\,d(t, x_t)\, dt.
\end{align}
In order for this general It\={o} expansion to be well defined, $f(x, y)$ must have a finite first and second derivatives.

\subsection{The It\={o} conversion \label{appSDEs:sec:ItoConversion}}
Supposed we have the recursive It\={o} form SDE such that
\begin{equation}
    x_\tau = \int_0^\tau a(x_t)\, dt + \int_0^\tau b(x_t)\, dw_t.
\end{equation}
We then assert that this same process has a corresponding Stratonovich form,
\begin{equation}
    x_\tau = \int_0^\tau \tilde{a} (x_t)\, dt + \int_0^\tau b(x_t)\circ dw_t.
\end{equation}
($\tilde{a}$ has no relation to the Fourier transform.)  Our goal is to then find a relation between the functions $a(x)$ and $\tilde{a}(x)$ that makes this assertion true.  Subtracting these two expressions lead to the equality,
\begin{equation}\label{appSDEs:eq:itoCorrectionDefinition}
    I_c \define \int_0^\tau b( x_t ) \circ d w_t - \int_0^\tau b( x_t )\, d w_t  = \int_{0}^\tau a(x_t)\, dt - \int_{0}^\tau \tilde{a}(x_t)\, dt.
\end{equation}
This difference is known as the \emph{It\={o} correction term}.

To lighten the notation we will define the intervals, $\Delta x_{i} \define x_{t_i} - x_{t_{i-1}}$, $\Delta w_{i} \define w_{t_i} - w_{t_{i-1}}$ and $\Delta t_{i} \define t_i -  t_{i-1}$.  The It\={o} correction can then be written as
\begin{equation}\label{appSDEs:eq:itoCorrectionLimit}
  \begin{split}
    I_c =&\, \lim_{n \rightarrow \infty} \sum_{i = 1}^n \left( b \left(  \tfrac{x_{t_{i}} + x_{t_{i-1}}}{2} \right) - b( x_{i-1}) \right)\, \Delta w_i\\
    &\, \lim_{n \rightarrow \infty} \sum_{i = 1}^n \left( b \left( x_{t_{i-1}} +  \half\, \Delta x_i \right) - b( x_{i-1}) \right)\, \Delta w_i
  \end{split}
\end{equation}
We can write the integrand very suggestively in terms of a prelimit form of the derivative of $b(x)$.
By defining
\begin{equation}
    \frac{\Delta b}{\Delta x}(x) \define \frac{b(x + \Delta x) - b(x) }{\Delta x}
\end{equation}
we then have
\begin{equation}
    b \left( x_{t_{i-1}} +  \half\, \Delta x_i \right) - b( x_{i-1}) =  \frac{1}{2}\, \frac{\Delta b}{\Delta x}( x_{t_{i-1}})\, \Delta x_i.
\end{equation}
With a recursive definition for $x_t$,  we can substitute $\Delta x_i$ to find
\begin{equation}
    I_c = \lim_{n \rightarrow \infty} \sum_{i = 1}^n  \frac{1}{2} \frac{\Delta b}{\Delta x}(x_{t_{i-1}})\,\big( a(x_{t_{i-1}})\,\Delta t_i + b(x_{t_{i-1}} )\, \Delta w_i \big)\, \Delta w_i.
\end{equation}
However, by the rules of It\={o} calculus (\emph{i.e.} $\Delta t_i\, \Delta w_i \rightarrow 0$ and $\Delta w_i\, \Delta w_i \rightarrow \Delta t_i$ with probability 1) the whole expression converges to
\begin{equation}\label{appSDEs:eq:itoCorrection}
    I_c =  \frac{1}{2} \int_0^\tau\,\frac{d b}{d x}(x_{t})\,b(x_{t} )\, d t.
\end{equation}
And so we ultimately find that
\begin{equation}
    \tilde{a}(x) = a(x) -  \frac{1}{2}\frac{d b}{d x}(x)\,b(x).
\end{equation}
This conversion between It\={o} and Stratonovich equations is vitally important in Chap. \ref{chap:projection}, where the first order rules of ordinary calculus allows for the application of differential geometry to a stochastic system. Sec. \ref{chProj:sec:StochCalcManifolds} gives an example for why using Stratonovich calculus is necessary.

\chapter{Quantum Stochastic Calculus\label{app:QSDEs} }

This appendix reviews the basis notation and properties of an It\={o} form quantum stochastic differential equation (QSDE).  Sec. \ref{chQuLight:sec:QuWienerProcesses} discusses at length how a bosonic Fock space $\Fock(\hilbert)$, defined over the single particle Hilbert space $\hilbert = \mathcal{L}^2(\R^+)\otimes\Cn{d}$, has the operators $Q^i_t$ and $P^i_t$.  Each of these operators have the statistics of Wiener processes when taken in vacuum expectation.  They are constructed though linear combinations of the annihilation and creation operators, $A_t^i = \ahat[\indicate{[0,t]}\Ve_i]$ and $A_t^{i\, \dag} = \ahat^\dag[\indicate{[0,t]}\Ve_i]$, which are also used to form non-Hermitian quantum It\={o} integrals.    In addition to these processes, Sec. \ref{chQuLight:sec:Lambda} encountered a different kind of operator, the scatter or conservation processes $\Lambda^{ij}_t$.  Here we will gloss over how $\Lambda^{ij}_t$ can be defined in terms of an operation acting on the single particle Hilbert space (see \cite{barchielli_continual_2006}).  Note that $\left(\Lambda^{ij}_t\right)^\dag = \Lambda^{ji}_t$.   To define an integral with respect to $A_t^i$, $A_t^{i\, \dag}$, and $\Lambda^{ij}_t$, it is sufficient to know the matrix elements\footnote{For technical reasons the amplitudes of these states are assumed to be square integrable and have a large but finite upper bound \cite{barchielli_continual_2006, bouten_introduction_2007}. },
\begin{subequations}\label{appQSDEs:eq:quantumNoisesMatrixElements}
\begin{align}
  \braOket{\e[\Vf]}{A^{i}_t}{\e[\Vh]} &= \int_0^t ds\ h_i(s)\ \braket{\e[\Vf]}{\e[\Vg]},\\
  \braOket{\e[\Vf]}{A^{j\,\dag}_t}{\e[\Vh]} &= \int_0^t ds\ f^*_j(s)\ \braket{\e[\Vf]}{\e[\Vg]},\quad\text{and}\\
  \braOket{\e[\Vf]}{\Lambda^{ij}_t}{\e[\Vh]} &= \int_0^t ds\ f^*_i(s)\, h_j(s)\ \braket{\e[\Vf]}{\e[\Vg]}.
\end{align}
\end{subequations}

The derivation of the quantum It\={o} integral and the equivalent stochastic calculus begins in much of the say way as the classical It\={o} integral.  Eq. (\ref{appSDEs:eq:classicalIto}) gives the limiting sequence used in defining the It\={o} integral, with its characteristic nonanticipative integrand.  The quantum It\={o} integral is given by a similar form, where the integral with respect to $dA^j_t$ is given by the limiting sum, (with the mesh of times $t_0 = 0 < t_1\, <\dots t_n < t$)
\begin{equation}\label{appQSDEs:eq:qudAIntegral}
    Y_t = \int_0^t X_s\, d A^j_s \define \lim_{n \rightarrow \infty} \sum_{i = 1}^n X_{t_{i-1}}\, (A^j_{t_{i}} - A^j_{t_{i-1}} ).
\end{equation}
Similar definitions are made for integrals with respect to $A^{j\dag}_t$, $\Lambda^{ij}_t$, and time.  The integrand $X_s$ is required to be a time-adapted process meaning that it is required to act as the identity on the Fock space $\Fock(\hilbert_{(s,\infty)})$.  Note that in general, $X_{s}$ is not required to act solely on the Fock space $\Fock(\hilbert_{[0, s]})$.  It could be an operator defined over a joint space $\Hilbert_{sys}\otimes\Fock(\hilbert_{[0, s]})$.  Because the integrand is required to be time-adapted it commutes with the differential and so we can write,
\begin{equation}
    Y_t = \int_0^t X_s\ d A^j_s  = \int_0^t dA^j_s\ X_s.
\end{equation}
We will use both forms, whichever is more convenient.  It should not be that surprising that proving that the above integrals exists and are finite is more difficult than in a classical setting.  We will not reproduce the full result here, see \cite{hudson_quantum_1984, parthasarathy_introduction_1992} for the proof.

However to get a sense of where the quantum It\={o} rule comes from, we will discuss a few key points.  The proof of convergence of the infinitesimal limit is shown by taking a piecewise constant approximation, and then proving convergence of the matrix elements.  The set of vectors chosen for those matrix elements are the tensor product of any system pure state and an exponential vector.  By showing that it hold for these matrix elements, you can then extend the result to hold in expectation with any state composed of convex linear combinations of these vectors.  For example, consider the integral $Y_t$ in Eq. (\ref{appQSDEs:eq:qudAIntegral}),
\begin{equation}
    \braOket{\psi\otimes \e[\Vf]}{Y_t}{\psi\otimes \e[\Vh]} = \lim_{n \rightarrow \infty} \sum_{i = 1}^n \int_{t_{i-1}}^{t_i} ds\ h_i(s)\ \braOket{\psi\otimes \e[\Vf]}{X_{t_{i-1}}}{\psi\otimes \e[\Vh]} .
\end{equation}
For a reasonably large class of integrands the piecewise constant approximation is appropriate and converges to
\begin{equation}
    \braOket{\psi\otimes \e[\Vf]}{Y_t}{\psi\otimes \e[\Vh]} = \int_{0}^{t} ds\ h_i(s)\ \braOket{\psi\otimes \e[\Vf]}{X_{s}}{\psi\otimes \e[\Vh]}.
\end{equation}
Equivalent expressions occur for integrals with respect to $dA_t^{j\, \dag}$ and $d\Lambda_t^{ij}$, where $h_i(s)$ is replaced by the correct amplitudes as given in Eq. (\ref{appQSDEs:eq:quantumNoisesMatrixElements}).

In order to define a proper calculus, one must also consider how products of the integrals behave.  The elementary step is to consider the matrix elements of two operator combinations of $A^{i}_t$, $A^{j\,\dag}_t$ and $\Lambda^{ij}_t$.  The matrix elements $\braOket{\e[\Vf]}{A^{i}_t\, A^{j}_t}{\e[\Vh]}$, $\braOket{\e[\Vf]}{A^{i\dag}_t\, A^{j\,\dag}_t}{\e[\Vh]}$, and $\braOket{\e[\Vf]}{\Lambda^{ij}_t\, A^{k}_t}{\e[\Vh]}$ are easily to calculated, as they involve an eigenvalue relationship of $A_t$ and the matrix elements given in Eq. (\ref{appQSDEs:eq:quantumNoisesMatrixElements}).  The nonobvious two operator matrix elements are \citep[Proposition 20.13]{parthasarathy_introduction_1992}
\begin{subequations}
\begin{align}
  \braOket{\e[\Vf]}{A^{i}_t\, A^{j\,\dag}_t}{\e[\Vh]} &= \left(\int_0^t ds\ f^*_j(s)\ \int_0^t ds\ h_i(s) \right. \nonumber \\
    &\qquad + \delta_{ij} \left.\int_0^t ds\ \right)\braket{\e[\Vf]}{\e[\Vg]},\\
  \braOket{\e[\Vf]}{A^i_t\, \Lambda^{jk}_t}{\e[\Vh]} &= \left( \int_0^t ds\ f^*_j(s)\, h_k(s) \int_0^t ds\ h_i(s)\right. \nonumber \\
      &\qquad + \left.\delta_{ij} \int_0^t ds\ h_k(s) \right) \braket{\e[\Vf]}{\e[\Vg]},\\
  \braOket{\e[\Vf]}{\Lambda^{ij}_t A^{k\dag}_t\, }{\e[\Vh]} &= \left( \int_0^t ds\ f^*_i(s)\, h_j(s) \int_0^t ds\ f_k^*(s)\right. \nonumber \\
      &\qquad + \left.\delta_{jk} \int_0^t ds\ f^*_i(s) \right) \braket{\e[\Vf]}{\e[\Vg]},\quad \text{and} \\
    \braOket{\e[\Vf]}{\Lambda^{ij}_t\, \Lambda^{k \ell}_t}{\e[\Vh]} &= \left( \int_0^t ds\ f^*_i(s)\, h_j(s) \int_0^t ds\ f_k^*(s) h_\ell(s)\right. \nonumber \\
     & \qquad +\left. \delta_{jk} \int_0^t ds\ f_i^*(s) h_\ell(s) \right) \braket{\e[\Vf]}{\e[\Vg]}.
\end{align}
\end{subequations}
Calculating these expressions requires knowledge of the commutation relations between all of the noises and/or their relation to the exponential vectors.  Without discussing the origin of $\Lambda_t$, writing down the commutation relations are no more intuitive and less useful than above matrix elements. Inspecting these four equations shows that each has two parts.  The first is essentially the product of the single operator matrix elements, up to a factor of $\braket{\e[\Vf]}{\e[\Vg]}$.  The second part is an additional term due to the noncommuting structure of the processes.   If we express these four matrix elements in terms of differentials,
\begin{subequations}
\begin{align}
  \braOket{\e[\Vf]}{dA^{i}_t\, dA^{j\,\dag}_t}{\e[\Vh]} &= \Big( O(dt^2)  + \delta_{ij}\,  dt \Big) \braket{\e[\Vf]}{\e[\Vg]},\\
  \braOket{\e[\Vf]}{dA^i_t\, d\Lambda^{jk}_t}{\e[\Vh]} &= \Big( O(dt^2) + \delta_{ij}\, dt\, h_k(t)\, \Big) \braket{\e[\Vf]}{\e[\Vg]},\\
  \braOket{\e[\Vf]}{d\Lambda^{ij}_t d A^{k\dag}_t\, }{\e[\Vh]} &= \Big( O(dt^2) + \delta_{jk}\, dt\, f^*_i(t)\,\Big) \braket{\e[\Vf]}{\e[\Vg]},\quad \text{and} \\
  \braOket{\e[\Vf]}{d\Lambda^{ij}_t\, d\Lambda^{k \ell}_t}{\e[\Vh]} &= \Big( O(dt^2)  +\delta_{jk}\, dt f_i^*(t) h_\ell(t) \Big) \braket{\e[\Vf]}{\e[\Vg]}.
\end{align}
\end{subequations}
Notice that each term on the order of $dt$ is expressible in terms of the matrix element of a differential, $dt$, $dA_t^k$, \emph{etc.}  Taking the terms  $O(dt^2) \rightarrow 0$, and asserting that knowing these matrix elements is sufficient, we have the quantum It\={o} rules
\begin{subequations} \label{appQSDEs:eq:quantumItoRules}
\begin{align}
  dA^{i}_t\, dA^{j\,\dag}_t &=   \delta_{ij}\,  dt, \\
  dA^i_t\, d\Lambda^{jk}_t &= \delta_{ij}\, dA_t, \\
  d\Lambda^{ij}_t\, d A^{k\dag}_t &= \delta_{jk}\, dA_t^{i\,\dag}, \\
  d\Lambda^{ij}_t\, d\Lambda^{k \ell}_t &= \delta_{jk}\, d\Lambda^{i\ell}_t.
\end{align}
\end{subequations}
All other differential products are zero.

The remainder of the construction of the quantum It\={o} calculus is technical, and involves showing how the product of two piecewise approximations converge as $\Delta t_i \rightarrow 0$, as well as dealing with the equality 
\begin{equation}
\left\langle X_t\, \psi\otimes \e[\Vf] \middle\vert Y_t\, \psi\otimes \e[\Vh]\right \rangle = \left\langle  \psi\otimes \e[\Vf]\middle\vert X^\dag_t Y_t\, \psi\otimes \e[\Vh]\right \rangle
\end{equation}
for two integrals $X_t$ and $Y_t$.   These issues are well beyond our scope, except that we will note that when $X_t$ and $Y_t$ are bounded in a suitable sense, all things work out nicely \cite{bouten_introduction_2007, barchielli_continual_2006}.

Before moving on to the specifics of a QSDE for a unitary propagator, we will give the following general example for using the quantum It\=o rule. (See \citep[proposition 2.4]{barchielli_continual_2006} for all of the necessary qualifiers and assumptions.)  Consider the quantum stochastic process $X_t$,  given by the It\={o} integral (implied sums on repeated indices),
\begin{equation}\label{appQSDEs:eq:generalItoIntegral}
  X_t = X_0 + \int_0^t d\Lambda^{ij}_s\, F^{ij}_s + \int_0^t d A^{i\, \dag}_s\, F^{i0}_s + \int_0^t d A^{j}_s\, F^{0j}_s + \int_0^t ds\, F^{00}_s.
\end{equation}
The processes $F^{\alpha \beta}_s$ are assumed to act nontrivially only on the joint Hilbert space $\Hilbert_{sys} \otimes \Fock(\hilbert_{[0, s]})$ and are integrable (without really defining what that means).  The initial value $X_0$ is assumed to be a bounded operator that acts as the identity on $\Fock(\hilbert)$.  This integral can also be notated differentially as the QSDE
\begin{equation}
  d X_t = F^{ij}_t\, d\Lambda^{ij}_t  + F^{i0}_t\, d A^{i\, \dag}_t  + F^{0j}_t\, d A^{j}_t  + F^{00}_t\, dt.
\end{equation}
Given another process $Y_t$ whose differential is
\begin{equation}
  d Y_t = K^{ij}_t\, d\Lambda^{ij}_t  + K^{i0}_t\, d A^{i\, \dag}_t  + K^{0j}_t\, d A^{j}_t  + K^{00}_t\, dt,
\end{equation}
the product $X_t Y_t$ can also be expressed in terms of a process $Z_t = X_t Y_t$ whose differential is given by
\begin{equation}
  d Z_t = M^{ij}_t\, d\Lambda^{ij}_t  + M^{i0}_t\, d A^{i\, \dag}_t  + M^{0j}_t\, d A^{j}_t  + M^{00}_t\, dt,
\end{equation}
where the resulting integrands $M^{\alpha \beta}_t$ are
\begin{subequations} \label{appQSDEs:eq:quantumItoProductCoefficients}
\begin{align}
  M^{ij}_t &= X_t\, K^{ij}_t + F^{ij}_t\, Y_t + F^{i \ell}_t\,K^{\ell j}_t,\\
  M^{i0}_t &= X_t\, K^{i0}_t + F^{i0}_t\, Y_t + F^{i j}_t\,K^{j 0}_t,\\
  M^{0j}_t &= X_t\, K^{0j}_t + F^{0j}_t\, Y_t + F^{0 i}_t\,K^{i j}_t,\quad \text{and}\\
  M^{00}_t &= X_t\, K^{00}_t + F^{00}_t\, Y_t + F^{0 i}_t\,K^{i 0}_t.
\end{align}
\end{subequations}
This result can be viewed as an application of the quantum It\={o} product rule
\begin{equation}
  d (X_t Y_t) = X_t\, dY_t + dX_t\ Y_t + dX_t\, dY_t,
\end{equation}
where $dX_t\, dY_t$ is multiplied out and the second order differentials are evaluated according to the quantum It\={o} rules in Eq. (\ref{appQSDEs:eq:quantumItoRules}).

\section{The Quantum Stochastic Unitary\label{appQSDEs:sec:dUt}}
With the general quantum It\={o} rule firmly in hand, we would like to apply it to find a universal expression for a unitary process $U_t$.  This is actually quite straight forward by first noting that because its unitary, $U_t^\dag U_t = \ident$.  We also know that if $U_t$ is independent from the fundamental processes $A_t$, $A^\dag_t$ and $\Lambda_t$, than $U_t$ is the solution to the ordinary differential equation $dU_t = -i H_t\, U_t\, dt$.  When including the fundamental processes, the objective is to write $U_t$ as a general QSDE and find how unitary constrains the various integrands.  Taking the ``noise free'' solution as a starting point we hypothesize the coefficients $G^{\alpha\beta}_t$ so that $U_t$ is given by the QSDE
\begin{equation}
  d U_t = G^{ij}_t\,U_t\, d\Lambda^{ij}_t  + G^{i0}_t\,U_t\, d A^{i\, \dag}_t  + G^{0j}_t\,U_t\, d A^{j}_t  + G^{00}_t\,U_t\, dt
\end{equation}
and its adjoint is
\begin{equation}
  d U_t^\dag = U^\dag_t\,G^{ij\dag}_t\, d\Lambda^{ji}_t  +  U^\dag_t\,G^{i0\dag}_t\, d A^{i}_t  +  U^\dag_t\,G^{0j\dag}_t\, d A^{j\,\dag}_t  +  U^\dag_t\,G^{00 \dag}_t\, dt.
\end{equation}
The unitary constraint's impact on the differential is that $d(U_t^\dag\, U_t) = d(U_t\, U^\dag_t) = 0$.  The general It\={o} product coefficients in Eq. (\ref{appQSDEs:eq:quantumItoProductCoefficients}) then says that in order for this to unitary,
\begin{equation}
\begin{split}
  U^\dag_t\, G^{\alpha \beta}_t\,U_t + U^\dag_t\,G^{\beta \alpha \, \dag}_t\, U_t + U^\dag_t\,G^{\ell \alpha\,\dag}_t\,G^{\ell \beta}_t\,U_t = 0\\
  U_t\,U^\dag_t\, G^{\beta \alpha \, \dag}_t + G^{\alpha \beta}_t\,U_t\, U^\dag_t + G^{\alpha \ell}_t\,U_t\, U_t^\dag\, G^{ \beta \ell\,\dag}_t = 0
\end{split}
\end{equation}
for $\alpha,\beta$ starting at zero and the implied sum over $\ell$ starting from $1$.  Eliminating $U_t$ and $U_t^\dag$ from the constraints, they simplify to
\begin{align}\label{appQSDEs:eq:dUtConstraints}
  G^{\alpha \beta}_t + G^{\beta \alpha\, \dag}_t + G^{\ell \alpha \,\dag}_t\,G^{\ell \beta}_t = G^{\alpha \beta}_t + G^{\beta \alpha\, \dag}_t + G^{\alpha \ell}_t G^{\beta \ell \, \dag}_t = 0.
\end{align}

The coefficients $G^{\alpha \beta}$ are typically written in terms of a different set of operators, $S^{ij}_t$, $L^i_t$ and $H_t$.  The reason for this transformation is that $(S^{ij}_t, L^i_t, H_t)$ have more desirable and physically relevant properties than $G^{\alpha \beta}$.  Immediately we can see that some part of $G^{00}_t$ should be $-i H_t$, as the general QSDE solution contains the case where $U(t) = \exp(-i H t)$ for a time independent Hamiltonian $H$. Also if $G^{ 0 i} = G^{i 0} = 0$ then Eq.(\ref{appQSDEs:eq:dUtConstraints}) reads as $G^{00} = - G^{00\, \dag}$, implying that $G^{00}_t = -i H_t$ for Hermitian $H_t$.

To identify how $S^{ij}_t$ fits into the picture, consider for the moment the case where each $G^{ij}_t = g_{ij} \ident$ for some complex coefficients $g_{ij}$.  Then the constraints for $G^{ij}_t$ are
\begin{align}
  g_{ij} + g_{ji}^* + g_{\ell i}^{*}\,g_{\ell j} = g_{ij} + g_{ji}^* + g_{\ell i} g_{\ell j}^{*} = 0
\end{align}
Writing the constants in term of a matrix $\mathbbm{G}$ we have
\begin{align}
  \mathbbm{G} + \mathbbm{G}^\dag + \mathbbm{G}^\dag\,\mathbbm{G} = 0.
\end{align}
If we define a matrix $\mathbbm{S} \define \mathbbm{G} + \ident$ then this constraint reads,
\begin{equation}
\begin{split}
    0&= \mathbbm{S} - \ident + \mathbbm{S}^\dag - \ident + (\mathbbm{S}^\dag - \ident)(\mathbbm{S} - \ident) \\
    0&=\mathbbm{S} - \ident + \mathbbm{S}^\dag - \ident + (\mathbbm{S}^\dag\mathbbm{S} - \mathbbm{S}^\dag - \mathbbm{S} + \ident)\\
    \ident&=\mathbbm{S}^\dag \mathbbm{S}.
\end{split}
\end{equation}
In other words $\mathbbm{S}$ is a unitary matrix.    Returning to the general case, we can still define the operators
\begin{equation}
  S^{ij}_t \define G^{ij}_t + \delta_{ij}.
\end{equation}
Then Eq. (\ref{appQSDEs:eq:dUtConstraints}) transforms the constraint for $G^{ij}_t$ into the constraint,
\begin{equation}
   S^{ij \dag}_t S^{jk}_t = S^{i j }_t S^{k j\, \dag}_t = \delta_{ik}.
\end{equation}
In other words, $S^{ij}_t$ is a unitary matrix of operators.

By introducing $S^{ij}_t$, the constraint for  $G^{0i}_t$ is also significantly simpler.  Specifically,
\begin{equation}
\begin{split}
  0 &= G^{0 i}_t + G^{i 0\, \dag}_t + G^{\ell 0 \,\dag}_t\,G^{\ell i}_t \\
  0&=G^{0 i}_t + G^{i 0\, \dag}_t + G^{\ell 0 \,\dag}_t\,(S^{\ell i}_t - \delta_{\ell i}) \\
  G^{0 i}_t &= - G^{\ell 0 \,\dag}_t\,S^{\ell i}_t.
\end{split}
\end{equation}
The remaining coefficients $G^{i 0}_t$ are essentially arbitrary, which is relabeled as the operators $L^i_t$.  Writing the constraint for $G_t^{00}$ in terms of $L^i$, means that $G^{0 0}_t + G^{0 0\,\dag}_t = -  L^{i\dag}_t L^i_t$.

Bringing all of these results together we can reexpress $G^{\alpha \beta}_t$ in terms of $(S^{ij}_t, L^{i}_t, H_t)$,
\begin{equation} \label{appQSDEs:eq:GtoSLHconversion}
\begin{split}
  G^{i j}_t &= S^{i j}_t - \delta_{ij},\\
  G^{i 0}_t &= L_t^i,\\
  G^{0 j}_t &= - L_t^{i \dag} S^{i j}_t,\\
  G^{0 0}_t &= -i H_t - \half L^{i\dag}_t L^i_t.
\end{split}
\end{equation}
In other words,
\begin{equation} \label{appQSDEs:eq:generaldUt}
  d U_t = \Big(  \left(S^{i j}_t - \delta_{ij} \right)\, d\Lambda^{ij}_t  + L^{i}_t\, d A^{i\, \dag}_t  - L_t^{i \dag} S^{i j}_t\, d A^{j}_t   - \half L^{i\dag}_t L^i_t\, dt -i H_t\, dt\ \Big) U_t.
\end{equation}
This is the standard form for the QSDE for a general propagator $U_t$.  While in principle, the initial value for $U_0$ could be any unitary operator acting on a system $\Hilbert_{sys}$, typically $U_t$ describes an interaction picture representation of the system-field dynamic, and then $U_0 = \ident$.

One final remark is that if we take each coefficient to be its own stochastic process, $\{\,G^{\alpha\, \beta}_t U_t\, \}_{t \ge 0}$, then they are required to still be time-adapted.  This results in the constraint that the initial values $S^{ij}_0$, $L^i_0$ and $H_0$ must all be system operators only, as they must act as the identity on the field at that time.  Furthermore if $S^{ij}_t$, $L^i_t$ and $H_t$ are known to be time independent, then they must be system operators only.

\subsection{Unitary evolution\label{appQSDEs:sec:Jt}}
One final calculation we will include is the unitary evolution of a time independent system operator $X$.  In the quantum stochastic literature this unitary evolution is written in terms of a map $j_t(\cdot)$ generating the ``flow'' or current of the operator,
\begin{equation}
  j_t(X) \define U_t^\dag X U_t.
\end{equation}
This map can a written as a solution to a QSDE,
\begin{equation}
  j_t(X) = U_0^\dag X U_0 + \int_0^t dj_s(X),
\end{equation}
with a differential
\begin{equation}
  dj_t(X) = dU_t^\dag\ X U_t + U_t^\dag X\ dU_t + dU_t^\dag X dU_t.
\end{equation}
After an exercise in quantum stochastic calculus, one finds the recursive QSDE
\begin{equation} \label{appQSDEs:eq:dJtX}
   dj_t(X) = j_t(\mathcal{L}^{ij}_t(X) )\, d \Lambda^{ij}_t + j_t(\mathcal{L}^{i0}_t(X))\, dA_t^{i\, \dag} + j_t(\mathcal{L}^{0j}_t(X))\, dA_t^{j} + j_t(\mathcal{L}^{00}_t(X))\, dt
\end{equation}
where $\mathcal{L}^{\alpha \beta}_t(\cdot)$ are known as the Evens-Hudson maps and in terms of $G^{\alpha \beta}_t$ are
\begin{equation}
    \mathcal{L}^{\alpha \beta}_t(X) = G^{\beta \alpha\, \dag}_t X +  X G^{\alpha \beta}_t + G^{k \alpha\, \dag}_t X G^{k \beta}_t.
\end{equation}
When written in terms of $(S^{ij}_t, L^i_t, H_t)$ these maps are,
\begin{subequations}
\begin{align}
  \mathcal{L}^{i j}_t(X) &=  S^{k i\, \dag}_t X S^{k j}_t - \delta_{ij} X, \\
  \mathcal{L}^{i 0}_t(X) &= S^{k i\,\dag}_t \left[X,\, L_t^k\right],\\
  \mathcal{L}^{0 j}_t(X) &=  - \big[X,\,  L_t^{k \dag}\big] S^{k j}_t,\\
  \mathcal{L}^{0 0}_t(X) &= +i [H_t, X] + L^{i\, \dag}_t X L^i_t - \half L^{i\, \dag}_t L^i_t X - \half X L^{i\, \dag}_t L^i_t .
\end{align}
\end{subequations}
For an arbitrary model the unitary evolution becomes exceedingly complicated very rapidly.  Each coefficient in Eq. (\ref{appQSDEs:eq:dJtX}) is itself given by the unitary flow of the operator $Z_t \define \mathcal{L}^{\alpha \beta}_t(X)$.  Systems lacking any kind of fundamental symmetry will rarely close on a useful subspace of operators meaning that after repeated applications of $\mathcal{L}^{\alpha \beta}_t(\cdot)$ more complicated operators will be generated, spanning a larger and larger space of operators. 

In addition to calculating the unitary output of system operators, it is also useful to calculate the output for the fundamental field operators $A_t^{j \dag}$, $A_t^i$, $\Lambda_t^{ij}$.  In a rather tedious exercise in manipulating the quantum It\={o} rules it can be shown that
\begin{subequations}\label{appQSDEs:eq:dJtFields}
\begin{align}
   dj_t(A^j_t) =& j_t(S^{jk}_t )\, dA_t^{k} + j_t(L^j_t)\, dt,\\
   dj_t(A^{i\,\dag}_t) =& j_t(S^{ik\, \dag}_t )\, dA_t^{k\, \dag} + j_t(L^{i\,\dag}_t)\, dt,\\
   dj_t(\Lambda^{ij}_t) =& j_t(S^{ik\, \dag}_t S^{j \ell}_t)\, d\Lambda_t^{k \ell} + j_t(S^{ik\, \dag}_t L^{j}_t)\, dA_t^{k\, \dag}  + j_t(L^{i\, \dag}_t S^{j \ell}_t)\, dA_t^{\ell} +  j_t(L^{i\,\dag}_t L^j_t)\, dt.
\end{align}
\end{subequations}

\chapter{The Quantum Wong-Zakai Theorem \label{app:QuWongZakai}}

In a classical system, the convergence of an ordinary differential equation to a stochastic one was treated in the work of Wong and Zakai \citep{wong_convergence_1965}.  There they show that an ODE containing a piecewise-smooth approximation to white noise, $\xi^{(\lambda)}_t$, converges to a Stratonovich integral as $\xi^{(\lambda)}_t \rightarrow \xi_t$ with $\lambda \rightarrow 0$.
For instance, Consider the ODE
\begin{equation}
    \frac{\partial x^{(\lambda)}(t)}{\partial t} = f(t, x^{(\lambda)}(t)) + g(t, x^{(\lambda)}(t))\, \xi^{(\lambda)}_t.
\end{equation}
The Wong-Zakai theorem states that the integrated solution
\begin{equation}
    x^{(\lambda)}(t) = x^{(\lambda)}(0) + \int_0^tds\,  f(s, x^{(\lambda)}(t))  + \int_0^t ds\,  g(s, x^{(\lambda)}(s))\,\xi^{(\lambda)}_s
\end{equation}
converges to the Stratonovich integral
\begin{equation}
    x_t = x_0 + \int_0^t f(x(t), s)\, ds  + \int_0^t g(x(s), s)\circ d w_s.
\end{equation}

Appendix \ref{app:SDEs} reviews the distinctions between the two most common forms of classical stochastic integration, the It\={o} integral and the Stratonovich integral and Appendix \ref{app:QSDEs}  discusses their quantum analogs.  This appendix reviews the quantum analog of this result where the specific ODE is for a propagator whose Hamiltonian contains field operators that are limiting to quantum white noise.

In \citeyear{gough_quantum_2006}, \citeauthor{gough_quantum_2006} derived a quantum limit, equivalent to the  Wong-Zakai theorem \citep{gough_quantum_2006}.  Specifically, he investigated the convergence of the Schr\"{o}dinger equation, written in terms of the time evolution operator $U(t)$, as field operators in the Hamiltonian converge to singular, delta-commuting operators.   The specific Hamiltonian consider is given in Eq. (\ref{appQuWZT:eq:QuWongZakaiHamiltonian}), but before discussing it, we will first describe the quantum version of $\xi^{(\lambda)}_t$ and how it can be interpreted as quantum white noise.

\section{Quantum white noise \label{appQuWZT:sec:QuWongZakaiWhiteNoise}}

In order to make a connection with a quantum It\={o} integral as formulated by Hudson and Parthasarthy, the limiting field operators clearly must be reference to a Fock space $\Fock(\hilbert')$, $\hilbert' = \mathcal{L}^2(\R^+)\otimes \Cn{d}$.  The delta commuting limit is introduced by considering the \emph{differentiable} functions $\{ \xi^{(\lambda)}_i(t) \in \hilbert' :\quad i = 1,\dots, d \}$ parameterized by $\lambda > 0$ so that we have the field operators $\ahat[\xi^{(\lambda)}_i(t)]$ and $\ahat^\dag[\xi^{(\lambda)}_j(t')]$, with
\begin{equation}
  \left[ \ahat[\xi^{(\lambda)}_i(t)],\, \ahat^\dag[\xi^{(\lambda)}_j(t')] \right] = \inprod{\xi^{(\lambda)}_i(t)}{\xi^{(\lambda)}_j(t')} \define c_{ij}(\lambda, t- t').
\end{equation}
This inner product is assumed to satisfy the properties:
\begin{subequations}\label{appQuWZT:eq:quWongZakaiLimitCriteria}
 \begin{align}
    \int_{-\infty}^{\infty} dt\ c_{ij}(\lambda, t)  & < \infty, \\
    c_{ij}(\lambda, t) &= c_{ji}^*(\lambda, -t), \quad \text{and}\\
    \lim_{\lambda \rightarrow 0} c_{ij}(\lambda, t) & = \delta_{i j}\, \delta(t). \label{appQuWZT:eq:goughwhitenoise}
 \end{align}
\end{subequations}
To simplify the notation, it is convenient to write $\ahat_i(\lambda, t) \define \ahat[\xi^{(\lambda)}_i(t)]$.

These operators end up serving two purposes in the quantum Wong-Zakai theorem.  The first is of course to act in the limiting Hamiltonian and the second is to generate the smoothed exponential vectors,
\begin{equation}\label{appQuWZT:eq:wongZakaiExpVec}
    \e[\Vg(\lambda)] \define \exp\left(\int_0^\infty dt\ g_i(t)\, \ahat^\dag_i(\lambda, t) \right)\, \ket{\vac}.
\end{equation}
Note that in relation to the one-dimensional representation of Sec. \ref{chQuLight:sec:oneDimensionalLimit}, $\ahat_i(\lambda, t)$ is \emph{almost} equivalent to $\ahat[\V{\varphi}^{(\sigma)}(t)]$. The differences lies in how the smoothed wave packets are defined.  \emph{One} possible mapping between paraxial optics and the abstract operators $\ahat_i(\lambda, t)$ is to identify $d$ paraxial spatial mode functions $\V{u}_{i}^{(+)}(\Vx_T, z)$, which satisfy the orthogonality relation, $\int d^2 x_T\, \V{u}^*_{i}(\Vx_T, z)\cdot \V{u}_{j}(\Vx_T, z) = \delta_{ij}\, \sigma_T$.  For each paraxial mode there are $d$ independent complex wave packet envelopes, inducing the smoothing operators $\ahat[\V{\varphi}^{(\sigma)}_i(t)]$.  In the case of a single paraxial mode, Eq. (\ref{chQuLight:eq:SmoothWavePacketOperator}) gave the expression for a smoothed paraxial wavepacket $\Vg^{(\sigma)}(\Vk, t)$.   In order for this expression to be equivalent to the argument of Eq. (\ref{appQuWZT:eq:wongZakaiExpVec}), we require that
\begin{equation}\label{appQuWZTeqn:prelimitahat}
    \ahat_i(\lambda,t) \cong e^{+i \omega_0 t} \ahat[\V{\varphi}^{(\sigma)}_i(-t)].
\end{equation}
The fact that we require the time reversed version of $\V{\varphi}^{(\sigma)}_i$ is an artifact of defining the smoothing with respect to a convolution.  The inclusion of the carrier phase is both mathematically and physically interesting.  Physically it is a reminder that the elements of $\hilbert'$ represent the part of the light existing on the measurement timescale, which is \emph{much} slower than the carrier frequency.  In fact the appearance of this phase is intimately related to the rotating wave approximation.  In the rotating frame atomic transition operators develop explicit time dependence at the carrier frequency.  The mathematical relevance of this carrier phase is that it cancels any rapidly oscillating phases in the inner product, $\inprod{\V{\varphi}^{(\sigma)}_i(-t)}{\V{\varphi}^{(\sigma)}_i(-t')}$.  This cancelation is explicitly apparent by computing that
\begin{equation}\label{appQuWZT:eq:goughCommutator}
    \begin{split}
      \left[\ahat_i(\lambda,t),\,\ahat_j^\dag(\lambda,t') \right]  &= c_{ij}(\lambda, t - t')\\
      &=e^{+i \omega_0 (t-t')} \inprod{\V{\varphi}^{(\sigma)}_i(-t)}{\V{\varphi}^{(\sigma)}_j(-t')}\\
      &= \delta_{ij}\,\left( \varphi^{(\sigma)} \star \varphi^{(\sigma)}\ (t - t') - i\frac{1}{\omega_0} \frac{d\varphi^{(\sigma)} }{dt}\star \varphi^{(\sigma)}\ (t - t') \right).
    \end{split}
\end{equation}
where in the final line we inserted the unequal time inner product in Eq. (\ref{chQuLight:eq:paraxialInnerProduct}), as well as remember that for the smoother $\V{\varphi}^{(\sigma)}_i$, $\norm{\Vg}/\sqrt{\tau} \rightarrow 1$.   $\lambda$ is simply a parameter representing the formal limit that as $\lambda \rightarrow 0$, $\sigma \rightarrow 0$ and $(\sigma\, \omega_0)^{-1} \rightarrow 0$.

\section{The quantum Wong-Zakai theorem}\label{appQuWZT:sec:quWongZakaiHamiltonian}
The Hamiltonian that Gough ultimately considers is,
\begin{equation} \label{appQuWZT:eq:QuWongZakaiHamiltonian}
  H_{int}(\lambda, t) =\hbar \left( \sum_{i,j = 1}^d E_{ij}\, \ahat^\dag_i(\lambda, t)\, \ahat_j(\lambda, t) + \sum_{i = 1}^d E_{i0}\, \ahat_i^\dag (\lambda, t) +  \sum_{j = 1}^d E_{0j}\, \ahat_j(\lambda, t) + E_{00} \right)
\end{equation}
The quantum Wong-Zakai theorem then takes the solution to the equation
\begin{equation}\label{appQuWZT:eq:SchrodingerEquation}
  \frac{d}{d t} U(\lambda, t) = -\frac{i}{\hbar} H_{int}(\lambda, t) \, U(\lambda, t),  \qquad U(\lambda, 0) = \ident
\end{equation}
and proves that the limit $\lim_{\lambda\rightarrow 0} U(\lambda, t) \define U_t$ is a quantum stochastic unitary process, solving a quantum Stratonovich differential equation.  Fortunately, a quantum Stratonovich integral is also expressible in terms of a quantum It\={o} integral.  Gough provides a well constructed conversion between both forms, which we review shortly.   The specifics of this limit is that it is shown to hold ``weakly'', in that
\begin{equation}
    \lim_{\lambda \rightarrow 0} \braOket{\psi\otimes\e[\Vf(\lambda)]}{U(\lambda, t)}{\phi\otimes \e[\Vg(\lambda)]} = \braOket{\psi\otimes\e[\Vf]}{U_t}{\phi\otimes \e[\Vg]},
\end{equation}
for any system state vectors $\psi$ and $\phi$ and the exponential vectors $\e[\Vf(\lambda)]$ and $\e[\Vg(\lambda)]$ as defined in Eq. (\ref{appQuWZT:eq:wongZakaiExpVec}).  In addition to the matrix elements of the propagator, it is also shown to hold weakly for the Heisenberg picture evolution of an operator $X$, so that
\begin{equation}
  \lim_{\lambda \rightarrow 0} U^\dag(\lambda, t)\, X U(\lambda, t) = U_t^\dag X U_t.
\end{equation}

Before describing the resulting unitary process, it is worth stressing several advantages of the quantum Wong-Zakai theorem.  So long as $H_{int}(\lambda, t)$ and the $\lambda \rightarrow 0$ limit are physically justifiable, then the specifics of the total state $\rho_{tot}$ are almost irreverent.  The only constraint is that the total state must be expressible in terms of convergent sequence of the matrix elements $\lim_{\lambda \rightarrow 0} \braOket{\psi_i\otimes\e[\Vf_i(\lambda)]}{\rho_{tot}}{\phi_j\otimes\e[\Vg_j(\lambda)]}$.  This means that we allowed the possibility of nonclassical superpositions.  A second advantage is that the presence of the scattering interaction $E_{ij}$ allows for a much broader class of interactions than the linear interactions typically considered in quantum optics.  While this dissertation will ultimately be considering a linear Hamiltonian ($E_{ij}$ will be negligibly small) the fact that the theory could consider a system coupled to an instantaneous number operator $\ahat^{i \dag}(\lambda, t)\, \ahat^i(\lambda, t)$  in a ``white noise'' limit is no small feat.  One possible example is to engineering a quantum-optical router, where for some set of modes the operators $E_{ij}$ coherently scatter quanta in a system dependent way.  Finally, much can be said for the fact that the limiting interaction is still described by a unitary operation.  While we have constructed the limiting field operators in terms of a measurement timescale, we have not specified what kind of measurement we will be performing.  Formulating a conditional estimate for the system given a measurement of the field is one of the main purposes of Chap. \ref{chap:Math} but at the level of the system field interaction, everything remains fully coherent.

\subsection{Quantum stochastic calculus and operator ordering}
In proving the quantum Wong-Zakai theorem, Gough also found an intuitive correspondence between operator orderings and quantum stochastic differential equations.    In order to describe the limiting propagator $U_t$ as a solution to a standard It\={o} form QSDE we need to review this correspondence.  Ultimately, the correspondence is between quantum Stratonovich equations and \emph{time-ordered} solutions to a given recursive differential equation.  Conversely, a quantum It\={o} equation is identified with a \emph{normally-ordered} solution \cite{gough_quantum_2006}.

The solution to the Schr\"{o}dinger equation with a time-dependent Hamiltonian is given by a time-ordered exponential, \begin{equation}
  U(\lambda, t) = \vec{\mathcal{T}} \exp \left(-\frac{i}{\hbar}\int_0^t ds\, H_{int}(\lambda, s) \right).
\end{equation}
The time-ordered exponential is a compact short hand for the iterated integrals,
\begin{multline}
 \vec{\mathcal{T}} \exp \left(-\frac{i}{\hbar}\int_0^t ds\, H_{int}(\lambda, s) \right) =\\ \sum_{n = 0}^{\infty} \left(\frac{-i}{\hbar}\right)^n \int_0^t dt_n \dots \int_0^{t_{2}} dt_1\ H_{int}(\lambda, t_n) \dots H_{int}(\lambda, t_1).
\end{multline}
Note that $H_{int}(\lambda,t)$ need not commute with $H_{int}(\lambda, s)$ and thus the operator ordering is critical in this expression.  There is nothing in the $\lambda \rightarrow 0$ limit that changes the operator ordering and so identifying a Stratonovich equation with a time-ordered equation is simply a statement of this fact.  The heart of the quantum Wong-Zakai theorem is relating this time-ordered exponential to a quantum It\={o} integral.  As the Wong-Zakai theorem considers the limit of matrix elements between two exponential vectors, this relation is made by comparing the matrix elements between this expression and the matrix element of an iterated It\={o} integral.

Sec. \ref{app:SDEs} discusses the classical It\={o} integral and shows how an integral $x_t = \int_0^t b_s\, dw_s$ for a time-adapted process $b_t$ leads to the It\={o} rules of calculus and how it crucially depend upon the statistical independence of $b_t$ from $dw_t$.  Not surprisingly the quantum It\={o} integral also relies on a similar independence of the integrand from the differential.  As Sec. \ref{chQuLight:sec:continuousTimeTensor} discussed, the continuous-time tensor product decomposition allows for defining field operators $\ahat[\chi_{(t, t+dt]}\, \Ve_j] = A^{j}_{t + dt} - A^j_t$, which commute with any operator adapted to the the time interval $[0, t]$.  The quantum It\={o} integral with respect to $dA_t$ is defined as
\begin{equation}
  Y_t = \int_0^t X_s\, d A^j_s \define \lim_{n \rightarrow \infty} \sum_{i = 1}^n X_{t_{i-1}}\, (A^j_{t_{i}} - A^j_{t_{i-1}} )
\end{equation}
and likewise an integral with respect to $dA^{k\dag}_t$ is
\begin{equation}
  Z_t = \int_0^t Y_s\, d A^{k\dag}_s \define \lim_{n \rightarrow \infty} \sum_{i = 1}^n Y_{t_{i-1}}\, (A^{j\dag}_{t_{i}} - A^{j \dag}_{t_{i-1}} )
\end{equation}
Taking the matrix element of $Y_t$ between two exponential vectors results in
\begin{equation}
  \braOket{\e[\Vf]}{\int_0^t X_s\, d A^j_s}{\e[\Vh]} = \int_0^t ds\,  \braOket{\e[\Vf]}{X_s}{\e[\Vh]} \, h_j(s)
\end{equation}
and equivalently
\begin{equation}\label{appQuWZT:eq:dAdagItoInt}
  \braOket{\e[\Vf]}{Z_t}{\e[\Vh]} = \int_0^t ds\, f^\ast_j(s)\, \braOket{\e[\Vf]}{Y_s}{\e[\Vh]}.
\end{equation}
Eq. (\ref{appQuWZT:eq:dAdagItoInt}) is valid because $Y_s$ is time-adapted and therefore commutes with $dA^{j\,\dag}_s$.  This commutating property carries over to the iterated integral, so by substituting in for $Y_s$,
\begin{equation}
  \braOket{\e[\Vf]}{Z_s}{\e[\Vh]} = \int_0^t ds\int_0^s ds'\ f^\ast_k(s')\, \braOket{\e[\Vf]}{X_{s'}}{\e[\Vh]}\, h_j(s).
\end{equation}
If we hypothesize the existence of an operator $\ahat_j(t)$ by the eigenvalue relationship
\begin{equation}
  \ahat_j(t)\ket{\e[\Vf]} = f_j(t)\, \ket{\e[\Vf]}
\end{equation}
we then have
\begin{equation}
  \braOket{\e[\Vf]}{Z_t}{\e[\Vh]} \cong \int_0^t ds\int_0^s ds'\  \braOket{\e[\Vf]}{\ahat_k^\dag(s')\,X_{s'}\, \ahat_j(s)}{\e[\Vh]}.
\end{equation}
This relation holds for any iterated It\={o} integral as long as the integrand is expressed in \emph{normal order}, with all of the creation operators on the left and all of the annihilation operators on the right.  But as we have
\begin{equation}
  \lim_{\lambda \rightarrow 0}\ahat_j(\lambda, t)\ket{\e[\Vf]} = \lim_{\lambda \rightarrow 0}\int_{0}^\infty ds\  c_{j k}(\lambda,t-s)\,f_k(s)\, \ket{\e[\Vf]} = f_j(t) \ket{\e[\Vf]},
\end{equation}
we will ultimately find an equivalence between the operator $\ahat_j(t)$ and the limiting form of $\ahat_j(\lambda, t)$.

The proof of the quantum Wong-Zakai theorem follows the procedure of converting the time-ordered exponential into normal order, showing that the matrix elements converge to a finite quantity and then proving a correspondence with an equivalent It\={o} form QSDE.   

\subsection{Gauge freedom in the It\={o} correction\label{appQuWZT:sec:gaugeFreedom}}
The difference between a Stratonovich equation and an It\={o} equation is often called the It\={o} correction term.  As we have identified an It\={o} equation with the normally order version of the iterated integral, the It\={o} correction term is intimately related to this conversion.  Converting any product of field operators into normal order, is given by Wick's theorem \citep{wick_evaluation_1950}.  It states that any product of creation and annihilation operators can be written as the sum over the normal ordering of all possible contractions between all pairs of operators.  A contraction between the operators $\ahat$ and $\hat{b}$ is defined as
\begin{equation}
    \ahat^{\bullet} \hat{b}^{\bullet} \define \ahat \hat{b}\, -\, :\ahat \hat{b}:
\end{equation}
where $:\ahat \hat{b}:$ is the normal ordering of the two operators.  By Wick's theorem we can write the product
\begin{equation}
    \ahat \hat{b} \hat{c} =\ :\ahat \hat{b} \hat{c}: \,+\, :\ahat^\bullet \hat{b}^\bullet \hat{c}: \,+\, :\ahat^\bullet \hat{b} \hat{c}^\bullet: \,+ :\ahat \hat{b}^\bullet \hat{c}^\bullet:.
\end{equation}
For the boson operator considered here, the only nonzero contraction is
\begin{equation}\label{appQuWZT:eq:ahatRescaledContraction}
    \ahat_i^\bullet(\lambda,t) \ahat_j^{\dag\bullet}(\lambda,s) =[\ahat_i(\lambda,t),\, \ahat_j^{\dag}(\lambda,s)] = c_{ij}(\lambda,t - s).
\end{equation}

The heart of finding the equivalent It\={o} QSDE from the time-ordered exponential is to first apply Wick's theorem to each term in the time-ordered exponential, then take the $\lambda \rightarrow 0$ limit, and finally re-sum the series.  We will not be reproducing this result here, where the details of such a limit can be found in the following references.  In the absence of the scattering terms the proof is detailed in the book by \citeauthor{accardi_quantum_2002} \citep{accardi_quantum_2002}. The scattering terms were subsequently added by Gough \citep{gough_quantum_2005}.  However, one important aspect of the limit must be discussed as it affects the final limiting QSDE, as well as takes its root in the physical origin of $c_{ij}(\lambda, t-s)$.

In each term of the time-ordered exponential, the operators on the right are always constrained to be at an earlier time than the operators on the left.  Therefore when applying Wick's theorem, the contraction $\ahat_i^\bullet(\lambda,t) \ahat_j^{\dag\bullet}(\lambda,s)$  will always be constrained to have $t \ge s$.  This constraint means that when $\lambda \rightarrow 0$, only \emph{half} of the $c_{ij}(\lambda,t - s) \rightarrow \delta(t-s)$ limit will apply.  It is often the case that when a causal constraint is applied to a delta function limit, an additional complex term appears involving a Cauchy principle value. For each $c_{ij}(\lambda,\tau)$ the extra complex term is called a gauge freedom and generates, among other things, an effective level shift in the $E_{00}$ term.

As a concrete example, consider the second order term in the time-ordered expansion
\begin{equation}
\left(\frac{-i}{\hbar}\right)^2 \int_0^\tau dt \int_0^t ds\, H_{int}(\lambda, t) H_{int}(\lambda, s).
\end{equation}
The operator product $H_{int}(\lambda, t) H_{int}(\lambda, s)$ contains $16$ terms with at most $4$ field operators (from the scattering terms) and in the case of $E_{00}(t)\,E_{00}(s)$, no operators.  One part of this expression is the integral
\begin{equation}
 - \sum_{ij} \int_0^\tau dt \int_0^t ds\, E_{0i}\, \ahat_i(\lambda, t)\, E_{j0}\, \ahat_j^{\dag}(\lambda, s).
\end{equation}
Applying Wick's theorem means that
\begin{multline}
  -\sum_{ij} \int_0^\tau dt \int_0^t ds\, E_{0i}\, \ahat_i(\lambda, t)\, E_{j0}\, \ahat_j^{\dag}(\lambda, s) = \\
  -\sum_{ij} \int_0^\tau dt \int_0^t ds\, E_{0i}\,E_{j0}\,\, \left(: \ahat_i(\lambda, t)  \ahat_j^{\dag}(\lambda, s) :  +\, c_{ij}(\lambda, t-s) \right).
\end{multline}
This commutator term on the right-hand side will ultimately contribute to the It\={o} correction term and generate the gauge shift, as long as it survives the $\lambda \rightarrow 0$ limit.  Therefore the most basic contribution is made by the limit
\begin{equation}
    \lim_{\lambda \rightarrow 0} \int_0^\tau dt \int_0^t ds\, c_{ij}(\lambda, t-s).
\end{equation}
By substituting Eq. (\ref{appQuWZT:eq:goughCommutator}) for $c_{ij}(\lambda, t-s)$ and dropping the term proportional to $1/\omega_0$,
\begin{multline}\label{appQuWZT:eq:secondOrderCausalCommutator}
    \lim_{\lambda \rightarrow 0} \int_0^\tau dt \int_0^t ds\, c_{ij}(\lambda, t-s) = \delta_{ij}\, \lim_{\sigma \rightarrow 0} \int_0^\tau dt \int_0^t ds\, \varphi^{(\sigma)} \star \varphi^{(\sigma)}\ (t - s).
\end{multline}
In Sec. \ref{chQuLight:sec:measureableConstruction} we used the example of $\varphi^{(\sigma)}$ as a real-valued Gaussian with mean zero and standard deviation $\sigma$.  In this case, it is easy to show that
\begin{equation}\label{appQuWZT:eq:autoCorrelationRescaled}
  \varphi^{(\sigma)} \star \varphi^{(\sigma)}\ (t - s) = \frac{1}{\sigma}\, \varphi^{(1)} \star \varphi^{(1)}\ \big((t - s)/\sigma\big)
\end{equation}
where $\varphi^{(1)}$ is a mean zero Gaussian with \emph{unit} variance.  In fact this is a general property often used in distribution theory where if $\int_\R dt\, \varphi^{(1)}(t) = 1$ then
\begin{equation}
  \lim_{\sigma \rightarrow 0}\frac{1}{\sigma} \varphi^{(1)}(t/\sigma) = \delta(t).
\end{equation}
In this replacement Eq. (\ref{appQuWZT:eq:autoCorrelationRescaled}) is also satisfied.  Using this relation in the right-hand side of Eq. (\ref{appQuWZT:eq:secondOrderCausalCommutator}),
\begin{multline}
    \lim_{\lambda \rightarrow 0} \int_0^\tau dt \int_0^t ds\, c_{ij}(\lambda, t-s) = \delta_{ij}\, \lim_{\sigma \rightarrow 0} \int_0^\tau dt \int_0^t ds\, \frac{1}{\sigma}\, \varphi^{(1)} \star \varphi^{(1)}\ \big((t - s)/\sigma \big).
\end{multline}
This limit is easily evaluated by making the change of variables $\bar{t} \define (t - s)/\sigma$,
\begin{multline}\label{appQuWZT:eq:secondOrderCausalCommutatorLimit}
    \delta_{ij}\, \lim_{\sigma \rightarrow 0} \int_0^\tau dt \int^{t/\sigma}_0 d\bar{t}\, \varphi^{(1)} \star \varphi^{(1)}\ (\bar{t}) = \delta_{ij}\, \tau \int_0^{\infty} d\bar{t}\, \varphi^{(1)} \star \varphi^{(1)}\ (\bar{t}).
\end{multline}
If $\varphi^{(1)}$ is a normalized real-valued distribution then $\varphi^{(1)} \star \varphi^{(1)}\ (\bar{t}) =  \varphi^{(1)} \star \varphi^{(1)}\ (-\bar{t})$ and so
\begin{equation}
\begin{split}
  \int_0^{\infty} d\bar{t}\, \varphi^{(1)} \star \varphi^{(1)}\ (\bar{t}) &= \frac{1}{2}\int_{-\infty}^\infty d\bar{t}\, \varphi^{(1)} \star \varphi^{(1)}\ (\bar{t}) \\
  &= \frac{1}{2} \left(\int_{-\infty}^\infty d\bar{t}\, \varphi^{(1)}(\bar{t}) \right)^2\\
  &= \frac{1}{2}.
\end{split}
\end{equation}
The change of variables in Eq. (\ref{appQuWZT:eq:secondOrderCausalCommutatorLimit}) is the delta correlation limit but because of the time-ordered integration, we obtain the factor of $\half$.  However in the general case, $\varphi^{(1)}$ need not be real-valued.  For our example involving quasi-monochromatic fields, it is \emph{sufficient} for $\varphi$ to be a real-valued function as it is simply a mathematical tool representing a limit on the rate of change of the arbitrary complex functions $\Vf \in \mathcal{L}^2(\R^+)\otimes\Cn{d}$.

In the general Wong-Zakai limit $c_{ij}(\lambda, t)$ is not assumed to be real, only that the criteria of Eq. (\ref{appQuWZT:eq:quWongZakaiLimitCriteria}) are satisfied.  From Eq. (\ref{appQuWZT:eq:goughwhitenoise}) we have that
\begin{equation}
\lim_{\lambda \rightarrow 0}  \int_{-\infty}^{\infty} dt\  c_{ij}(\lambda, t) = \delta_{ij},
\end{equation}
meaning that we can define the potentially complex constants
\begin{equation}
 \begin{split}
    \kappa_{ij} &\define \lim_{\lambda \rightarrow 0} \int_0^\infty dt\, c_{ji}(\lambda, t)\\
    \kappa^\ast_{ij} &\define \lim_{\lambda \rightarrow 0} \int_0^\infty dt\, c_{ji}^\ast(\lambda, t)  = \lim_{\lambda \rightarrow 0} \int_{-\infty}^0 dt\, c_{ji}(\lambda, t)
 \end{split}
\end{equation}
where we used the fact that $c_{ji}^\ast(\lambda, t) = c_{ji}(\lambda, -t)$.
Combining these two results means
\begin{equation}\label{appQuWZT:eq:kappaConstraint}
    \delta_{ij} = \kappa_{ij} + \kappa^\ast_{ij}.
\end{equation}
In the case of a real-valued smoother $\kappa_{ij} = \half \delta_{ij}$,  but the general case allows for a complex coefficient.  In complex analysis there is a general relation that
\begin{equation}
  \int_0^{\infty} dt\, e^{- i \omega t} = \pi\, \delta(\omega) - i\, \mathcal{P.V.} \left[\tfrac{1}{\omega}\right]
\end{equation}
where $\mathcal{P.V.}$ denotes taking the Cauchy principal value.  Expressing $\varphi^{(1)} \star \varphi^{(1)}\ (\bar{t})$ in the frequency domain means that
\begin{equation}
  \begin{split}
    \kappa_{ij} &= \delta_{ij}\,\int_0^{\infty} d\bar{t}\, \int_{-\infty}^{\infty} d\omega\, \abs{\F{\varphi}^{(1)}(\omega)}^2 e^{-i\omega \bar{t}} \\
    &= \delta_{ij} \left(\pi\, \int_{-\infty}^{\infty} d\omega\, \abs{\F{\varphi}^{(1)}(\omega)}^2 \delta(\omega) -i \mathcal{P.V.} \int_{-\infty}^{\infty} d\omega\, \frac{\abs{\F{\varphi}^{(1)}(\omega)}^2}{\omega} \right)\\
    &= \delta_{ij} \left(\pi\, \abs{\F{\varphi}^{(1)}(0)}^2  -i \mathcal{P.V.} \int_{-\infty}^{\infty} d\omega\, \frac{\abs{\F{\varphi}^{(1)}(\omega)}^2}{\omega} \right).
  \end{split}
\end{equation}
The requirement of Eq. (\ref{appQuWZT:eq:kappaConstraint}) implies that $\pi\, \abs{\F{\varphi}^{(1)}(0)}^2 = \half$.  The principal value will be zero for a symmetric power distribution (\emph{i.e.} for real $\varphi^{(1)}$) but it is nonzero in general.  The remaining complex coefficient is what Gough refers to as a gauge freedom as it depends upon the nature of $\varphi^{(1)}$.  Here we will assume that $\varphi^{(1)}$ is real-valued and so $\kappa_{ij} = \kappa_{ij}^* = \half \delta_{ij}$.

\section{The Limiting Propagator}
We just considered one part of the second order term in the time-ordered exponential and how through normal ordering it develops a nonzero correction.  As this is just one part of the total time-ordered exponential, more complicated expressions are generated, involving iterated integrals of the form
\begin{equation}\label{appQuWZT:eq:iteratedIntegral}
  \int_0^t dt_n \dots \int_0^{t_{2}} dt_1\ c_{ij}(\lambda,t_n - t_{n-3})\,\dots c_{k l}(\lambda, t_5 - t_1).
\end{equation}
Depending upon the relative order of the times, these integrals may or may not converge to zero as $\lambda \rightarrow 0$.  It turns out that the only terms that are nonzero have time consecutive integrals, meaning that for each contraction we must have $c_{ij}(\lambda,\tau_n)$ be evaluated at $\tau_n = t_n - t_{n-1}$ \citep{gough_quantum_2005}.  In the above example, $\tau_n = t_n - t_{n-3}$ and $\tau_5 = t_5 - t_1$ are not time consecutive intervals and so Eq. (\ref{appQuWZT:eq:iteratedIntegral}) converges to zero.  Upon identifying the class of nonzero integrals, it is possible to then re-sum the expansion, which in general generates a Neumann series \cite{gough_quantum_2006}.

Once the $\lambda \rightarrow 0$ limit is taken, the normally ordered propagator can be identified and its equivalent It\={o} form QSDE written down.  A general QSDE for the unitary propagator $U_t$ can be written as
\begin{equation}\label{appQuWZT:eq:dUtG}
  d U_t = \left( G_{ij} d\Lambda^{ij}_{t} + G_{i0} dA^{i\, \dag}_t + G_{0j} dA^j_t + G_{00} dt\right) U_t.
\end{equation}
The coefficients $G_{\alpha\beta}$ are constrained to insure unitarity.  Appendix \ref{appQSDEs:sec:dUt} discusses these constraints at length and shows how they can be re-expressed in terms of the operators  $S$, $L$ and $H$.  Eq. (\ref{appQSDEs:eq:GtoSLHconversion}) gives this conversion.  We note here that $H$ is the part of the unitary that is completely uncoupled to the field operators and so we will have the correspondence $H = E_{00}$.

In terms of the system operators $E_{\alpha \beta}$ ($\alpha, \beta = 0, 1, \dots, d$) defining $H_{int}(\lambda, t)$ in Eq. (\ref{chQuLight:eq:QuWongZakaiHamiltonian}), the general limiting coefficients are
\begin{equation}\label{appQuWZT:eq:LimitdUCoeffs}
    G_{\alpha \beta} = -i E_{\alpha\beta} - E_{\alpha i} \left(\frac{1}{\ident + i\, \mathbbm{K}\mathbbm{E}}\right)_{ij}\, \kappa_{jk} E_{k \beta},
\end{equation}
where $i$ and $j$ start from 1 and we introduced the following notation.  In the most general case, we have a $d\times d$ matrix of constants $\mathbbm{K} = \kappa_{ij}$, as well as a matrix of operators $\mathbbm{E} = E_{ij}$.  A Neumann series is the operator-valued generalization of a geometric series, so that for an operator $T$, 
\begin{equation}
\sum_{n = 0}^\infty T^n = (1 - T)^{-1}
\end{equation}
is well defined whenever $1 - T$ is invertible.  The time consecutive contractions ultimately generate a Neumann series where $T$ is the matrix of operators $-i \mathbbm{K}\mathbbm{E} = -i \kappa_{ij} E_{jk}$.  The limiting coefficient then involves the $i,j$ component of the operator/matrix inverse $1/(\ident + i \mathbbm{K}\mathbbm{E})$.

\bibliography{Dissertation}

\end{document}